\def\NAT@spacechar{~}
\renewcommand*{\backref}[1]{}
\renewcommand*{\backrefalt}[4]{\footnotesize\hspace*{0pt}\hfill \ifcase #1 \mbox{[not cited]} \or  \mbox{[p.\,#2]}  \else \mbox{[pp.\,#2]} \fi}
\renewcommand{\baselinestretch}{1.025}
\DeclarePairedDelimiter{\floor}{\lfloor}{\rfloor}
\DeclarePairedDelimiter{\ceil}{\lceil}{\rceil}
\DeclareMathOperator{\qn}{qn}
\DeclareMathOperator{\sn}{sn}
\DeclareMathOperator{\tn}{tn}
\DeclareMathOperator{\join}{+}
\DeclareMathOperator{\dist}{dist}
\renewcommand{\ge}{\geqslant}
\renewcommand{\le}{\leqslant}
\renewcommand{\geq}{\geqslant}
\renewcommand{\leq}{\leqslant}
\renewcommand{\preceq}{\preccurlyeq}
\renewcommand{\thefootnote}{\fnsymbol{footnote}}
\newcommand{\arXiv}[1]{arXiv:\,\href{https://arxiv.org/abs/#1}{#1}}
\newcommand{\doi}[1]{\href{https://doi.org/#1}{\tt https://doi.org/#1}}
\newcommand{\msn}[1]{MR:\,\href{http://www.ams.org/mathscinet-getitem?mr=MR#1}{#1}}
\newcommand{\PP}{\mathcal{P}}
\theoremstyle{plain}
\newtheorem{theorem}{Theorem}
\newtheorem{lemma}[theorem]{Lemma}
\newtheorem{corollary}[theorem]{Corollary}
\newtheorem{prop}[theorem]{Proposition}
\newtheorem{obs}[theorem]{Observation}
\theoremstyle{definition}
\newtheorem{claim}{Claim}
\begin{document}

\author{Vida Dujmovi{\'c}\,\footnotemark[2]
\qquad Gwena\"{e}l Joret\,\footnotemark[3]
\qquad Piotr  Micek\,\footnotemark[4] \\
 Pat Morin\,\footnotemark[5]
\qquad Torsten Ueckerdt\,\footnotemark[6]
\qquad David~R.~Wood\,\footnotemark[9]}

\footnotetext[2]{School of Computer Science and Electrical Engineering, University of Ottawa, Ottawa, Canada (\texttt{vida.dujmovic@uottawa.ca}). Research  supported by NSERC and the Ontario Ministry of Research and Innovation.}

\footnotetext[3]{D\'epartement d'Informatique, Universit\'e Libre de Bruxelles, Brussels, Belgium (\texttt{gjoret@ulb.ac.be}). Research supported by an ARC grant from the Wallonia-Brussels Federation of Belgium.}

\footnotetext[4]{Theoretical Computer Science Department, Faculty of Mathematics and Computer Science, Jagiellonian University, Krak\'ow, Poland (\texttt{piotr.micek@tcs.uj.edu.pl}).
Research partially supported by the Polish National Science Center grant (SONATA BIS 5; UMO-2015/18/E/ST6/00299).}

\footnotetext[5]{School of Computer Science, Carleton University, Ottawa, Canada (\texttt{morin@scs.carleton.ca}). Research  supported by NSERC.}

\footnotetext[6]{Institute of Theoretical Informatics, Karlsruhe Institute of Technology, Germany (\texttt{torsten.ueckerdt@kit.edu}).}

\footnotetext[9]{School of Mathematics, Monash   University, Melbourne, Australia  (\texttt{david.wood@monash.edu}). Research supported by the Australian Research Council.}

\sloppy

\title{ \textbf{Planar Graphs have Bounded Queue-Number}\footnote{An extended abstract of this paper appeared in \emph{Proceedings 60th Annual Symposium on Foundations of Computer Science} (FOCS '19), pp.~862--875, IEEE. \doi{10.1109/FOCS.2019.00056}.}}

\date{April 9, 2019\\ revised: \today}

\maketitle


\begin{abstract}
We show that planar graphs have bounded queue-number, thus proving a conjecture of Heath, Leighton and Rosenberg from 1992. The key to the proof is a new structural tool called layered partitions, and the result that every planar graph has a vertex-partition and a layering, such that each part has a bounded number of vertices in each layer, and the quotient graph has bounded treewidth. This result generalises for graphs of bounded Euler genus. Moreover, we prove that every graph in a minor-closed class has such a layered partition if and only if the class excludes some apex graph. Building on this work and using the graph minor structure theorem, we prove that every proper minor-closed class of graphs has bounded queue-number. 

Layered partitions have strong connections to other topics, including the following two examples. First, they can be interpreted in terms of strong products. We show that every planar graph is a subgraph of the strong product of a path with some graph of bounded treewidth. Similar statements hold for all proper minor-closed classes. Second, we give a simple proof of the result by DeVos et al.\ (2004) that graphs in a proper minor-closed class have low treewidth colourings. 
\end{abstract}

\renewcommand{\thefootnote}{\arabic{footnote}}

\newpage
\tableofcontents
\newpage

\section{Introduction}
\label{Introduction}

Stacks and queues are fundamental data structures in computer science. But what is more powerful, a stack or a queue? In 1992, \citet{HLR92} developed a graph-theoretic formulation of this question, where they defined the graph parameters stack-number and queue-number which respectively measure the power of stacks and queues to represent a given graph. Intuitively speaking, if some class of graphs has bounded stack-number and unbounded queue-number, then we would consider stacks to be more powerful than queues for that class (and vice versa). It is known that the stack-number of a graph may be much larger than the queue-number. For example, \citet{HLR92} proved that the $n$-vertex ternary Hamming graph has queue-number at most $O(\log n)$ and stack-number at least $\Omega(n^{1/9-\epsilon})$. Nevertheless, it is open whether every graph has stack-number bounded by a function of its queue-number, or whether every graph has queue-number bounded by a function of its stack-number~\citep{HLR92,DujWoo05}.

Planar graphs are the simplest class of graphs where it is unknown whether both stack and queue-number are bounded. In particular, \citet{BS84} first proved that planar graphs have bounded stack-number; the best known upper bound is 4 due to \citet{Yannakakis89}. However,  for the last 27 years of research on this topic, the most important open question in this field has been whether planar graphs have bounded queue-number. This question was first proposed by \citet{HLR92} who conjectured that planar graphs have bounded queue-number.\footnote{Curiously, in a later paper, \citet{HR11} conjectured that planar graphs have unbounded queue-number.} This paper proves this conjecture. Moreover, we generalise this result for graphs of bounded Euler genus, and for every proper minor-closed class of graphs.\footnote{The \textit{Euler genus} of the orientable surface with $h$ handles is $2h$. The \textit{Euler genus} of the non-orientable surface with $c$ cross-caps is $c$. The \textit{Euler genus} of a graph $G$ is the minimum integer $k$ such that $G$ embeds in a surface of Euler genus $k$. Of course, a graph is planar if and only if it has Euler genus 0; see~\citep{MoharThom} for more about graph embeddings in surfaces. A graph $H$ is a \textit{minor} of a graph $G$ if a graph isomorphic to $H$ can be obtained from a subgraph of $G$ by contracting edges. A class $\mathcal{G}$ of graphs is \emph{minor-closed} if for every graph $G\in\mathcal{G}$, every minor of $G$ is in $\mathcal{G}$. 
A minor-closed class is \emph{proper} if it is not the class of all graphs. For example, for fixed $g\geq 0$, the class of graphs with Euler genus at most $g$ is a proper minor-closed class.}  

First we define the stack-number and queue-number of a graph $G$.  Let $V(G)$ and $E(G)$ respectively denote the vertex and edge set of $G$. Consider disjoint edges $vw,xy\in E(G)$ and a linear ordering $\preceq$ of $V(G)$. Without loss of generality, $v\prec w$ and $x\prec y$ and $v\prec x$. Then $vw$ and $xy$ are said to \emph{cross} if  $v\prec x\prec w \prec y$ and are said to \emph{nest} if $v\prec x\prec y \prec w$. A \emph{stack} (with respect to $\preceq$) is a set of pairwise non-crossing edges, and a \emph{queue} (with respect to $\preceq$) is a set of pairwise non-nested edges. Stacks resemble the stack data structure in the following sense. In a stack, traverse the vertex ordering left-to-right. When visiting vertex $v$, because of the non-crossing property, if $x_1,\dots,x_d$ are the neighbours of $v$ to the left of $v$ in left-to-right order, then the edges $x_dv,x_{d-1}v,\dots,x_1v$ will be on top of the stack in this order. Pop these edges off the stack. Then if $y_1,\dots,y_{d'}$ are the neighbours of $v$ to the right of $v$ in left-to-right order, then push  $vy_{d'},vy_{d'-1},\dots,vy_1$ onto the stack in this order. In this way, a stack of edges with respect to a linear ordering resembles a stack data structure. Analogously, the non-nesting condition in the definition of a queue implies that a queue of edges with respect to a linear ordering resembles a queue data structure.

For an integer $k\geq 0$, a \textit{$k$-stack layout} of a graph $G$ consists of a linear ordering $\preceq$ of $V(G)$ and a partition $E_1,E_2,\dots,E_k$ of $E(G)$ into stacks with respect to $\preceq$. Similarly, a \textit{$k$-queue layout} of $G$ consists of a linear ordering $\preceq$ of $V(G)$ and a partition $E_1,E_2,\dots,E_k$ of $E(G)$ into queues with respect to $\preceq$. The \textit{stack-number} of $G$, denoted by $\sn(G)$, is the minimum integer $k$ such that $G$ has a $k$-stack layout. The \textit{queue-number} of a graph $G$, denoted by $\qn(G)$, is the minimum integer $k$ such that $G$ has a $k$-queue layout. Note that $k$-stack layouts are equivalent to $k$-page book embeddings, first introduced by \citet{Ollmann73}, and stack-number is also called page-number, book thickness, or fixed outer-thickness.

Stack and queue layouts are inherently related to depth-first search and breadth-first search respectively. For example, a DFS ordering of the vertices of a tree has no two crossing edges, and thus defines a 1-stack layout. Similarly, a BFS ordering of the vertices of a tree has no two nested edges, and thus defines a 1-queue layout. So every tree has stack-number 1 and queue-number 1. 

For another example, consider the $n\times n$ grid graph with vertex set $\{(x,y):x,y\in[n]\}$ and edges of the form $(x,y)(x+1,y)$ and $(x,y)(x,y+1)$. Order the vertices first by $x$-coordinate and then by $y$-coordinate. Edges of the first type do not nest and edges of the second type do not nest. Thus the $n\times n$ grid graph has a 2-queue layout. In fact, as illustrated in \cref{QueueLayoutGrid}, if we order the vertices by $x+y$ and then by $x$-coordinate, then no two edges nest. So the $n\times n$ grid graph has queue-number 1.

\begin{figure}[!h]
\includegraphics[width=\textwidth]{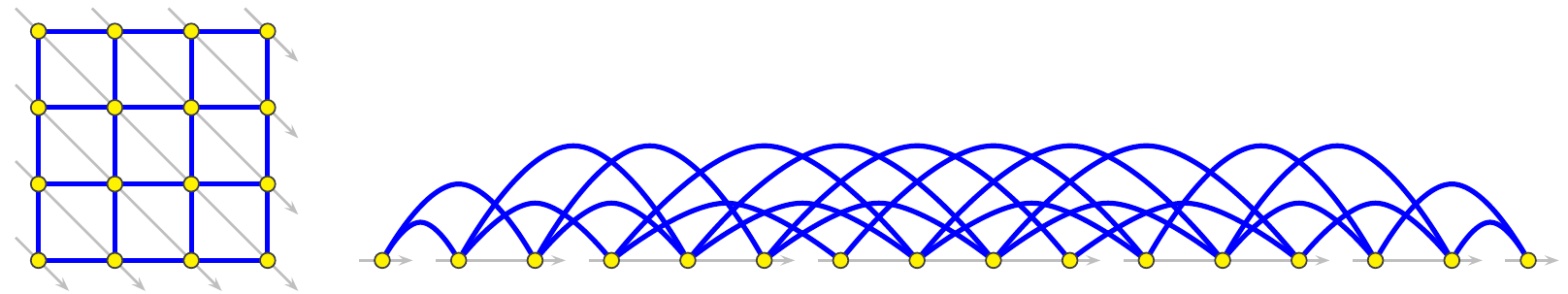}
\caption{1-Queue layout of grid graph.\label{QueueLayoutGrid}}
\end{figure}

As mentioned above, \citet{HLR92} conjectured that  planar graphs have bounded queue-number. This conjecture has remained open despite much research on queue layouts~\citep{Wiechert17,DM-GD03,DujWoo05,HLR92,HR92,Hasunuma-DAM,Pemmaraju-PhD,RM-COCOON95,DMW05,DujWoo04,DPW04,DMW17,ABGKP18,ABGKP20,BFGMMRU19,DFP13}. We now review progress on this conjecture. 

\citet{Pemmaraju-PhD} studied queue layouts and wrote that he ``suspects'' that a particular planar graph with $n$ vertices has queue-number $\Theta(\log n)$. The example he proposed had treewidth~3; see \cref{Treewidth} for the definition of treewidth. \citet{DMW05} proved that graphs of bounded treewidth have bounded queue-number. So Pemmaraju's example in fact has bounded queue-number.

The first $o(n)$ bound on the queue-number of planar graphs with $n$ vertices was proved by \citet{HLR92}, who observed that every graph with $m$ edges has a $O(\sqrt{m})$-queue layout using a random vertex ordering. Thus every planar graph with $n$ vertices has queue-number $O(\sqrt{n})$, which can also be proved using the Lipton-Tarjan separator theorem. \citet{DFP13} proved the first breakthrough on this topic, by showing that every planar graph with $n$ vertices has queue-number $O(\log^2 n)$. \citet{Duj15} improved this bound to $O(\log n)$ with a simpler proof. Building on this work, \citet{DMW17} established (poly-)logarithmic bounds for more general classes of graphs. For example, they proved that every graph with $n$ vertices and Euler genus $g$ has queue-number $O(g+\log n)$, and that every graph with $n$ vertices excluding a fixed minor has queue-number $\log^{O(1)} n$.

Recently, \citet{BFGMMRU19} proved a second breakthrough result, by showing that planar graphs with bounded maximum degree have bounded queue-number. In particular, every planar graph with maximum degree $\Delta$ has queue-number at most $O(\Delta^6)$. Subsequently, \citet{DMW19} proved that the algorithm of \citet{BFGMMRU19} in fact produces a $O(\Delta^2)$-queue layout. This was the state of the art prior to the current work.\footnote{\citet{Wang17} claimed to prove that planar graphs have bounded queue-number, but despite several attempts, we have not been able to understand the claimed proof.}

\subsection{Main Results}

The fundamental contribution of this paper is to prove the conjecture of \citet{HLR92} that planar graphs have bounded queue-number.

\begin{theorem}
\label{PlanarQueue}
The queue-number of planar graphs is bounded.
\end{theorem}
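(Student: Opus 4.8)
The plan is to deduce \cref{PlanarQueue} from a new structural decomposition of planar graphs together with the known fact that graphs of bounded treewidth have bounded queue-number~\citep{DMW05,Wiechert17}. Say that a graph $G$ has a \emph{layered partition} if there is a partition $\mathcal{H}$ of $V(G)$ and a \emph{layering} $(V_0,V_1,\dots)$ of $G$, that is, a partition of $V(G)$ into parts indexed by consecutive integers such that the endpoints of every edge lie in the same part or in consecutive parts. Its \emph{layered width} is $\max\{|H\cap V_i| : H\in\mathcal{H},\ i\ge 0\}$, and its \emph{quotient} $G/\mathcal{H}$ is the graph on vertex set $\mathcal{H}$ in which two parts are adjacent whenever some edge of $G$ joins them. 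I will prove two statements: \textbf{(i)} every planar graph has a layered partition of bounded layered width whose quotient has bounded treewidth; and \textbf{(ii)} every graph having such a layered partition has bounded queue-number. Together these give the theorem. (Statement~(i) is equivalent, up to the values of the implied constants, to the assertion that every planar graph is isomorphic to a subgraph of $H\boxtimes P$ for some path $P$ and some graph $H$ of bounded treewidth.)

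For~(ii), let $G$ have a layered partition $(\mathcal{H},(V_i))$ of layered width at most $\ell$ whose quotient $Q$ has treewidth at most $t$. By~\citep{DMW05,Wiechert17} there is $q=q(t)$ such that $Q$ has a $q$-queue layout; fix its vertex ordering $\preceq_Q$. Order $V(G)$ lexicographically by: the index of the layer containing the vertex; then the $\preceq_Q$-position of the part containing it; then arbitrarily within each \emph{block}, where a block consists of the (at most $\ell$) vertices lying in a common part and a common layer. Every edge $vw$ of $G$ has its endpoints in the same layer or in consecutive layers, and in the same part or in two parts joined in $Q$. Classify the edges of $G$ according to: (a)~the queue of $Q$ containing the image of the edge --- edges with both endpoints in a single part are treated separately, costing $O(\ell^2)$ additional queues; (b)~the positions of its two endpoints within their blocks; and (c)~a bounded amount of extra data recording whether the two endpoints share a layer and, when they do not, the relative $\preceq_Q$-order of their parts. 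A routine case analysis shows that each class is non-nesting, because two nesting $G$-edges in a common class would have to map to two \emph{distinct} edges of $Q$ that nest and lie in the same queue of $Q$ --- impossible, since a queue contains no two nesting edges. Hence $\qn(G)=O(q\,\ell^2)$, which is bounded.

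The real content, and the main obstacle, is statement~(i). First reduce to the case that $G$ is a plane triangulation: every planar graph is a spanning subgraph of one, and a layered partition of a supergraph $G'\supseteq G$ on the same vertex set restricts to a layered partition of $G$ with the same layered width and a quotient that is a subgraph of $G'/\mathcal{H}$, hence of no larger treewidth. Fix a vertex $r$ on the outer face, a BFS spanning tree $T$ of $G$ rooted at $r$, and let $V_i$ be the vertices at distance $i$ from $r$; this is a layering of $G$. The goal is to partition $V(G)$ into parts, each a small union of \emph{vertical paths} of $T$ (paths $v_0v_1\cdots v_k$ with $v_{j+1}$ the parent of $v_j$, so each meets every layer at most once), in such a way that the layered width is $O(1)$ and --- the essential point --- the quotient has bounded treewidth. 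To bound the treewidth I would produce a tree-decomposition of the quotient that adds the parts one at a time, always keeping track of the parts that are currently \emph{active}, meaning incident both to already-added and to not-yet-added vertices. Since each part is a bounded union of vertical paths and the planar embedding forces the frontier between the added and unadded regions to behave like a collection of simple closed curves, only $O(1)$ parts can be active at any moment, which is exactly the bag size.

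The delicate point --- where essentially all the difficulty of \cref{PlanarQueue} is concentrated --- is to choose the partition so that this frontier really is controlled; for an arbitrary plane triangulation a naive choice of vertical paths fails. I would proceed by recursion on the number of vertices: at each step remove a single part consisting of a union of at most three vertical paths of $T$ (a \emph{tripod}), chosen so that it splits the current region into strictly smaller regions, each again bounded by at most three vertical paths. Then at every stage the unprocessed region is separated from the rest by only $O(1)$ vertical paths, and the recursion tree furnishes a tree-decomposition of the quotient of bounded width; this is the one place where planarity is used essentially. Done carefully, the recursion yields a layered partition of layered width $3$ whose quotient has treewidth $3$ --- equivalently, every planar graph is a subgraph of $H\boxtimes P\boxtimes K_3$ for a treewidth-$3$ graph $H$ and a path $P$ --- which is what keeps the resulting numerical bound on the queue-number small.
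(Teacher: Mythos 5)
Your proposal follows essentially the same route as the paper: define layered partitions, prove a transfer lemma converting a queue layout of the quotient into a queue layout of $G$ by ordering vertices layer-by-layer and then by the quotient's vertex order (the paper's \cref{PartitionQueue}, though its blowup argument yields $O(q\ell)$ rather than your $O(q\ell^2)$), and then construct a layered partition of every plane triangulation into tripods of a BFS tree whose quotient has treewidth~$3$ (the paper's \cref{Width3LayeredPartition,NearTriangTripods}). The one step you leave implicit --- how to actually locate a tripod that recursively splits a region into smaller regions each still bounded by a controlled number of vertical paths --- is exactly where the paper invokes Sperner's Lemma and tracks bipods (pairs of vertical paths) rather than single vertical paths on the region boundary, but your high-level strategy matches the paper's proof precisely.
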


The best upper bound that we obtain for the queue-number of  planar graphs is $49$. 

We extend \cref{PlanarQueue} by showing that graphs with bounded Euler genus have bounded queue-number.

\begin{theorem}
\label{GenusQueue}
Every graph with Euler genus $g$ has queue-number at most  $O(g)$.
\end{theorem}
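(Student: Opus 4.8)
The plan is to follow the template of the proof of \cref{PlanarQueue}, which rests on two ingredients: a \emph{structural theorem} asserting that every planar graph has a layered partition of bounded layered width whose quotient graph has bounded treewidth (equivalently, as in the strong-product formulation announced in the abstract, every planar graph is a subgraph of $H\boxtimes P$ for some graph $H$ of bounded treewidth, some path $P$, and some bounded-size clique factor); and a \emph{queue-number lemma} bounding $\qn(G)$ for any graph $G$ admitting such a layered partition, with the bound depending linearly on the layered width once the treewidth of the quotient is fixed. To prove \cref{GenusQueue} I would establish the Euler-genus analogue of the structural theorem and then feed it into the same queue-number lemma, tracking constants so that the final bound comes out $O(g)$.

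For the structural step the idea is to reduce the genus-$g$ case to the planar case by surgery on the surface. Fix an embedding of $G$ in a surface $\Sigma$ of Euler genus $g$ and cut $\Sigma$ open along a system of $O(g)$ non-contractible simple closed curves until a disc remains; this corresponds to cutting $G$ along $O(g)$ closed walks, producing a \emph{planar} graph $G'$ in which each vertex of $G$ has been split into at most $O(g)$ copies. Apply the planar structural theorem to $G'$, obtaining a layered partition of $G'$ of bounded layered width whose quotient has bounded treewidth, and then transport it back to $G$ by reidentifying the copies of each vertex. The point is that, because only $O(g)$ copies of any single vertex are created, this reidentification multiplies the layered width by a factor of $O(g)$ while leaving the quotient graph (hence its treewidth) bounded by an absolute constant; the outcome is a layered partition of $G$ with layered width $O(g)$ and quotient treewidth $O(1)$.

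Finally, apply the queue-number lemma to this layered partition. Since that lemma (which proceeds via a track layout of $G$ built from a track layout of the quotient, blown up by the layered width) is linear in the layered width when the quotient treewidth is a fixed constant, and since here the layered width is $O(g)$ while the quotient treewidth is an absolute constant, we conclude $\qn(G)=O(g)$, as required. Only the very last constant factor needs to be watched to see that no hidden dependence on $g$ creeps in.

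I expect the surgery in the structural step to be the main obstacle. The cutting curves must be chosen compatibly with the BFS-type layering that underlies the planar structural theorem: a careless cut can destroy the ``consecutive layers'' property of the layering, and, more seriously, reidentifying copies done naively can force the merging of unboundedly many parts of the partition of $G'$ (rather than $O(g)$ of them), which would blow up either the layered width or the treewidth of the quotient. Controlling this interaction between the topology of the cut and the combinatorics of the layered partition — so that each vertex of $G$ meets only $O(g)$ parts after reidentification and the quotient stays simple — is the delicate part; once it is in place, the rest is bookkeeping of constants.
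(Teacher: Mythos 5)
Your plan --- cut the surface to get a planar graph, apply the planar product-structure result there, and lift the resulting layered partition back --- is the right instinct, but the lifting step (``reidentifying the copies of each vertex'') has a genuine gap. Reidentification forces you to merge the parts of the planar partition $\PP'$ that contain different copies of the same vertex of $G$, and while you correctly observe that this multiplies the layered width by the number of copies, your assertion that it ``leav[es] the quotient graph (hence its treewidth) bounded by an absolute constant'' is unsubstantiated and, I believe, false in general. Merging a set of mutually non-adjacent quotient vertices into one vertex can increase treewidth (one has to expand the tree-decomposition along the minimal subtree spanning the merged vertices), and such a merge must be performed once for every cut vertex of $G$; the cumulative effect on the tree-decomposition is uncontrolled. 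You flag this as ``the delicate part'', but it is not bookkeeping --- it is the central obstruction, and I see no way to overcome it within the cut-and-reidentify framework.

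The paper avoids reidentification by \emph{deleting} the cut vertices and packaging them as a single apex part. Concretely, \cref{MakePlanar} applies a tree--cotree argument to a BFS spanning tree $T$ of $G$ together with the embedding, obtaining $g$ non-tree edges $a_1b_1,\dots,a_gb_g$; it sets $Z=\bigcup_i Z_i$ where $Z_i$ is the union of the root-to-$a_i$ and root-to-$b_i$ paths in $T$ plus the edge $a_ib_i$. Because each $Z_i$ is two vertical paths plus one edge, $Z$ meets each BFS layer in at most $2g$ vertices. Cutting along $Z$ leaves a planar graph, so in particular $G-V(Z)$ is planar, and the lemma constructs a connected planar supergraph $G^+\supseteq G-V(Z)$ with a BFS layering whose restriction to $V(G)\setminus V(Z)$ agrees with the original BFS layering of $G$ --- the ``layering compatibility'' you rightly anticipated to be fragile. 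One then applies \cref{Width1LayeredPartition} to $G^+$, restricts its parts to $V(G)\setminus V(Z)$, and adds $V(Z)$ as one new part: the quotient is the planar treewidth-$8$ quotient plus one apex vertex, hence treewidth at most $9$, and the layered width is $\max\{2g,1\}$, so \cref{klPartitionQueue} gives $\qn(G)=O(g)$ (this is \cref{GenusPartitionTreewidth9}). The sharper $4g+49$ bound stated in the paper is obtained by an even more direct route: apply \cref{BFSPlanarQueue} to $G^+$, interleave the $\le 2g$ vertices of $Z$ per layer into the same vertex ordering, and spend $4g$ extra queues on edges incident to $V(Z)$. The moral: extract the cut vertices into one apex part rather than reidentifying the copies, because that changes the quotient's treewidth by exactly $1$, which your approach cannot guarantee.
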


The best upper bound that we obtain for the queue-number of  graphs with Euler genus $g$ is $4g+49$. 

We generalise further to show the following:

\begin{theorem}
\label{MinorQueue}
Every proper minor-closed class of graphs has bounded queue-number.
\end{theorem}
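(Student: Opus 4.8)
The plan is to reduce \cref{MinorQueue} to \cref{GenusQueue} via the Graph Minor Structure Theorem of Robertson and Seymour. Every proper minor-closed class $\mathcal{G}$ excludes some fixed graph $K_t$, so there is a constant $k=k(t)$ such that every $G\in\mathcal{G}$ can be obtained by clique-sums from graphs that are ``$k$-almost-embeddable'' in a surface of Euler genus at most $k$: that is, after deleting at most $k$ apex vertices, the remainder consists of a graph embedded in such a surface together with at most $k$ vortices of depth at most $k$ glued into faces. The strategy is therefore threefold: (i) show that each $k$-almost-embeddable piece has bounded queue-number; (ii) show that adding a bounded number of apex vertices changes the queue-number by only a bounded amount; (iii) show that queue-number (or rather the layered-partition machinery behind it) behaves well under clique-sums of bounded width.

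For step (i), the key is that the layered-partition framework from the genus result is robust enough to absorb vortices. I would prove that a graph embeddable in a surface of Euler genus $g$ with a bounded number of vortices of bounded depth admits a vertex-partition $\PP$ and a layering such that each part has boundedly many vertices per layer and the quotient $G/\PP$ has bounded treewidth — the same conclusion that \cref{GenusQueue} rests on. Handling a vortex amounts to incorporating its path-decomposition into the BFS-layering along the face it is attached to; since the vortex has bounded depth, each part picks up only boundedly many extra vertices per layer, and the quotient's treewidth increases by only a bounded amount because the vortex, being attached along one face, interacts with a ``local'' piece of the host graph. Then, exactly as in the proof of \cref{GenusQueue}, such a layered partition with bounded-treewidth quotient yields bounded queue-number (combining the known fact that bounded-treewidth graphs have bounded queue-number with the product/layering argument).

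For step (ii), apex vertices: if $G' = G - A$ has a layered partition of the above type with $|A|\le k$, then put each apex vertex into its own singleton part. This keeps ``boundedly many vertices per layer per part'' (one extra layer for the apices, or spread them across existing layers), and the quotient grows by at most $k$ vertices, hence its treewidth grows by at most $k$; so the good structure — and thus bounded queue-number — is preserved. For step (iii), clique-sums: here one shows that a bounded-width clique-sum of graphs each admitting such a layered partition again admits one, by a standard induction where the shared clique is handled by aligning the layerings and merging the quotient tree-decompositions; the clique has bounded size, so it contributes only a bounded amount to each part per layer and to the quotient treewidth. Assembling (i)--(iii): every $G\in\mathcal{G}$ has a layered partition with bounded layer-width and bounded-treewidth quotient, and therefore bounded queue-number.

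The main obstacle I expect is step (i), specifically the bookkeeping for vortices: one must choose the layering and partition on the surface part so that, when a vortex's path-decomposition is interleaved along a facial cycle, the resulting parts genuinely have only $O(1)$ vertices per layer \emph{and} the quotient treewidth stays bounded. The depth-$k$ bound on the vortex is what makes this possible, but the interaction between the (near-)planar layering and the linear order underlying the vortex requires care — essentially one needs the facial cycle to appear ``monotonically'' in the layering, which follows from choosing a BFS layering from a suitable root, but verifying the layer-width bound along the vortex is the delicate part. The clique-sum step, while notationally heavy, is routine once one has the right invariant to induct on (existence of a layered partition with the two bounded parameters), since small cliques are easy to absorb.
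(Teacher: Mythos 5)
Your step (i) matches the paper's \cref{AlmostEmbeddableStructure}, and steps (ii) and (iii) correctly identify the two remaining obstacles, but step (ii) has a genuine gap that invalidates the overall plan as written. An apex vertex $a$ in a $k$-almost-embeddable graph may be adjacent to \emph{every} other vertex, hence to vertices in arbitrarily distant layers of whatever layering you chose for $G-A$. The layering axiom forces adjacent vertices to lie in the same or consecutive layers, so there is no layer (new or existing) into which $a$ can be placed; putting $a$ in a singleton part does not rescue this, since the part still has to sit inside a layering of the whole graph $G$. Consequently your final claim, that every $G$ in the class admits a layered partition with bounded layer-width and bounded-treewidth quotient, is false: the paper's \cref{Equivalent} shows a minor-closed class admits such partitions \emph{if and only if} it excludes an apex graph, and the class of apex graphs itself is a proper minor-closed counterexample. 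Your clique-sum step (iii) inherits the same problem, because the induction would need the layered-partition invariant to survive passage through apices.

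The paper avoids the trap by leaving the layered-partition world before the apex and clique-sum steps. It applies the layered-partition machinery only to $G-A$ for each almost-embeddable piece (\cref{AlmostEmbeddableStructure}), obtaining a queue layout of $G-A$; it then handles the apices at the \emph{queue-layout} level by placing all edges incident to each apex vertex into its own dedicated queue, costing $a$ extra queues (\cref{AlmostEmbeddableQueue}). The clique-sums are likewise handled at the queue-layout level via a separate black box: by \cref{ProduceRichDecomp}, $G$ has a $k$-rich tree-decomposition with bags inducing $\ell$-almost-embeddable subgraphs, and \cref{RichTrack} (from earlier work using shadow-complete layerings) converts a $k$-rich tree-decomposition with bounded-queue-number bags into a bounded queue layout of $G$. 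So the layered partitions supply the base case, but the apex and clique-sum glue is a genuinely different mechanism than the one you propose, and that change is forced by the apex obstruction.
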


These results are obtained through the introduction of a new tool, \emph{layered partitions}, that have applications well beyond queue layouts. Loosely speaking, a layered partition of a graph $G$ consists of a partition $\mathcal{P}$ of $V(G)$ along with a layering of $G$, such that each part in $\mathcal{P}$ has a bounded number of vertices in each layer (called the \emph{layered width}), and the quotient graph $G / \PP$ has certain desirable properties, typically bounded treewidth. Layered partitions are the key tool for proving the above theorems. Subsequent to the initial release of this paper, layered partitions and the results in this paper have been used to solve the following well-known problems:
\begin{compactitem}
\item  \citet{DEJWW20} prove that planar graphs have bounded non-repetitive chromatic number (resolving a conjecture of \citet{AGHR02} from 2002). This result generalises for graphs excluding any fixed graph as a subdivision. 
\item \citet{DFMS20} make dramatic improvements to the best known bounds for $p$-centered colourings of planar graphs and graphs excluding any fixed graph as a subdivision.
\item \citet{BGP20} find shorter adjacency labellings of planar graphs (improving on a sequence of results going back to 1988~\citep{KNR88,KNR92}). 
\item \citet{DEJGMM} find asymptotically optimal adjacency labellings of planar graphs. This result implies that, for every integer $n>0$, there is a graph with $n^{1+o(1)}$ vertices that contains every $n$-vertex planar graph as an induced subgraph.
\end{compactitem}


 
\subsection{Outline}

The remainder of the paper is organized as follows. In \cref{tools} we review relevant background including treewidth, layerings, and partitions, and we introduce layered partitions.

\cref{QLLP} proves a fundamental lemma which shows that every graph that has a partition of bounded layered width has queue-number bounded by a function of the queue-number of the quotient graph. 

In \cref{planar}, we prove that every planar graph has a partition of layered width 1 such that the quotient graph has treewidth at most $8$. Since graphs of bounded treewidth are known to have bounded queue-number \cite{DMW05}, this implies \cref{PlanarQueue} with an  upper bound of  $766$. We then prove a variant of this result with layered width 3, where the quotient graph is planar with treewidth 3. This variant coupled with a better bound on the queue-number of treewidth-$3$ planar graphs~\cite{ABGKP18,ABGKP20} implies  \cref{PlanarQueue} with an upper bound of $49$. 

In \cref{genus}, we prove that graphs of Euler genus $g$ have partitions of layered width $O(g)$ such that  the quotient graph has treewidth $O(1)$. This immediately implies that such graphs have queue-number $O(g)$. These partitions are also required for the proof of \cref{MinorQueue} in \cref{minor}.  A more direct argument that appeals to \cref{PlanarQueue} proves the bound $4g+49$ in \cref{GenusQueue}.

In \cref{minor}, we extend  our results for layered partitions to the setting of almost-embeddable graphs with no apex vertices. Coupled with other techniques, this allows us to prove \cref{MinorQueue}. We also characterise those minor-closed graph classes with the property that every graph in the class has a partition of bounded layered width such that the quotient has bounded treewidth. 

In \cref{Products}, we provide an alternative and helpful perspective on layered partitions in terms of strong products of graphs. With this viewpoint, we derive results about universal graphs that contain all planar graphs. Similar results are obtained for more general classes. 

In \cref{non-minor-closed}, we prove that some well-known non-minor-closed classes of graphs, such as $k$-planar graphs, also have bounded queue-number.

\cref{other} explores further applications and connections. We start off by giving an example where layered partitions lead to a simple proof of a known and difficult result about low treewidth colourings in proper minor-closed classes. Then we point out some of the many connections that layered partitions have with other graph parameters. We also present other implications of our results such as resolving open problems on 3-dimensional graph drawing.

Finally \cref{conclusion} summarizes and concludes with open problems and directions for future work.

\section{Tools}
\label{tools}

Undefined terms and notation can be found in Diestel's text~\citep{Diestel5}. Throughout the paper, we use the notation $\overrightarrow{X}$ to refer to a particular linear ordering of a set $X$. 

\subsection{Layerings}
\label{Layerings}

The following well-known definitions are key concepts in our proofs, and that of several other papers on queue layouts~\citep{DMW17,DMW05,DujWoo04,BFGMMRU19,DMW19}. A \emph{layering} of a graph $G$ is an ordered partition $(V_0,V_1,\dots)$ of $V(G)$ such that for every edge $vw\in E(G)$, if $v\in V_i$ and $w\in V_j$, then $|i-j| \leq 1$. If $i=j$ then $vw$ is an \emph{intra-level} edge. If $|i-j|=1$ then $vw$ is an \emph{inter-level} edge. 

If $r$ is a vertex in a connected graph $G$ and $V_i:=\{v\in V(G):\dist_G(r,v)=i\}$ for all $i\geq 0$, then $(V_0,V_1,\dots)$ is called a \textit{BFS layering} of $G$ \emph{rooted} at $r$. Associated with a BFS layering is a \emph{BFS spanning tree} $T$ obtained by choosing, for each non-root vertex $v\in V_i$ with $i\geq 1$, a neighbour $w$ in $V_{i-1}$, and adding the edge $vw$ to $T$. Thus $\dist_T(r,v)=\dist_G(r,v)$ for each vertex $v$ of $G$.

These notions extend to disconnected graphs. If $G_1,\dots,G_c$ are the components of $G$, and $r_j$ is a vertex in $G_j$ for each $j\in\{1,\dots,c\}$, and $V_i:=\bigcup_{j=1}^c \{v\in V(G_j):\dist_{G_j}(r_j,v)=i\}$ for all $i\geq 0$, then $(V_0,V_1,\dots)$ is called a \textit{BFS layering} of $G$. 

\subsection{Treewidth and Layered Treewidth}
\label{Treewidth}

First we introduce the notion of $H$-decomposition and tree-decomposition.
For graphs $H$ and $G$, an \emph{$H$-decomposition} of $G$ consists of a collection $(B_x\subseteq V(G) : x\in V(H))$ of subsets of $V(G)$, called \emph{bags}, indexed by the vertices of $H$, and with the following properties:
\begin{compactitem}
\item for every vertex $v$ of $G$, the set $\{x\in V(H) : v\in B_x\}$ induces a non-empty connected subgraph of $H$, and
\item for every edge $vw$ of $G$, there is a vertex $x\in V(H)$ for which $v,w\in B_x$.
\end{compactitem}
The \emph{width} of such an $H$-decomposition is $\max\{|B_x|:x\in V(H)\}-1$. 
The elements of $V(H)$ are called \emph{nodes}, while the elements of $V(G)$ are called \emph{vertices}. 

A \emph{tree-decomposition} is a $T$-decomposition for some tree $T$.
The \emph{treewidth} of a graph $G$ is the minimum width of a tree-decomposition of $G$.
Treewidth measures how similar a given graph is to a tree. 
It is particularly important in structural and algorithmic graph theory; see~\citep{HW17,Reed97,Bodlaender-TCS98} for surveys.
Tree decompositions were introduced by \citet{RS-II}; the more general notion of $H$-decomposition was introduced by \citet{DK05}. 

As mentioned in \cref{Introduction}, \citet{DMW05} first proved that graphs of bounded treewidth have bounded queue-number. Their bound on the queue-number was doubly exponential in the treewidth. \citet{Wiechert17} improved this bound to singly exponential.

\begin{lemma}[\citep{Wiechert17}]
\label{TreewidthQueue}
Every graph with treewidth $k$ has queue-number at most $2^k-1$.
\end{lemma}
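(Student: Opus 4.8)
The plan is to induct on $k$ using a rooted tree-decomposition of $G$ of width $k$, peeling off one ``level'' of the decomposition at a time. Fix a tree-decomposition $(B_x : x\in V(T))$ of $G$ of width $k$, root $T$ at an arbitrary node, and (by a standard normalisation) assume the bags have size exactly $k+1$ and that adjacent bags differ in exactly one vertex, so that each node $x\ne\text{root}$ has a well-defined vertex $v_x\in B_x\setminus B_{\text{parent}(x)}$. For each vertex $v$ of $G$ let $\text{node}(v)$ be the highest node whose bag contains $v$; this gives a partition of $V(G)$ according to which of the $k+1$ ``slots'' $v$ occupies, and more importantly it organizes $V(G)$ into a laminar family via the subtrees $T_v=\{x : v\in B_x\}$.

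First I would set up the vertex ordering. Order $V(G)$ primarily by a DFS/BFS traversal of $T$ (so that vertices whose defining nodes are ancestors come before their descendants, and vertices in the same bag are grouped consistently), breaking ties within a bag by a fixed linear order of the $k+1$ slots. The key structural fact this buys us is: for any edge $vw$ with $v\preceq w$, the node $\text{node}(w)$ lies in the subtree $T_v$, and in fact $w$ lies in every bag on the path from $\text{node}(v)$ down toward $\text{node}(w)$ — this is exactly the connectedness axiom of tree-decompositions. Consequently, if two edges $vw$ and $v'w'$ (with $v\preceq w$, $v'\preceq w'$, $v\preceq v'$) are assigned the ``same type'' and they nest, one can derive a contradiction by tracking which slot each endpoint occupies and using that the bag at $\text{node}(v')$ simultaneously contains $v,v'$ (hence also $w$, since $w$ is in all bags between) — giving too many vertices in one bag unless the types were chosen to forbid this.

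Next I would define the queue assignment. To each edge $vw$ with $v\preceq w$ assign the pair $(s,S)$ where $s\in\{0,\dots,k\}$ is the slot of $v$ and $S\subseteq\{0,\dots,k\}\setminus\{s\}$ records, among the vertices lying in the bag $B_{\text{node}(v)}$, which slots are occupied by vertices $u$ with $u\succ v$ that are ``still alive'' when we reach $w$ — more simply, assign to $vw$ the slot of $v$ together with the set of slots occupied by the neighbours of $v$ that come after $v$ and are reachable along the relevant bag-path. The counting target is $2^k-1$: fixing the slot of the lower endpoint ($k+1$ choices) and a subset of the remaining $k$ slots ($2^k$ choices) is too many, so the real work is to merge/eliminate cases. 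The cleaner route — and I suspect this is what the source does — is to induct: remove from $G$ all vertices occupying one fixed slot; the rest has treewidth $k-1$ hence (by induction) a $(2^{k-1}-1)$-queue layout, and the removed vertices form an independent-ish set whose incident edges can be shown to need only $2^{k-1}$ additional queues with respect to a compatible ordering, giving $2^{k-1}-1+2^{k-1}=2^k-1$.

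The main obstacle is the interaction between the two orderings: the inductive ordering of $G$ minus one slot must be extendable to an ordering of all of $G$ that (a) keeps the old edges non-nesting within their queues and (b) makes the newly added edges split into only $2^{k-1}$ queues. Getting the traversal of $T$ and the within-bag tie-breaking to respect both constraints simultaneously is the delicate point; everything else — the connectedness axiom forcing $w$ into intermediate bags, the slot bookkeeping, the arithmetic $2\cdot 2^{k-1}-1 = 2^k-1$ — is routine once the ordering is pinned down. I would therefore spend most of the effort proving an ``ordering compatibility'' claim of the form: there is a linear order of $V(G)$ restricting to the inductive one on $V(G)$ minus the chosen slot, in which every edge incident to the removed slot has its lower endpoint in that slot and nests with at most $2^{k-1}-1$ other such edges.
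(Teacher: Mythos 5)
First, a framing issue: the paper does not prove \cref{TreewidthQueue} at all; it cites it to Wiechert's paper and uses it as a black box. So there is no proof in this paper to compare against, and your proposal is implicitly trying to reconstruct Wiechert's external argument.

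On the merits, what you have written is a plan with the hardest step left open, and you say so yourself. The crux of your inductive route is the final claim: after deleting the $2^{k-1}-1$ queues from the induction on one fewer slot, you assert that the edges incident to the removed slot ``can be shown to need only $2^{k-1}$ additional queues with respect to a compatible ordering,'' and then spend the rest of the proposal explaining why that claim is delicate rather than proving it. That claim is exactly where the content of the lemma lives, and it is not obviously true in your setup. The slot of the non-removed endpoint gives only $k$ classes, and $k$ classes do not prevent nesting: two edges $vw$ and $v'w'$ with $v,v'$ in the removed slot and $w,w'$ in the same slot $j$ can perfectly well nest in a lex-DFS ordering of the decomposition tree (take $v\prec v'\prec w'\prec w$ with all four in pairwise-ancestral bags). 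So you would need a genuinely finer invariant to split these into $2^{k-1}$ queues, and you have not identified one. Relatedly, the ``ordering compatibility'' requirement you flag --- that the inductive ordering on $G$ minus a slot extends to an ordering of $G$ preserving the old non-nesting while controlling the new edges --- is not automatic: deleting a slot changes which bags are adjacent in the smooth decomposition, so the induced decomposition of $G$ minus the slot is not simply ``the same tree with slot~0 erased,'' and the inductive ordering need not embed into any lex-DFS ordering of the original $T$. Until both of these points are resolved, the proposal does not constitute a proof; it correctly identifies the arithmetic target $2\cdot 2^{k-1}-1$ and the useful structural facts (smooth decompositions, $\text{node}(v)$ being the top bag, the connectedness axiom forcing $w\in B_x$ for all $x$ on the path between $\text{node}(v)$ and $\text{node}(w)$), but it stops short of the step that would actually deliver the bound.
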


\citet*{ABGKP18} also improved the bound in the case of planar 3-trees. (A \emph{$k$-tree} is an edge-maximal graph of tree-width $k$.)\ The following lemma that will be useful later is implied by this result and the fact that every planar graph of treewidth at most $3$ is a subgraph of a planar $3$-tree~\citep{KV12}. 

\begin{lemma}[\citep{ABGKP18,KV12}]
\label{PlanarTreewidth3Queue5}
Every planar graph with treewidth at most $3$ has queue-number at most $5$.
\end{lemma}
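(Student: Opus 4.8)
The plan is to combine the two cited ingredients with the elementary fact that queue-number does not increase under taking subgraphs. First I would invoke \citet{KV12}: every planar graph $G$ with treewidth at most $3$ is a subgraph of some planar $3$-tree $G'$ (one adds edges, preserving planarity, until the graph becomes an edge-maximal planar graph of treewidth $3$). Next I would apply the result of \citet{ABGKP18} on planar $3$-trees to $G'$, obtaining a linear ordering $\preceq$ of $V(G')$ together with a partition $E_1,\dots,E_5$ of $E(G')$ into queues with respect to $\preceq$; that is, a $5$-queue layout of $G'$.

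The remaining step is to push this layout down to $G$. Restrict $\preceq$ to $V(G)$ and, for each $i\in\{1,\dots,5\}$, set $E_i' := E_i \cap E(G)$. Since $E(G)\subseteq E(G')$, the sets $E_1',\dots,E_5'$ partition $E(G)$, and each $E_i'$ is a queue with respect to $\preceq$ because any subset of a set of pairwise non-nested edges is again pairwise non-nested. Hence $(\preceq, E_1',\dots,E_5')$ is a $5$-queue layout of $G$, so $\qn(G)\le 5$.

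I do not expect a genuine obstacle here: both the structural decomposition and the queue-layout bound are quoted from the literature, and the only thing one actually checks is the (immediate) monotonicity of $\qn$ under subgraphs. The single point worth confirming is that the \citet{KV12} statement provides a \emph{planar} host $3$-tree (not merely an abstract $3$-tree), since that is exactly what is needed for \citet{ABGKP18} to apply; this is precisely the content of their result.
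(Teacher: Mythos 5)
Your proof matches the paper's own justification exactly: the paper states that the lemma follows from the \citet{ABGKP18} bound on planar $3$-trees together with the \citet{KV12} fact that every planar graph of treewidth at most $3$ is a subgraph of a planar $3$-tree, plus the (immediate) monotonicity of queue-number under subgraphs. Your write-up just spells out the subgraph-restriction step, which is correct and routine.
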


Graphs with bounded treewidth provide important examples of minor-closed classes. However, planar graphs have unbounded treewidth. For example, the $n\times n$ planar grid graph has treewidth $n$. So the above results do not resolve the question of whether planar graphs have bounded queue-number.


\citet{DMW17} and \citet{Shahrokhi13} independently introduced the following concept. The \emph{layered treewidth} of a graph $G$ is the minimum integer $k$ such that $G$ has a tree-decomposition $(B_x:x\in V(T))$ and a layering $(V_0,V_1,\dots)$ such that $|B_x\cap V_i|\leq k$ for every bag $B_x$ and layer $V_i$. Applications of layered treewidth include graph colouring~\citep{DMW17,LW1,vdHW18}, graph drawing~\citep{DMW17,BDDEW18}, book embeddings~\citep{DF18}, and intersection graph theory~\citep{Shahrokhi13}. The related notion of layered pathwidth has also been studied~\citep{DEJMW,BDDEW18}. Most relevant to this paper,  \citet{DMW17} proved that every graph with $n$ vertices and layered treewidth $k$ has queue-number at most $O(k\log n)$. They then proved that planar graphs have layered treewidth at most 3, that graphs of Euler genus $g$ have layered treewidth at most $2g+3$, and more generally that a minor-closed class has bounded layered treewidth if and only if it excludes some apex graph.\footnote{A graph $G$ is \emph{apex} if $G-v$ is planar for some vertex $v$.} This implies $O(\log n)$ bounds on the queue-number for all these graphs, and was the basis for the $\log^{O(1)}n$ bound for proper minor-closed classes mentioned in \cref{Introduction}.

\subsection{Partitions and Layered Partitions}
\label{Partitions}

The following definitions are central notions in this paper. A \emph{vertex-partition}, or simply \emph{partition}, of a graph $G$ is a set $\PP$ of non-empty sets of vertices in $G$ such that each vertex of $G$ is in exactly one element of $\PP$. Each element of $\PP$ is called a \emph{part}. The \emph{quotient} (sometimes called the \emph{touching pattern}) of $\PP$ is the graph, denoted by $G/\PP$, with vertex set $\PP$ where distinct parts $A,B\in \PP$ are adjacent in $G/\PP$ if and only if some vertex in $A$ is adjacent in $G$ to some vertex in $B$. 

A partition of $G$ is \emph{connected} if the subgraph induced by each part is connected. In this case, the quotient is the minor of $G$ obtained by contracting each part into a single vertex. Most of our results for queue layouts do not depend on the connectivity of partitions. But we consider it to be of independent interest that many of the partitions constructed in this paper are connected. Then the quotient is a minor of the original graph. 

A partition $\PP$ of a graph $G$ is called an \emph{$H$-partition} if $H$ is a graph that contains a spanning subgraph isomorphic to the quotient $G/\PP$. Alternatively, an \emph{$H$-partition} of  a graph $G$ is a partition $(A_x:x\in V(H))$ of $V(G)$ indexed by the vertices of $H$, such that for every edge $vw\in E(G)$, if $v\in A_x$ and $w\in A_y$ then $x=y$ (and $vw$ is called an \emph{intra-bag} edge) or $xy\in E(H)$ (and $vw$ is called an \emph{inter-bag} edge).  The \emph{width} of such an $H$-partition is $\max\{|A_x|: x\in V(H)\}$. Note that a layering is equivalent to a path-partition. 

A \emph{tree-partition} is a $T$-partition for some tree $T$. Tree-partitions are well studied with several applications~\citep{DO95,DO96,Wood09,Seese85,BodEng-JAlg97}. For example, every graph with treewidth $k$ and maximum degree $\Delta$ has a tree-partition of width $O(k\Delta)$; see~\citep{Wood09,DO95}. This easily leads to a $O(k\Delta)$ upper bound on the queue-number~\citep{DMW05}. However, dependence on $\Delta$ seems unavoidable when studying tree-partitions~\citep{Wood09}, so we instead consider $H$-partitions where $H$ has bounded treewidth greater than 1. This idea has been used by many authors in a variety of applications, including cops and robbers~\citep{Andreae86}, fractional colouring~\citep{ReedSeymour-JCTB98,SSW19}, generalised colouring numbers~\citep{HOQRS17}, and defective and clustered colouring~\citep{vdHW18}. See~\citep{DOSV98,DOSV-JCTB00} for more on partitions of graphs in a proper minor-closed class. 

A key innovation of this paper is to consider a layered variant of partitions  (analogous to layered treewidth being a layered variant of treewidth). The \emph{layered width} of  a partition $\PP$ of a graph $G$  is the minimum integer $\ell$ such that for some layering $(V_0,V_1,\dots)$ of $G$, each part in $\PP$ has at most $\ell$ vertices in each layer $V_i$. 

The $n\times n$ grid graph $G$ provides an instructive example. The columns determine a partition $\PP$ of layered width 1 with respect to the layering determined by the rows. The quotient $G/\PP$ is an $n$-vertex path. 

Throughout this paper we consider partitions with bounded layered width such that the quotient has bounded treewidth. We therefore introduce the following definition. A class $\mathcal{G}$ of graphs is said to \emph{admit bounded layered partitions} if there exist $k,\ell\in\mathbb{N}$ such that every graph $G\in \mathcal{G}$ has a partition $\PP$ with layered width at most $\ell$ such that $G / \PP$ has treewidth at most $k$. We first show that this property immediately implies bounded layered treewidth.

\begin{lemma}
\label{PartitionLayeredTreewidth}
If a graph $G$ has an $H$-partition with layered width at most $\ell$ such that $H$ has treewidth at most $k$, 
then $G$ has layered treewidth at most $(k+1)\ell$.
\end{lemma}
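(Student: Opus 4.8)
The plan is to build a tree-decomposition of $G$ directly from the given ingredients. Let $(A_x : x \in V(H))$ be the $H$-partition of layered width at most $\ell$, witnessed by a layering $(V_0, V_1, \dots)$ of $G$, and let $(B_y : y \in V(T))$ be a tree-decomposition of $H$ of width at most $k$, so that $|B_y| \le k+1$ for every node $y$. The natural candidate is to use the same tree $T$, and for each node $y \in V(T)$ and each layer index $i$ define a bag
\[
  C_{y,i} \;:=\; \bigcup_{x \in B_y} \bigl( A_x \cap V_i \bigr).
\]
Since each $A_x$ meets $V_i$ in at most $\ell$ vertices and $|B_y| \le k+1$, we get $|C_{y,i}| \le (k+1)\ell$, which is exactly the claimed layered-width bound — provided I can organise the $C_{y,i}$ into a genuine tree-decomposition and exhibit the compatible layering.

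First I would specify the host tree for the tree-decomposition: take $T'$ to be any tree obtained from disjoint copies $T^{(i)}$ of $T$ (one per layer $i$) by adding, for each node $y$ and each $i$, an edge joining the copy of $y$ in $T^{(i)}$ to the copy of $y$ in $T^{(i+1)}$. (Concretely one can order the layers $0,1,2,\dots$ and string the copies together; any tree structure that makes the copies of a fixed $y$ across consecutive layers adjacent will do.) The layering of $G$ certifying layered width $(k+1)\ell$ is the original layering $(V_0,V_1,\dots)$: each bag $C_{y,i}$ lies entirely in $V_i$, so $|C_{y,i} \cap V_j| \le (k+1)\ell$ for all $j$, trivially.

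Next I would verify the two tree-decomposition axioms. For the edge axiom, take $vw \in E(G)$ with $v \in A_x \cap V_i$ and $w \in A_{x'} \cap V_{i'}$. By the layering property $|i - i'| \le 1$, and by the $H$-partition property either $x = x'$ or $xx' \in E(H)$; in either case some bag $B_y$ of the tree-decomposition of $H$ contains both $x$ and $x'$. If $i = i'$ then $v, w \in C_{y,i}$; if $i' = i+1$ (say), then $v \in C_{y,i}$ and $w \in C_{y,i+1}$, and the nodes $(y,i)$ and $(y,i+1)$ are adjacent in $T'$, so again $v$ and $w$ appear together in a bag — wait, here I actually need them in a common bag, so I should instead place $w$ also into $C_{y,i+1}$ and note $v,w$ together fails; the clean fix is to let the node for $(y,i)$ and $(y,i+1)$ share: redefine so that when $xx' \in E(H)$ with $x \in B_y$, both $C_{y,i}$ and $C_{y,i+1}$ are available and adjacency of $(y,i),(y,i+1)$ in $T'$ suffices because a tree-decomposition only requires endpoints of each $G$-edge to lie in a \emph{common} bag — so I must ensure $v,w$ genuinely co-occur. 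This is handled by the standard trick: also add $v$ to $C_{y,i+1}$ whenever $v$ has a neighbour in $V_{i+1}$, or more simply, take for each $y$ the union over two consecutive layers. I will adopt the union-over-consecutive-layers definition $C'_{y,i} := C_{y,i} \cup C_{y,i+1}$ and use $T'$ a single copy of $T$ with an extra "layer" coordinate; then each $G$-edge has both endpoints in some $C'_{y,i}$, and $|C'_{y,i}| \le 2(k+1)\ell$. To recover the sharp bound $(k+1)\ell$ rather than $2(k+1)\ell$, I keep the two-copies-per-layer host tree $T'$ described above and simply observe that for the edge $vw$ with $i' = i+1$, both $(y,i)$ and $(y,i+1)$ being bags with $v \in C_{y,i}$ and additionally $v \in C_{y,i+1}$ is \emph{not} needed: instead I argue via the \emph{vertex axiom} that the set of nodes containing $v$ is connected and "slides" only within layer $i$, and insert a single extra bag on the edge between layer-$i$ and layer-$(i+1)$ copies if necessary — the cleanest route, which I would actually write up, is: for the edge $vw$, if $v,w$ lie in layers $i,i+1$, put both $v$ and $w$ into the bag indexed by the \emph{edge} of $T'$ joining $(y,i)$ to $(y,i+1)$ after subdividing that edge; subdivision bags have size $2 \le (k+1)\ell$, so the width bound is preserved. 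For the vertex (connectivity) axiom: the set of nodes of $T'$ whose bag contains a fixed vertex $v \in A_x \cap V_i$ is $\{(y,i) : x \in B_y\}$ together with the relevant subdivision nodes; since $\{y : x \in B_y\}$ induces a subtree of $T$ by the tree-decomposition axiom for $H$, its copy inside $T^{(i)}$ is a subtree of $T'$, and adding the pendant subdivision nodes keeps it connected. Hence all axioms hold.

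The main obstacle, as the above digression shows, is purely bookkeeping: getting the endpoints of an \emph{inter-level} edge $vw$ into a \emph{single} common bag while keeping every bag of size at most $(k+1)\ell$, rather than the easy-but-lossy $2(k+1)\ell$. I expect the cleanest presentation is to subdivide the "vertical" edges of the layered host tree and give each subdivision node the two-element bag $\{v,w\}$ for the (unique, after refining the partition, or chosen, otherwise) inter-level edge it represents — this costs nothing in width and makes the connectivity axiom immediate. Everything else (the size bound, the compatible layering being the original one, the intra-level edge case) is direct from the definitions. I would then remark that if the original $H$-partition is connected and $H = G/\PP$, the construction also shows $G$ has layered treewidth at most $(k+1)\ell$ with no loss, matching the statement.
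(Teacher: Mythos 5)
There is a genuine gap, and its root cause is a conflation of \emph{bag size} with \emph{layered width}. Recall that layered treewidth is about the number of vertices a bag has \emph{within each layer}, not the total size of the bag; bags are allowed to span all layers. This misconception is what sends you on the long detour. Your ``union over consecutive layers'' construction $C'_{y,i} := C_{y,i} \cup C_{y,i+1}$ is in fact already correct: each such bag has at most $(k+1)\ell$ vertices in $V_i$ and at most $(k+1)\ell$ vertices in $V_{i+1}$ (and none elsewhere), so its layered width with respect to the original layering is $(k+1)\ell$, not $2(k+1)\ell$. You rejected a working construction on the basis of a bound it does not actually violate, and the ``cleanest route'' you settled on instead --- subdividing the edge $(y,i)$--$(y,i+1)$ and giving the new node the two-element bag $\{v,w\}$ --- does not survive scrutiny. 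If two inter-level edges $v_1w_1$ and $v_2w_2$ both need a subdivision bag on the \emph{same} edge of $T'$, you must place two subdivision nodes $s_1, s_2$ in series between $(y,i)$ and $(y,i+1)$; then (say) $w_1 \in B_{s_1}$ and $w_1 \in C_{(y,i+1)}$, but the path from $s_1$ to $(y,i+1)$ passes through $s_2$, whose bag is $\{v_2,w_2\}$ and need not contain $w_1$, so the subtree of nodes containing $w_1$ is disconnected. Hanging the subdivision node as a pendant off $(y,i)$ or $(y,i+1)$ instead fails for the opposite endpoint for the same reason.

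The paper's proof is far simpler precisely because it does not try to confine bags to layers. It keeps the tree $T$ from the decomposition of $H$, replaces each $v \in B_x$ by the entire part $A_v$, and keeps the original layering of $G$. The resulting bags are huge (they span all layers), but that is irrelevant: within any layer $V_i$, such a bag meets $V_i$ in at most $(k+1)\ell$ vertices, since each of the at most $k+1$ parts contributes at most $\ell$. Both tree-decomposition axioms are inherited directly from those of $(B_x)$, with no per-layer case analysis and no inter-level edge case. If you internalise that layered width is a per-layer quantity, this one-step substitution is the natural proof, and your $C'_{y,i}$ variant also works if you simply correct the width accounting.
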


\begin{proof}
Let $(B_x:x\in V(T))$ be a tree-decomposition of $H$ with bags of size at most $k+1$. Replace each instance of a vertex $v$ of $H$ in a bag $B_x$ by the part corresponding to $v$ in the $H$-partition. Keep the same layering of $G$. Since $|B_x|\leq k+1$, we obtain a tree-decomposition of $G$ with layered width at most $(k+1)\ell$.
\end{proof}

\cref{PartitionLayeredTreewidth} means that any property that holds for graph classes with bounded layered treewidth also holds for graph classes that admit bounded layered partitions. For example, Norin proved that every $n$-vertex graph with layered treewidth at most $k$ has treewidth less than $2\sqrt{kn}$ (see~\citep{DMW17}). With \cref{PartitionLayeredTreewidth}, this implies that if an $n$-vertex graph $G$ has a partition with layered width $\ell$ such that the quotient graph has treewidth at most $k$, then $G$ has treewidth at most $2\sqrt{(k+1)\ell n}$. This in turn leads to $O(\sqrt{n})$ balanced separator theorems for such graphs. 

\cref{PartitionLayeredTreewidth} suggests that having a partition of bounded layered width, whose quotient has bounded treewidth, seems to be a more stringent requirement than having bounded layered treewidth. Indeed the former structure leads to $O(1)$ bounds on the queue-number, instead of $O(\log n)$ bounds obtained via layered treewidth. That said, it is open whether graphs of bounded layered treewidth have bounded queue-number. 

Before continuing, we show that if one does not care about the exact treewidth bound, then it suffices to consider partitions with layered width 1.

\begin{lemma}
\label{MakeWidth1}
If a graph $G$ has an $H$-partition of layered width $\ell$ with respect to a layering $(V_0,V_1,\dots)$, for some graph $H$ of treewidth at most $k$, then $G$ has an $H'$-partition of layered width 1 with respect to the same layering, for some graph $H'$ of treewidth at most $(k+1)\ell-1$. 
\end{lemma}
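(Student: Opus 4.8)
The plan is to refine the given $H$-partition by splitting each part into at most $\ell$ new parts, one for each ``rank'' that a vertex can have within its layer, and then to obtain a tree-decomposition of the resulting quotient graph by blowing up a tree-decomposition of $H$ (much as in the proof of \cref{PartitionLayeredTreewidth}).

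In detail, let $(A_x:x\in V(H))$ be the given $H$-partition, so $|A_x\cap V_i|\le\ell$ for all $x\in V(H)$ and all $i\ge 0$. For each $x\in V(H)$ and each layer $V_i$, fix an arbitrary linear ordering of the (at most $\ell$) vertices of $A_x\cap V_i$. For $j\in\{1,\dots,\ell\}$, let $A_x^{j}$ consist of the $j$-th vertex of $A_x\cap V_i$, taken over all layers $V_i$ with $|A_x\cap V_i|\ge j$; discard those $A_x^{j}$ that are empty. Each vertex of $A_x$ lands in exactly the $A_x^{j}$ determined by its rank in its layer, so the sets $A_x^{j}$ partition $A_x$, and hence $\PP':=\{A_x^{j}\}$ is a partition of $V(G)$. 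By construction $|A_x^{j}\cap V_i|\le 1$ for every layer $V_i$, so $\PP'$ has layered width $1$ with respect to $(V_0,V_1,\dots)$. Take $H':=G/\PP'$; then $\PP'$ is trivially an $H'$-partition of $G$, and it remains only to bound the treewidth of $H'$.

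Observe first that if $A_x^{j}$ and $A_y^{j'}$ are adjacent in $H'$ then $A_x$ and $A_y$ are adjacent in $G/\PP$, so either $x=y$ or $xy\in E(H)$. Now let $(B_z:z\in V(T))$ be a tree-decomposition of $H$ of width at most $k$, so $|B_z|\le k+1$ for every node $z$. Set $B_z':=\{A_x^{j}\in\PP' : x\in B_z\}$ for each $z\in V(T)$; then $|B_z'|\le(k+1)\ell$. I claim $(B_z':z\in V(T))$ is a tree-decomposition of $H'$: for each part $A_x^{j}$ the set $\{z\in V(T):A_x^{j}\in B_z'\}=\{z\in V(T):x\in B_z\}$ induces a non-empty connected subtree of $T$; and for each edge $A_x^{j}A_y^{j'}$ of $H'$, either $x=y$ and both parts lie in $B_z'$ for any node $z$ with $x\in B_z$, or $xy\in E(H)$ and some node $z$ has $x,y\in B_z$, whence $A_x^{j},A_y^{j'}\in B_z'$. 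Thus $H'$ has treewidth at most $(k+1)\ell-1$, as required.

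I do not expect a genuine obstacle here: the refinement step is essentially forced, and the bound $(k+1)\ell$ on the new bag sizes is just the blow-up of $k+1$ by the factor $\ell$. The one point that needs care is that the conclusion constrains the treewidth of the \emph{quotient} $G/\PP'$, not of $H$ itself, so one must verify directly that the blown-up bags still satisfy the two tree-decomposition axioms for $G/\PP'$ — which, as above, is immediate from the fact that every edge of $G/\PP'$ projects to an edge or a loop of $H$.
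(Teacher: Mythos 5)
Your proof is correct and takes essentially the same approach as the paper: you refine each part by rank-within-layer and blow up a tree-decomposition of $H$ by replacing each bag $B_z$ with $\{A_x^j : x\in B_z\}$. The paper phrases the same argument slightly more abstractly, defining $H'$ as the graph obtained from $H$ by replacing each vertex with an $\ell$-clique and each edge with a $K_{\ell,\ell}$, and then injecting each $A_x\cap V_i$ into the clique $X_x$; your $H'=G/\PP'$ is a subgraph of that blown-up graph, so the two are interchangeable.
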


\begin{proof}
Let $(A_v:v\in V(H))$ be an $H$-partition of $G$ of layered width $\ell$ with respect to $(V_0,V_1,\dots)$, for some graph $H$ of treewidth at most $k$. 
Let $(B_x:x\in V(T))$ be a tree-decomposition of $H$ with width at most $k$. 
Let $H'$ be the graph obtained from $H$ by replacing each vertex $v$ of $H$ by an $\ell$-clique $X_v$ and replacing each edge $vw$ of $H$ by a complete bipartite graph $K_{\ell,\ell}$ between $X_v$ and $X_w$. 
For each $x\in V(T)$, let $B'_x := \cup\{X_v:v\in B_x\}$.
Observe that $(B'_x:x\in V(T))$ is a tree-decomposition of $H'$ of width at most $(k+1)\ell-1$.
For each vertex $v$ of $H$, and layer $V_i$,
there are at most $\ell$ vertices in $A_v \cap V_i$.
Assign each vertex in $A_v\cap V_i$ to a distinct element of $X_v$.
We obtain an $H'$-partition of $G$ with layered width 1, and the treewidth of $H$ is at most $(k+1)\ell-1$.
\end{proof}

\section{Queue Layouts via Layered Partitions}
\label{QLLP}

The next lemma is at the heart of all our results about queue layouts.

\begin{lemma}
\label{PartitionQueue}
For all graphs $H$ and $G$, if $H$ has a $k$-queue layout and $G$ has an $H$-partition of layered width $\ell$ with respect to some layering $(V_0,V_1,\dots)$ of $G$, then $G$ has a $(3\ell k+\floor*{\frac{3}{2}\ell})$-queue layout using vertex ordering $\overrightarrow{V_0},\overrightarrow{V_1},\dots$, where $\overrightarrow{V_i}$ is some ordering of $V_i$. In particular,
$$\qn(G) \leq 3 \ell \,\qn(H) + \floor*{\tfrac{3}{2}\ell}.$$
\end{lemma}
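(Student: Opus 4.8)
The plan is to build an explicit queue layout of $G$ from the given $k$-queue layout of $H$ and the $H$-partition of layered width $\ell$. The vertex ordering of $G$ is forced to be of the form $\overrightarrow{V_0},\overrightarrow{V_1},\dots$; the work is in choosing the orderings $\overrightarrow{V_i}$ within each layer and in assigning edges to queues. First I would fix the $k$-queue layout of $H$, say with vertex ordering $\overrightarrow{V(H)}$ and queue assignment $\phi : E(H) \to \{1,\dots,k\}$. Within a layer $V_i$, I would order the vertices as follows: first group the vertices of $V_i$ by which part $A_x$ of the $H$-partition they lie in, order these groups according to the position of $x$ in $\overrightarrow{V(H)}$, and within each group $A_x \cap V_i$ (which has at most $\ell$ vertices) fix an arbitrary order; this gives $\overrightarrow{V_i}$. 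The key structural observation is that this makes the ordering of $G$ simultaneously ``respect'' the layering (outer coordinate) and, within each layer, respect the ordering of $H$ (middle coordinate), with a bounded-size ($\le \ell$) innermost block.

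Next I would classify the edges of $G$ into three types relative to the layering: for an edge $vw$ with $v \in V_i$, $w \in V_j$, either (a) $i = j$ (intra-level), or (b) $|i-j| = 1$ (inter-level). I expect to need a finer split: inter-bag versus intra-bag within each of these. The heart of the argument is a counting/nesting analysis. For inter-bag edges, the trick is that each such edge of $G$ projects to an edge $xy$ of $H$; I would partition these $G$-edges by $\phi(xy) \in \{1,\dots,k\}$ and, within a fixed value of $\phi$, further by the relative positions of the endpoints within their length-$\le\ell$ innermost blocks — this is where a factor of roughly $3\ell$ (accounting for the three possibilities $i<j$, $i=j$, $i>j$ for the layer indices, or equivalently a careful bookkeeping that yields $3\ell k$) comes from. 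For intra-bag edges, both endpoints lie in a single part $A_x$, hence within a single layer they occupy a block of $\le \ell$ consecutive positions, so these edges can be absorbed into a bounded number of queues; the term $\floor*{\tfrac32 \ell}$ should come from a clean argument that $\le \ell$ vertices appearing in consecutive positions contribute at most $\floor*{\tfrac32\ell}$ queues' worth of intra-block edges (this is the standard fact that $K_\ell$ on a line, plus one extra layer of interaction, needs about $\tfrac32\ell$ queues — more precisely a matching-decomposition / Dilworth-type bound on non-nested families). I would verify in each class that two edges assigned the same label cannot nest, using that nesting in $G$ would force either a nesting in the $H$-layout (contradicting $\phi$) or a coincidence of block-positions and layer-indices that the labelling already separated.

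The main obstacle I anticipate is the bookkeeping that pins down the constants $3\ell k$ and $\floor*{\tfrac32\ell}$ exactly rather than some looser $O(\ell k)$ bound — in particular, handling the inter-level edges that are also inter-bag, where both the layer index can differ by one \emph{and} the $H$-endpoints differ, so one must be careful not to double-count and must exploit that the layering contributes only a factor related to ``$i<j$ vs $i>j$'' on top of the $H$-layout rather than an independent factor. I would organize this by first proving the bound for layered width $\ell = 1$ (where the innermost blocks are singletons and the analysis is cleanest, giving $3k+1$), and then lifting to general $\ell$ either directly by the block refinement above or by invoking \cref{MakeWidth1} as a sanity check on the shape of the bound. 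The final inequality $\qn(G) \le 3\ell\,\qn(H) + \floor*{\tfrac32\ell}$ is then immediate by applying the construction with $k = \qn(H)$.
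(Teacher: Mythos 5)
Your plan matches the paper's proof in its essential structure: the same within-layer ordering (the blocks $A_x\cap V_i$ listed according to the position of $x$ in the $H$-layout), the same classification of edges into intra/inter-level and intra/inter-bag, and the same arithmetic ($3\ell k$ for inter-bag and $\lfloor 3\ell/2\rfloor$ for intra-bag). The genuine gap is the step where you obtain the factor $\ell$ by an explicit labelling ``by the relative positions of the endpoints within their $\le\ell$-blocks'' and then check that same-label edges cannot nest. No single-position labelling works. Take the intra-level inter-bag class for a fixed $\alpha$ and label the edge $vw$ by the position of $v$ in its block. Now take $xy,x'y\in E_\alpha$ with $x\prec x'\prec y$ in the $H$-layout (these do not nest in $H$, since they share $y$), pick $v\in A_x\cap V_i$ and $v'\in A_{x'}\cap V_i$ at the same block position, and pick $w'\prec w$ in $A_y\cap V_i$; then $v\prec v'\prec w'\prec w$ is a nested pair with the same label. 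Labelling by the right endpoint's position fails symmetrically (use $H$-edges sharing $x$), and labelling by the ordered pair of block positions gives $\ell^2$ rather than $\ell$ subclasses, inflating the bound to $3\ell^2k+\dots$.

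The paper does not produce the $\ell$ queues per class by a formula at all; it proves a small lemma, \cref{Blowup}, and invokes the Heath--Rosenberg characterisation. \cref{Blowup} states that if $H_0$ has a $1$-queue layout and each vertex is replaced in place by a block of at most $\ell$ vertices, then every rainbow in the resulting ordering has at most $\ell$ edges: otherwise the innermost and outermost edges of the rainbow have disjoint endpoint-blocks on both sides and hence project to nested edges of $H_0$, so all left endpoints (or all right endpoints) of the rainbow lie in one block of size $\le\ell$. Then \citep{HR92} (a vertex ordering admits a $k$-queue layout iff it has no $(k{+}1)$-edge rainbow) converts this rainbow bound into an $\ell$-queue assignment. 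Substituting this rainbow argument for your explicit labelling --- once per each of the $k$ intra-level inter-bag classes, once per each of the $2k$ inter-level inter-bag orientation classes (your ``$i<j$ vs.\ $i>j$'' split after normalising $x\prec y$ in $H$), and once for the inter-level intra-bag edges --- and adding $\lfloor\ell/2\rfloor$ for the intra-level intra-bag cliques, yields exactly $3\ell k+\lfloor 3\ell/2\rfloor$ and completes your argument.
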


The next lemma is useful in the proof of \cref{PartitionQueue}.

\begin{lemma}
\label{Blowup} 
Let $v_1,\dots,v_n$ be the vertex ordering in a 1-queue layout of a graph $H$. Define a graph $G$ with vertex-set $B_1\cup\dots\cup B_n$, where $B_1,\dots,B_n$ are pairwise disjoint sets of vertices (called `blocks'), each with at most $\ell$ vertices. For each edge $v_iv_j\in E(H)$, add an edge to $G$ between each vertex in $B_i$ and each vertex in $B_j$. Then the vertex-ordering of $G$ obtained from $v_1,\dots,v_n$ by replacing each $v_i$ by $B_i$ admits an $\ell$-queue layout of $G$.
\end{lemma}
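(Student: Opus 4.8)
The plan is to produce an explicit queue layout of $G$ from the given $1$-queue layout of $H$. First I would fix the vertex ordering of $G$: within each block $B_i$ order the (at most $\ell$) vertices arbitrarily, and concatenate the blocks in the order $B_1, B_2, \dots, B_n$ inherited from $v_1, \dots, v_n$. So a vertex $b\in B_i$ precedes a vertex $b'\in B_j$ whenever $i<j$, and ties inside a block are broken by the fixed internal order.

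Next I would assign each edge of $G$ to one of $\ell$ queues. An edge of $G$ joins some $b\in B_i$ to some $b'\in B_j$ with $v_iv_j\in E(H)$; say $i<j$. Assign this edge to queue number $t$, where $t\in\{1,\dots,\ell\}$ is the rank of $b'$ (the higher endpoint's block-vertex) in the internal order of $B_j$. I would then verify that each queue is nonnested. Suppose two edges $e=bb'$ and $f=cc'$ are in the same queue $t$, with $b\in B_i$, $b'\in B_j$, $c\in B_k$, $c'\in B_\ell$, and $i<j$, $k<\ell$. Because $e$ and $f$ lie in queue $t$, both $b'$ and $c'$ have rank $t$ in their respective blocks; in particular if $j=\ell$ then $b'=c'$, so the two edges share an endpoint and cannot nest. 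If $j\neq \ell$, say $j<\ell$, then $b'$ strictly precedes $c'$ in the vertex ordering of $G$. For $e$ and $f$ to nest we would need the intervals $[b,b']$ and $[c,c']$ strictly nested; since $b'\prec c'$, this forces $c\prec b\preceq b'\prec c'$, i.e. the pair $v_kv_\ell, v_iv_j$ of edges of $H$ with $k<i$ and $j<\ell$ — but then (after possibly collapsing the case $v_i=v_k$, which would put $b,c$ in the same block with $c\prec b$, and $v_j\ne v_\ell$ strictly) the edges $v_iv_j$ and $v_kv_\ell$ would nest in the $1$-queue layout of $H$, a contradiction. I need to treat the boundary cases carefully: when $v_i=v_k$ (so $i=k$) the $H$-edges are not nested but the $G$-edges still cannot nest because $b$ and $c$ are in the same block and hence consecutive enough that $b\prec b'\prec c'$ together with $c\prec b$ still leaves $[c,c']\supsetneq[b,b']$ only if $c\prec b$, but then $v_j\ne v_\ell$ and $v_iv_j$, $v_iv_\ell$ share the endpoint $v_i$ in $H$ — wait, that is allowed in a queue layout — so here I must instead use that distinct edges of $H$ sharing an endpoint are never nested, which still gives the contradiction, and otherwise if $b=c$ the $G$-edges share an endpoint.

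The main obstacle, and the step that needs the most care, is exactly this case analysis at the block boundaries: two $G$-edges can fail to correspond to two distinct $H$-edges (they may collapse to a shared block or a shared $H$-edge or $H$-edges sharing an endpoint), and in each such degenerate case one must check directly that the $G$-edges share an endpoint in $G$ or else are separated rather than nested. Once the queue-assignment rule ``index by the rank of the right endpoint within its block'' is pinned down, every case reduces either to a shared endpoint in $G$ or to a genuine nesting of two distinct non-endpoint-sharing edges of $H$, contradicting that the given layout of $H$ is a $1$-queue layout. I would also remark that the number of queues used is at most $\ell$ since ranks lie in $\{1,\dots,\ell\}$, which is what is claimed.
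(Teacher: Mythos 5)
Your queue-assignment rule does not work, and the gap is exactly the boundary case you flag and then fail to resolve. Consider $H$ with vertices $v_1,v_2,v_3$, edges $v_1v_2$ and $v_1v_3$, and vertex order $v_1,v_2,v_3$; this is a valid $1$-queue layout since the two edges share the endpoint $v_1$ and hence do not nest. Take $\ell=2$, blocks $B_1=\{a_1,a_2\}$, $B_2=\{b_1,b_2\}$, $B_3=\{c_1,c_2\}$, in the blown-up order $a_1,a_2,b_1,b_2,c_1,c_2$. Your rule assigns the edge $a_2b_1$ and the edge $a_1c_1$ both to queue $1$, because each right endpoint has rank $1$ in its block. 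But these two edges nest: $a_1\prec a_2\prec b_1\prec c_1$, with all four vertices distinct. The $H$-side argument cannot rescue this, for exactly the reason you yourself noticed mid-proof: when $i=k$, the two $H$-edges $v_iv_j$ and $v_iv_l$ share the endpoint $v_i$, so they are \emph{not} nested in $H$ and there is no contradiction. Your parenthetical ``which still gives the contradiction'' is a non sequitur; there is nothing to contradict. A symmetric example (two edges of $H$ sharing their right endpoint) kills the ``rank of the left endpoint'' variant as well, so swapping sides does not fix the assignment.

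The paper sidesteps this by not exhibiting an explicit queue assignment at all. It invokes the Heath--Rosenberg characterisation that a fixed vertex ordering admits a $k$-queue layout if and only if every rainbow (set of pairwise nested edges) has size at most $k$, and then bounds the rainbow size: if the left endpoints of a rainbow $R$ met at least two blocks and the right endpoints of $R$ also met at least two blocks, then the innermost and outermost edges of $R$ would have their four endpoints in four distinct blocks, yielding two disjoint nested edges of $H$, contradicting the $1$-queue hypothesis; hence the left endpoints (say) of $R$ all lie in one block and $|R|\le\ell$. This global rainbow argument never has to disentangle the degenerate shared-block cases that break your local ``rank of one endpoint'' rule. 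If you want a constructive assignment you would need a genuinely two-sided rule (such as the greedy one implicit in the Heath--Rosenberg equivalence), not the rank of a single endpoint.
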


\begin{proof}
 A \emph{rainbow} in a vertex ordering of a graph $G$ is a set of pairwise nested edges (and thus a matching).  Say $R$ is a rainbow in the ordering of $V(G)$.  \citet{HR92} proved that a vertex ordering of any graph admits a $k$-queue layout if and only if every rainbow has size at most $k$. Thus it suffices to prove that $|R|\leq\ell$.   If the right endpoints of $R$ belong to at least two different blocks, and the left endpoints of $R$ belong to at least two different blocks, then no endpoint of the innermost edge in $R$ and no endpoint of the outermost edge in $R$ are in a common block, implying that the corresponding edges in $H$ have no endpoint in common, and therefore are nested. Since no two edges in $H$ are nested, without loss of generality, the left endpoints of $R$ belong to one block. Hence there are at most $\ell$ left endpoints of $R$, implying $|R|\leq\ell$, as desired. 
\end{proof}

In what follows, the graph $G$ in \cref{Blowup} is called an \emph{$\ell$-blowup} of $H$.

\begin{proof}[Proof of \cref{PartitionQueue}]
Let $(A_x:x\in V(H))$ be an $H$-partition of $G$ of layered width $\ell$ with respect to some layering $(V_0,V_1,\dots)$ of $G$; that is, $|A_x\cap V_i|\leq \ell$  for all $x\in V(H)$ and $i \geq 0$. Let $(x_1,\dots,x_h)$ be the vertex ordering and $E_1,\dots,E_k$ be the queue assignment in a $k$-queue layout of $H$.

We now construct a $(3\ell k+\floor*{\frac{3}{2}\ell})$-queue layout of $G$. Order each layer $V_i$ by
$$\overrightarrow{V_i} := A_{x_1}\cap V_i, A_{x_2}\cap V_i, \dots,A_{x_h}\cap V_i,$$
where each set $A_{x_j}\cap V_i$ is ordered arbitrarily.
We use the ordering
$\overrightarrow{V_0},\overrightarrow{V_1},\dots$
of $V(G)$ in our queue layout of $G$.
It remains to assign the edges of $G$ to queues.
We consider four types of edges, and use distinct queues for edges of each type. 

\textbf{Intra-level intra-bag edges:} 
Let $G^{(1)}$ be the subgraph formed by the edges $vw\in E(G)$, where $v,w\in A_x\cap V_i$ for some $x\in V(H)$ and $i \geq 0$. \citet{HR92} noted that the complete graph on $\ell$ vertices has queue-number $\floor{\frac{\ell}{2}}$. Since $|A_{x}\cap V_i|\leq \ell$,  at most $\floor{\frac{\ell}{2}}$ queues suffice for edges in the subgraph of $G$ induced by $A_{x}\cap V_i$. These subgraphs are separated in $\overrightarrow{V_0},\overrightarrow{V_1},\dots$. Thus $\floor{\frac{\ell}{2}}$ queues suffice for all intra-level intra-bag edges.

\textbf{Intra-level inter-bag edges:}
For $\alpha\in\{1,\dots,k\}$ and $i \geq 0$, let $G^{(2)}_{\alpha,i}$ be the subgraph of $G$ formed by those edges $vw\in E(G)$ such that $v\in A_x\cap V_i$ and $w\in A_y \cap V_i$ for some edge $xy\in E_\alpha$.
Let $Z^{(2)}_\alpha$ be the $1$-queue layout of the subgraph $(V(H),E_\alpha)$ of $H$ on all edges in queue $\alpha$.
Observe that $G^{(2)}_{\alpha,i}$ is a subgraph of the graph isomorphic to the $\ell$-blowup of $Z^{(2)}_\alpha$.
By \cref{Blowup},
$\overrightarrow{V_0},\overrightarrow{V_1},\dots$
admits an $\ell$-queue layout of $G^{(2)}_{\alpha,i}$.
As the subgraphs $G^{(2)}_{\alpha,i}$ for fixed $\alpha$ but different $i$ are separated in $\overrightarrow{V_0},\overrightarrow{V_1},\dots$, $\ell$ queues suffice for edges in $\bigcup_{i \geq 0} G^{(2)}_{\alpha,i}$ for each $\alpha \in \{1,\ldots,k\}$.
Hence $\overrightarrow{V_0},\overrightarrow{V_1},\dots$
admits an $\ell k$-queue layout of the intra-level inter-bag edges.

\textbf{Inter-level intra-bag edges:}
Let $G^{(3)}$ be the subgraph of $G$ formed by those edges $vw \in E(G)$ such that $v\in A_x \cap V_i$ and $w \in A_x \cap V_{i+1}$ for some $x \in V(H)$ and $i \geq 0$.
Consider the graph $Z^{(3)}$ with ordered vertex set
\[
 z_{0,x_1},\dots,z_{0,x_h};\, z_{1,x_1},\dots,z_{1,x_h};\, \dots
\]
and edge set $\{z_{i,x}z_{i+1,x} : i \geq 0, x \in V(H)\}$.
Then no two edges in $Z^{(3)}$ are nested.
Observe that $G^{(3)}$ is isomorphic to a subgraph of the $\ell$-blowup of $Z^{(3)}$.
By \cref{Blowup},
$\overrightarrow{V_0},\overrightarrow{V_1},\dots$ admits
an $\ell$-queue layout of the intra-level inter-bag edges.

\textbf{Inter-level inter-bag edges:}
We partition these edges into $2k$ sets.
For $\alpha\in\{1,\dots,k\}$,
let  $G^{(4a)}_\alpha$ be the spanning subgraph of $G$
formed by those edges $vw\in E(G)$ where $v\in A_x\cap V_i$ and $w\in A_y\cap V_{i+1}$ for some $i\geq 0$
and for some edge $xy$ of $H$ in $E_\alpha$, with $x\prec y$  in the ordering of $H$.
Similarly,
for $\alpha\in\{1,\dots,k\}$,
let  $G^{(4b)}_\alpha$ be the spanning subgraph of $G$
formed by those edges $vw\in E(G)$ where $v\in A_x\cap V_i$ and $w\in A_y\cap V_{i+1}$ for some $i \geq 0$
and for some edge $xy$ of $H$ in $E_\alpha$, with $y \prec x$ in the ordering of $H$.

For $\alpha\in\{1,\dots,k\}$, let $Z^{(4a)}_\alpha$ be the graph with ordered vertex set
\[
 z_{0,x_1},\dots,z_{0,x_h};\, z_{1,x_1},\dots,z_{1,x_h};\, \dots
\]
and edge set
$\{z_{i,x}z_{i+1,y}: i \geq 0, x,y \in V(H), xy \in E_\alpha, x \prec y \}$.
Suppose that two edges in $Z^{(4a)}$ nest.
This is only possible for edges $z_{i,x}z_{i+1,y}$ and $z_{i,p}z_{i+1,q}$,
where $z_{i,x} \prec z_{i,p} \prec z_{i+1,q} \prec z_{i+1,y}$.
Thus, in $H$, we have $x \prec p$ and $q\prec y$.
By the definition of $Z^{(4a)}$, we have $x \prec y$ and $p\prec q$.
Hence $x \prec p \prec q\prec y$, which contradicts that $xy,pq\in E_\alpha$.
Therefore no two edges are nested in $Z^{(4a)}$.

Observe that $G^{(4a)}_\alpha$ is isomorphic to a subgraph of the $\ell$-blowup of $Z^{(4)}_\alpha$.
By \cref{Blowup},
$\overrightarrow{V_0},\overrightarrow{V_1},\dots$
admits an $\ell$-queue layout of $G^{(4a)}_\alpha$.
An analogous argument shows that
$\overrightarrow{V_0},\overrightarrow{V_1},\dots$
admits an $\ell$-queue layout of $G^{(4b)}_\alpha$.
Hence
$\overrightarrow{V_0},\overrightarrow{V_1},\dots$
admits a $2k\ell$-queue layout of all the inter-level inter-bag edges.

In total, we use $\floor*{\frac{\ell}{2}}+ k\ell+\ell+ 2k\ell$ queues.
\end{proof}

The upper bound of $3 \ell \,\qn(H) + \floor*{\frac{3}{2}\ell}$ in \cref{PartitionQueue} is tight, in the sense that it is possible that the vertex ordering produced by \cref{PartitionQueue} has 
$3 \ell \,\qn(H) + \floor*{\frac{3}{2}\ell}$ pairwise nested edges, and thus at least this many queues are needed.

\cref{PartitionQueue,TreewidthQueue} imply that a graph class that admits bounded layered partitions has bounded queue-number. In particular:

\begin{corollary}
\label{klPartitionQueue}
If a graph $G$ has a partition $\PP$ of layered width $\ell$ such that $G / \PP$ has treewidth at most $k$, then $G$ has queue-number at most 
$3 \ell (2^k-1) + \floor*{\tfrac{3}{2}\ell}$.
\end{corollary}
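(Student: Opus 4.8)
The plan is to obtain this statement as an immediate consequence of the two results just proved, namely \cref{PartitionQueue} and \cref{TreewidthQueue}. First I would set $H := G/\PP$, the quotient graph of the given partition $\PP$. By hypothesis $H$ has treewidth at most $k$, so \cref{TreewidthQueue} gives $\qn(H) \le 2^k - 1$; that is, $H$ admits a $(2^k-1)$-queue layout.

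Next I would observe that the partition $\PP$ is an $H$-partition of $G$ in the sense of \cref{Partitions}, simply by taking $H$ to be the quotient graph $G/\PP$ itself (every graph contains itself as a spanning subgraph, so the defining condition is trivially met), and that this $H$-partition has layered width $\ell$ by assumption. Therefore \cref{PartitionQueue}, applied with this $H$ and with its queue-number in place of the parameter $k$ appearing there, yields
$$\qn(G) \;\le\; 3\ell\,\qn(H) + \floor*{\tfrac{3}{2}\ell} \;\le\; 3\ell(2^k-1) + \floor*{\tfrac{3}{2}\ell},$$
which is exactly the claimed bound.

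There is no real obstacle here: the corollary is a direct composition of the two lemmas, and the only point worth stating explicitly is the harmless identification of the quotient $G/\PP$ as the host graph $H$ of an $H$-partition, after which the layered-width hypothesis transfers verbatim. Consequently the write-up should be just a few lines.
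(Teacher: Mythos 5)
Your proof is correct and takes exactly the route the paper intends: identify $H := G/\PP$, bound $\qn(H)$ by $2^k-1$ via \cref{TreewidthQueue}, note that $\PP$ is trivially an $H$-partition of layered width $\ell$, and feed this into \cref{PartitionQueue}. The paper presents \cref{klPartitionQueue} without a written-out proof precisely because it is this immediate composition, so your argument matches it in every respect.
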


\section{Proof of \cref{PlanarQueue}: Planar Graphs}\label{planar}

Our proof that planar graphs have bounded queue-number employs \cref{klPartitionQueue}. Thus our goal is to show that planar graphs admit bounded layered partitions, which is achieved  in the following key contribution of the paper.

\begin{theorem}
\label{Width1LayeredPartition}
Every planar graph $G$ has a connected partition $\PP$ with layered width $1$ such that $G / \PP$ has treewidth at most $8$. 
Moreover, there is such a partition for every BFS layering of $G$.
\end{theorem}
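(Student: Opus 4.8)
The plan is to reduce to triangulations, build $\PP$ out of \emph{vertical paths} of a BFS spanning tree, and group those paths by peeling the triangulation from the outer face inwards. Fix a BFS layering $(V_0,V_1,\dots)$ of $G$ and a BFS spanning tree $T$ realising it. Since quotients, layered width, and connectedness all behave well with respect to disjoint unions, I may assume $G$ is connected and $V_0=\{r\}$. First add edges to obtain an edge-maximal planar graph $G^+\supseteq G$ on the same vertex set for which $(V_0,V_1,\dots)$ is still a layering; this is a standard fact about plane graphs carrying a BFS layering (one can triangulate each face using only chords between vertices in the same or consecutive layers). Because $(V_0,V_1,\dots)$ remains a layering, every $r$–$v$ path in $G^+$ has length at least the layer of $v$, while the $T$-path has exactly that length, so $T$ is still a BFS spanning tree of $G^+$. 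Every part of the partition we build will be a path of $T$, hence connected in $G$ and of layered width $1$ (consecutive vertices of a $T$-path lie in consecutive layers). Finally $G/\PP$ is a spanning subgraph of $G^+/\PP$, so it suffices to bound the treewidth of $G^+/\PP$ by $8$. Since $T$ can be taken to realise any prescribed BFS layering, this also yields the ``moreover'' clause.

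Embed $G^+$ with $r$ on the outer face, with outer triangular face $rab$, so $a,b\in V_1$; call a path of $T$ from a vertex to one of its ancestors a \emph{vertical path}, and note the outer triangle is the union of the trivial vertical paths $\{r\},\{a\},\{b\}$. I process $G^+$ as a nested sequence of \emph{near-triangulations} — plane subgraphs $R$ whose internal faces are all triangles — maintaining the invariant that every vertex not in the open interior of $R$ has been assigned to a part of $\PP$, and that the boundary cycle $\partial R$ is the union of at most three such assigned vertical paths. Given such an $R$ that is more than a triangle, choose a suitable interior triangle $v_1v_2v_3$, and for each $i$ let $P_i$ be the portion of the root-path of $v_i$ lying in the interior of $R$, truncated just below the first already-assigned vertex it meets; add $P_1,P_2,P_3$ to $\PP$. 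The \emph{tripod} $P_1\cup P_2\cup P_3$ together with $\partial R$ cuts the interior of $R$ into at most three smaller near-triangulations $R_1,R_2,R_3$, and the crucial topological point is that the boundary of each $R_j$ is the union of at most three already-assigned vertical paths: at most two legs of the tripod plus one subpath of one of the three vertical paths forming $\partial R$ (a subpath of a vertical path is a vertical path). Recursing on $R_1,R_2,R_3$ terminates since $|V(R_j)|<|V(R)|$, and assigns every vertex of $G^+$ to exactly one vertical path.

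For the tree-decomposition of $H:=G^+/\PP$, take the (ternary) recursion tree $\mathcal T$, one node $t_R$ per near-triangulation encountered. The bag of $t_R$ consists of the $H$-vertices for the at most three vertical paths forming $\partial R$, the at most three legs of the tripod added at $R$, and (to cover edges running between these paths) at most three further already-assigned vertical paths sitting just above the tops of the legs — so each bag has size at most nine and the width is at most eight. One then checks the two tree-decomposition axioms: each assigned vertical path $A$ occupies the bags of precisely those $t_R$ for which $A$ lies on, or just above, $\partial R$, and these form a subtree of $\mathcal T$ (the subtree below the node that created $A$, pruned where $A$ leaves the active boundary); and each edge of $H$, arising from an edge of $G^+$ between distinct parts $A\neq B$, is witnessed at the node $t_R$ where $A$ and $B$ first appear together — both on $\partial R$, or one on $\partial R$ and one a leg, or (using the planarity of the tripod, whose legs only meet near $v_1v_2v_3$) two legs of the same tripod, which lie in a single sub-region whose bag can be arranged to contain all three legs.

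The main obstacle is the second step: showing that the triangle $v_1v_2v_3$ can always be chosen so that the three sub-regions each inherit a boundary that is a union of only \emph{three} assigned vertical paths, while simultaneously every quotient edge is caught by a bag of size at most nine. This is a delicate planar-topological induction — the technical heart of the argument — and pinning down the exact constant $8$ depends on organising the bags so that the legs may share their top vertices with previously assigned paths (making the tree-edges that attach a tripod to $\partial R$ cost nothing extra) and on controlling how the three sub-regions of a tripod overlap in $\mathcal T$.
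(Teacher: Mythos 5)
Your architecture matches the paper's (triangulate, use BFS spanning tree, peel by tripods from a trichromatic triangle, recurse on sub-near-triangulations, build the tree-decomposition on the recursion tree), but the invariant you propose is wrong and it breaks the width accounting. You claim each sub-region inherits a boundary that is the union of at most \emph{three} assigned vertical paths. Already at the first step this fails: if $\partial R$ is the union of three vertical paths $P_1,P_2,P_3$ and the tripod legs land on them at $v_1',v_2',v_3'$, then a sub-region between two consecutive legs is bounded by two legs together with \emph{two} boundary pieces (a suffix of $P_i$ and a prefix of $P_{i+1}$), so its boundary is already the union of four vertical paths; after further rounds this grows. The paper's Lemma~\ref{NearTriang} handles this with the invariant $F=[P_1,\dots,P_k]$ for $k\in\{1,\dots,6\}$: the $k$ boundary paths are grouped into three blocks $R_1,R_2,R_3$ of at most two each, so each sub-region boundary is $[Q_i',R_i^+,R_{i+1}^-,Q_{i+1}']$, a concatenation of at most $1+2+2+1=6$ vertical paths, and the bound $k\leq 6$ is self-sustaining. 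The bag of size $9$ is then simply the $\leq 6$ boundary paths plus the $\leq 3$ new legs $Q_1',Q_2',Q_3'$; your extra ``three further paths just above the tops of the legs'' are neither needed nor well defined under a coherent invariant.

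Two further gaps. First, you never say how to ``choose a suitable interior triangle''; the paper gets it from Sperner's Lemma after 3-colouring vertices by which block $R_i$ their root-to-$r$ path in $T$ first meets on $\partial R$ (this colouring is what guarantees a trichromatic triangle and makes $Q_1,Q_2,Q_3$ disjoint and vertical). Second, your edge-property sketch relies on the false $\leq 3$ claim — if $\partial R$ is actually six paths, some of them are not in the bag at $t_R$, and quotient edges between an interior path and a boundary path could be missed. You do flag the sub-region-boundary step as ``the technical heart'' and leave it open; that step is exactly where your proposal diverges from a correct proof, and fixing it by allowing up to six boundary paths, a Sperner colouring, and the paper's recursive invariant is essentially Lemma~\ref{NearTriang}.
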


This theorem and \cref{klPartitionQueue} imply that planar graphs have bounded queue-number (\cref{PlanarQueue}) with an upper bound of $3 (2^8-1) + \floor*{\tfrac{3}{2} 3} = 766$.

We now set out to prove \cref{Width1LayeredPartition}. The proof is inspired by the following elegant result  of \citet{PS18}: Every planar graph $G$ has a partition $\PP$ into geodesics such that $G/\PP$ has treewidth at most $8$. Here, a \emph{geodesic} is a path of minimum length between its endpoints. We consider the following particular type of geodesic. If $T$ is a tree rooted at a vertex $r$, then a non-empty path $(x_1,\dots,x_p)$ in $T$ is \emph{vertical} if for some $d\geq 0$ for all $i\in\{0,\dots,p\}$ we have $\dist_T(x_i,r)=d+i$.  The vertex $x_1$ is called the \emph{upper endpoint} of the path and $x_p$ is its \emph{lower endpoint}. Note that every vertical path in a BFS spanning tree is a geodesic. Thus the next theorem strengthens the result of \citet{PS18}.

\begin{theorem}
\label{FindVerticalPaths}
Let $T$ be a rooted spanning tree in a connected planar graph $G$. Then $G$ has a partition $\PP$ into vertical paths in $T$ such that $G / \PP$ has treewidth at most $8$.
\end{theorem}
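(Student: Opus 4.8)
The plan is to use the structure of a BFS-like layering together with the planarity of $G$ to cut the graph into vertical paths of $T$ and build a tree-decomposition of $G/\PP$ directly. First I would let $r$ be the root of $T$ and consider the layering $(V_0,V_1,\dots)$ where $V_i=\{v:\dist_T(r,v)=i\}$; this is a valid layering of $G$ only if $T$ is a BFS tree, so in general I would instead work with the ordered partition of $V(G)$ according to $\dist_T(r,\cdot)$ and observe that every edge of $G$ either stays within one layer or joins consecutive layers, because $T$ is a spanning tree and in the worst case an edge $vw$ of $G$ with $v$ an ancestor of $w$ at distance $d$ could be long — so actually the honest approach is to assume $G$ is the graph on which $T$ is a BFS tree, or to only use that $T$ is rooted and argue combinatorially. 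The cleaner route, following Pilipczuk--Siebertz, is: take a BFS spanning tree structure, form the partition of $V(G)$ into maximal vertical paths of $T$ obtained by a suitable greedy/peeling procedure, and show that the ``touching pattern'' of these paths has bounded treewidth.

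The key steps, in order, are as follows. (1) Set up the vertical-path partition: decompose $T$ into vertical paths by, for each vertex, keeping exactly one child-edge ``heavy'' (this is essentially a heavy-path or a left-most-path decomposition of $T$), so that $\PP$ consists of vertical paths and each part is connected. (2) Prove a planar separation/embedding fact: fix a planar embedding of $G$; for each vertical path $P\in\PP$ consider the set of faces and the cyclic structure around $P$, and bound how many other parts $P$ can ``see'' in a way that organizes into a tree. The heart of this is the observation of Pilipczuk--Siebertz that in a planar graph with a BFS tree, the parts can be arranged so that the quotient admits a tree-decomposition where each bag consists of a bounded number of vertical paths hanging off a common ``spine'' — concretely one shows that $G/\PP$ has a tree-decomposition in which the bags correspond to the vertical paths meeting a given face or a given small region, and planarity (Euler's formula / the fact that $G/\PP$ is a minor of $G$, hence planar when $\PP$ is connected) forces these bags to have size at most $9$. (3) Conclude: since $\PP$ is a connected partition, $G/\PP$ is a planar minor of $G$; combine this with the combinatorial bound from step~(2) to get treewidth at most~$8$.

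I expect the main obstacle to be step~(2): making precise the claim that the vertical paths can be ``tree-decomposed'' with bags of size $\le 9$. The natural tool is to take the dual or the radial/face structure of the planar embedding and route a tree through it; one has to carefully track, for each region of the decomposition, which vertical paths have a vertex on its boundary, and argue that consecutive regions share all but a bounded number of paths — this is where planarity is used in an essential way, likely via a charging argument on faces or via induction on the number of layers, peeling off the outer layer $V_0,V_1,\dots$ one at a time and maintaining the invariant that the ``active frontier'' of paths forms a structure of bounded width. A secondary subtlety is the role of $T$ being merely a rooted spanning tree rather than a BFS tree: I would reduce to the BFS case by noting that the relevant property is only that each part is a vertical path (a geodesic in $T$), and that the argument of Pilipczuk--Siebertz uses the face structure and the ancestor relation in $T$, not the exact distances in $G$; alternatively, one replaces $G$ by a supergraph in which $T$ is genuinely BFS without changing the vertical paths, proves the bound there, and restricts. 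The bound $8$ (rather than something larger) should come out of optimizing the face-boundary counting, exactly as in~\citep{PS18}.
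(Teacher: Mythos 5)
Your proposal misses the essential mechanism of the proof and, as stated, contains a step that cannot work. You propose to \emph{fix} a partition $\PP$ into vertical paths in advance (via a heavy-path or leftmost-path decomposition of $T$), and only then argue that $G/\PP$ has bounded treewidth ``by planarity.'' Both halves of this plan are problematic. First, not every partition of a planar graph into vertical paths has a bounded-treewidth quotient: taking $\PP$ to be the partition into singletons (each a perfectly good vertical path of length $0$) gives $G/\PP=G$, which can have arbitrarily large treewidth. So some careful choice of paths is indispensable, and there is no reason a heavy-path decomposition of $T$ --- which is defined without reference to the non-tree edges of $G$ or to the embedding --- should be that choice. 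Second, your step~(2) gestures at planarity as the reason the bags are small, and your step~(3) says ``since $\PP$ is a connected partition, $G/\PP$ is a planar minor of $G$; combine this with the combinatorial bound.'' But planarity of $G/\PP$ tells you nothing about its treewidth: planar graphs have unbounded treewidth. The claim that the fixed paths ``organize into a tree'' with bags of size $9$ is exactly what needs to be proved, and nothing in the proposal supplies a mechanism for it. You acknowledge step~(2) is the main obstacle, but it is not merely an obstacle to making the argument precise --- for a fixed heavy-path decomposition, the claimed bound is simply not a theorem.

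The paper's actual proof is structured quite differently, and the difference matters: the vertical paths are \emph{constructed during the induction, not in advance}. The theorem is reduced to a stronger lemma about near-triangulations $G$ whose outer cycle $F$ is a concatenation $[P_1,\dots,P_k]$ of at most $6$ vertical paths of $T$ lying on $F$. One groups the boundary into three arcs $R_1,R_2,R_3$, $3$-colours each vertex of $G$ by the arc that its $T$-path to the root first meets, and applies Sperner's Lemma to produce a trichromatic internal triangle $\tau=v_1v_2v_3$. The three vertical paths $Q_i$ from $v_i$ up to $F$ (minus their last vertices) are the \emph{new} parts introduced at this step, and they split the interior into $\le 3$ smaller near-triangulations each bounded by at most $6$ vertical paths; one recurses, and glues the resulting tree-decompositions at a new root bag containing the $\le 6$ boundary paths together with the $\le 3$ new paths $Q_1',Q_2',Q_3'$, for bags of size $\le 9$. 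Thus the partition and the tree-decomposition are built \emph{together}, and it is Sperner's Lemma --- not an a~priori counting of faces meeting fixed heavy paths --- that guarantees the recursion can always make progress with a bounded-size boundary. Your ``peeling off BFS layers'' idea is also not how the proof goes (and indeed the statement applies to an arbitrary rooted spanning tree, not only a BFS tree; the argument uses only the ancestor relation and the embedding, exactly as you suspected). To repair your proposal you would need to abandon the fixed decomposition of $T$ and instead discover the paths recursively, which is precisely the Sperner-based induction.
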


\begin{proof}[Proof of \cref{Width1LayeredPartition} assuming \cref{FindVerticalPaths}] 
We may assume that $G$ is connected (since if each component of $G$ has the desired partition, then so does $G$). 
Let $T$ be a BFS spanning tree of $G$. 
By \cref{FindVerticalPaths}, $G$ has a partition $\PP$ into vertical paths in $T$ such that $G/\PP$ has treewidth at most $8$.  
Each path in $\PP$ is connected and has at most one vertex in each BFS layer corresponding  to $T$. Hence 
$\PP$ is connected and has layered width 1.
\end{proof}

The proof of \cref{FindVerticalPaths} is an inductive proof of a stronger statement given in \cref{NearTriang} below. A \emph{plane graph} is a graph embedded in the plane with no crossings. A \emph{near-triangulation} is a plane graph, where the outer-face is a simple cycle, and every internal face is a triangle. For a cycle $C$, we write $C=[P_1,\dots,P_k]$ if $P_1,\dots,P_k$ are pairwise disjoint non-empty paths in $C$, and the endpoints of each path $P_i$ can be labelled $x_i$ and $y_i$ so that $y_ix_{i+1} \in E(C)$ for $i\in\{1,\dots,k\}$, where $x_{k+1}$ means $x_1$. This implies that $V(C)=\bigcup_{i=1}^k V(P_i)$.

\begin{lemma}
\label{NearTriang}
Let $G^+$ be a plane triangulation, let $T$ be a spanning tree of $G^+$ rooted at some vertex $r$ on the outer-face of $G^+$, and let $P_1,\ldots,P_k$ for some $k\in\{1,2,\dots, 6\}$, be pairwise disjoint vertical paths in $T$ such that $F=[P_1,\ldots,P_k]$ is a cycle in $G^+$. Let $G$ be the near-triangulation consisting of all the edges and vertices of $G^+$ contained in $F$ and the interior of $F$.

Then $G$ has a partition $\PP$ into paths in $G$ that are vertical in $T$, such that $P_1,\ldots,P_k\in\PP$ and the quotient $H:=G/\PP$ has a tree-decomposition in which every bag has size at most 9 and some bag contains all the vertices of $H$ corresponding to $P_1,\ldots,P_k$. 
\end{lemma}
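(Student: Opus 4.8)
The plan is to prove \cref{NearTriang} by induction on $|V(G)|$, roughly following the strategy of Pilipczuk--Siebertz but tracking that the pieces are vertical paths in $T$ and that the bag sizes stay bounded. The base case is when $G$ is just the cycle $F$ together with its interior having no interior vertices, i.e.\ $F$ is a triangle (so $k\le 3$); then $\PP=\{P_1,\dots,P_k\}$ works with a single bag of size $k\le 3\le 9$. For the inductive step, the first move is to choose a suitable chord or interior vertex of $F$ that splits $G$ into smaller near-triangulations, each bounded by a cycle that decomposes into at most $6$ vertical paths, apply induction to each piece, and then glue the resulting tree-decompositions along a common bag.

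\textbf{Choosing the split.} The key geometric step is this: since $T$ is rooted at $r$ on the outer face and the $P_i$ are vertical, consider the vertex (or vertices) of $F$ closest to $r$ in $T$; call one of them $z$, lying on some $P_i$. Look at the $T$-path from $z$ up toward $r$: it leaves $G$ immediately (since $z$ is closest to $r$ among $V(F)\cup\text{int}(F)$... actually one must be careful — $z$ is closest among $V(F)$, and the parent of $z$ may or may not lie in $G$). The cleaner approach, matching the $k\le 6$ bookkeeping, is: pick an edge $e=xy$ of $G$ with $x,y\in V(F)$ (a chord of $F$) if one exists; it splits $F$ into two arcs, and together with $e$ these form two cycles $F',F''$. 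Each arc of $F$ is a concatenation of sub-paths of the $P_i$'s, so each is a union of at most $k\le 6$ vertical sub-paths; adding the single edge $e$ — whose endpoints lie on the pieces already counted — keeps the count at most... here is where one needs to be slightly clever so the bound stays $\le 6$, possibly by merging $e$ into an adjacent path or by arguing $e$ together with an incident $P_i$-segment forms one vertical path. If $F$ has no chord, then $F$ has an interior vertex $v$ adjacent to everything, or more precisely one uses a vertex $v$ in $\text{int}(F)$ adjacent to a vertex of $F$; the vertical $T$-path from $v$ upward reaches $F$ (since $r\notin\text{int}(F)$... again a small subtlety about whether the path exits through $F$), giving a vertical path $Q$ from $v$ to $F$ that one adds to $\PP$; this $Q$, plus re-routing, cuts $G$ into a bounded number of pieces each bounded by $\le 6$ vertical paths.

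\textbf{Gluing.} Once $G$ is cut into pieces $G_1,\dots,G_t$ (with $t$ bounded, say $t\le 3$) sharing only vertices on the $P_i$'s and the newly-added path(s), apply the inductive hypothesis to each $G_j$: it has a partition $\PP_j$ into vertical $T$-paths refining the boundary paths, with a tree-decomposition $\mathcal{T}_j$ of $G_j/\PP_j$ of bag-size $\le 9$, one bag $B_j$ containing all boundary-path vertices. The boundary paths of each $G_j$ are a sub-multiset of $\{P_1,\dots,P_k\}$ together with at most one new path $Q$, totalling at most $6$ parts; so $|B_j|\le 9$ comfortably includes the $\le k+1\le 7$ required. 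Take $\PP=\bigcup_j\PP_j$ (the shared boundary paths are identified), form the quotient $H=G/\PP$, connect the trees $\mathcal{T}_j$ by adding a new node $x_0$ with bag $B_0=\{P_1,\dots,P_k\}\cup\{\text{new paths}\}$ adjacent to each $B_j$, and verify the tree-decomposition axioms: connectivity holds because any part appearing in two pieces is a boundary part hence lies in every $B_j$ and in $B_0$; every edge of $H$ lies inside some $G_j$ hence is covered. The new bag $B_0$ has size $\le 6 < 9$, and it contains all of $P_1,\dots,P_k$ as required.

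\textbf{The main obstacle} I anticipate is the careful bookkeeping on the number of vertical paths bounding each sub-near-triangulation: the statement allows $k$ up to $6$, and every time we cut we risk introducing extra boundary paths (the chord $e$, or the new vertical path $Q$, or the fragments of a $P_i$ that gets split by the cut into an "upper" and "lower" piece — though since $P_i$ is vertical, a sub-path is still vertical, so splitting $P_i$ at an interior vertex yields the two fragments on opposite sides of the cut, one per piece, which is exactly what keeps the count from growing). Making the split so that each resulting cycle genuinely decomposes as $[P'_1,\dots,P'_{k'}]$ with $k'\le 6$ — and in particular handling how a single chord edge or the attachment vertex of $Q$ merges with an adjacent path rather than counting as its own part — is the delicate combinatorial heart of the argument, and is presumably where the specific constants $6$ and $9$ come from. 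Everything else (base case, the tree-decomposition gluing, checking layered width and connectivity back in \cref{Width1LayeredPartition}) is routine.
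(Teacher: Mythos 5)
Your skeleton — induction on $|V(G)|$, split $G$ into smaller near-triangulations bounded by $\le 6$ vertical paths, apply the inductive hypothesis to each piece, and glue the resulting tree-decompositions via a new root bag — matches the paper's strategy exactly. But the crucial mechanism that makes the constants $6$ and $9$ work is missing, and your proposed replacement does not deliver it. The paper first groups $P_1,\dots,P_k$ into exactly three arcs $R_1,R_2,R_3$ of $F$, each consisting of at most two of the $P_i$'s (for $k\in\{1,2\}$, $P_1$ is split into pieces to manufacture three groups). It then $3$-colours $V(G)$ by sending every interior vertex $v$ to the colour of the first vertex of $F$ met on the $T$-path from $v$ to $r$, and invokes \emph{Sperner's Lemma} to find an internal face $\tau=v_1v_2v_3$ with all three colours. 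The Sperner triangle yields three \emph{pairwise disjoint} vertical paths $Q_1,Q_2,Q_3$ from $\tau$ to $F$, with $Q_i$ necessarily landing in $R_i$, which cuts $G$ into at most three pieces $F_i=[Q_i',R_i^+,R_{i+1}^-,Q_{i+1}']$ — and now the bookkeeping is automatic: $Q_i'$ and $Q_{i+1}'$ contribute one vertical path each, while $R_i^+$ and $R_{i+1}^-$ (being sub-arcs of the $\le 2$-path groups) contribute at most two each, for a total of at most $6$. The new root bag holds $P_1,\dots,P_k,Q_1',Q_2',Q_3'$, i.e.\ at most $6+3=9$ parts, which is exactly where the $9$ comes from.

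Your chord-based split does not have this guarantee, and you half-notice the danger yourself. If a chord $e=xy$ of $F$ has both endpoints interior to the same $P_i$, one side of the cut inherits the sub-path of $P_i$ strictly between $x$ and $y$ (that side is fine), but the other side inherits \emph{two} fragments of $P_i$ — the part above $x$ and the part below $y$ — on top of the other $k-1$ paths $P_j$, for a potential total of $k+1 = 7 > 6$ boundary paths. Merging $e$ into an adjacent path does not rescue this, because the two fragments of $P_i$ land on the same side and do not merge with each other. Likewise, the "no chord $\Rightarrow$ add one vertical path $Q$ from an interior vertex to $F$" branch produces at most \emph{two} pieces, not three, and in doing so can split one $P_i$ at its attachment vertex while adding $Q$, giving one side $k+1$ paths again. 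So the dichotomy chord/no-chord is not the right split, and the invariant $k'\le 6$ genuinely breaks. The missing idea is Sperner's Lemma: it is what lets you cut along \emph{three} disjoint vertical paths simultaneously, arranged so that each of the three resulting cycles touches only two of the groups $R_i$, which is what caps every piece at $6$ boundary paths. Once that piece is in place, your gluing argument is essentially the same as the paper's (with the usual renaming of external parts to their global representatives), so the rest of your write-up is sound.
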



\begin{proof}[Proof of \cref{FindVerticalPaths} assuming \cref{NearTriang}] 
The result is trivial if $|V(G)|<3$. Now assume $|V(G)|\geq 3$. 
Let $r$ be the root of $T$. 
Let $G^+$ be a plane triangulation containing $G$ as a spanning subgraph with $r$ on the outer-face of $G$.  
The three vertices on the outer-face of $G$ are vertical (singleton) paths in $T$. 
Thus $G^+$ satisfies the assumptions of \cref{NearTriang} with $k=3$, which implies that $G^+$ has a partition $\PP$ into vertical paths in $T$ such that $G^+/\PP$ has treewidth at most $8$. 
Note that $G/\PP$ is a subgraph of $G^+/\PP$.
Hence $G/\PP$ has treewidth at most $8$.
\end{proof}


Our proof of \cref{NearTriang} employs the following well-known variation of Sperner's Lemma (see~\citep{Proofs4}):

  \begin{lemma}[Sperner's Lemma]
  \label{Sperner}
Let $G$ be a near-triangulation whose vertices are coloured $1,2,3$, with the outer-face $F=[P_1,P_2,P_3]$ where each vertex in $P_i$ is coloured $i$. Then $G$ contains an internal face whose vertices are coloured $1,2,3$. 
  \end{lemma}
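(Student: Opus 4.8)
The plan is to prove Sperner's Lemma by the classical parity argument on ``doors''. Call an edge of $G$ a \emph{door} if its two endpoints are coloured $1$ and $2$. The key local observation is about the internal faces: since every internal face $t$ of $G$ is a triangle, the number of doors on the boundary of $t$ is $1$ if the three vertices of $t$ receive all three colours $1,2,3$; it is $2$ if the colour set of $t$ is exactly $\{1,2\}$; and it is $0$ in every remaining case (i.e.\ when some vertex of $t$ is coloured $3$ but the colour set is not $\{1,2,3\}$, or when $t$ is monochromatic). In particular an internal face has an \emph{odd} number of doors if and only if its vertices are coloured $1,2,3$, so it suffices to produce an internal face with an odd number of doors.

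Next I would run the global count. Let $N$ be the number of incident pairs $(t,e)$ where $t$ is an internal face and $e$ is a door lying on the boundary of $t$. Counting by faces, $N=\sum_{t}(\text{number of doors on the boundary of }t)$. Counting by edges instead: because $G$ is a near-triangulation whose outer face is a simple cycle and all of whose internal faces are triangles, every edge not on the outer cycle $F$ borders exactly two internal (triangular) faces, while every edge of $F$ borders exactly one internal face. Hence each door off $F$ contributes $2$ to $N$ and each door on $F$ contributes $1$, so $N\equiv(\text{number of doors on }F)\pmod 2$.

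Then I would show $F$ contains exactly one door. By the definition of $F=[P_1,P_2,P_3]$, the endpoints of each $P_i$ can be labelled $x_i,y_i$ so that $y_1x_2,\,y_2x_3,\,y_3x_1\in E(F)$ and $V(F)=V(P_1)\cup V(P_2)\cup V(P_3)$, and by hypothesis every vertex of $P_i$ is coloured $i$. Every edge of $F$ internal to some $P_i$ joins two vertices of colour $i$ and hence is not a door; among the three linking edges, $y_1x_2$ joins colours $1$ and $2$ (a door), while $y_2x_3$ joins colours $2,3$ and $y_3x_1$ joins colours $3,1$. So $F$ has exactly one door, and therefore $N$ is odd. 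Since $N=\sum_t(\text{number of doors on the boundary of }t)$ with every summand in $\{0,1,2\}$, some internal face $t$ must have an odd number of doors, hence exactly one, hence its three vertices are coloured $1,2,3$ — which is the desired conclusion. (Equivalently, one can phrase the parity step as a handshake argument on the graph whose vertices are the internal faces together with one extra ``outer'' node, with two nodes adjacent when they share a door: the outer node has degree $1$, so some internal face has odd degree, i.e.\ degree $1$.)

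I expect the only delicate points to be the edge-versus-face bookkeeping in the second paragraph — namely confirming that in a near-triangulation a non-boundary edge borders exactly two triangular faces and a boundary edge exactly one — and extracting ``exactly one door on $F$'' cleanly from the formal meaning of $F=[P_1,P_2,P_3]$; both are routine, but they are where a careless version of the argument could slip.
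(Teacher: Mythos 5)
Your proof is correct: the doors/parity argument is the classical proof of this lemma, and all the delicate points you flag (each non-boundary edge of a near-triangulation lies on exactly two internal triangles, each edge of the simple outer cycle on exactly one, and $F=[P_1,P_2,P_3]$ forcing exactly one $1$-$2$ edge on $F$, namely $y_1x_2$) are handled or correctly identified as routine. Note that the paper itself gives no proof of this lemma — it is quoted as a well-known variant of Sperner's Lemma with a reference to Aigner and Ziegler — and your argument is essentially the standard one found there, so there is nothing to reconcile with the paper's text.
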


\begin{proof}[Proof of \cref{NearTriang}] 
The proof is by induction on $n=|V(G)|$. If $n=3$, then $G$ is a 3-cycle and $k\le 3$. The partition into vertical paths is $\PP=\{P_1,\ldots,P_k\}$.  The tree-decomposition of $H$ consists of a single bag that contains the $k\le 3$ vertices corresponding to $P_1,\ldots,P_k$.

  \begin{figure}[!b]
    \begin{center}
      \begin{tabular}{cc}
        \includegraphics[scale=1.4]{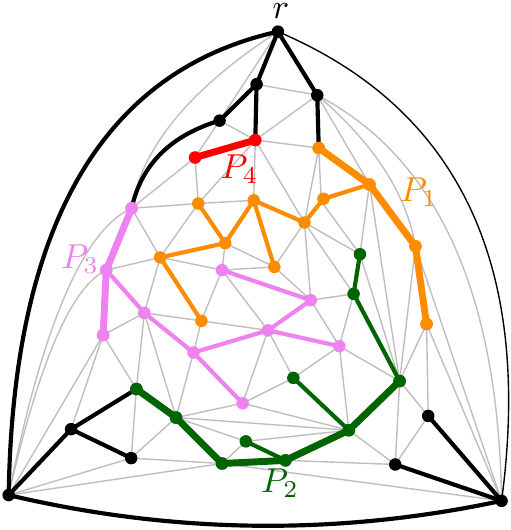} &
        \includegraphics[scale=1.4]{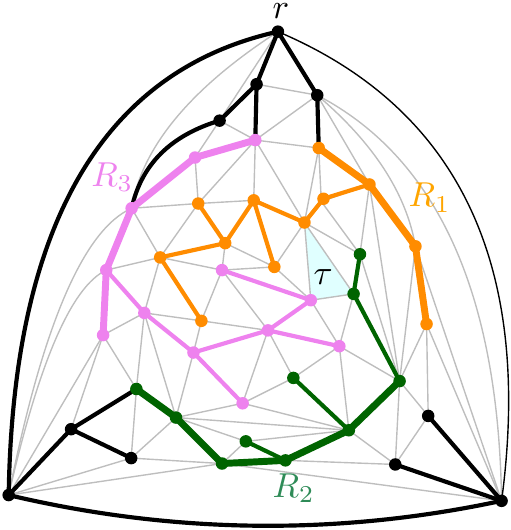} \\
        (a) & (b) \\[1em]
        \includegraphics[scale=1.4]{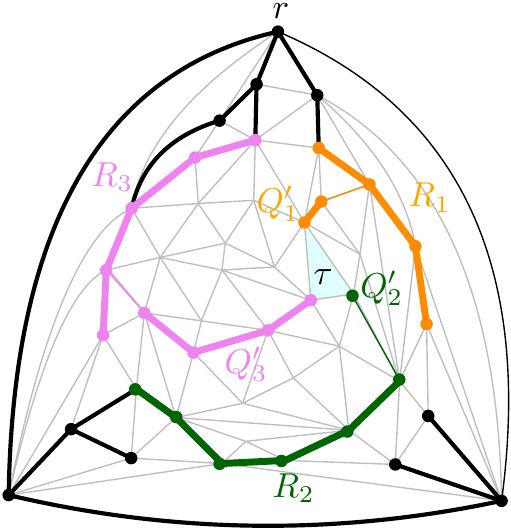} &
        \includegraphics[scale=1.4]{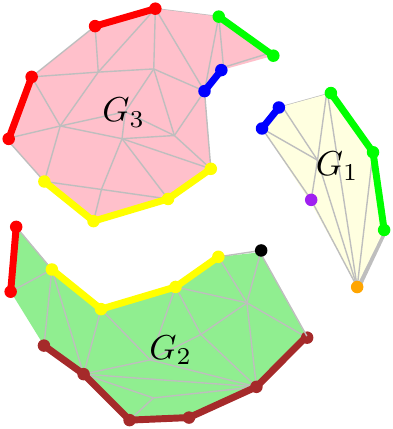} \\
        (c) & (d) 
      \end{tabular}
    \end{center}
    \caption{The inductive proof of \cref{NearTriang}: (a)~the spanning tree $T$ and the paths $P_1,\ldots,P_4$; 
    (b) the paths $R_1$, $R_2$, $R_3$, and the Sperner triangle $\tau$; 
    (c)~the paths $Q_1'$, $Q_2'$ and $Q_3'$; 
    (d)~the near-triangulations $G_1$, $G_2$, and $G_3$, with the vertical paths of $T$ on $F_1$, $F_2$, and $F_3$.}
    \label{NearTriangProof}
  \end{figure}
  

  For $n > 3$ we wish to make use of Sperner's Lemma on some (not necessarily proper) 3-colouring of the vertices of $G$.  We begin by colouring the vertices of $F$, as illustrated in \cref{NearTriangProof}.  There are three cases to consider:
  \begin{compactenum}
  \item If $k=1$ then, since $F$ is a cycle, $P_1$ has at least three vertices, so $P_1=[v, P_1', w]$ for two distinct vertices $v$ and $w$.  We set $R_1:=v$, $R_2:=P_1'$ and $R_3:=w$.

  \item If $k=2$ then we may assume without loss of generality that $P_1$ has at least two vertices so $P_1=[v,P_1']$.  We set $R_1:=v$, $R_2:=P_1'$ and $R_3:=P_2$.

  \item If $k\in\{3,4,5,6\}$ then we group consecutive paths by taking $R_1:=[P_1,\ldots,P_{\floor{k/3}}]$, $R_2:=[P_{\floor{k/3}+1},\ldots,P_{\floor{2k/3}}]$ and $R_3:=[P_{\floor{2k/3}+1},\ldots,P_k]$. Note that in this case each $R_i$ consists of one or two of $P_1,\dots,P_k$. 
  \end{compactenum}

  For $i\in\{1,2,3\}$, colour each vertex in $R_i$ by $i$. Now, for each remaining vertex $v$ in $G$, consider the path $P_v$ from $v$ to the root of $T$. Since $r$ is on the outer-face of $G^+$, $P_v$ contains at least one vertex of $F$. If the first vertex of $P_v$ that belongs to $F$ is in $R_i$ then assign the colour $i$ to $v$.  In this way we obtain a 3-colouring of the vertices of $G$ that satisfies the conditions of Sperner's Lemma. Therefore, by Sperner's Lemma there exists a triangular face $\tau=v_1v_2v_3$ of $G$ whose vertices are coloured $1,2,3$ respectively.

  For each $i\in\{1,2,3\}$, let $Q_i$ be the path in $T$ from $v_i$ to the first ancestor $v_i'$ of $v_i$ in $T$ that is contained in $F$. Observe that $Q_1$, $Q_2$, and $Q_3$ are disjoint since $Q_i$ consists only of vertices coloured $i$. Note that $Q_i$ may consist of the single vertex $v_i=v_i'$.  Let $Q_i'$ be $Q_i$ minus its final vertex $v_i'$.  Imagine for a moment that the cycle $F$ is oriented clockwise, which defines an orientation of $R_1$, $R_2$ and $R_3$. Let $R_i^-$ be the subpath of $R_i$ that contains $v'_i$ and all vertices that precede it, and let $R_i^+$ be the subpath of $R_i$ that contains $v'_i$ and all vertices that succeed it.  


Consider the subgraph of $G$ that consists of the edges and vertices of $F$, the edges and vertices of $\tau$, and the edges and vertices of $Q_1\cup Q_2\cup Q_3$. 
This graph has an outer-face, an inner face $\tau$, and up to three more inner faces $F_1,F_2,F_3$ where $F_i=[Q_i',R_i^+,R_{i+1}^-,Q_{i+1}']$, where we use the convention that $Q_4=Q_1$ and $R_4=R_1$. Note that $F_i$ may be \emph{degenerate} in the sense that $[Q_i',R_i^+,R_{i+1}^-,Q_{i+1}']$ may consist only of a single edge $v_iv_{i+1}$.
  

Consider any non-degenerate $F_i=[Q_i',R_i^+,R_{i+1}^-,Q_{i+1}']$. Note that these four paths  are pairwise disjoint, and thus $F_i$ is a cycle. If  $Q_i'$ and $Q_{i+1}'$ are non-empty, then each is a vertical path in $T$.  Furthermore, each of $R_i^-$ and $R_{i+1}^+$ consists of at most two vertical paths in $T$.  Thus, $F_i$ is the concatenation of at most six vertical paths in $T$.  Let $G_i$ be the near-triangulation consisting of all the edges and vertices of $G^+$ contained in $F_i$ and the interior of $F_i$.  Observe that $G_i$ contains $v_i$ and $v_{i+1}$ but not the third vertex of $\tau$. Therefore $F_i$ satisfies the conditions of the lemma and has fewer than $n$ vertices. So we may apply induction on $F_i$ to obtain a partition $\PP_i$ of $G_i$ into vertical paths in $T$, such that $H_i := G_i / \PP_i$ has a tree-decomposition $(B^i_x:x\in V(J_i))$ in which every bag has size at most 9, and some bag $B^i_{u_i}$ contains  the vertices of $H_i$ corresponding to the at most six vertical paths that form $F_i$.  We do this for each non-degenerate $F_i$. 


We now construct the desired partition $\PP$ of $G$. 
Initialise $\PP:= \{P_1,\ldots,P_k\}$. Then add each non-empty $Q_i'$ to $\PP$. 
Now for each non-degenerate $F_i$, each path in  $\PP_i$ is either an \emph{external} path (that is, fully contained in $F_i$) or is an \emph{internal path} with none of its vertices in $F_i$. 
Add all the internal paths of  $\PP_i$ to $\PP$. 
By construction, $\PP$ partitions $V(G)$ into vertical paths in $T$ and $\PP$ contains $P_1,\ldots,P_k$.

Let $H:=G/\PP$. Next we exhibit the desired tree-decomposition $(B_x:x\in V(J))$ of $H$. 
Let $J$ be the tree obtained from the disjoint union of $J_i$, taken over the $i\in\{1,2,3\}$ such that $F_i$ is non-degenarate, by adding one new node $u$ adjacent to each $u_i$.
(Recall that $u_i$ is the node of $J_i$ for which the bag $B^i_{u_i}$ contains the vertices of $H_i$ corresponding to the paths that form $F_i$.)\ 
Let the bag $B_u$ contain all the vertices of $H$ corresponding to $P_1,\ldots,P_k,Q'_1,Q'_2,Q'_3$. 
For each non-degenerate $F_i$, and for each node $x\in V(J_i)$, initialise $B_x:= B^i_x$. 
Recall that vertices of $H_i$ correspond to contracted paths in  $\PP_i$.
Each internal path in $\PP_i$ also lies  in $\PP$.
Each external path $P$ in $\PP_i$ is a subpath of $P_j$ for some $j\in\{1,\dots,k \}$ or is one of the paths among $Q'_1, Q'_2, Q'_3$. 
For each such path $P$, for every $x\in V(J)$, in bag $B_x$, 
replace each instance of the vertex of $H_i$ corresponding to $P$ by the vertex of $H$ corresponding to the path among $P_1,\ldots, P_k, Q'_1,\ldots,Q'_3$ 
that contains $P$.
This completes the description of $(B_x : x\in V(J))$. 
By construction, $|B_x|\leq 9$ for every $x\in V(J)$. 

First we show that for each vertex $a$ in $H$, the set $X:=\{x\in V(J) : a\in B_x\}$ forms a subtree of $J$.
If $a$ corresponds to a path distinct from $P_1,\ldots,P_k,Q'_1,Q'_2,Q'_3$ then $X$ is fully contained in $J_i$ for some $i\in\{1,2,3\}$.
Thus, by induction $X$ is non-empty and connected in $J_i$, so it is in $J$.
If $a$ corresponds to $P$ which is one of the paths among $P_1,\ldots,P_k,Q'_1,Q'_2,Q'_3$ then 
$u\in X$ and whenever $X$ contains a vertex of $J_i$ it is because some external path of $\PP_i$ was replaced by $P$.
In particular, we would have $u_i \in X$ in that case. 
Again by induction each $X\cap J_i$ is connected and since $uu_i\in E(T)$, we conclude that $X$ induces a (connected) subtree of $J$.

Finally we show that, for every edge $ab$ of $H$, there is a bag $B_x$ that contains $a$ and $b$. 
If $a$ and $b$ are both obtained by contracting any of $P_1,\ldots,P_k,Q'_1,Q'_2,Q'_3$, 
then $a$ and $b$ both appear in $B_u$.  
If $a$ and $b$ are both in $H_i$ for some $i\in\{1,2,3\}$, 
then some bag $B^i_x$ contains both $a$ and $b$. 
Finally, when $a$ is obtained by contracting a path $P_a$ in $G_i-V(F_i)$ and $b$ is obtained by contracting a path $P_b$ not in $G_i$, 
then  the cycle $F_i$ separates $P_a$ from $P_b$ so the edge $ab$ is not present in $H$. 
This concludes the proof that $(B_x:x\in V(J))$ is the desired tree-decomposition  of $H$. 
\end{proof}

\subsection{Reducing the Bound}

We now set out to reduce the constant in  \cref{PlanarQueue} from $766$ to $49$. This is achieved by proving the following variant of \cref{Width1LayeredPartition}.

\begin{theorem}
\label{Width3LayeredPartition}
Every planar graph $G$ has a partition $\PP$ with layered width $3$ such that $G / \PP$ is planar and has treewidth at most $3$. Moreover, there is such a partition for  every BFS layering of $G$. 
\end{theorem}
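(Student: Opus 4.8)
The plan is to rerun the inductive argument behind \cref{NearTriang}, but now building the partition from \emph{tripods} --- each a union of at most three vertical paths of $T$ that together induce a connected subgraph --- rather than from single vertical paths. As in the deduction of \cref{Width1LayeredPartition}, I would first fix a BFS spanning tree $T$ of $G$ realising the given BFS layering, and then take a plane triangulation $G^+$ with $V(G^+)=V(G)$ that contains $G$ as a spanning subgraph, with the root $r$ of $T$ on its outer face. Since a vertical path of $T$ meets each BFS layer in exactly one vertex, every tripod has at most three vertices per layer. If I can produce a partition $\PP$ of $G^+$ into tripods such that $G^+/\PP$ has treewidth at most $3$, then $\PP$ is a connected partition of $G^+$, so $G^+/\PP$ is a minor of $G^+$ and hence planar; and $G/\PP$ is a spanning subgraph of $G^+/\PP$, so it too is planar with treewidth at most $3$, while $\PP$ has layered width at most $3$ with respect to the BFS layering. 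Thus it suffices to handle $G^+$.

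Next I would prove, by induction on the number of vertices, a strengthened statement in the spirit of \cref{NearTriang}: the boundary cycle $F$ of the current near-triangulation is written as a concatenation of at most four paths belonging to at most three parts of the partition being built (initially $F$ is the outer triangle and the three parts are its vertices), and the conclusion provides a partition of the region into tripods that contains those boundary parts, together with a tree-decomposition of the quotient with bags of size at most $4$, some bag containing all the (at most three) boundary parts. The inductive step mirrors \cref{NearTriang}: merge the boundary arcs into at most three contiguous arcs $S_1,S_2,S_3$, one per part; colour each vertex $v$ of the region by the index $i$ of the arc $S_i$ that the path in $T$ from $v$ to $r$ reaches first; apply \cref{Sperner} to get a triangular face $\tau=v_1v_2v_3$ with $v_j$ coloured $j$; let $Q_j$ be the path in $T$ from $v_j$ up to its first ancestor $v_j'$ on $F$ (so $v_j'\in S_j$) and set $Q_j':=Q_j-v_j'$. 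The new part is $Q_1'\cup Q_2'\cup Q_3'$, which is connected because $v_1v_2v_3$ is a face, and the three paths $Q_1,Q_2,Q_3$ split the region into $\tau$ and at most three sub-regions $F_1,F_2,F_3$, where $F_i$ is bounded by two legs of the new tripod together with sub-paths of $S_i$ and $S_{i+1}$ --- at most three parts in total. Recursing into each non-degenerate $F_i$ and attaching the resulting tree-decompositions to a new node $u$ with bag $\{S_1,S_2,S_3,\,Q_1'\cup Q_2'\cup Q_3'\}$ (of size at most $4$) finishes the step; the checks that the nodes containing a fixed part form a subtree, and that every quotient-edge lies in some bag, go through as in \cref{NearTriang}.

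Applying this lemma to $G^+$ with $F$ its outer triangle yields the partition $\PP$ of $G^+$ required in the first paragraph, and it does so for any prescribed BFS layering; this proves \cref{Width3LayeredPartition}. Combined with \cref{PartitionQueue} and the bound $5$ on the queue-number of planar graphs of treewidth at most $3$ (\cref{PlanarTreewidth3Queue5}), it gives queue-number at most $3\cdot 3\cdot 5+\floor*{\tfrac{3}{2}\cdot 3}=49$.

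I expect the main obstacle to be maintaining the invariant that every sub-region is bounded by at most three parts, which is exactly what bundling the three paths $Q_1',Q_2',Q_3'$ into a single tripod achieves: in \cref{NearTriang} the boundary may pick up as many as six vertical paths and the bags have size $9$, whereas here, because a sub-region $F_i$ sees the new tripod through \emph{two} of its legs but these count against one part, the boundary part-count stays at three and the bags shrink to size $4$ (so the quotient has treewidth $3$). The price is slightly more delicate bookkeeping in the tree-decomposition, since a single tripod now appears as two arcs on the boundary of each sub-region it bounds, and its restriction to that sub-region --- a path formed from two legs joined across a face edge of $\tau$ --- must be tracked as one part through the recursion and the gluing.
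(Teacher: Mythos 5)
Your proposal follows essentially the same route as the paper's proof: the paper also introduces \emph{tripods} (at most three vertical paths whose lower endpoints form a clique), proves a strengthened inductive statement (\cref{NearTriangTripods}) with at most three bipods on the boundary cycle, locates a trichromatic face via Sperner's Lemma, bundles the three climbing paths $Q_1',Q_2',Q_3'$ into a single tripod so that each sub-region sees it as one boundary part, and thereby gets bags of size $4$ and treewidth $3$, with planarity of the quotient following from connectedness of each tripod. The only material difference is a small technical device: the paper first adds an apex vertex $r$ outside the triangulation so that the root lies in the strict exterior of the initial boundary cycle $F$ (and hence of all recursive sub-regions), whereas you keep $r$ on the outer triangle itself; your variant still works but requires a slightly more permissive inductive invariant (root on or exterior to $F$), which adds mild casework that the apex trick sidesteps.
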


This theorem with \cref{PartitionQueue,PlanarTreewidth3Queue5} imply that planar graphs have bounded queue-number (\cref{PlanarQueue}) with an upper bound of $3 \cdot 3 \cdot 5 + \floor*{\tfrac{3}{2}\cdot 3} = 49$. 

Note that \cref{Width3LayeredPartition} is stronger than \cref{Width1LayeredPartition} in that the treewidth bound is smaller, whereas \cref{Width1LayeredPartition} is stronger than \cref{Width3LayeredPartition} in that the partition is connected and the layered width is smaller. Also note that \cref{Width3LayeredPartition} is tight in terms of the treewidth of $H$: For every $\ell$, there exists a planar graph $G$ such that, if $G$ has a partition $\PP$ of layered width $\ell$, then $G / \PP$ has treewidth at least~$3$. We give this construction at the end of this section, and prove \cref{Width3LayeredPartition} first.  \cref{Width1LayeredPartition} was proved via an inductive proof of a stronger statement given in  \cref{NearTriang}. Similarly, the proof of \cref{Width3LayeredPartition} is via an inductive proof of a stronger statement given in \cref{NearTriangTripods}, below.


While \cref{FindVerticalPaths} partitions the vertices of a planar graph into vertical paths, 
to prove \cref{Width3LayeredPartition} we instead partition the vertices of a triangulation $G^+$ into parts each of which is a union of up to three vertical paths. 
Formally, in a rooted spanning tree $T$ of a graph $G$, 
a \emph{tripod} consists of up to three pairwise disjoint vertical paths in $T$ whose lower endpoints form a clique in $G$.  
\cref{Width3LayeredPartition} quickly follows from the next result. 


\begin{theorem}
\label{FindTripods}
Let $T$ be a rooted spanning tree in a triangulation $G$. 
Then $G$ has a partition $\PP$ into tripods in $T$ such that $G / \PP$ has treewidth at most $3$.
\end{theorem}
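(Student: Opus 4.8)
The plan is to mirror the proof of \cref{FindVerticalPaths} via \cref{NearTriang}, replacing ``vertical path'' by ``tripod'' throughout. First, \cref{Width3LayeredPartition} will follow from \cref{FindTripods} exactly as \cref{Width1LayeredPartition} followed from \cref{FindVerticalPaths}: given a planar graph $G$ with a prescribed BFS layering, take the associated BFS spanning tree $T$; then $T$ is also a spanning tree of a plane triangulation $G^+\supseteq G$ with the root on the outer face. Apply \cref{FindTripods} to $G^+$ and $T$ to get a partition $\PP$ into tripods of $T$ with $G^+/\PP$ of treewidth at most $3$, and restrict to $G$. A tripod is a union of at most three vertical paths of $T$, each of which meets each BFS layer at most once, so $\PP$ has layered width at most $3$; the lower endpoints of a tripod's legs form a clique, so each tripod induces a connected subgraph and $G^+/\PP$ is a minor of $G^+$, hence planar; and $G/\PP$ is a spanning subgraph of $G^+/\PP$, so it is planar with treewidth at most $3$.

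It then remains to prove \cref{FindTripods}, which (like \cref{FindVerticalPaths}) will be deduced from a stronger statement \cref{NearTriangTripods} proved by induction on $|V(G)|$. That statement will concern a near-triangulation $G$ inside a plane triangulation $G^+$, with $T$ a spanning tree of $G^+$ rooted at a vertex $r$ on the outer face, whose outer cycle $F$ is the concatenation of at most three \emph{boundary arcs}, each a union of at most three vertical paths of $T$. The conclusion will be: $G$ has a partition $\PP$ into tripods of $T$ with each boundary arc contained in a part, such that $H:=G/\PP$ has a tree-decomposition with bags of size at most $4$ (so $H$ has treewidth at most $3$) in which some bag contains the at most three vertices of $H$ corresponding to the parts containing the boundary arcs. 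Just as \cref{NearTriang} was applied with $k=3$ to the outer triangle, \cref{FindTripods} follows by applying \cref{NearTriangTripods} to the triangulation $G$ with the three vertices of its outer face as three singleton boundary arcs.

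The inductive step follows \cref{NearTriang}. After the degenerate cases (when $F$ is a triangle, or when fewer than three boundary arcs force one to be subdivided), $3$-colour $V(G)$ by assigning colour $i$ to boundary arc $i$ and colouring each interior vertex $v$ by the colour of the boundary arc first reached by the $T$-path from $v$ to $r$; Sperner's Lemma (\cref{Sperner}) yields an internal face $\tau=v_1v_2v_3$ with $v_i$ coloured $i$. The key difference from \cref{NearTriang} is that, rather than creating three separate boundary paths out of $\tau$, we create a single new part: for each $i$ let $Q_i$ be the $T$-path from $v_i$ up to its first ancestor $v_i'$ on $F$, let $Q_i'$ be $Q_i$ minus $v_i'$, and put the single tripod $\sigma:=Q_1'\cup Q_2'\cup Q_3'$ into $\PP$. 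This is a genuine tripod because $Q_1',Q_2',Q_3'$ are pairwise disjoint vertical paths of $T$ (they lie in distinct colour classes) whose lower endpoints $v_1,v_2,v_3$ form the clique $\tau$. The curves $F$, $\tau$, and $Q_1\cup Q_2\cup Q_3$ cut the interior of $F$ into at most three near-triangulations $G_1,G_2,G_3$; the outer cycle of $G_i$ is the concatenation of the arc $Q_i'\cup Q_{i+1}'$ (joined at the bottom by the edge $v_iv_{i+1}$), a sub-arc of boundary arc $i$, and a sub-arc of boundary arc $i+1$ — at most three boundary arcs, each a union of at most three vertical paths — so induction applies to each non-degenerate $G_i$.

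Finally, the tree-decomposition of $H$ is assembled exactly as in \cref{NearTriang}: glue the tree-decompositions returned for the $G_i$'s onto a new root node $u$ whose bag consists of the at most three parts of $G$'s boundary arcs together with $\sigma$ (four vertices of $H$), and replace each external sub-arc appearing in a child's bags by the unique part of $\PP$ containing it ($\sigma$ for the arcs $Q_i'\cup Q_{i+1}'$, and the appropriate boundary arc otherwise). Checking that $\{x:a\in B_x\}$ is connected for each $a\in V(H)$, and that every edge of $H$ lies in a common bag (edges inside a tripod, including the clique edges among leg endpoints, are intra-part, and an edge between parts lying in different $G_i$'s is blocked by a separating cycle), is routine and parallels \cref{NearTriang}. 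I expect the main obstacle to be pinning down the inductive invariant so that all of the following hold simultaneously: (i) each $G_i$ again has at most three boundary arcs, each a union of at most three vertical paths; (ii) the colouring splits $F$ into exactly three monochromatic arcs so Sperner's Lemma applies; and (iii) the clique condition making $\sigma$ a genuine tripod holds — all while correctly threading the degenerate cases (empty $Q_i'$, degenerate $G_i$, fewer than three boundary arcs) through the argument.
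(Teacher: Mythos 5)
Your strategy closely matches the paper's: prove a Sperner-type lemma (\cref{NearTriangTripods}) by induction, locate a trichromatic face $\tau$, bundle $Q_1',Q_2',Q_3'$ into a single new tripod (the paper's $Y$, your $\sigma$), recurse on the three sub-regions, and glue tree-decompositions at a root bag of size four; the derivation of \cref{Width3LayeredPartition} from \cref{FindTripods} also matches. You correctly identify the main subtlety as the inductive invariant --- but the invariant you write down is wrong in a way that would make the induction fail.

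You ask that each boundary arc of $F$ be ``a union of at most three vertical paths,'' and conclude that each boundary arc lies inside a tripod of $\PP$. A three-leg boundary arc can only lie inside a three-leg tripod, hence only if its three leg-bottoms are pairwise adjacent in $G$. That is not implied by your hypothesis and fails in general: the arc is a path along the cycle $F$, and while $F$ supplies an edge between consecutive legs, it supplies nothing connecting the first and third leg-bottoms. The paper's invariant (\cref{NearTriangTripods}) is the right one: each of the at most three boundary pieces $P_i$ is a \emph{bipod} --- at most two vertical paths joined by an edge between their lower endpoints, hence trivially a tripod. This survives the recursion because a sub-arc of a bipod cut at $v_i'$ is again a bipod (either the valley edge persists in the sub-arc, or the sub-arc is a single vertical path), and $I_i=Q_i'\cup Q_{i+1}'$ is a bipod because $v_iv_{i+1}\in E(\tau)$. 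Since the top-level call uses singleton boundary pieces, the recursion never actually produces a three-leg arc; the bound ``three'' belongs to the legs of $\sigma$, not to the boundary pieces, and you have conflated the two. A second, smaller issue is that the inductive statement should require the boundary pieces to \emph{be} parts of $\PP$ ($P_1,\dots,P_k\in\PP$, as the paper writes), not merely be ``contained in'' parts: if a part of $\PP_i$ strictly contained a boundary arc of $G_i$, the extra interior vertices of $G_i$ in it would be orphaned, or illegally grafted onto a parent-level bipod, in the gluing step.
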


\begin{proof}[Proof of \cref{Width3LayeredPartition} assuming \cref{FindTripods}] 
We may assume that $G$ is connected (since if each component of $G$ has the desired partition, then so does $G$). Let $T$ be a BFS spanning tree of $G$. Let $(V_0,V_1,\dots)$ be the BFS layering corresponding  to $T$.  Let $G'$ be a plane triangulation containing $G$ as a spanning subgraph. By \cref{FindTripods}, $G'$ has a partition $\PP$ into tripods in $T$ such that  $G' / \PP$ is planar with treewidth at most $3$.  Then $\PP$ is a partition of $G$ such that  $G / \PP$ is planar with treewidth at most $3$.  Each part in $\PP$ corresponds to a tripod, which 
has at most three vertices in each layer $V_i$. Hence $\PP$ has layered width at most 3. 
\end{proof}

\cref{FindTripods} is proved via the following lemma. 

\begin{lemma}\label{NearTriangTripods}
Let $G^+$ be a plane triangulation, let $T$ be a spanning tree of $G^+$ rooted at some vertex $r$ on the boundary of the outer-face of $G^+$, and let $P_1,\ldots,P_k$, for some $k\in\{1,2,3\}$, be pairwise disjoint bipods such that $F=[P_1,\ldots,P_k]$ is a cycle in $G^+$ with $r$ in its exterior.  Let $G$ be the near triangulation consisting of all the edges and vertices of $G^+$ contained in $F$ and the interior of $F$.

Then $G$ has a partition $\PP$ into tripods such that $P_1,\ldots,P_k\in\PP$, and the graph $H:= G/ \PP$ is  planar and has a tree-decomposition in which every bag has size at most $4$ and some bag contains all the vertices of $H$ corresponding to $P_1,\ldots,P_k$.
\end{lemma}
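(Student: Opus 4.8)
The plan is to prove \cref{NearTriangTripods} by induction on $n=|V(G)|$, following closely the inductive proof of \cref{NearTriang} but bundling the three vertical \emph{legs} produced by Sperner's Lemma into a single tripod instead of three separate vertical paths; this is exactly what shrinks the bags from size $9$ to size $4$. In the base case $n=3$, the graph $G=F$ is a triangle, $k\le 3$, and $\PP:=\{P_1,\dots,P_k\}$ works (each bipod $P_i$ is in particular a tripod), with a one-bag tree-decomposition; then $H$ has at most three vertices and is trivially planar.

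For $n>3$, I would first cut $F$ into three consecutive arcs $R_1,R_2,R_3$, each a subpath of one of $P_1,\dots,P_k$ and hence itself a (possibly one-vertex) bipod: take $R_i:=P_i$ if $k=3$; split a bipod with at least two vertices if $k=2$; and split $P_1$ (which has at least three vertices since $F$ is a cycle) into three nonempty arcs if $k=1$. Colour each vertex of $R_i$ with colour $i$, and colour each internal vertex $v$ by the colour of the first vertex of $F$ on the $T$-path from $v$ to $r$ (which exists since $r$ lies outside $F$), exactly as in \cref{NearTriang}; by \cref{Sperner} there is an internal face $\tau=v_1v_2v_3$ with $v_i$ coloured $i$. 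Let $Q_i$ be the vertical $T$-path from $v_i$ up to its first ancestor $v_i'$ on $F$, let $Q_i':=Q_i-v_i'$, and set $Q:=Q_1'\cup Q_2'\cup Q_3'$. As in \cref{NearTriang}, the paths $Q_1,Q_2,Q_3$ are pairwise disjoint and meet $F$ only in $v_1',v_2',v_3'$; moreover $Q$ is a tripod, since its at most three vertical paths have lower endpoints among $v_1,v_2,v_3$, which form the clique $\tau$. I would add $Q$ to $\PP$ as a single new part. The cycle $F$ together with $\tau$ and $Q_1,Q_2,Q_3$ divides the disk bounded by $F$ into at most three near-triangulations $G_1,G_2,G_3$ (some possibly degenerate), where $G_i$ is bounded by the cycle $F_i$ formed from $Q_i$, the arc of $F$ between $v_i'$ and $v_{i+1}'$ lying in $R_i\cup R_{i+1}$ (the legs cannot cross, so $v_1',v_2',v_3'$ occur on $F$ in the same cyclic order as $v_1,v_2,v_3$ around $\tau$), the path $Q_{i+1}$, and the edge $v_iv_{i+1}$ of $\tau$, with indices taken cyclically.

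The heart of the proof---and the step I expect to be the main obstacle---is to show that each non-degenerate $F_i$ is the concatenation of at most three bipods, so that the inductive hypothesis applies to $G_i$. Writing $R_i^+$ for the portion of $R_i$ from $v_i'$ toward $R_{i+1}$ and $R_{i+1}^-$ for the portion of $R_{i+1}$ from $R_i$ up to $v_{i+1}'$, one can regroup $F_i$ as $[\,Q_i'\cup\{v_iv_{i+1}\}\cup Q_{i+1}',\, R_i^+,\, R_{i+1}^-\,]$: the first piece is a bipod because $Q_i'$ and $Q_{i+1}'$ are vertical paths with adjacent lower endpoints $v_i,v_{i+1}$, and the other two are subpaths of the bipods $R_i,R_{i+1}$ and hence bipods. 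Then $F_i$ meets the hypotheses of the lemma, has fewer vertices than $G$ (it omits the third vertex of $\tau$), and still has $r$ in its exterior. The delicate work will be the degenerate cases---a trivial leg $v_i=v_i'$ with $Q_i'=\varnothing$, a one-vertex arc $R_i^+$ or $R_{i+1}^-$, or an $F_i$ that collapses to the single edge $v_iv_{i+1}$---where one must move $v_i'$ and $v_{i+1}'$ (and possibly $v_i,v_{i+1}$) between the three pieces so that they stay pairwise disjoint and nonempty, or else discard $G_i$; this is the analogue of the care taken with $R_i^-$, $R_i^+$, and degenerate $F_i$ in the proof of \cref{NearTriang}.

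Finally, I would assemble the output as in \cref{NearTriang}. By induction each non-degenerate $G_i$ has a partition $\PP_i$ into tripods with $H_i:=G_i/\PP_i$ planar and a tree-decomposition $(B^i_x:x\in V(J_i))$ of width $3$ having a bag $B^i_{u_i}$ that contains the at most three boundary bipods of $F_i$. Let $\PP$ consist of $P_1,\dots,P_k$, the tripod $Q$, and every part of every $\PP_i$ that is internal to $G_i$; since each boundary bipod of $F_i$ is a subpath of one of $P_1,\dots,P_k$ or of $Q$, this is a partition of $V(G)$ into tripods with $P_1,\dots,P_k\in\PP$. Build the tree $J$ from the disjoint union of the $J_i$ (over the non-degenerate $G_i$) by adding one node $u$ adjacent to each $u_i$, set $B_u:=\{P_1,\dots,P_k,Q\}$ (of size at most $4$), and keep every other bag $B^i_x$ but replace in it each external part of $\PP_i$ by the part of $\PP$ containing it. The verification that $(B_x:x\in V(J))$ is a tree-decomposition of $H:=G/\PP$---connectedness of $\{x : a\in B_x\}$ for each vertex $a$ of $H$, and the fact that every edge of $H$ lies in a common bag because $F_i$ separates the interior of $G_i$ from everything outside $G_i$---goes through verbatim as in \cref{NearTriang}. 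Every bag then has size at most $4$, so $H$ has treewidth at most $3$ and $B_u$ contains $P_1,\dots,P_k$; and since each part of $\PP$ (a bipod, the tripod $Q$, or an inductively produced tripod) induces a connected subgraph, $H$ is a minor of the planar graph $G$ and is therefore planar.
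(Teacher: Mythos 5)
Your proposal is correct and follows essentially the same route as the paper's own proof: the same induction on $|V(G)|$, the same Sperner-triangle colouring, the same bundling of the three legs $Q_1',Q_2',Q_3'$ into a single tripod (which the paper calls $Y$ and you call $Q$), the same regrouping of each sub-face boundary $F_i$ as a cycle $[\,I_i,\,R_i^+,\,R_{i+1}^-\,]$ of at most three bipods, and the same assembly of the tree-decomposition by adding one new root bag $\{P_1,\dots,P_k,Q\}$ of size at most $4$ and relabelling external parts in the inherited bags. One small remark: the paper handles the cases where some of $R_i^+$, $R_{i+1}^-$, $I_i$ are empty simply by invoking the inductive hypothesis with $k=2$ or $k=1$ (and discarding $F_i$ only when it degenerates to a single edge), so the "moving vertices between the pieces" step you anticipate as delicate is in fact unnecessary.
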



\begin{proof}[Proof of \cref{FindTripods} assuming \cref{NearTriangTripods}.] 
	Let $T$ be a spanning tree in a triangulation $G$ rooted at vertex $v$. We may assume that $v$ is on the boundary of the outer-face of $G$. Let $G^+$ be the plane triangulation obtained from $G$ by adding one new vertex $r$ into the outer-face of $G$ and adjacent to each vertex on the boundary of the outer-face of $G$. Let $T^+$ be the spanning tree of $G^+$ obtained from $T$ by adding $r$ and the edge $rv$. Consider $T^+$ to be rooted at $r$. Let $P_1,P_2,P_3$ be the singleton paths consisting of the three vertices on the boundary of the outer-face of $G$. Then $P_1,P_2,P_3$ are disjoint bipods such that $F=[P_1,P_2,P_3]$ is a cycle in $G^+$ with $r$ in its exterior.  Moreover, the near triangulation consisting of all the edges and vertices of $G^+$ contained in $F$ and the interior of $F$ is $G$ itself. 	Thus $G$ and $G^+$ satisfy the assumptions of \cref{NearTriangTripods}, which implies that $G$ has a partition $\PP$ into tripods in $T$ such $G / \PP$ has treewidth at most $3$. 
\end{proof}

The remainder of this section is devoted to proving \cref{NearTriangTripods}. 

\begin{proof}[Proof of \cref{NearTriangTripods}]
This proof follows the same approach as the proof of \cref{NearTriang}, by induction on $n=|V(G)|$.  We focus mainly on the differences here. The base case $n=3$ is trivial.

  As before we partition the vertices of $F$ into paths $R_1$, $R_2$, and $R_3$.  If $k=3$, then $R_i:=P_i$ for $i\in\{1,2,3\}$.  Otherwise, as before, we split $P_1$ into two (when $k=2$) or three (when $k=1$) paths.
  
  We apply the same colouring as in the proof of \cref{NearTriang}. Then Sperner's Lemma gives a face $\tau=v_1v_2v_3$ of $G$ whose vertices are coloured 1, 2, 3 respectively.  As in  the proof of \cref{NearTriang}, we obtain vertical paths $Q_1$, $Q_2$, and $Q_3$ where each $Q_i$ is a path in $T$ from $v_i$ to $R_i$.  Remove the last vertex from each $Q_i$ to obtain (possibly empty) paths $Q_1'$, $Q_2'$, and $Q_3'$.  Let $Y$ be the tripod consisting of $Q_1'\cup Q_2' \cup Q_3'$ plus the edges of $\tau$ between non-empty $Q_1',Q_2',Q_3'$. 

As before we consider the graph consisting of the edges and vertices of $\tau$, the edges and vertices of $F$ and the edges and vertices of $Q_1,Q_2,Q_3$. This graph has up to three internal faces $F_1,F_2,F_3$ where each $F_i=[Q_i',R_i^+,R_{i+1}^-,Q_{i+1}']$ and $R_{i}^+$ and $R_{i}^-$ are the same portions of $R_i$ as defined in \cref{NearTriang}.   Observe that  $F_i=[ R_i^+ , R_{i+1}^-, I_i ]$, where $R_i^+$ and $R_{i+1}^-$ are bipods, and $I_i$ is the bipod formed by  $Q_i' \cup Q_{i+1}'$. As before, let $G_i$ be the subgraph of $G$ whose vertices and edges are in $F_i$ or its interior. 

For $i\in\{1,2,3\}$, if $F_i$ is non-empty, then $G_i$ and $F_i=[ R_i^+ , R_{i+1}^-, I_i ]$ satisfy the conditions of the lemma, and $G_i$ has fewer vertices than $G$. Thus we may apply induction to $G_i$.  (Note that one or two of 
$R_i^+$, $R_{i+1}^-$ and $I_i$ may be empty, in which case we apply the inductive hypothesis with $k=2$ or $k=1$, respectively.)\  This gives a partition $\PP_i$ of $G_i$ such that $H_i:= G_i / \PP_i$ satisfies the conclusions of the lemma. Let $(B^i_x:x\in V(J_i))$ be a tree-decomposition of $H_i$, in which every bag has size at most 4, and some bag $B^i_{u_i}$ contains the vertices of $H_i$ corresponding to  $R_i^+$, $R_{i+1}^-$ and $I_i$ (if they are non-empty).  
  
We construct $\PP$ as before. Initialise $\PP:=\{P_1,\ldots,P_k,Y\}$. Then, for $i\in \{1,2,3\}$, each tripod in  $\PP_i$ is either fully contained in $F_i$ or it is \emph{internal} with none of its vertices in $F_i$. 
Add all these internal tripods in  $\PP_i$ to $\PP$. By construction, $\PP$ partitions $V(G)$ into tripods. The graph $H:= G/ \PP$ is planar since $G$ is planar and each tripod in $\PP$ induces a connected subgraph of $G$.

Next we produce the tree-decomposition $(B_x:x\in V(J))$ of $H$ that satisfies the requirements of the lemma.  Let $J$ be the tree obtained from the disjoint union of $J_1$, $J_2$ and $J_3$ by adding one new node $u$ adjacent to $u_1$, $u_2$ and $u_3$. Let $B_u$ be the set of at most four vertices of $H$ corresponding to $Y,P_1,\dots,P_k$. For $i\in\{1,2,3\}$ and for each node $x\in V(J_i)$, initialise $B_x:= B^i_x$. 

As in  the proof of \cref{NearTriang},  the resulting structure,  $(B_x:x\in V(J))$, is not yet a tree-decomposition of $H$ since some bags may  contain vertices of $H_i$ that are not necessarily vertices of $H$. Note that unlike in  \cref{NearTriang} this does not only include elements of $\PP_i$  that are contained in $F$. 
In particular, $I_i$ is also not an element of $\PP$ and thus does not correspond to a vertex of $H$.   
We remedy this  as follows. For $x\in V(J)$, in bag $B_x$, replace each instance of the vertex of $H_i$ corresponding to $I_i$ by the vertex of $H$ corresponding to $Y$. 
Similarly, by construction, $R_i^+$ is a subgraph of $P_{\alpha_i}$ for some $\alpha_i\in \{1,\dots,k\}$. 
For $x\in V(J)$, in bag $B_x$, replace each instance of the vertex of $H_i$ corresponding to $R_i^+$ 
by the vertex of $H$ corresponding to $P_{\alpha_i}$. 
Finally, $R_{i+1}^-$ is a subgraph of $P_{\beta_i}$ for some $\beta_i\in \{1,\dots,k\}$. 
For $x\in V(J)$, in bag $B_x$, replace each instance of the vertex of $H_i$ corresponding to $R_{i+1}^-$  by the vertex of $H$ corresponding to $P_{\beta_i}$. 

This completes the description of $(B_x:x\in V(J))$. Clearly, every bag $B_x$ has size at most 4. The proof that $(B_x:x\in V(J))$  is indeed a tree-decomposition of $H$ is completely analogous to the proof in \cref{NearTriang}. 
\end{proof}

The following lemma, which is implied by \cref{Width3LayeredPartition,PartitionQueue,TreewidthQueue}, will be helpful for generalising  our results to bounded genus graphs. 

\begin{lemma}
\label{BFSPlanarQueue}
For every BFS layering $(V_0,V_1,\dots)$ of a planar graph $G$, there is a 49-queue layout of $G$ using vertex ordering $\overrightarrow{V_0},\overrightarrow{V_1}, \dots,$, where $\overrightarrow{V_i}$ is some ordering of $V_i$, $i \geq 0$.
\end{lemma}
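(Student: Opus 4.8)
The plan is to derive Lemma~\ref{BFSPlanarQueue} essentially for free by combining the three results already proved: Theorem~\ref{Width3LayeredPartition}, Lemma~\ref{PartitionQueue}, and Lemma~\ref{PlanarTreewidth3Queue5}. The statement to prove is not merely that $\qn(G)\le 49$ for planar $G$ — that is already \cref{PlanarQueue} — but the slightly sharper assertion that the $49$-queue layout can be taken to use a vertex ordering that \emph{refines a prescribed BFS layering} of $G$, i.e.\ the ordering has the form $\overrightarrow{V_0},\overrightarrow{V_1},\dots$. So the work is just to check that the construction hidden inside the earlier lemmas respects the layering we are handed.

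First I would fix an arbitrary BFS layering $(V_0,V_1,\dots)$ of $G$. By the ``moreover'' clause of \cref{Width3LayeredPartition}, $G$ has a partition $\PP$ with layered width at most $3$ \emph{with respect to this very layering}, such that $Q:=G/\PP$ is planar with treewidth at most $3$. Next, by \cref{PlanarTreewidth3Queue5}, the quotient $Q$ has a $5$-queue layout; let $H:=Q$ with this $5$-queue layout, so in particular $Q$ has a $k$-queue layout with $k=5$. Now apply \cref{PartitionQueue} to $H=Q$ and $G$ with $\ell=3$ and $k=5$: since $G$ has an $H$-partition (namely $\PP$, viewing $Q$ as the quotient) of layered width $3$ with respect to $(V_0,V_1,\dots)$, the lemma produces a $\bigl(3\ell k+\floor*{\tfrac32\ell}\bigr)$-queue layout of $G$, which is $3\cdot 3\cdot 5 + \floor*{\tfrac92} = 45+4 = 49$ queues. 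Crucially, the conclusion of \cref{PartitionQueue} states that this $49$-queue layout uses the vertex ordering $\overrightarrow{V_0},\overrightarrow{V_1},\dots$ for some orderings $\overrightarrow{V_i}$ of the layers $V_i$ — exactly the form required by Lemma~\ref{BFSPlanarQueue}. This completes the proof.

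There is essentially no obstacle here: the lemma is a bookkeeping corollary, and the only thing to be careful about is matching the parameters ($\ell=3$ from \cref{Width3LayeredPartition}, $k=5$ from \cref{PlanarTreewidth3Queue5}) and confirming that the layering in \cref{PartitionQueue} is the \emph{same} layering with respect to which the $H$-partition has bounded layered width — which is guaranteed because \cref{Width3LayeredPartition} delivers the partition for the given BFS layering. One minor point worth stating explicitly in the write-up is the passage from ``$\PP$ is a partition with $G/\PP$ of treewidth $\le 3$'' to ``$G$ has a $Q$-partition where $Q$ has a $5$-queue layout'': this is immediate since $G/\PP$ itself is a planar graph of treewidth at most $3$, so we may take $H=G/\PP$ directly (an $H$-partition only requires $H$ to contain a spanning subgraph isomorphic to the quotient, and here $H$ equals the quotient). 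If one instead wanted to cite \cref{TreewidthQueue} rather than \cref{PlanarTreewidth3Queue5}, the bound would degrade to $3\cdot3\cdot(2^3-1)+4 = 67$, so the use of the planar-treewidth-$3$ bound is what pins the constant at $49$.
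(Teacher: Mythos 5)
Your proof is correct and matches what the paper intends: the paper presents this lemma as an immediate consequence of \cref{Width3LayeredPartition} and \cref{PartitionQueue}, together with a queue-number bound on the quotient. Note that the paper's accompanying sentence cites \cref{TreewidthQueue} as the third ingredient, but as you correctly observe that would only yield $3\cdot3\cdot7+4=67$; the constant $49$ requires \cref{PlanarTreewidth3Queue5}, and your substitution is the right reading of the authors' intent (it is the same trio used to derive the $49$ bound for \cref{PlanarQueue} earlier in \cref{planar}).
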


As promised above, we now show that \cref{Width3LayeredPartition} is tight in terms of the treewidth of $H$.

\begin{theorem}
For all integers  $k \geq  2$ and $\ell \geq 1$ there is a graph $G$ with treewidth $k$ such that 
if $G$ has a partition $\PP$ with layered width at most $\ell$, then $G / \PP$ contains $K_{k+1}$ 
and thus has treewidth at least $k$. 
Moreover, if $k=2$ then $G$ is outer-planar, and if $k=3$ then $G$ is planar.
\end{theorem}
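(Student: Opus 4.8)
The plan is to construct, for each $k\ge2$ and $\ell\ge1$, a graph $G$ of treewidth $k$ with the property that every part of any layered-width-$\le\ell$ partition is small enough that the "pigeonhole'' forces a large clique in the quotient. The key point is a counting argument: if $\PP$ has layered width $\le\ell$ with respect to a layering $(V_0,V_1,\dots)$, and a part $A\in\PP$ meets layers spanning $t$ consecutive indices, then $|A|\le t\ell$; in particular a part contained entirely inside a window of $r$ consecutive layers has at most $r\ell$ vertices. So I want $G$ to contain, roughly speaking, a $(k+1)$-clique ``minor blown up across few layers'' in such a way that any partition of it into few-vertex parts is forced to have $k+1$ mutually adjacent parts. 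The natural candidate: take a $(k+1)$-clique $K$ and subdivide/blow up appropriately, or better, build $G$ as the $\ell$-th power-like construction so that any part sees only a bounded number of the ``key'' vertices.

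Concretely, here is the construction I would use. Let $r := (k+1)(\ell+1)$ (some constant depending on $k,\ell$). Take a sufficiently large ``fan-like'' or iterated-triangulation gadget: for $k=2$ take $G$ to be a large maximal outerplanar graph (a triangulated polygon) of treewidth $2$; for $k=3$ take $G$ to be a large planar $3$-tree (stacked triangulation) of treewidth $3$; for general $k$ take a large $k$-tree. Inside such a graph one finds, for any assignment of vertices to layers consistent with a layering, and any partition into parts of size $\le\ell$ per layer, a set of $k+1$ vertices that are pairwise in distinct parts and whose parts are pairwise adjacent. The mechanism: a $k$-tree built by repeatedly stacking a new vertex into a $k$-clique contains, after $t$ stacking steps along one ``branch,'' a structure where any layering is essentially monotone along the branch, so a window of $O(\ell)$ consecutive layers contains only $O(\ell)$ of the branch vertices; pushing $t$ large enough (depending only on $k,\ell$) forces, by Ramsey/pigeonhole on which part each stacked vertex lands in, $k+1$ distinct parts that form a clique in the original $k$-tree and hence in the quotient.

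The key steps, in order, are: (1) fix the gadget $G$ (maximal outerplanar for $k=2$, planar $3$-tree for $k=3$, $k$-tree in general) and record $\tw(G)=k$ together with outerplanarity/planarity; (2) given a partition $\PP$ with layered width $\le\ell$ and its layering, observe each part has $\le\ell$ vertices per layer, and show that the layering restricted to a long stacked branch of $G$ can only increase slowly, so a window of $c(\ell)$ layers captures $\le c(\ell)\ell$ branch-vertices; (3) choose the branch long enough (length depending only on $k,\ell$) that by pigeonhole some $(k+1)$-clique of $G$ along the branch has all $k+1$ vertices in pairwise distinct parts, using that each part is confined to a short layer-window and each short window contains only few clique vertices; (4) conclude these $k+1$ parts are pairwise adjacent in $G/\PP$, i.e. $G/\PP\supseteq K_{k+1}$, whence $\tw(G/\PP)\ge k$.

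The main obstacle is step (2)–(3): controlling how a layering can distribute the vertices of the gadget. A layering is essentially arbitrary subject only to the edge-monotonicity constraint, so one cannot assume it is the BFS layering; the argument must work for the \emph{worst-case} layering. The trick will be to exploit that in a $k$-tree (or maximal outerplanar graph) built by long stacking, consecutive stacked vertices are adjacent, hence lie in layers differing by at most $1$, which pins the layering down to a near-monotone function along the branch and makes the window-counting valid. Getting the precise branch length and the pigeonhole bookkeeping right — ensuring one extracts $k+1$ vertices that are \emph{simultaneously} in distinct parts \emph{and} form a clique — is the delicate part, but it is purely a finite combinatorial estimate once the near-monotonicity of the layering on the branch is established.
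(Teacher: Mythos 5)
Your plan has a genuine gap at exactly the point you flagged as ``delicate,'' and the gap is fatal to the near-monotonicity mechanism. You want to argue that along a long stacking branch $v_1,v_2,\dots,v_t$ the layering $L$ is nearly monotone, so that any part (having $\le\ell$ vertices per layer) occupies a short layer-window and hence contains only $O(\ell)$ branch vertices. But the only constraint a layering imposes is $|L(v_{i+1})-L(v_i)|\le1$, so an adversarial layering can \emph{oscillate}, e.g.\ $L(v_i)=i\bmod2$. Then the entire branch lies in just two layers, a window-counting bound gives nothing, and a single part may contain $2\ell$ branch vertices. Worse, the quotient need not contain $K_{k+1}$ at all on such a gadget: take the ``snake'' $2$-tree with triangles $v_1v_2v_3,\,v_2v_3v_4,\,v_3v_4v_5,\dots$, layering $L(v_i)=i\bmod2$, and parts $A_j=\{v_{2j-1},v_{2j}\}$. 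Each part has one vertex per layer (layered width $1$), but the quotient is a path, so treewidth $1$, not $2$. So the conclusion fails for a maximal outerplanar graph that fits your description; your proposal as written does not single out the feature that makes it work.

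What the paper does instead is build the gadget around a \emph{dominant vertex}: for $k=2$ it takes a path on $9\ell^2+3\ell$ vertices plus one vertex $v$ adjacent to all of them (the fan), and for $k\ge3$ it takes $3\ell$ disjoint copies of the $(k-1)$-gadget $Q$ plus one dominant vertex $v$, proving the claim by induction on $k$. The dominant vertex is the crucial device: since every other vertex is adjacent to $v$, the whole graph sits in at most three consecutive layers, so every part has at most $3\ell$ vertices \emph{in total}, with no appeal to monotonicity of the layering. In the base case, deleting the part $A_x$ containing $v$ breaks the path into at most $3\ell$ pieces with at least $9\ell^2+1$ vertices remaining, so some piece has $\ge3\ell+1$ vertices and therefore meets two distinct parts $A_y,A_z$ adjacent in the quotient; since $x$ is dominant in $H$, $\{x,y,z\}$ is a triangle. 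In the inductive step, $|A_x|\le3\ell$ so some copy of $Q$ avoids $A_x$ entirely, inherits an $(H-x)$-partition of layered width $\le\ell$, hence $H-x\supseteq K_k$ by induction, and dominance of $x$ lifts this to $K_{k+1}$. Planarity for $k=3$ and outerplanarity for $k=2$ fall out of the same recursive construction (a dominant vertex over outerplanar pieces is planar). If you want to salvage your proposal, the indispensable ingredient you are missing is a vertex whose neighbourhood is all of $V(G)$, which collapses the layering to width three; after that the pigeonhole is easy, and your ``branch length depending only on $k,\ell$'' heuristic becomes the concrete bound $9\ell^2+3\ell$ (resp.\ $3\ell$ copies).
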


\begin{proof}
We proceed by induction on $k$. 
Consider the base case with $k=2$.
Let $G$ be the graph obtained from the path on $9\ell^2 + 3\ell$ vertices by adding one dominant vertex $v$ (the so-called \emph{fan graph}). 
Consider an $H$-partition $(A_x:x\in V(H))$ of $G$ with layered width at most $\ell$. 
Since $v$ is dominant in $G$, each vertex is on the layer containing $v$, the previous layer, or the subsequent layer. Thus we may assume there are at most three layers, and each part $A_x$ has at most $3\ell$ vertices.
Say $v$ is in part $A_x$.
Consider deleting $A_x$ from $G$. 
This deletes at most $3\ell-1$ vertices from the path $G-v$.
Thus $G-A_x$ is the union of at most $3\ell$ paths, with at least $9\ell^2 + 1$
vertices in total.
Thus, one such path $P$ in $G-A_x$ has at least $3\ell+1$ vertices.
Thus there is an edge $yz$ in $H-x$, such that $P\cap A_y\neq\emptyset $ and $P\cap A_z\neq\emptyset$. 
Since $v$ is dominant, $x$ is dominant in $H$. 
Hence $\{x,y,z\}$ induces $K_3$ in $H$. 

Now assume the result for $k-1$. 
Thus there is a graph $Q$ with treewidth $k-1$ such that
if $Q$ has an $H$-partition with width at most $\ell$, then $H$ contains $K_k$. 
Let $G$ be obtained by taking $3\ell$ copies of $Q$ and adding one dominant vertex $v$.
Thus $G$ has treewidth $k$.
Consider an $H$-partition $(A_x:x\in V(H))$ of $G$ with layered width at most $\ell$. 
Since $v$ is dominant there are at most three layers, and  each part has at most $3\ell$ vertices.
Say $v$ is in part $A_x$. Since $|A_x| \leq 3\ell$, some copy of $Q$ avoids $A_x$.
Thus this copy of $Q$ has an $(H-x)$-partition of layered width at most $\ell$. 
By assumption, $H-x$ contains $K_k$. 
Since $v$ is dominant, $x$ is dominant in $H$.
Thus $H$ contains $K_{k+1}$, as desired. 

 In the $k=2$ case, $G$ is outer-planar. Thus, in the $k=3$ case, $G$ is planar.
\end{proof}

\section{Proof of \cref{GenusQueue}: Bounded-Genus Graphs}\label{genus}

As was the case for planar graphs, our proof that bounded genus graphs have bounded queue-number employs \cref{klPartitionQueue}. Thus the goal of this section is to show that our construction of bounded layered partitions for planar graphs can be generalised for graphs of bounded Euler genus. In particular, we show the following theorem of independent interest. 

\begin{theorem}
\label{GenusPartitionTreewidth9}
Every graph $G$ of Euler genus $g$ has a connected partition $\PP$ with layered width at most $\max\{2g,1\}$ such that $G / \PP$ is apex and has treewidth at most $9$. Moreover, there is such a partition for every BFS layering of $G$. 
\end{theorem}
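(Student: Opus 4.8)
The plan is to reduce the bounded-genus case to the planar case by cutting the surface open. Given a graph $G$ embedded in a surface of Euler genus $g$, I would first take a BFS spanning tree $T$ of $G$ (rooted at some vertex $r$) and the associated BFS layering $(V_0,V_1,\dots)$. The key classical tool is that one can find a set of $O(g)$ vertices, or more precisely a set of $O(g)$ vertical paths in $T$, whose deletion makes the surface planar: concretely, there is a collection of at most $2g$ edges $e_1,\dots,e_{2g}$ not in $T$ such that cutting the surface along the fundamental cycles $C_i$ formed by $e_i$ together with the $T$-path between its endpoints yields a planar (actually disk-embedded) graph. Each such fundamental cycle is the union of two vertical paths in $T$ plus one extra edge, so the union of all these cycles meets each layer $V_j$ in at most $2\cdot 2g = 4g$ vertices — but being a bit more careful (orienting and using that each fundamental cycle contributes two vertical paths, and that we can merge the part of $G$ ``behind'' these paths into a bounded number of parts) we aim for layered width $\max\{2g,1\}$.

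Here is the structure I would follow. Let $G_0$ be obtained from $G$ by cutting along the $O(g)$ fundamental cycles; $G_0$ is planar, and there is a natural ``boundary'' subset $B$ of $V(G_0)$ (the copies of vertices on the cut cycles) which lies on the outer face after the cut. Apply \cref{FindVerticalPaths} (or \cref{NearTriang} directly) to $G_0$ with its BFS tree, taking care that the boundary paths — each a vertical path in $T$ — are among the parts. This yields a partition $\PP_0$ of $G_0$ into vertical paths with $G_0/\PP_0$ of treewidth at most $8$. Now undo the cutting: each vertex of $G$ corresponds to one or more vertices of $G_0$; identify the parts of $\PP_0$ accordingly. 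A vertex on the cut cycles gets identified with its copies, and since there are only $O(g)$ vertical paths along the cuts, after identification each resulting part still has $O(g)$ vertices in each layer. The quotient $G/\PP$ is then obtained from $G_0/\PP_0$ by at most $O(g)$ vertex-identifications (one per cut path), so its treewidth rises by at most $O(g)$ — and in fact, adding one apex vertex absorbing all the merged cut-parts, one can arrange that $G/\PP$ is apex with treewidth at most $9$. The ``moreover'' clause about every BFS layering follows because \cref{NearTriang}/\cref{FindVerticalPaths} already work for every BFS spanning tree, and the cutting construction only uses the tree $T$.

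The step I expect to be the main obstacle is controlling both parameters simultaneously after the identification: keeping the layered width exactly $\max\{2g,1\}$ rather than some larger multiple of $g$ requires a careful choice of the cutting system (using a BFS-tree-based system of fundamental cycles where the two ``sides'' of each cut can be charged cleverly), and keeping the quotient treewidth at $9$ (not $8+O(g)$) requires collapsing all the cut-related parts into a single apex vertex and checking that $(G/\PP) - (\text{apex})$ is still planar of treewidth $8$ — i.e., that removing the apex vertex essentially returns us to (a minor of) $G_0/\PP_0$. Verifying that this collapse is legitimate — that every edge of $G$ between two non-apex parts already appears as an edge of $G_0/\PP_0$, and that no new high-treewidth structure is created — is the delicate bookkeeping at the heart of the argument. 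The connectivity of $\PP$ is maintained because each part is a union of vertical paths of $T$ that share the root-side structure, and the cut-identifications only merge parts that were already ``touching'' through identified boundary vertices.
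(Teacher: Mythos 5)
Your high-level strategy is the same as the paper's: take a BFS-tree-based system of $\leq g$ fundamental cycles (each the union of two vertical paths plus one edge), set $Z$ to be their union so that $Z$ meets each BFS layer in at most $2g$ vertices and $G-V(Z)$ is planar, push the heavy lifting onto \cref{FindVerticalPaths}, and absorb $Z$ into a single apex part so the quotient becomes (planar of treewidth $8$) plus one vertex. That is exactly right, including the observation that the ``moreover'' clause follows because the cutting system can be built from any BFS tree.

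Where the proposal has a genuine gap is in the step you yourself flag as ``the delicate bookkeeping''. You propose to \emph{cut} the graph (keeping the boundary copies of $Z$), apply \cref{FindVerticalPaths} to the cut-open planar graph $G_0$, then \emph{identify} copies and collapse ``cut-related parts'' into the apex. Two things break here. First, \cref{FindVerticalPaths} gives you no control over which vertical paths are parts, so a single part of $\PP_0$ may contain both boundary copies and deep interior vertices; after you remove the boundary copies (to put them in the apex) the residual piece of that part can be disconnected or fail to be a single vertical path, and if instead you collapse the whole part into the apex, the apex part's layered width is no longer bounded by $2g$. Trying to salvage this with \cref{NearTriang} doesn't work either, because it controls at most six boundary paths, whereas the cut boundary has $\Theta(g)$ of them. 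Second, you say to apply \cref{FindVerticalPaths} to $G_0$ ``with its BFS tree'', but once you cut, the restriction of $T$ to $G_0$ is a forest, and an arbitrary BFS tree of $G_0$ yields vertical paths that have no reason to be vertical in $T$, nor to respect the original layering $(V_0,V_1,\dots)$ when re-identified.

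The paper resolves both issues in one shot with a dedicated lemma (\cref{MakePlanar}): rather than cutting and keeping copies, it \emph{deletes} $V(Z)$ outright, then builds a connected planar supergraph $G^+\supseteq G-V(Z)$ by sewing in a new root $r^+$ with auxiliary paths calibrated so that the BFS layering $(W_0,W_1,\dots)$ of $G^+$ restricts exactly to $(V_i\setminus V(Z))$, and so that every vertical path of the new tree $T^+$, intersected with $V(G)\setminus V(Z)$, is a vertical path of $T$. Only then is \cref{FindVerticalPaths} applied (to $G^+$), and the final partition is $\{P\cap(V(G)\setminus V(Z)):P\in\PP^+\}\cup\{V(Z)\}$, with $V(Z)$ as the single connected apex part. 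Without a substitute for \cref{MakePlanar} that guarantees the restricted parts are vertical in the original $T$ and that the boundary is cleanly separated, the ``identify then collapse'' plan does not close.
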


This theorem and \cref{klPartitionQueue} imply that graphs of Euler genus $g$ have bounded queue-number (\cref{GenusQueue}) with an upper bound of $3 \cdot 2g \cdot (2^9-1) + \floor*{\tfrac{3}{2} \,2g}=O(g)$.

Note that \cref{GenusPartitionTreewidth9} is best possible in the following sense. Suppose that every graph $G$ of Euler genus $g$ has a partition $\PP$ with layered width at most $\ell$ such that $G / \PP$ has treewidth at most $k$. By \cref{PartitionLayeredTreewidth}, $G$ has layered treewidth $O(k\ell)$. \citet{DMW17} showed that the maximum layered treewidth of graphs with Euler genus $g$ is $\Theta(g)$. Thus $k\ell\geq \Omega(g)$. 

The rest of this section is devoted to proving \cref{GenusPartitionTreewidth9}. The next lemma is the key to the proof. Many similar results are known in the literature (for example, 
\citep{EW05} or \citep[Lemma~8]{CCL12} or \cite[Section~4.2.4]{MoharThom}), but none prove exactly what we need.

\begin{lemma}
\label{MakePlanar}
Let $G$ be a connected graph with Euler genus $g$.
For every BFS spanning tree $T$ of $G$ rooted at some vertex $r$ with corresponding BFS layering $(V_0,V_1,\dots)$, 
there is a subgraph $Z\subseteq G$ with at most $2g$ vertices in each layer $V_i$, such that $Z$ is connected and $G-V(Z)$ is planar. 
Moreover, there is a connected planar graph $G^+$ containing $G-V(Z)$ as a subgraph, and 
there is a BFS spanning tree $T^+$ of $G^+$ rooted at some vertex $r^+$ 
with corresponding BFS layering $(W_0,W_1,\dots)$ of $G^+$, 
such that  $W_{i} \cap (V(G) \setminus V(Z)) = V_i  \setminus V(Z)$ for all $i\geq 0$, and
$P \cap (V(G) \setminus V(Z))$ is a vertical path in $T$ for every vertical path $P$ in $T^+$. 
\end{lemma}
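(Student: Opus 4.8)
The plan is to prove Lemma~\ref{MakePlanar} by induction on the Euler genus $g$, peeling off one ``non-planarizing'' piece of the embedding at a time. If $g=0$ the graph is already planar and we take $Z$ to be empty (or a single vertex, say $r$, if we want it non-empty); then $G^+ := G$, $T^+ := T$, $r^+ := r$ works trivially. So assume $g \ge 1$ and fix an embedding of $G$ in a surface $\Sigma$ of Euler genus $g$. The idea is standard (cf.\ the references the authors cite): there is a non-contractible simple closed curve in $\Sigma$, and by a homotopy argument we may take it to meet $G$ only in vertices, in fact only in a small number of vertices. Cutting $\Sigma$ along this curve reduces the Euler genus by at least $1$ (either it drops the genus, or it disconnects a cross-cap, or it splits the surface). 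The vertices met by the curve form a cycle-like structure $C$ in $G$; we want to bound $|C \cap V_i|$ for each layer $i$.

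First I would make the curve ``BFS-efficient''. Using the BFS spanning tree $T$, one builds the curve so that it traverses at most two vertical paths of $T$ plus one extra edge (a standard construction: take a non-tree edge $uv$ whose fundamental cycle in $T$ is non-contractible, and let $C$ be the fundamental cycle $T[u,r'] \cup T[v,r'] \cup uv$ where $r'$ is the least common ancestor; alternatively work with a shortest non-contractible cycle through a BFS tree). A fundamental cycle of a BFS tree consists of two vertical paths, so it has at most $2$ vertices in each layer $V_i$. Let $Z_1$ be this cycle; it is connected and has at most $2$ vertices per layer. Now cut along $Z_1$: the graph $G - V(Z_1)$ embeds in a surface of Euler genus at most $g-1$, but it may be disconnected. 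Apply induction to each component (rooting a BFS tree at a vertex adjacent to $Z_1$, so that distances are consistent with those in $G$), obtaining for each component a connected subgraph of at most $2(g-1)$ vertices per layer whose removal leaves a planar graph, together with the planar-completion data. Union these across components and with $Z_1$: reconnecting $Z_1$ to each component's piece through an edge of $G$, we get a connected $Z$ with at most $2g$ vertices per layer, and $G - V(Z)$ is planar.

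The more delicate part is the ``moreover'' clause: producing the planar supergraph $G^+$, a BFS tree $T^+$, and a root $r^+$ so that (i) layers agree outside $Z$, i.e.\ $W_i \cap (V(G)\setminus V(Z)) = V_i \setminus V(Z)$, and (ii) intersecting a $T^+$-vertical path with $V(G)\setminus V(Z)$ always yields a $T$-vertical path. The natural construction: when we cut $\Sigma$ along the curve bounding $Z$, each vertex $z \in V(Z)$ that is incident to surviving vertices in $G-V(Z)$ gets ``split'' into one or more boundary vertices on the cut surface; these boundary pieces, together with faces, can be capped off with disks to planarize. To get $G^+$, add a new apex-like root $r^+$ in one of these capping disks and join it to selected boundary vertices so that it is within distance $1$ of the ``frontier'' of $G-V(Z)$ at the appropriate BFS level. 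Concretely one wants $r^+$ to be adjacent (in $G^+$) to every vertex of $G - V(Z)$ that, in $G$, had a neighbour in $Z$; after such additions, run BFS in $G^+$ from $r^+$ restricted appropriately. A cleaner bookkeeping device: since every surviving vertex $v \notin V(Z)$ has $\dist_G(r,v)$ equal to either $\dist_{G-V(Z)}(\text{frontier},v)$ plus a constant offset, one can choose $r^+$ and the edges added at $r^+$ so that $\dist_{G^+}(r^+, v)$ equals $\dist_G(r,v)$ for all $v \notin V(Z)$ (possibly after a uniform shift, which does not matter since layers are only compared setwise). Then $W_i \cap (V(G)\setminus V(Z)) = V_i \setminus V(Z)$ is immediate, and a $T^+$-vertical path restricted to the surviving vertices is a set of vertices at consecutive $T^+$-distances, all of which equal the corresponding $T$-distances, and which are consecutive along a $T$-path (because $T^+$ restricted to $G-V(Z)$ refines $T$ restricted to $G-V(Z)$ up to the attachment at $r^+$) — hence it is $T$-vertical.

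I expect the main obstacle to be exactly this last coordination: ensuring simultaneously that (a) $G^+$ is genuinely planar and connected, (b) the BFS layering of $G^+$ agrees with that of $G$ on surviving vertices, and (c) vertical paths of $T^+$ restrict to vertical paths of $T$. These three requirements pull against each other — planarization forces us to add edges and vertices, but every added edge at low BFS level risks creating a ``shortcut'' that changes distances or merges vertical paths. The safe route is: build $G^+$ by adding only the root $r^+$ plus edges from $r^+$ to frontier vertices, choosing $r^+$'s position in a capping disk so the result is plane; cap all remaining holes with disks that are triangulated using only already-present vertices (no new distances created); and verify that each vertical path of $T^+$ not passing through $r^+$ lies entirely in one ``cut-open'' component where $T^+$ and $T$ coincide on surviving vertices. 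Carrying out the induction so that these invariants are maintained through the disconnection step (where we re-root in each component) is where the bulk of the careful argument lies; the genus-reduction and the per-layer count of $Z$ are comparatively routine given the fundamental-cycle construction.
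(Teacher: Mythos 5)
Your high-level plan shares the fundamental-cycle idea with the paper, but your induction-on-genus architecture creates gaps that the paper's single-shot construction is specifically designed to avoid. The paper fixes an embedding, takes the dual graph restricted to non-tree edges, picks a dual spanning tree $T^*$, and forms $Z$ as the union of the $g$ fundamental cycles (in $T$) of the edges dual to $E(D)\setminus E(T^*)$ — all at once. It then performs a surgery that cuts along the \emph{edges} of $Z$ (not simply deleting $V(Z)$), and an Euler-formula count shows the resulting graph is planar with \emph{all} cut vertices on a single face. That single-face fact is what lets one new vertex $r_0$ reach every boundary copy planarly, and it is not something you get for free peeling one handle at a time.

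Concretely, the gaps are these. First, after you delete $V(Z_1)$ and pass to components, you want to re-root a BFS tree in each component ``so that distances are consistent with those in $G$''. This is false in general: a vertex $v$ in a component $C$ of $G-V(Z_1)$ may have all of its $G$-shortest paths to $r$ passing through $Z_1$, so no choice of root in $C$ makes $\dist_C(\cdot,v)$ agree with $\dist_G(r,v)$ even up to a shift. The paper sidesteps this because it never deletes $V(Z)$ before re-establishing distances — it cuts along edges, keeps (copies of) the $Z$-vertices on the boundary, and then attaches, to a new root $r_0$, a \emph{path of length $1+\dist_G(r,v)$} for each boundary copy of $v\in V(Z)$. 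Second, your plan to add edges (rather than paths) from $r^+$ to the frontier vertices cannot preserve layers: frontier vertices sit at many different depths in $G$, and giving them all distance $1$ from $r^+$ destroys $W_i\cap(V(G)\setminus V(Z))=V_i\setminus V(Z)$; a uniform shift does not help because the depths are not uniform. Third, your assertion that ``$T^+$ restricted to $G-V(Z)$ refines $T$ restricted to $G-V(Z)$'' is unjustified if $T^+$ is an arbitrary BFS tree of $G^+$: a BFS tree could choose parents not in $T$. The paper makes this true by \emph{constructing} $T^+$ (via $T''$) to contain the induced forest $T[V(G)\setminus V(Z)]$, so that the vertical-path condition is automatic. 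To repair your inductive approach you would essentially need to re-derive the paper's surgery and distance claims (the paper's Claims on $\dist_{G''}$) inside each inductive step, plus a merging argument across components — at which point the all-at-once dual-spanning-tree construction is simpler.
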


\begin{proof}
The result is trivial if $g=0$ (just take $Z=\emptyset$ and $G^+=G$ and $r^+=r$ and $W_i=V_i$). Now assume that $g\geq 1$. Fix an embedding of $G$ in a surface of Euler genus $g$. Say $G$ has $n$ vertices, $m$ edges, and $f$ faces. By Euler's formula, $n-m+f=2-g$. Let $D$ be the multigraph with vertex-set the set of faces in $G$, where for each edge $e$ of $G-E(T)$, if $f_1$ and $f_2$ are the faces of $G$ with $e$ on their boundary, then there is an edge joining $f_1$ and $f_2$ in $D$. (Think of $D$ as the spanning subgraph of the dual graph consisting of those edges that do not cross edges in $T$.)\  Note that $|V(D)|=f=2-g - n + m$ and $|E(D)|=m-(n-1)= |V(D)|-1+g$. Since $T$ is a tree, $D$ is connected; see~\citep[Lemma~11]{DMW17} for a proof. Let $T^*$ be a spanning tree of $D$. Thus $|E(D)\setminus E(T^*)|=g$. Let $Q=\{a_1b_1,a_2b_2,\dots,a_gb_g\}$ be the set of edges in $G$ dual to the edges in $E(D)\setminus E(T^*)$. For $i\in\{1,2,\dots,g\}$, let $Z_i$ be the union of the $a_ir$-path and the $b_ir$-path in $T$, plus the edge $a_ib_i$. Let $Z := Z_1\cup Z_2\cup\dots\cup Z_g$. By construction, $Z$ is a connected subgraph of $G$. Say $Z$ has $p$ vertices and $q$ edges. Since $Z$ consists of a subtree of $T$ plus the $g$ edges in $Q$, we have $q = p-1+g$.

We now describe how to `cut' along the edges of $Z$ to obtain a new graph $G'$; see \cref{Cutting}. First, each edge $e$ of $Z$ is replaced by two edges $e'$ and $e''$ in $G'$. Each vertex of $G$ that is incident with no edges in $Z$ is untouched. Consider a vertex $v$ of $G$ incident with edges $e_1,e_2,\dots,e_d$ in $Z$ in clockwise order. In $G'$ replace $v$ by new vertices $v_1,v_2,\dots,v_d$, where $v_i$ is incident with $e'_i$, $e''_{i+1}$ and all the edges incident with $v$ clockwise from $e_i$ to $e_{i+1}$ (exclusive). Here $e_{d+1}$ means $e_1$ and $e''_{d+1}$ means $e''_1$. This operation defines a cyclic ordering of the edges in $G'$ incident with each vertex (where $e''_{i+1}$ is followed by $e'_i$ in the cyclic order at $v_i$). This in turn defines an embedding of $G'$ in some orientable surface. (Note that if $G$ is embedded in a non-orientable surface, then the edge signatures for $G$ are ignored in the embedding of $G'$.)\ Let $Z'$ be the set of vertices introduced in $G'$ by cutting through vertices in $Z$. 

\begin{figure}[!b]
\centering
\includegraphics{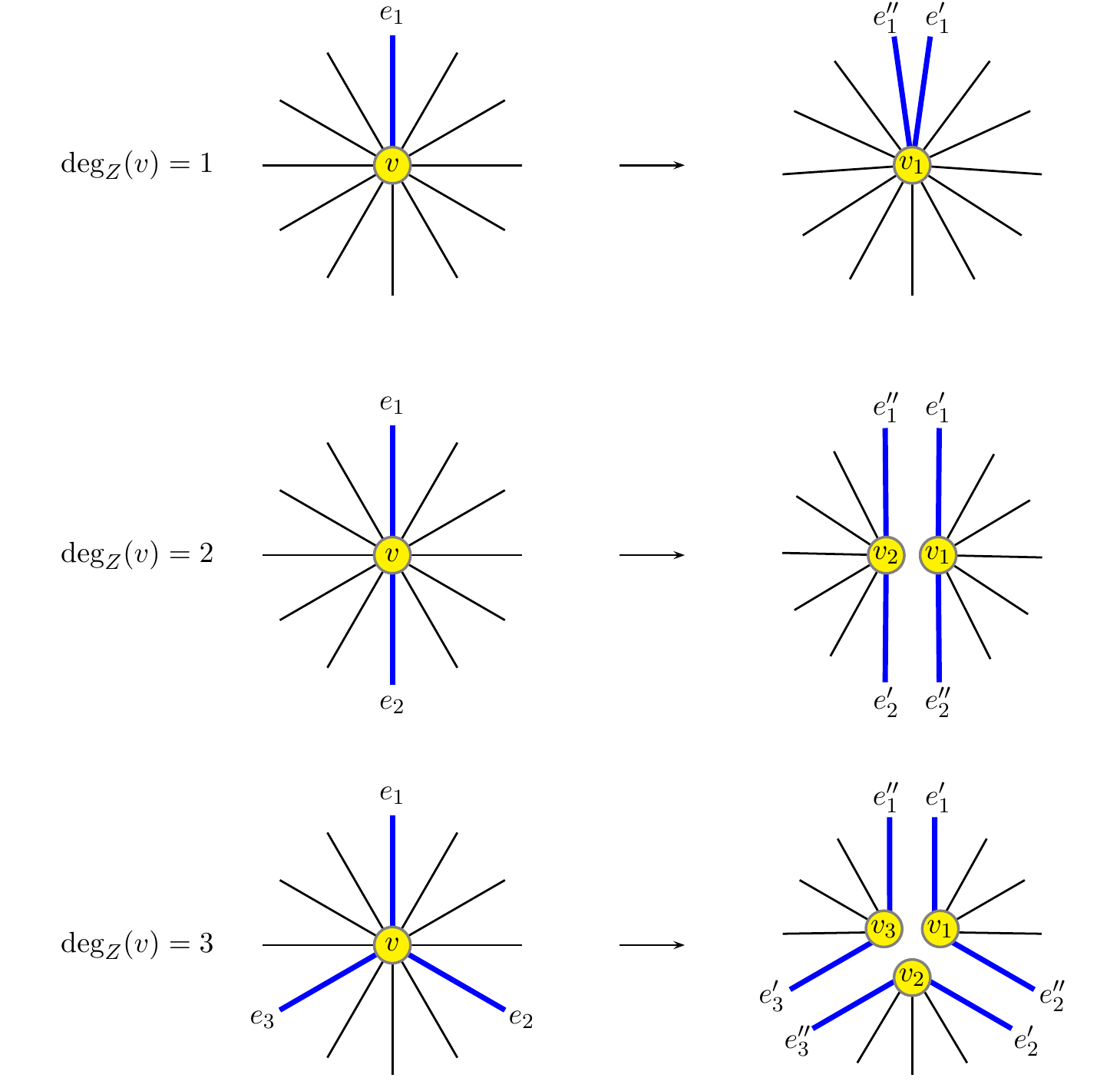}
\caption{Cutting the blue edges in $Z$ at each vertex.\label{Cutting}}
\end{figure}

We now show that $G'$ is connected. Consider vertices $x_1$ and $x_2$ of $G'$. Select faces $f_1$ and $f_2$ of $G'$ respectively incident to $x_1$ and $x_2$ that are also faces of $G$. Let $P$ be a path joining $f_1$ and $f_2$ in the dual tree $T^*$. Then the edges of $G$ dual to the edges in $P$ were not split in the construction of $G'$. Therefore an $x_1x_2$-walk in $G'$ can be obtained by following the boundaries of the faces corresponding to vertices in $P$. Hence $G'$ is connected. 

Say $G'$ has $n'$ vertices and $m'$ edges, and the embedding of $G'$ has $f'$ faces and Euler genus $g'$. Each vertex $v$ in $G$ with degree $d$ in $Z$ is replaced by $d$ vertices in $G'$. Each edge in $Z$ is replaced by two edges in $G'$, while each edge of $G-E(Z)$ is maintained in $G'$. Thus
$$n' = n -p + \sum_{v\in V(G)}\deg_Z(v) = n + 2q -p  = n + 2(p-1+g) - p = n +p - 2 + 2g$$ and $m' = m + q = m + p-1 + g$. Each face of $G$ is preserved in $G'$. Say $s$ new faces are created by the cutting. Thus $f'=f+s$. Since $G'$ is connected, $n'-m'+f'=2-g'$ by Euler's formula. Thus $(n +p - 2 + 2g) - (m + p-1 + g) + (f+s) = 2-g'$, implying $(n-m+f)   - 1 + g + s = 2-g'$. Hence $(2-g)   - 1 + g + s = 2-g'$, implying $g' = 1-s$. Since $g'\geq 0$, we have $s\leq 1$.
Since $g\geq 1$, by construction, $s\geq 1$. Thus $s=1$ and $g'=0$. Thus $G'$ is planar and all the vertices in $Z'$ are on the boundary of a single face, $f$, of $G'$. 


Note that $G-V(Z)$ is a subgraph of $G'$, and thus $G-V(Z)$ is planar. By construction, each path $Z_i$ has at most two vertices in each layer $V_j$. Thus $Z$ has at most $2g$ vertices in each $V_j$.

Now construct a supergraph $G''$ of $G'$ by adding a vertex $r_0$ in $f$ and some paths from $r_0$ to vertices in $Z'$. Specifically, for each vertex $v_i\in Z'$ corresponding to some vertex $v\in V(Z)$,  add to $G''$ a path $Q_{v_i}$ from $r_0$ to $v_i$ of length $1+\dist_G(r,v)$. Note that $G''$ is planar.

\begin{claim}
\label{Distancev'}
$\dist_{G''}(r_0,v') = 1 + \dist_G(r,v)$ for every vertex $v'$ in $G'$ corresponding to $v\in V(Z)$.
\end{claim}
\begin{proof}
By construction, $\dist_{G''}(r_0,v') \leq 1 + \dist_G(r,v)$, so it is sufficient to show that $\dist_{G''}(r_0,v') \geq 1 + \dist_G(r,v)$, which we now do.
Let $P$ be a shortest path from $r_0$ to $v'$ in $G''$.
By construction $P=P_1P_2$, where $P_1$ is a path from $r_0$ to $w'$ of length $1+\dist_G(r,w)$ for some vertex $w'$ in $G'$ corresponding to $w\in V(Z)$, and $P_2$ is a path in $G'$ from $w'$ to $v'$ of length $\dist_{G''}(r_0,v') - 1-\dist_G(r,w)$. By construction,
$\dist_G(v,w) \leq \dist_{G'}(v',w')  \leq  \dist_{G''}(r_0,v') - 1-\dist_G(r,w)$. Thus
$\dist_G(v,r) \leq \dist_G(v,w) + \dist_G(w,r)  \leq \dist_{G''}(r_0,v') - 1$, as desired.
\end{proof}

\begin{claim}
\label{Distancex}
$\dist_{G''}(r_0,x)=1+\dist_G(r,x)$ for each vertex $x\in V(G) \setminus V(Z)$.
\end{claim}
\begin{proof}
We first prove that $\dist_{G''}(r_0,x) \leq 1+\dist_G(r,x)$.
Let $P$ be a shortest path from $x$ to $r$ in $G$.
Let $v$ be the first vertex in $Z$ on $P$ (which is well defined since $r$ is in $Z$).
So $\dist_G(x,r) = \dist_G(x,v) + \dist_G(v,r)$.
Let $z$ be the vertex prior to $v$ on the $xv$-subpath of $P$.
Then $z$ is adjacent to some copy $v'$ of $v$ in $G'$.
In $G''$, there is a path from $r_0$ to $v'$ of length $1+\dist_G(r,v)$.
Thus $\dist_{G''}(r_0,x) \leq 1+\dist_G(r,v) + \dist_G(v,x) = 1  + \dist_G(r,x)$.

We now prove that $\dist_{G''}(r_0,x) \geq 1+\dist_G(r,x)$.
Let $P$ be a shortest path from $x$ to $r_0$ in $G''$.
Let $v'$ be the first vertex not in $G$ on $P$.
Then $v'$ corresponds to some vertex $v$ in $Z$.
Since $P$ is shortest,
$\dist_{G''}(r_0,x) = \dist_{G''}(r_0,v') + \dist_{G''}(v',x)$.
By \cref{Distancev'},
$\dist_{G''}(r_0,v') = 1 + \dist_G(r,v)$.
By the choice of $v$, the subpath of $P$ from $x$ to $v'$ corresponds to a shortest path in $G$ from $x$ to $v$.
Thus $\dist_{G''}(v',x) = \dist_{G}(v,x)$.
Combining these equalities,
$\dist_{G''}(r_0,x) = 1 + \dist_G(r,v) + \dist_{G}(v,x) \geq 1 + \dist_G(r,x)$, as desired.
\end{proof}

Let $T''$ be the following spanning tree of $G''$ rooted at $r_0$. 
Initialise $T''$ to be the union of the above-defined paths $Q_{v_i}$ taken over all vertices $v_i\in Z'$. 
Consider each edge $vw\in E(T)$ where $v\in Z$ and $w\in V(G)\setminus V(Z)$. 
Then $w$ is adjacent to exactly one vertex $v_i$ introduced when cutting through $v$. 
Add the edge $wv_i$ to $T''$. 
Finally, add the induced forest $T[ V(G) \setminus V(Z) ]$ to $T''$. 
Observe that $T''$ is a spanning tree of $G''$. 

Construct the desired graph $G^+$ by contracting $r_0$ and all its neighbours in $G''$ into a single vertex $r^+$.  
Let $T^+$ be the spanning tree of $G^+$ obtained from $T''$ by the same contraction.  
Then $G^+$ is planar because $G''$ is planar. 
By \cref{Distancex}, the BFS layering of $G^+$ from $r^+$ satisfies the conditions of the lemma. 

Every maximal vertical path in $T''$ consists of some path $Q_{v_i}$ (where $v_i\in Z'$), 
followed by some edge $v_iw$ (where $w\in V(G) \setminus V(Z)$, followed by a path in $T[ V(G) \setminus V(Z) ]$ from $w$ to a leaf in $T$. 
Since every vertical path $P$ in $T^+$ is contained in some maximal vertical path in $T''$, 
it follows that $P \cap V(G) \setminus V(Z)$ is a vertical path in $T$.
\end{proof}

We are now ready to complete the proof of \cref{GenusPartitionTreewidth9}.

\begin{proof}[Proof of \cref{GenusPartitionTreewidth9}] 
We may assume that $G$ is connected (since if each component of $G$ has the desired partition, then so does $G$).
Let $T$ be a BFS spanning tree of $G$ rooted at some vertex $r$ with corresponding 
BFS layering $(V_0,V_1,\dots)$.  
By \cref{MakePlanar}, 
there is a subgraph $Z\subseteq G$ with at most $2g$ vertices in each layer $V_i$, 
a connected planar graph $G^+$ containing $G-V(Z)$ as a subgraph, and 
a BFS spanning tree $T^+$ of $G^+$ rooted at some vertex $r^+$ 
with corresponding BFS layering $(W_0,W_1,\dots)$, 
such that  $W_{i} \cap V(G)\setminus V(Z) = V_i  \setminus V(Z)$ for all $i\geq 0$, and
$P \cap V(G) \setminus V(Z)$ is a vertical path in $T$ for every vertical path $P$ in $T^+$.

By \cref{FindVerticalPaths}, $G^+$ has a partition $\PP^+$ into vertical paths in $T^+$ 
such that $G^+ / \PP^+$ has treewidth at most $8$. 
Let $\PP:= \{P \cap V(G) \setminus V(Z) : P \in \PP^+\} \cup \{V(Z) \}$. 
Thus $\PP$ is a partition of $G$. 
Since $P \cap V(G) \setminus V(Z)$ is a vertical path in $T$ and $Z$ is a connected subgraph of $G$, 
$\PP$ is a connected partition. 
Note that the quotient $G / \PP$  is obtained from a subgraph of $G^+ / \PP^+$ by adding one vertex corresponding to $Z$. 
Since $G^+/\PP^+$ is planar and has treewidth at most 8, $G /\PP$ is apex and has treewidth at most 9. 
Thus $G / \PP$ has treewidth at most 9. 
Since $P \cap V(G) \setminus V(Z)$ is a vertical path in $T$, it has at most one vertex in each layer $V_i$. 
Thus each part of $\PP$ has at most $\max\{2g,1\}$ vertices in each layer $V_i$. 
Hence $\PP$  has layered width at most $\max\{2g,1\}$.
\end{proof}

The same proof in conjunction with \cref{Width3LayeredPartition} instead of \cref{FindVerticalPaths} shows the following.

\begin{theorem}
\label{GenusPartitionTreewidth4}
Every graph of Euler genus $g$ has a partition $\PP$ with layered width at most $\max\{2g,3\}$ such that $G / \PP$ is apex and has treewidth at most $4$. Moreover, there is such a partition for every BFS layering of $G$.
\end{theorem}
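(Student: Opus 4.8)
The plan is to repeat verbatim the argument used to prove \cref{GenusPartitionTreewidth9}, replacing the single application of \cref{FindVerticalPaths} by an application of the stronger \cref{Width3LayeredPartition}. Concretely, I would start by reducing to the connected case (if each component has the desired partition, then so does $G$, since the quotient of $G$ is the disjoint union of the quotients of the components, and disjoint unions preserve both ``planar'', ``apex'', and ``treewidth at most $4$''). So assume $G$ is connected with Euler genus $g$, fix a BFS spanning tree $T$ of $G$ rooted at $r$ with BFS layering $(V_0,V_1,\dots)$, and apply \cref{MakePlanar}. This produces a connected subgraph $Z\subseteq G$ with at most $2g$ vertices in each layer $V_i$ such that $G-V(Z)$ is planar, together with a connected planar supergraph $G^+\supseteq G-V(Z)$, a root $r^+$, a BFS spanning tree $T^+$ of $G^+$ with BFS layering $(W_0,W_1,\dots)$ satisfying $W_i\cap(V(G)\setminus V(Z)) = V_i\setminus V(Z)$, and with the property that $P\cap(V(G)\setminus V(Z))$ is a vertical path in $T$ for every vertical path $P$ in $T^+$.

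Next I would invoke \cref{Width3LayeredPartition} applied to $G^+$ with the BFS layering $(W_0,W_1,\dots)$: since that theorem promises ``such a partition for every BFS layering'', $G^+$ has a partition $\PP^+$ with layered width $3$ with respect to $(W_0,W_1,\dots)$ such that $G^+/\PP^+$ is planar with treewidth at most $3$. Here the one subtlety compared to the treewidth-$9$ case is that in \cref{GenusPartitionTreewidth9} one uses \cref{FindVerticalPaths}, whose parts are single vertical paths (hence at most one vertex per layer); here the parts of $\PP^+$ need not be vertical paths. I would then set
\[
  \PP := \{\, A\cap(V(G)\setminus V(Z)) : A\in\PP^+,\ A\cap(V(G)\setminus V(Z))\neq\emptyset \,\}\ \cup\ \{V(Z)\}.
\]
This is a partition of $V(G)$. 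Its quotient $G/\PP$ is obtained from a subgraph of $G^+/\PP^+$ (delete the parts that lie entirely in $V(Z)$ and restrict the others) by adding one new vertex, corresponding to $V(Z)$. Since $G^+/\PP^+$ is planar with treewidth at most $3$, deleting vertices keeps it planar with treewidth at most $3$, and adding one vertex makes it apex with treewidth at most $4$. So $G/\PP$ is apex with treewidth at most $4$, as required. (Unlike in \cref{GenusPartitionTreewidth9}, I would not claim $\PP$ is connected, which is why this weaker conclusion is stated; \cref{Width3LayeredPartition} does not guarantee connected parts.)

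Finally I would bound the layered width of $\PP$ with respect to the original layering $(V_0,V_1,\dots)$ of $G$. The part $V(Z)$ has at most $2g$ vertices in each $V_i$ by \cref{MakePlanar}. For any other part, it is of the form $A\cap(V(G)\setminus V(Z))$ for some $A\in\PP^+$; since $\PP^+$ has layered width $3$ with respect to $(W_0,W_1,\dots)$, $|A\cap W_i|\le 3$ for all $i$, and using $W_i\cap(V(G)\setminus V(Z)) = V_i\setminus V(Z)$ we get $|A\cap(V(G)\setminus V(Z))\cap V_i| \le |A\cap W_i| \le 3$. Hence every part has at most $\max\{2g,3\}$ vertices in each layer $V_i$, so $\PP$ has layered width at most $\max\{2g,3\}$. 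The main point where care is needed — and the only place the argument genuinely differs from the $\mathrm{tw}\le 9$ case — is checking that the layers of $G^+$ and $G$ agree off $V(Z)$ so that the layered-width bound transfers; this is exactly the content of the identity $W_i\cap(V(G)\setminus V(Z)) = V_i\setminus V(Z)$ supplied by \cref{MakePlanar}, so there is no real obstacle.
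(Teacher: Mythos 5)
Your proposal is correct and follows exactly the approach the paper intends: the paper's proof of \cref{GenusPartitionTreewidth4} is the one-line remark that the proof of \cref{GenusPartitionTreewidth9} carries over with \cref{Width3LayeredPartition} in place of \cref{FindVerticalPaths}, which is precisely what you do, and you correctly identify and handle the one point requiring a slightly different argument (bounding the layered width of the restricted parts via the identity $W_i\cap(V(G)\setminus V(Z))=V_i\setminus V(Z)$ rather than via the vertical-path property).
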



Note that \cref{GenusPartitionTreewidth4} is stronger than \cref{GenusPartitionTreewidth9} in that the treewidth bound is smaller, whereas 
\cref{GenusPartitionTreewidth9} is stronger than \cref{GenusPartitionTreewidth4} in that the partition is connected (and the layered width is smaller for $g\in\{0, 1\}$). Both \cref{GenusPartitionTreewidth9,GenusPartitionTreewidth4}  (with \cref{PartitionQueue}) imply that graphs with Euler genus $g$ have $O(g)$ queue-number, but better constants are obtained by the following more direct argument that uses \cref{MakePlanar,PlanarQueue} to circumvent the use of \cref{GenusPartitionTreewidth9} and obtain a proof of \cref{GenusQueue} with the best known bound.

\begin{proof}[Proof of \cref{GenusQueue} with a $4g+49$ upper bound]
Let $G$ be a graph $G$ with Euler genus $g$.
We may assume that $G$ is connected.  
Let $(V_0,V_1,\dots,V_t)$ be a BFS layering of $G$.
By \cref{MakePlanar},
there is a subgraph $Z\subseteq G$ with at most $2g$ vertices in each layer $V_i$, such that $G-V(Z)$ is planar, and there is a connected planar graph $G^+$ containing $G-V(Z)$ as a subgraph, such that there is a BFS layering $(W_0,\dots,W_t)$ of $G^+$ such that $W_i \cap V(G)\setminus V(Z) = V_i \setminus V(Z)$ for all $i\in\{0,1,\dots,t\}$.

By \cref{BFSPlanarQueue}, there is a $49$-queue layout of $G^+$ with vertex ordering $\overrightarrow{W_0}, \dots,\overrightarrow{W_t}$, where $\overrightarrow{W_i}$ is some ordering of $W_i$.
Delete the vertices of $G^+$ not in $G-V(Z)$ from this queue layout. We obtain a $49$-queue layout of $G-V(Z)$ with vertex ordering $\overrightarrow{V_0 \setminus V(Z)}, \dots,\overrightarrow{V_t\setminus V(Z)}$, where $\overrightarrow{V_i-V(Z)}$ is some ordering $V_i-V(Z)$.
Recall that $|V_j \cap V(Z)|\leq 2g$ for all $j\in\{0,1,\dots,t\}$.
Let $\overrightarrow{V_j\cap V(Z)}$ be an arbitrary ordering of $V_j\cap V(Z)$.
Let $\preceq$ be the ordering
$$\overrightarrow{V_0 \cap V(Z)}, \overrightarrow{V_0 \setminus V(Z)},\;\overrightarrow{V_1 \cap V(Z)}, \overrightarrow{V_1\setminus V(Z)}, \;\dots ,\overrightarrow{V_t \cap V(Z)}, \overrightarrow{V_t \setminus V(Z)}$$ of $V(G)$.
Edges of $G-V(Z)$ inherit their queue assignment.  We now assign edges incident with vertices in $V(Z)$ to queues.
For $i \in \{1,\dots,2g\}$ and odd $j\geq 1$, put each edge incident with the $i$-th vertex in $\overrightarrow{V_j \cap V(Z)}$ in a new queue $S_i$.
For $i \in \{1,\dots,2g\}$ and even $j\geq 0$, put each edge incident with the $i$-th vertex in $\overrightarrow{V_j \cap V(Z)}$  (not already assigned to a queue) in a new queue $T_i$.
Suppose that two edges $vw$ and $pq$ in $S_i$ are nested, where $v\prec p \prec q \prec w$. Say $v\in V_a$ and $p\in V_b$ and $q\in V_c$ and $w\in V_d$. By construction, $a\leq b\leq c\leq d$. Since $vw$ is an edge, $d\leq a+1$.
At least one endpoint of $vw$ is in $V_j\cap V(Z)$ for some odd $j$, and one endpoint of $pq$ is in $V_\ell\cap V(Z)$ for some odd $\ell$. Since $v,w,p,q$ are distinct, $j\neq \ell$. Thus $|i-j|\geq 2$. This is a contradiction since $a\leq b\leq c\leq d\leq a+1$. Thus $S_i$ is a queue. Similarly $T_i$ is a queue.
Hence this step introduces $4g$ new queues, and in total we have $4g+49$ queues.
\end{proof}

\section{Proof of \cref{MinorQueue}: Excluded Minors}
\label{minor}
\label{Minors}

This section first introduces the graph minor structure theorem of Robertson and Seymour, which shows that every graph in a proper minor-closed class can be constructed using four ingredients: graphs on surfaces, vortices, apex vertices, and clique-sums. We then use this theorem to prove that every proper minor-closed class has bounded queue-number (\cref{MinorQueue}).

Let $G_0$ be a graph embedded in a surface $\Sigma$. Let $F$ be a facial cycle of $G_0$ (thought of as a subgraph of $G_0$). An  \emph{$F$-vortex} is an $F$-decomposition $(B_x\subseteq V(H):x\in V(F))$ of a graph $H$ such that $V(G_0\cap H)=V(F)$ and $x\in B_x$ for each $x\in V(F)$.  For $g,p,a,k\geq0$, a graph $G$ is \emph{$(g,p,k,a)$-almost-embeddable} if for some set $A\subseteq V(G)$ with $|A|\leq a$, there are graphs $G_0,G_1,\dots,G_s$ for some $s\in\{0,\dots,p\}$ such that:
\begin{compactitem}
\item $G-A = G_{0} \cup G_{1} \cup \cdots \cup G_s$,
\item $G_{1}, \dots, G_s$ are pairwise vertex-disjoint;
\item $G_{0}$ is embedded in a surface of Euler genus at most $g$,
\item there are $s$ pairwise vertex-disjoint facial cycles $F_1,\dots,F_s$ of $G_0$, and
\item for $i\in\{1,\dots,s\}$, there is an $F_i$-vortex $(B_x\subseteq V(G_i):x\in V(F_i))$ of  $G_i$ of width at most $k$.
\end{compactitem}
The vertices in $A$ are called \emph{apex} vertices. They can be adjacent to any vertex in $G$.

A graph is \emph{$k$-almost-embeddable} if it is  $(k,k,k,k)$-almost-embeddable.

Let $C_1=\{v_1,\dots,v_k\}$ be a $k$-clique in a graph $G_1$. Let $C_2=\{w_1,\dots,w_k\}$ be a $k$-clique in a graph $G_2$. Let $G$ be the graph obtained from the disjoint union of $G_1$ and $G_2$ by identifying $v_i$ and $w_i$ for $i\in\{1,\dots,k\}$, and possibly deleting some edges in $C_1$ ($=C_2$). Then $G$ is  a \emph{clique-sum} of $G_1$ and $G_2$. 

The following graph minor structure theorem by \citet{RS-XVI} is at the heart of  graph minor theory.

\begin{theorem}[\citep{RS-XVI}]
\label{GMST}
For every proper minor-closed class $\mathcal{G}$, there is a constant $k$ such that every graph in $\mathcal{G}$ is  obtained by clique-sums of $k$-almost-embeddable graphs. 
\end{theorem}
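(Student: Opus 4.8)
This is the Graph Minor Structure Theorem of Robertson and Seymour, and any honest proof is essentially the content of a large part of the Graph Minors series; what follows is the high-level roadmap I would follow.

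First I would reduce to excluded complete graphs. Since $\mathcal{G}$ is proper and minor-closed, fix a graph $H\notin\mathcal{G}$ and set $t:=|V(H)|$. As $H$ is a subgraph, hence a minor, of $K_t$, transitivity of the minor relation shows that any graph with a $K_t$-minor also has an $H$-minor; so every $G\in\mathcal{G}$ is $K_t$-minor-free. It therefore suffices to prove: there is a constant $k=k(t)$ such that every $K_t$-minor-free graph is obtained by clique-sums of $k$-almost-embeddable graphs.

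The proof proceeds by induction, organised around treewidth and the Excluded Grid Theorem (a graph of treewidth at least $f(r)$ contains an $r\times r$ grid minor, equivalently a large \emph{wall}). If $G$ has treewidth below the threshold needed to force a sufficiently large wall, then $G$ is a clique-sum of the bags of an optimal tree-decomposition with the shared vertices turned into cliques; each bag has a bounded number of vertices and so is trivially $k$-almost-embeddable, and this is the base case. Otherwise $G$ contains a large wall $W$, and the inductive step is a local-to-global analysis around $W$. It rests on two pillars: the \emph{Flat Wall Theorem} --- in a $K_t$-minor-free graph, a sufficiently large wall has a large \emph{flat} subwall $W'$, meaning that after deleting a bounded set $A$ of apex vertices, $W'$ sits in a disc of an embedding of the surrounding piece with no path jumping over it --- and the theory of \emph{tangles}: a large flat wall induces a tangle $\mathcal{T}$ of large order, and one shows that the part of $G$ in which $\mathcal{T}$ ``lives'' is, after removing $A$, embeddable in a surface of Euler genus bounded in terms of $t$, together with a bounded number of vortices of bounded width attached along disjoint facial cycles. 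One then detaches this almost-embeddable piece along separations of bounded order --- these small separators, with clique edges added, become the clique-sums --- and applies the induction hypothesis to the remaining pieces, which are minor-smaller and still $K_t$-minor-free.

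The hard part is exactly this local-to-global step. One must prove the Flat Wall Theorem and the tangle-capturing statements, and in particular show that near a large wall all ``non-surface'' behaviour is confined to a bounded number of apex vertices plus a bounded number of bounded-width vortices --- otherwise one could route a $K_t$-minor --- and that the genus stays bounded, since arbitrarily high genus near a wall would again yield a $K_t$-minor. Assembling the clique-sum decomposition so that every piece is genuinely almost-embeddable with parameters depending only on $t$, rather than growing along the recursion, is where essentially all of the difficulty lies; managing the interaction of several overlapping tangles is the most delicate technical point.
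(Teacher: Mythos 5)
The paper does not prove this statement at all: it is quoted verbatim as the Graph Minor Structure Theorem of Robertson and Seymour and used as a black box (via \citep{RS-XVI}), so there is no in-paper argument to compare yours against. Your roadmap is an accurate summary of how the cited proof actually goes --- reduction to excluding $K_t$, the treewidth/excluded-grid dichotomy with small-treewidth graphs handled as clique-sums of bounded-size bags, and the tangle plus Flat Wall analysis showing that the part of the graph where a high-order tangle lives is almost-embeddable after deleting boundedly many apices, with vortices of bounded width --- and you are right that the clique-sums arise from the bounded-order separations along which the almost-embeddable piece is detached. But be clear that what you have written is an outline, not a proof: the Flat Wall Theorem, the tangle-capturing lemmas, the bound on genus, and above all the uniform control of the parameters $(g,p,k,a)$ across the recursion are exactly the content of the Graph Minors series (and of later treatments such as Kawarabayashi--Thomas--Wollan), and none of it is reproduced or reproducible at this level of detail. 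For the purposes of this paper that is fine, since the theorem is invoked as known; as a self-contained proof it would be a genuine gap, one you yourself flag in the final paragraph.
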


Every clique in a strongly $k$-almost-embeddable graph has size at most $8k$ (see~\citep[Lemma~21]{DMW17}). Thus the clique-sums in \cref{GMST} are of size in $\{0,1,\dots,8k\}$. 

We now set out to show that graphs that satisfy the ingredients of the graph minor structure theorem have bounded queue-number. First consider the case of no apex vertices.

\begin{lemma}
\label{AlmostEmbeddableStructure}
Every $(g,p,k,0)$-almost-embeddable graph $G$ has a connected partition $\PP$ with layered width at most $\max\{2g+4p-4,1\}$ such that $G / \PP$ has treewidth at most $11k+10$. 
\end{lemma}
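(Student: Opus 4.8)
The plan is to reduce the almost-embeddable case to the bounded-genus case already handled by \cref{GenusPartitionTreewidth9}, by absorbing the $p$ vortices into a small number of extra layered-width and treewidth units. First I would dispose of the planar/genus core: apply \cref{MakePlanar}-style reasoning, or directly \cref{GenusPartitionTreewidth9}, to the embedded part $G_0$ to get a connected partition of $G_0$ into vertical paths (with respect to a BFS spanning tree $T_0$ of $G_0$) whose quotient has treewidth at most $9$ and layered width at most $\max\{2g,1\}$. The subtlety is that $G-A = G_0 \cup G_1 \cup \dots \cup G_s$ is glued to $G_0$ only along the facial cycles $F_1,\dots,F_s$, and the vortex graphs $G_i$ carry their own $F_i$-decompositions of width at most $k$.

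The key step is handling a single vortex. A facial cycle $F_i$ together with an $F_i$-vortex $(B_x : x \in V(F_i))$ of width at most $k$ is, up to the cyclic structure, essentially a graph of pathwidth $O(k)$ attached along $F_i$; cutting the face $F_i$ open turns the pair $(F_i\text{-part of }G_0,\, G_i)$ into something one can treat with a path-partition argument. Concretely I would choose, along the facial cycle $F_i$, a BFS-consistent partition of $V(G_i)$ into parts that each sit in one bag $B_x$ and are traced out by the cycle order; since $F_i$ is a facial cycle of $G_0$ and $T_0$ is a BFS tree, the cycle $F_i$ is covered by a bounded number (at most $4$, as in the $[P_1,\dots,P_k]$ decompositions of earlier proofs) of vertical paths, and threading the width-$k$ bags through these paths costs a bounded multiplicative factor. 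Carrying this out for each of the $s \le p$ vortices, and noting that distinct vortices are on vertex-disjoint faces and are vertex-disjoint from one another, contributes additively: each vortex adds $O(1)$ to the layered width (the ``$4p-4$'' term) and $O(k)$ to the treewidth of the quotient (the ``$11k$'' term, once we combine the width-$k$ bags with the at-most-$9$ treewidth of the $G_0$-quotient and the bags where a vortex meets $G_0$).

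I would then assemble the global partition $\PP$ as the union of: the parts of the $G_0$-partition, suitably extended inside each vortex region; and merge along the interface bags. The quotient $G/\PP$ then has a tree-decomposition obtained by gluing the tree-decomposition of $G_0/\PP_0$ (treewidth $\le 9$) to the $s$ path-decompositions coming from the vortices (width $O(k)$ each), attaching each along the bounded-size bag that records the at-most-four vertical paths of $F_i$; a bookkeeping argument as in \cref{NearTriang} shows the result is a valid tree-decomposition of width at most $11k+10$. Connectivity of $\PP$ is maintained because the $G_0$-parts are connected vertical paths and each vortex part is attached to a vertex of $F_i$ lying in such a path (or is itself a connected piece inside the vortex). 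Finally, the layered width bound $\max\{2g+4p-4,1\}$ follows since a layer $V_j$ meets each $G_0$-part in $\le \max\{2g,1\}$ vertices and each vortex contributes at most a constant per vortex per layer, with the ``$-4$'' absorbing the overlap on the shared facial cycles.

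The main obstacle I expect is the vortex-cutting step: making precise how an $F_i$-vortex of width $k$, glued along a facial cycle, can be partitioned so that simultaneously (i) each part has $O(1)$ vertices per BFS layer, (ii) the parts respect the bag structure enough that the quotient picks up only $O(k)$ treewidth, and (iii) the parts stay connected and attach correctly to the vertical-path parts of $G_0$. Getting the constants to land exactly at $2g+4p-4$ and $11k+10$ will require careful accounting of how the at-most-four vertical paths covering $F_i$ interact with the linear order of the vortex bags, but the structure of the argument is a routine, if intricate, extension of the proof of \cref{GenusPartitionTreewidth9}.
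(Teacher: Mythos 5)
Your high-level plan (reduce to the bounded-genus case, then absorb vortices at bounded extra cost) matches the paper's intent, but the central mechanism you propose does not work, and the paper's actual trick is missing.

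The crux of your argument is the claim that, because $F_i$ is a facial cycle of $G_0$ and $T_0$ is a BFS tree, ``the cycle $F_i$ is covered by a bounded number (at most $4$) of vertical paths.'' This is false in general. In \cref{NearTriang} and \cref{NearTriangTripods} the boundary cycle $F=[P_1,\dots,P_k]$ is covered by few vertical paths because the induction \emph{constructs} it that way from Sperner triangles; an arbitrary facial cycle of $G_0$ with respect to an arbitrary BFS tree has no such property — it can be long and weave across many BFS layers, touching $\Theta(|F_i|)$ distinct vertical paths. Once this fails, the rest of your accounting collapses: there is no natural BFS layer to assign the vortex vertices to (they are not in $G_0$), no bound on the number of vertical paths the vortex interfaces with, and hence no way to bound either the layered width or the treewidth contribution of a single vortex. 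Your explanation of the $4p-4$ term (vortices contributing $O(1)$ per layer, with ``$-4$'' absorbing overlap) is also not what the bound actually measures.

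What the paper does instead — and what your proposal lacks — is to \emph{modify $G_0$ before applying} \cref{GenusPartitionTreewidth9}: fix a root $r$ on $F_1$, add an edge from $r$ to every vertex of every $F_i$, and pay for these new edges by adding $s-1\le p-1$ handles. The modified graph $G_0^+$ has Euler genus at most $g+2p-2$, and the crucial gain is that in any BFS layering of $G_0^+$ from $r$, \emph{all} of $V(F_1)\cup\cdots\cup V(F_s)$ lands in layers $V_0\cup V_1$. \cref{GenusPartitionTreewidth9} then gives the layered width bound $\max\{2(g+2p-2),1\}=\max\{2g+4p-4,1\}$. Every vortex vertex in $X=\bigcup_i V(G_i)\setminus V(G_0)$ is inserted as a singleton part into the augmented first layer $V_0\cup X$ (valid precisely because vortex neighbours all lie in $V_0\cup V_1\cup X$), so the layered width does not increase. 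The treewidth bound then comes from enlarging each bag $C_y$ of the width-$9$ decomposition of $H_0$ by the vortex bags $B^i_u$ for each $u\in C_y\cap V(F_i)$, giving bag size at most $10\cdot(k+1)=11k+11$, i.e.\ width $11k+10$. Without the $G_0^+$ step, neither the layering nor the treewidth accounting goes through; this is the idea you would need to supply.
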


\begin{proof}
By definition, $G=G_0\cup G_1\cup \dots\cup G_s$ for some $s\leq p$,
where $G_0$ has an embedding in a surface of Euler genus $g$ with
pairwise disjoint facial cycles $F_1,\dots,F_s$, and there is an
$F_i$-vortex $(B^i_x\subseteq V(G_i):x\in V(F_i))$ of  $G_i$ of width
at most $k$. If $s=0$ then \cref{GenusPartitionTreewidth9} implies the result. Now assume that $s\geq 1$. 

We may assume that $G_0$ is connected. Fix an arbitrary vertex $r$ in $F_1$. 
Let $G_0^+$ be the graph obtained from $G_0$ by adding an edge between $r$ and every other vertex
in $F_1\cup\dots\cup F_s$. Note that we may add $s-1$ handles, and embed
$G_0^+$ on the resulting surface. Thus $G_0^+$ has Euler genus at most
$g+2(s-1)\leq g+2p-2$.

Let $(V_0,V_1,\dots)$ be a BFS layering of $G_0^+$ rooted at $r$.
So $V_0=\{r\}$ and $V(F_1)\cup\dots\cup V(F_s) \subseteq V_0\cup V_1$.  By
\cref{GenusPartitionTreewidth9}, there is a graph $H_0$ with
treewidth at most $9$, and there is a connected $H_0$-partition $(A_x:x\in
V(H_0))$ of $G_0^+$ of layered width at most $\max\{2g+4p-4,1\}$ with
respect to $(V_0,V_1,\dots)$. Let $(C_y:y\in V(T))$ be a
tree-decomposition of $H_0$ with width at most $9$. 

Let $X := \bigcup_{i=1}^s V(G_i) \setminus V(G_0)$. Note that
$(V_0\cup X,V_1, V_2,\dots)$ is a layering of $G$ (since 
all the neighbours of vertices in $X$ are in $V_0\cup V_1\cup X$). We now add
the vertices in $X$ to the partition of $G_0^+$ to obtain the desired
partition of $G$. We add each such vertex as a singleton part.
Formally, let $H$ be the graph with $V(H):= V(H_0) \cup X $. For each
vertex  $v\in X$, let $A_v:= \{v\}$.  Initialise $E(H) := E(H_0)$. For each edge $vw$ in some vortex $G_i$, 
if $x$ and $y$ are the vertices of $H$ for which $v\in A_x$ and $w\in A_y$, then 
add the edge $xy$ to $H$. 
Now $(A_x:x\in V(H))$ is a connected $H$-partition of $G$ with width
$\max\{2g+4p-4,1\}$ with respect to $(V_0\cup X, V_1,V_2,V_3,\dots)$ (since
each new part is a singleton).

We now modify the tree-decomposition of $H_0$ to obtain the desired
tree-decomposition of $H$.
Let $(C'_y:y\in V(T))$ be the tree-decomposition of $H$ obtained from
$(C_y:y\in V(T))$ as follows.
Initialise $C'_y:=C_y$ for each $y\in V(T)$.
For $i\in\{1,\dots,s\}$ and for each vertex $u\in V(F_i)$ and for each
node $y\in V(T)$ with $u\in C_y$,
add $B^i_u$ to $C'_y$. Since $|C_y| \leq 10$ and $|B^i_u| \leq k+1$,
we have $|C'_y|\leq 11(k+1)$.
We now show that $(C'_y:y\in V(T))$ is a tree-decomposition of $H$.
Consider a vertex $v\in X$. So $v$ is in $G_i$ for some $i\in\{1,\dots,s\}$.
Let $u_1,\dots,u_t$ be the sequence of vertices in $F_i$ for which
$v\in B^i_{u_1} \cap \dots \cap B^i_{u_t}$.
Then $u_1,\dots,u_t$ is a path in $G_0$.
Say $x_j$ is the vertex of $H$ for which $u_j\in A_{x_j}$.
Let $T_j$ be the subtree of $T$ corresponding to bags that contain $x_j$.
Since $u_ju_{j+1}$ is an edge of $G_0$, either $x_j=x_{j+1}$ or
$x_jx_{j+1}$ is an edge of $H$.
In each case, by the definition of tree-decomposition, $T_j$ and
$T_{j+1}$ share a vertex in common.
Thus $T_1\cup\dots\cup T_t$ is a (connected) subtree of $T$.
By construction, $T_1\cup\dots\cup T_t$ is precisely the subtree of
$T$ corresponding to bags that contain $v$.
This show the `vertex-property' of $(C'_y:y\in V(T))$ holds.
Since each edge of $G_1\cup\dots\cup G_s$ has both its endpoints in
some bag $B^i_u$, and some bag $C'_y$ contains $B^i_u$, the
`edge-property' of $(C'_y:y\in V(T))$ also holds.
Hence $(C'_y:y\in V(T))$ is a tree-decomposition of $H$ with width at
most $11k+10$.
\end{proof}

\cref{AlmostEmbeddableStructure,PartitionQueue} imply the following result, where the edges incident to each apex vertex are put in their own queue:

\begin{lemma}
\label{AlmostEmbeddableQueue}
Every $(g,p,k,a)$-almost-embeddable graph has queue-number at most
$$a+3 \max\{2g+4p-4,1\}\, 2^{11k+10}  - \ceil*{\tfrac{3}{2}\max\{2g+4p-4,1\}}.$$
In particular, for $k\geq 1$, every $k$-almost-embeddable graph has queue-number less than 
$9k \cdot 2^{11(k+1)}$. 
\end{lemma}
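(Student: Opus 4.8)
The plan is to strip off the apex vertices, apply \cref{AlmostEmbeddableStructure} to the apex-free remainder, turn the resulting layered partition into a queue layout via \cref{klPartitionQueue}, and then put the apex vertices back in using one extra queue per apex vertex.

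In detail: write $\ell := \max\{2g+4p-4,1\}$, and let $A\subseteq V(G)$ be a set of at most $a$ apex vertices, so that $G-A$ is $(g,p,k,0)$-almost-embeddable. By \cref{AlmostEmbeddableStructure}, $G-A$ has a partition $\PP$ of layered width at most $\ell$ with $(G-A)/\PP$ of treewidth at most $11k+10$. Applying \cref{klPartitionQueue} gives a queue layout of $G-A$ with
\begin{equation*}
\qn(G-A)\;\le\;3\ell\,(2^{11k+10}-1)+\floor*{\tfrac{3}{2}\ell}\;=\;3\ell\cdot 2^{11k+10}+\bigl(\floor*{\tfrac{3}{2}\ell}-3\ell\bigr).
\end{equation*}
A short parity check shows $\floor*{\tfrac{3}{2}\ell}-3\ell=-\ceil*{\tfrac{3}{2}\ell}$ (for $\ell=2m$ both equal $-3m$; for $\ell=2m+1$ both equal $-(3m+2)$), so $\qn(G-A)\le 3\ell\cdot 2^{11k+10}-\ceil*{\tfrac{3}{2}\ell}$.

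Next I would reinstate the apex vertices. Extend the underlying vertex ordering of $G-A$ by appending the vertices of $A$ in arbitrary order at the end; this changes no nesting relation among edges of $G-A$, so the existing queues remain valid. Process the apex vertices one at a time, and for each apex vertex $v$ put all edges of $G$ incident with $v$ that have not yet been assigned into one fresh queue. These edges form a star centred at $v$, and since nested edges must be vertex-disjoint, no two edges of a star nest; hence each such set is indeed a queue. This adds at most $a$ queues, yielding
\begin{equation*}
\qn(G)\;\le\;a+3\ell\cdot 2^{11k+10}-\ceil*{\tfrac{3}{2}\ell},
\end{equation*}
which is the claimed bound. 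For the ``in particular'' statement, set $g=p=a=k$, so $\ell=\max\{6k-4,1\}=6k-4$ for $k\ge 1$; the bound simplifies to $(18k-12)2^{11k+10}-8k+6$, and since $9k\cdot 2^{11(k+1)}=18k\cdot 2^{11k+10}$, the gap $9k\cdot 2^{11(k+1)}-\qn(G)$ is at least $12\cdot 2^{11k+10}+8k-6>0$, which gives the strict inequality.

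There is no genuine obstacle here: the whole argument is a composition of \cref{AlmostEmbeddableStructure} with \cref{klPartitionQueue} (equivalently \cref{PartitionQueue,TreewidthQueue}), plus the elementary fact that a star occupies a single queue. The only steps needing any care are the identity $\floor*{\tfrac{3}{2}\ell}-3\ell=-\ceil*{\tfrac{3}{2}\ell}$ (handled by splitting on the parity of $\ell$) and the final numerical comparison needed to reach the clean form $9k\cdot 2^{11(k+1)}$; both are routine.
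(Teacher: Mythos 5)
Your proposal is correct and follows essentially the same route as the paper's own (one-line) proof: apply \cref{AlmostEmbeddableStructure} to the apex-free part, convert the layered partition to a queue layout via \cref{PartitionQueue} (with Wiechert's treewidth bound, i.e.\ \cref{klPartitionQueue}), and assign the edges incident to each apex vertex their own queue. The only thing you have added is the explicit arithmetic—the parity check turning $\floor*{\tfrac32\ell}-3\ell$ into $-\ceil*{\tfrac32\ell}$ and the final comparison against $9k\cdot 2^{11(k+1)}$—which is routine and checks out.
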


We now extend \cref{AlmostEmbeddableQueue} to allow for clique-sums using some general-purpose machinery of \citet{DMW17}. 
%
A tree-decomposition $(B_x\subseteq V(G):x\in V(T))$ of a graph $G$ is \emph{$k$-rich} if $B_x\cap B_y$ is a clique in $G$ on at most $k$ vertices, for each edge $xy\in E(T)$. Rich tree-decomposition are implicit in the graph minor structure theorem, as demonstrated by the following lemma, which is little more than a restatement of the graph minor structure theorem.

\begin{lemma}[\citep{DMW17}]
\label{ProduceRichDecomp}
For every proper minor-closed class $\mathcal{G}$, there are constants $k\geq 1$ and $\ell\geq 1$, such that every graph $G_0\in\mathcal{G}$ is a spanning subgraph of a graph $G$ that has a $k$-rich tree-decomposition such that each bag induces an $\ell$-almost-embeddable subgraph of $G$.
\end{lemma}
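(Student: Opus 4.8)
The plan is to simply unwind the graph minor structure theorem: the tree along which the clique-sums in \cref{GMST} are performed \emph{is} the required tree-decomposition, once we record the right overlying graph. Let $\ell\ge 1$ be the constant from \cref{GMST} (we may always increase it), so every graph in $\mathcal{G}$ is obtained by clique-sums of $\ell$-almost-embeddable graphs, and put $k:=8\ell$. Fix $G_0\in\mathcal{G}$ together with a sequence of clique-sums of $\ell$-almost-embeddable pieces $G_1,\dots,G_n$ producing $G_0$.

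First I would define $G$ to be the graph obtained by performing the \emph{same} vertex identifications but \emph{never deleting} a shared-clique edge. Then $V(G)=V(G_0)$ and $E(G)\supseteq E(G_0)$, so $G_0$ is a spanning subgraph of $G$, and $G$ is again a clique-sum of the (unchanged) pieces $G_1,\dots,G_n$, now with no deletions; in particular every edge of $G$ belongs to some piece $G_i$, and there are no edges of $G$ between the private vertex sets of two distinct pieces. Next, I would build a tree $T$ with one node $x_i$ per piece and bag $B_{x_i}:=V(G_i)$, proceeding by induction on $n$: having built a tree-decomposition of the clique-sum of $G_1,\dots,G_{i-1}$, the clique along which $G_i$ is attached is a clique of that graph, hence by the Helly property of tree-decompositions lies inside some bag $B_{x_j}$; attach $x_i$ to $x_j$. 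One then checks the tree-decomposition axioms for $G$: the edge axiom is immediate since each edge of $G$ lies in a single piece; the vertex axiom holds because the new piece $G_i$ contains a vertex $v$ only if $v$ lies in the attaching clique $\subseteq B_{x_j}$, so $x_i$ is glued directly to a node of the already-connected set $\{x : v\in B_x\}$.

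Finally, the two special properties. Along any edge $x_ix_j$ of $T$, the set $B_{x_i}\cap B_{x_j}$ is the clique used in the corresponding clique-sum, which has at most $8\ell=k$ vertices by \citep[Lemma~21]{DMW17} (every clique of an $\ell$-almost-embeddable graph has at most $8\ell$ vertices), and it is a clique of $G$; so the decomposition is $k$-rich. And for any node $x_i$, the standard containment $V(G_i)\cap V(G_j)\subseteq$ (the adhesion set on the first edge of the $x_i$--$x_j$ path in $T$), which is a clique of $G_i$, shows that every edge of $G$ with both endpoints in $B_{x_i}$ already lies in $G_i$; hence $G[B_{x_i}]=G_i$, which is $\ell$-almost-embeddable.

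The main obstacle is purely bookkeeping, and is exactly why the statement is ``little more than a restatement'' of \cref{GMST}: one must verify that the naive one-node-per-piece tree genuinely satisfies the vertex-connectivity axiom (this is where the absence of edges between private vertex sets of distinct pieces is used, via the Helly property at the attachment step), and that passing from $G_0$ to $G$ by reinstating the deleted shared-clique edges does not enlarge any induced bag beyond the piece $G_i$ itself. Beyond these routine checks there is no real difficulty.
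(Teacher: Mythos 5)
Your proposal is essentially the argument the paper is alluding to when it calls this lemma ``little more than a restatement of the graph minor structure theorem''; the paper itself offers no proof (it cites \citep{DMW17}), so your unwinding of \cref{GMST} into an explicit rich tree-decomposition is exactly the intended content. The overall plan --- work with the un-deleted clique-sum $G$, take one node per piece with bag $V(G_i)$, attach each new node to a bag that already contains its attachment clique via the Helly property of tree-decompositions --- is correct, and your observations that the adhesions are exactly the attachment cliques and that these are cliques of $G$ of size at most $8\ell$ go through.

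One step you state but do not actually establish: that the adhesion set on the first edge of the $x_i$--$x_j$ path is a clique \emph{of $G_i$}, not merely of $G$ or of the current running sum. When $x_{j''}$ is attached to $x_i$, Helly only hands you $C_{j''}\subseteq B_{x_i}$; nothing about the raw clique-sum structure immediately forces $C_{j''}$ to be a clique of the \emph{piece} $G_i$, since a priori its edges could come from several other pieces. The fix is to carry an inductive invariant along your construction: after attaching $G_1,\dots,G_m$, the partial decomposition satisfies $G_{\le m}[B_{x_j}]=G_j$ for all $j\le m$. Then when $C_{j''}$ is Helly-placed inside $B_{x_i}$, its edges lie in $G_{\le j''-1}[B_{x_i}]=G_i$, which is what you need; re-establishing the invariant after adding $G_{j''}$ is a short check using the same containments. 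Without making this invariant explicit, the claim ``which is a clique of $G_i$'' is a genuine gap, albeit an easily repairable one. Separately, the $8\ell$ clique bound the paper quotes from \citep[Lemma~21]{DMW17} is stated there for \emph{strongly} $\ell$-almost-embeddable graphs, whereas the pieces coming from \cref{GMST} are plain $\ell$-almost-embeddable; the present paper itself elides this distinction, so this is not a defect of your write-up, but one should either verify the clique bound for the non-strong case or adjust the constant.
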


\citet{DMW17} used so-called shadow-complete layerings to establish the following result.\footnote{In~\citep{DMW17}, \cref{RichTrack}  is expressed in terms of the track-number of a graph. However, it is known that the track-number and the queue-number of a graph are tied; see \cref{Track}. So \cref{RichTrack}  also holds for queue-number. }

\begin{lemma}[\citep{DMW17}]
\label{RichTrack}
Let $G$ be a graph that has a $k$-rich tree-decomposition such that the subgraph induced by each bag has queue-number at most $c$. Then $G$ has an $f(k,c)$-queue layout for some function $f$. 
\end{lemma}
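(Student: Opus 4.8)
The plan is to follow the \emph{shadow-complete layering} method of \citet{DMW17}, which was designed precisely to combine queue (equivalently track) layouts across clique-sums. Root the given $k$-rich tree-decomposition $(B_x:x\in V(T))$ at a node $r$. For each vertex $v$ let $x(v)$ be the node nearest $r$ whose bag contains $v$, and set $W_x:=\{v\in V(G):x(v)=x\}$ and $A_x:=B_x\cap B_{\mathrm{parent}(x)}$ (with $A_r:=\emptyset$), so $(W_x:x\in V(T))$ partitions $V(G)$. Using connectivity of the bags one checks three elementary facts: $W_x\subseteq B_x$ and $B_x\setminus W_x\subseteq A_x$ (so $|B_x\setminus W_x|\le k$ since $A_x$ is a clique on at most $k$ vertices); there is no edge of $G$ between $W_x$ and $W_{x'}$ when $x\ne x'$ have equal depth in $T$ (their bag-subtrees lie in disjoint subtrees of $T$); and if $vw\in E(G)$ with $x(v)$ a strict ancestor of $x(w)$ then $v\in A_{x(w)}$.

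\textbf{The layering.} Put $V_d:=\bigcup\{W_x:\dist_T(r,x)=d\}$. By the second fact, $G[V_d]$ is the disjoint union of the graphs $G[W_x]$ over depth-$d$ nodes $x$, and each $G[W_x]$ is an induced subgraph of $G[B_x]$, hence $\qn(G[V_d])\le c$. Now fix $d$ and a connected component $C$ of $G[V_d\cup V_{d+1}\cup\cdots]$. The facts above force all of $C$ to lie inside a single depth-$d$ subtree of $T$, rooted at some node $x_0$, and moreover every vertex of $G$ outside $V_d\cup V_{d+1}\cup\cdots$ that is adjacent to $C$ lies in $B_{x_0}\cap B_{\mathrm{parent}(x_0)}=A_{x_0}$. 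Thus $N_G(C)\cap(V_0\cup\cdots\cup V_{d-1})$ is a clique on at most $k$ vertices: $(V_0,V_1,\dots)$ is a shadow-complete layering with shadow size at most $k$ in which every layer graph has queue-number at most $c$. (Note that edges of $G$ may jump down many layers, but the shadow-completeness condition bounds the \emph{entire} up-neighbourhood of each below-component, so these long edges are automatically controlled.)

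\textbf{Assembling the layout.} Apply the shadow-complete layering theorem of \citet{DMW17}: a graph carrying a shadow-complete layering of shadow size $s$ whose layer graphs all have queue-number at most $c$ has queue-number bounded by a function $g(s,c)$. Roughly, one orders $V(G)$ by layer, orders each layer by a fixed bounded-track layout of its layer graph (chosen consistently with which below-components the layer's parts feed into), and bins the edges into $g(s,c)$ queues: intra-layer edges inherit the track layout, edges between consecutive layers are split using the bounded number of tracks on each side, and the ``down-jump'' edges are grouped according to which of the $\le k$ vertices of the relevant shadow-clique is their shallow endpoint, each group then being nested-free after the chosen ordering. Taking $s\le k$ and the per-layer bound $c$ gives an $f(k,c)$-queue layout of $G$, with $f(k,c):=g(k,c)$.

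\textbf{Main obstacle.} The delicate part is the final step — turning shadow-completeness into few queues. A single vertex can be the shallow endpoint of arbitrarily many down-jump edges (it can lie in $A_y$ for many descendants $y$), and $T$ can have unbounded depth, so neither an edge-by-edge count nor an induction on depth with naive merging of child layouts (which blows up) can work. The resolution, which is the technical heart of \citet{DMW17}, is that the deep endpoints of all down-jumps leaving a given component are organised into pieces pairwise separated by cliques of size at most $k$; it is exactly this clique structure, inherited from $k$-richness, that makes the number of nested families — and hence the number of queues — bounded in terms of $k$ and $c$ alone, independently of $|V(G)|$ and of the depth of $T$.
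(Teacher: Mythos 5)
The paper contains no proof of this lemma; it is imported from \citet{DMW17}, with only a footnote observing that \citet{DMW17} state the result in terms of track-number (which is tied to queue-number). So the question is whether your reconstruction of the \citet{DMW17} argument is sound, not whether it matches a proof given in this paper.

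Your setup is correct: the definitions of $W_x$ and $A_x$, the three claimed facts, and the observation that for each $d$ and each component $C$ of $G[V_d\cup V_{d+1}\cup\cdots]$ the neighbourhood $N_G(C)\cap(V_0\cup\cdots\cup V_{d-1})$ is contained in the clique $A_{x_0}$ of size at most $k$, all hold. The gap is in the next step, where you invoke a shadow-complete-layering theorem as a black box. Your $(V_0,V_1,\dots)$ is not a layering: a vertex $v$ with $x(v)$ near the root may lie in $A_y$ for nodes $y$ far below it, so edges can span arbitrarily many indices. The shadow-complete layering lemma of \citet{DMW17} is formulated for genuine layerings, in which each edge joins consecutive layers and the shadow of a below-component is a clique in the single preceding layer. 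Your parenthetical claim that the shadow condition "automatically controls" the long edges is exactly the non-trivial step that has to be proved, and the assembly sketch you give does not deliver it: grouping the down-jump edges by the index of their shallow endpoint inside the relevant shadow-clique does not make each group nesting-free (two such edges whose deep endpoints lie in different subtrees at different depths can nest freely under a layer-by-layer ordering). The clique structure from $k$-richness is the right engine, but converting it into a bounded number of queues needs the more careful global ordering carried out in \citet{DMW17}; a reduction to a black-box lemma stated for layerings does not go through as written.
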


\cref{MinorQueue}, which says that every proper minor-closed class has bounded queue-number, is an immediate corollary of \cref{AlmostEmbeddableQueue,ProduceRichDecomp,RichTrack}. 


\subsection{Characterisation}
\label{Characterisation}

Bounded layered partitions are the key structure in this paper. So it is natural to ask which minor-closed classes admit bounded layered partitions. The following definition leads to the answer to this question. A graph $G$ is \emph{strongly $(g,p,k,a)$-almost-embeddable} if it is $(g,p,k,a)$-almost-embeddable and (using the notation in the definition of $(g,p,k,a)$-almost-embeddable) there is no edge between an apex vertex and a vertex in $G_0-(G_1\cup\dots\cup G_s)$. That is, each apex vertex is only adjacent to other apex vertices or vertices in the vortices. A graph is \emph{strongly $k$-almost-embeddable} if it is strongly $(k,k,k,k)$-almost-embeddable.

\cref{AlmostEmbeddableStructure} generalises as follows:

\begin{lemma}
\label{StronglyAlmostEmbeddableStructure}
Every strongly $(g,p,k,a)$-almost-embeddable graph $G$ has a connected partition $\PP$ with layered width at most $\max\{2g+4p-4,1\}$ such that $G / \PP$ has treewidth at most $11k+a+10$. 
\end{lemma}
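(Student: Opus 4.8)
The plan is to reduce to the apex-free case, \cref{AlmostEmbeddableStructure}, using the hypothesis that $G$ is \emph{strongly} almost-embeddable to control how the apex vertices sit inside the layering. Let $A\subseteq V(G)$ with $|A|\le a$ be the apex set, and write $G-A=G_0\cup G_1\cup\dots\cup G_s$ as in the definition, with pairwise disjoint facial cycles $F_1,\dots,F_s$ of $G_0$ and $F_i$-vortices of width at most $k$. If $s=0$ then, by the strong hypothesis, $A$ has no neighbour outside $A$, so $G$ is the disjoint union of $G_0$ and $G[A]$: apply \cref{GenusPartitionTreewidth9} to $G_0$ and add the vertices of $A$ as singleton parts forming their own component of the quotient. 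So assume $s\ge 1$.

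Following the proof of \cref{AlmostEmbeddableStructure}, the graph $G-A$ has a connected partition $\PP'$ with layered width at most $\max\{2g+4p-4,1\}$ such that $(G-A)/\PP'$ has treewidth at most $11k+10$; moreover --- and this is the feature we extract from the proof rather than from its statement --- this holds with respect to a layering $(L_0,L_1,L_2,\dots)$ of $G-A$ in which $\bigcup_{i=1}^s V(G_i)\subseteq L_0\cup L_1$. Indeed, that proof adds edges from a root $r\in V(F_1)$ to every vertex of $F_1\cup\dots\cup F_s$, so the $F_i$ lie in the first two BFS layers, while the internal vortex vertices are all placed in the bottom layer.

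Now extend $\PP'$ to a partition $\PP$ of $G$ by adding each vertex of $A$ as its own singleton part, and equip $G$ with the layering $(L_0,\;L_1\cup A,\;L_2,\;L_3,\dots)$. By the strong hypothesis every neighbour of an apex vertex lies in $A\cup\bigcup_{i}V(G_i)\subseteq A\cup L_0\cup L_1$, so every edge of $G$ incident to $A$ joins two vertices in equal or consecutive layers of this sequence, while edges of $G-A$ are unaffected; hence it is a layering of $G$. Since no part of $\PP'$ contains an apex vertex and the new parts are singletons, $\PP$ still has layered width at most $\max\{2g+4p-4,1\}$, and $\PP$ is connected because $\PP'$ is. For the quotient, let $H^+$ be obtained from $(G-A)/\PP'$ by adding one new vertex for each vertex of $A$, each made adjacent to all other vertices of $H^+$. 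Then $G/\PP$ is a spanning subgraph of $H^+$: every edge of $G$ is either an edge of $G-A$, captured by $(G-A)/\PP'$, or is incident to $A$, captured by one of the new universal vertices. Since adding $|A|\le a$ universal vertices raises treewidth by at most $a$, the treewidth of $H^+$, and hence of $G/\PP$, is at most $11k+a+10$.

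The only place where the strong hypothesis is genuinely used --- and hence the crux of the argument --- is in placing all apex vertices in a single layer: this needs that the only non-apex vertices an apex vertex can be adjacent to, namely those of $\bigcup_i V(G_i)$, all lie in the bottom two layers of the layering built for $G-A$ in \cref{AlmostEmbeddableStructure}. The rest is bookkeeping.
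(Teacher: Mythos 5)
Your proof is correct and follows essentially the same approach as the paper: reuse \cref{AlmostEmbeddableStructure} on $G-A$, extract from its proof (rather than its statement) the fact that $\bigcup_i V(G_i)$ lies in the first two layers, use the strong hypothesis to place all apex vertices in a single layer, add them as singleton parts, and note the quotient gains at most $a$ universal vertices. The paper places $A$ in layer $0$ rather than layer $1$, and does not separate out the $s=0$ case, but these are cosmetic differences.
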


\begin{proof}
By definition, 
$G-A=G_0\cup G_1\cup \dots\cup G_s$ for some $s\leq p$,
and for some set $A\subseteq V(G)$ of size at most $a$, 
where $G_0$ has an embedding in a surface of Euler genus $g$ with
pairwise disjoint facial cycles $F_1,\dots,F_s$, such that there is an
$F_i$-vortex $(B^i_x\subseteq V(G_i):x\in V(F_i))$ of  $G_i$ of width
at most $k$, and $N_G(v)\subseteq A \cup \bigcup_{i=1}^s V(G_i)$ for each $v\in A$. 

As proved in \cref{AlmostEmbeddableStructure}, 
$G-A$ has a connected partition $\PP$ with layered width at most $\max\{2g+4p-4,1\}$ 
with respect to some layering $(V_0,V_1, V_2,\dots)$ with $ \bigcup_{i=1}^s V(G_i) \subseteq V_0\cup V_1$, 
such that $G / \PP$ has treewidth at most $11k+10$. 
Thus $(A\cup V_0,V_1, V_2,\dots)$ is a layering of $G$. 
Add each vertex in $A$ to the partition as a singleton part. 
That is, let $\PP':= \PP \cup \{ \{v\} : v\in A\}$. 
The treewidth of $G / \PP'$ is at most the treewidth of $(G-A)/\PP$ plus $|A|$. 
Thus $\PP'$ is a connected partition with layered width at most $\max\{2g+4p-4,1\}$ with respect to 
$(A\cup V_0,V_1, V_2,\dots)$, such that $G / \PP$ has treewidth at most $11k+a+10$. 
\end{proof}

Let $C$ be a clique in a graph $G$, and let $\{C_0,C_1\}$  and $\{P_1,\dots,P_c\}$ be partitions of $C$. 
An $H$-partition $(A_x:x\in V(H))$ and layering $(V_0,V_1,\dots)$ of $G$ is 
\emph{$(C, \{C_0,C_1\}, \{P_1,\dots,P_c\})$-friendly} 
if $C_0\subseteq V_0$ and $C_1\subseteq V_1$ and there are vertices $x_1,\dots,x_c$ of $H$, such that  $A_{x_i} = P_i$ for all $i\in\{1,\dots,c\}$. 
A graph class $\mathcal{G}$ \emph{admits clique-friendly $(k,\ell)$-partitions} if for every graph $G\in \mathcal{G}$, for every clique $C$ in $G$, for all partitions $\{C_0,C_1\}$  and $\{P_1,\dots,P_c\}$ of $C$, there is a $(C, \{C_0,C_1\}, \{P_1,\dots,P_c\})$-friendly $H$-partition of $G$ with layered width at most $\ell$, such that $H$ has treewidth at most $k$. 

\begin{lemma}
\label{MakeFriendly}
Let $(A_x:x\in V(H))$ be an $H$-partition of $G$ with layered width at most $\ell$ with respect to some layering $(W_0,W_1,\dots)$ of $G$, for some graph $H$ with treewidth at most $k$. Let $C$ be a clique in $G$, and let $\{C_0,C_1\}$  and $\{P_1,\dots,P_c\}$ be partitions of $C$ such that $|C_j \cap P_i | \leq 2\ell$ for each $j\in \{0,1\}$ and $i\in \{1,\dots,c\}$. Then $G$ has a $(C, \{C_0,C_1\}, \{P_1,\dots,P_c\})$-friendly $(k+c,2\ell)$-partition.
\end{lemma}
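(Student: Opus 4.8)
The plan is to start from the given $H$-partition $(A_x:x\in V(H))$ of $G$ and surgically extract the parts $P_1,\dots,P_c$, making each one its own part, while simultaneously repairing the layering so that $C_0$ sits in layer $0$ and $C_1$ in layer $1$. First I would handle the layering. Since $C$ is a clique, all of $C$ lies in at most two consecutive layers of $(W_0,W_1,\dots)$, say $W_a\cup W_{a+1}$ (up to relabelling indices so the layering starts at $a$; or, more robustly, take a fresh layering in which $C_0$ occupies the first layer). Concretely, define a new layering $(V_0,V_1,\dots)$ by setting $V_0:=C_0$, $V_1:=(W_a\setminus C_0)\cup(W_{a+1}\cap C_1)\cup\ldots$; one must be slightly careful, but the point is standard: one can always shift/merge a bounded number of initial layers so that a prescribed clique straddles layers $0$ and $1$ with its two designated halves in the right place, at the cost of at most doubling the number of vertices of each part in each layer. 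Hence with respect to $(V_0,V_1,\dots)$ the partition $(A_x:x\in V(H))$ has layered width at most $2\ell$, and $C_0\subseteq V_0$, $C_1\subseteq V_1$.

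Next I would modify the partition itself. For each $i\in\{1,\dots,c\}$, remove the vertices of $P_i$ from whichever parts $A_x$ they currently belong to, and create a new part equal to $P_i$; add a new vertex $x_i$ to $H$ for each such new part. To keep the quotient's treewidth controlled, make each $x_i$ adjacent (in the new graph $H'$) to every vertex of $H$ and to every other $x_j$ — i.e.\ add the $c$ new vertices as a clique that is complete to $V(H)$. This trivially dominates the true quotient $G/\PP'$, where $\PP'$ is the new partition, because every inter-part adjacency incident to one of the $P_i$ is captured by these universal edges, and adjacencies among the old parts are unchanged. Since $H$ has treewidth at most $k$, adding $c$ universal vertices yields $H'$ with treewidth at most $k+c$ (take a tree-decomposition of $H$ of width $k$ and add $\{x_1,\dots,x_c\}$ to every bag).

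Finally I would check the layered width of $\PP'$ with respect to $(V_0,V_1,\dots)$. The old parts $A_x$ only lost vertices, so they still have at most $2\ell$ vertices per layer. Each new part $P_i$ meets $V_0$ in at most $|C_0\cap P_i|\le 2\ell$ vertices and $V_1$ in at most $|C_1\cap P_i|\le 2\ell$ vertices, and no other layer, by the layering constructed in the first step; here is exactly where the hypothesis $|C_j\cap P_i|\le 2\ell$ is used. So $\PP'$ has layered width at most $2\ell$. Together with $C_0\subseteq V_0$, $C_1\subseteq V_1$, and $A_{x_i}=P_i$, this shows the $H'$-partition is $(C,\{C_0,C_1\},\{P_1,\dots,P_c\})$-friendly, establishing that $G$ has a $(C,\{C_0,C_1\},\{P_1,\dots,P_c\})$-friendly $(k+c,2\ell)$-partition.

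The one delicate point — the main obstacle — is the layering surgery in the first step: one must argue carefully that a new layering exists in which $C_0$ and $C_1$ occupy layers $0$ and $1$ respectively, that it is a genuine layering of $G$ (every edge still spans at most one level), and that every old part gains at most a factor $2$ in its per-layer size. This works because $C$ is a clique and therefore confined to two consecutive original layers, so only a bounded prefix of the layering needs to be rearranged; the factor $2$ is the cost of, in the worst case, merging two original layers into one new layer. Everything else is bookkeeping on tree-decompositions and counting within parts.
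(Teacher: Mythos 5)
Your high-level plan matches the paper's: extract $C_0$ into a new bottom layer, strip $C$ out of the old parts, promote each $P_i$ to its own part indexed by a new vertex of $H$, and make those $c$ new vertices dominant so that the quotient is a subgraph of a graph $H'$ of treewidth at most $k+c$. The bookkeeping you do on layered width (old parts only lose vertices; each $P_i$ meets $V_0$ in at most $|C_0\cap P_i|\le 2\ell$ vertices and $V_1$ in at most $|C_1\cap P_i|\le 2\ell$ vertices) and on treewidth (add the $c$ new vertices to every bag) is correct and is exactly what the paper does.

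The gap is in the step you yourself flag as the ``main obstacle'': the layering surgery. Your stated intuition -- that ``only a bounded prefix of the layering needs to be rearranged'' because $C$ is confined to two consecutive layers -- is wrong, and would not produce a valid layering. Suppose $C\subseteq W_i\cup W_{i+1}$ with $i$ large. After setting $V_0:=C_0$, the layers $W_0,\dots,W_{i-1}$ cannot stay at the front (they would sit at negative distance from $V_0$), and merging them into a single new layer would destroy the layered-width bound. The paper instead performs a \emph{global fold} of the layering around the clique: $V_0:=C_0$ and $V_j:=(W_{i-j+1}\cup W_{i+j})\setminus C_0$ for all $j\ge 1$. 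Every new layer is a union of (at most) two old layers paired symmetrically about $W_i,W_{i+1}$; this is a reorganization of the entire layering, not of a bounded prefix, and it is exactly this two-to-one pairing that yields the uniform factor-2 loss you correctly anticipate. Without that explicit construction -- or a correct substitute for it -- the first step of your argument does not go through.
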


\begin{proof}
Since $C$ is a clique, $C\subseteq W_i\cup W_{i+1}$ for some $i$. 
Let $V_j:= ( W_{i-j+1} \cup W_{i+j} ) \setminus C_0$ for $j\geq 1$. 
Let $V_0:= C_0$. 
Thus $(V_0,V_1,\dots)$ is a layering of $G$ and $C_1\subseteq V_1$. 
Let $H'$ be obtained from $H$ by adding $c$ dominant vertices $x_1,\dots,x_c$. 
Thus $H'$ has treewidth at most $k+c$. 
Let $A'_x:= A_x \setminus C$ for $x\in V(H)$. 
By construction, $|A'_x \cap V_j| \leq 2\ell$ for $x\in V(H)$ and $j\geq 0$. 
Let $A'_{x_i} := P_i$ for each $i\in\{1,\dots,c\}$. 
Thus $(A'_x:x\in V(H'))$ is a $(C, \{C_0,C_1\}, \{P_1,\dots,P_c\})$-friendly $H'$-partition of $G$ with layered width at most $2\ell$ with respect to $(V_0,V_1,\dots)$. 
\end{proof}

Every clique in a strongly $k$-almost-embeddable graph has size at most $8k$ (see~\citep[Lemma~21]{DMW17}). 
Thus \cref{StronglyAlmostEmbeddableStructure,MakeFriendly} imply:

\begin{corollary}
\label{StronglyAlmostEmbeddableCliqueFriendly}
For $k\in\mathbb{N}$, the class of strongly $k$-almost-embeddable graphs admits clique-friendly $(20k+10,12k)$-partitions. 
\end{corollary}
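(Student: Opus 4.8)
The plan is to derive \cref{StronglyAlmostEmbeddableCliqueFriendly} as an essentially bookkeeping consequence of the two lemmas that directly precede it, together with the clique-size bound for strongly $k$-almost-embeddable graphs. Fix $k\in\mathbb{N}$ and let $G$ be a strongly $k$-almost-embeddable graph, i.e.\ strongly $(k,k,k,k)$-almost-embeddable. First I would apply \cref{StronglyAlmostEmbeddableStructure} with $g=p=a=k$ to obtain a connected $H$-partition $(A_x:x\in V(H))$ of $G$ with layered width at most $\ell_0 := \max\{2k+4k-4,1\} = \max\{6k-4,1\}$ (which is at most $6k$ for every $k\ge 1$, and equals $1\le 6k$ when $k=0$) with respect to some layering $(W_0,W_1,\dots)$, such that $H$ has treewidth at most $11k+k+10 = 12k+10$.

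Next I would invoke the clique-size bound: every clique in a strongly $k$-almost-embeddable graph has at most $8k$ vertices (the cited \citep[Lemma~21]{DMW17}). Hence for any clique $C$ of $G$ and any partitions $\{C_0,C_1\}$ and $\{P_1,\dots,P_c\}$ of $C$, we trivially have $|C_j\cap P_i|\le |C|\le 8k$ for all $j\in\{0,1\}$ and $i\in\{1,\dots,c\}$. To apply \cref{MakeFriendly} I need the hypothesis $|C_j\cap P_i|\le 2\ell$ with $\ell=\ell_0$; since $2\ell_0 = 2\max\{6k-4,1\}\ge 8k$ precisely when $12k-8\ge 8k$, i.e.\ $k\ge 2$, I would handle the small cases ($k\in\{0,1\}$, where $c\le 8k\le 8$ and all quantities are bounded by absolute constants) separately, or simply note that $\ell_0$ may be harmlessly enlarged to any value at least $6k$ since layered width is monotone under replacing the bound by a larger one; in particular one may take $\ell=6k$ throughout, for which $2\ell=12k\ge 8k$ always holds. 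With this, \cref{MakeFriendly} applied to $(A_x:x\in V(H))$, $C$, $\{C_0,C_1\}$, $\{P_1,\dots,P_c\}$ produces a $(C,\{C_0,C_1\},\{P_1,\dots,P_c\})$-friendly $(k',2\ell)$-partition of $G$ where $k' = \mathrm{tw}(H) + c \le (12k+10) + 8k = 20k+10$ and $2\ell = 12k$.

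Since $G$, $C$, and the partitions of $C$ were arbitrary, this shows by definition that the class of strongly $k$-almost-embeddable graphs admits clique-friendly $(20k+10,12k)$-partitions, which is exactly the statement. I would write the argument in three short moves: (i) produce the base partition via \cref{StronglyAlmostEmbeddableStructure}, padding the layered width up to $6k$; (ii) quote the $8k$ clique bound to verify the hypothesis of \cref{MakeFriendly}; (iii) apply \cref{MakeFriendly} and read off the parameters $\mathrm{tw}\le (12k+10)+8k=20k+10$ and layered width $\le 2\cdot 6k = 12k$.

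The only real subtlety — and the step I would be most careful about — is the numerology around $\ell$: \cref{StronglyAlmostEmbeddableStructure} delivers layered width $\max\{6k-4,1\}$, whereas \cref{MakeFriendly} wants its input width $\ell$ to satisfy $2\ell\ge 8k$, i.e.\ $\ell\ge 4k$, and then outputs width $2\ell$. Taking $\ell=6k$ (legitimate since any partition of layered width at most $6k-4$ also has layered width at most $6k$) makes both $2\ell\ge 8k$ and $2\ell=12k$ hold, so the advertised parameters $(20k+10,12k)$ come out cleanly; I would spell this out rather than leave it implicit, since it is the one place a careless reading could produce an off-by-constant error. Everything else is a direct substitution into the definitions of ``strongly $k$-almost-embeddable'' and ``admits clique-friendly $(k,\ell)$-partitions''.
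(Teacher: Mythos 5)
Your proof is correct and follows exactly the route the paper intends: apply \cref{StronglyAlmostEmbeddableStructure} with $g=p=k=a$, invoke the $8k$ bound on clique size, and feed the resulting partition into \cref{MakeFriendly} to read off the parameters $(12k+10)+8k=20k+10$ and $2\cdot 6k=12k$. Your explicit padding of the layered width from $\max\{6k-4,1\}$ up to $6k$ (so that $2\ell\ge 8k$ holds for all $k\ge 1$, in particular $k=1$ where the raw bound of $2$ is too small to satisfy \cref{MakeFriendly}'s hypothesis) is a genuine step the paper leaves implicit; the only slip is the parenthetical claim that $1\le 6k$ when $k=0$, which is false, but the $k=0$ case is vacuous anyway since the corollary would then assert layered width $0$.
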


\begin{lemma}
\label{DoCliqueSums}
Let $\mathcal{G}$ be a class of graphs that admit clique-friendly $(k,\ell)$-partitions. Then the class of graphs obtained from clique-sums of graphs in $\mathcal{G}$ admits clique-friendly $(k,\ell)$-partitions.
\end{lemma}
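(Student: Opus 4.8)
The plan is to show that a clique-friendly partition of a clique-sum $G = G_1 \oplus_C G_2$ can be assembled from clique-friendly partitions of the two summands, induced on them by the shared clique. Let $G$ be obtained from a clique-sum of $G_1$ and $G_2$ along a clique $C$ (with possibly some edges of $C$ deleted), and suppose $G_1, G_2 \in \mathcal{G}$ (more precisely, are obtained from clique-sums of graphs in $\mathcal{G}$ — so the statement is really proved by induction on the number of summands, with the base case being $\mathcal{G}$ itself). Given a target clique $C^\star$ in $G$ together with partitions $\{C^\star_0, C^\star_1\}$ and $\{P_1,\dots,P_c\}$ of $C^\star$, we must produce a friendly $(k,\ell)$-partition of $G$. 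Since $C^\star$ is a clique, it lies entirely in one summand, say $G_1$ (cliques are not split by clique-sums, as any clique in $G$ induces a clique in one of the parts).

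First I would apply the clique-friendly hypothesis to $G_1$ with the clique $C^\star$ and the given partitions, obtaining an $H_1$-partition $(A_x : x \in V(H_1))$ and layering $(V_0, V_1, \dots)$ of $G_1$ that is $(C^\star, \{C^\star_0, C^\star_1\}, \{P_1,\dots,P_c\})$-friendly, with $H_1$ of treewidth at most $k$ and layered width at most $\ell$. Now the shared clique $C$ (the one along which $G_1$ and $G_2$ are summed) sits inside $G_1$; restrict the partition and layering just constructed to $C$ to read off an induced partition $\{C_0, C_1\}$ of $C$ into its intersections with $V_0 \cup V_1$-type layers — actually, since $C$ is a clique it spans at most two consecutive layers $V_i, V_{i+1}$, giving a bipartition $\{C_0, C_1\}$ — and an induced partition $\{P'_1, \dots, P'_{c'}\}$ of $C$ into the traces $C \cap A_x$ of the parts. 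Apply the clique-friendly hypothesis to $G_2$ with this clique $C$ and these two induced partitions, obtaining an $H_2$-partition and layering of $G_2$ that agrees with the $G_1$-side partition on $C$: the layers match (both put $C_0, C_1$ in layers $0,1$) and each part of $C$ is a single part on the $G_2$-side too. Then glue: identify the $G_2$-layering's layer $i$ with layer $V_i$ of $G_1$ shifted so that $C_0 \subseteq V_0$, take the union of the two $H$'s identifying the (at most $|C|$) parts meeting $C$, and similarly union the two tree-decompositions along a bag containing all the $C$-parts (such a bag exists on each side by friendliness). This yields an $H$-partition of $G$; one checks it is friendly for the original data $(C^\star, \dots)$ since that data lived entirely on the $G_1$-side and was untouched.

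The two points needing care are: (i) the layered width and treewidth bounds are \emph{preserved}, not merely bounded — this works because the two partitions agree exactly on the overlap $C$, so no part grows and the quotient's treewidth is the max of the two quotients' treewidths (gluing two tree-decompositions along a common clique-bag does not increase width), and (ii) the layerings must be made to agree on $C$, which is why the definition of clique-friendly builds in the freedom to prescribe $\{C_0, C_1\}$: we use this freedom on the $G_2$-side to match whatever the $G_1$-side produced. Note there is a subtle shift issue — the $G_1$-side layering may place $C$ in layers $(V_i, V_{i+1})$ for $i \geq 1$, whereas friendliness for $C$ demands layers $(0,1)$; but we are invoking friendliness of the $G_1$-partition for $C^\star$, not for $C$, so we instead re-index the combined layering so that $C^\star_0$ sits in layer $0$, and then translate the $G_2$-side layering correspondingly. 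Because a layering only constrains edges to go between consecutive layers and all of $G_2$'s vertices other than $C$ attach only through $C$, any global shift of the $G_2$-side layers that aligns $C$ correctly is still a valid layering of $G$.

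The main obstacle is bookkeeping the interaction of the \emph{two} prescribed partitions of $C$ on the $G_2$-side: we must ensure that the bipartition $\{C_0, C_1\}$ induced by the $G_1$-side and the partition into parts $\{C \cap A_x\}$ induced by the $G_1$-side are simultaneously realizable as a friendly partition of $G_2$ — which is exactly what the clique-friendly hypothesis for $G_2$ provides, but one must verify the two pieces of prescribed data are mutually consistent (e.g., a single part of $C$ can straddle both $C_0$ and $C_1$, which is fine since a part is allowed up to $\ell$ vertices per layer). Once the definitions are unwound, the gluing itself — of partitions, of quotient graphs along an identified clique, and of tree-decompositions along a shared bag — is routine, so the real work is setting up the induction on the clique-sum decomposition and checking that friendliness, layered width, and treewidth all survive the glue.
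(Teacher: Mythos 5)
Your proposal follows essentially the same route as the paper's proof: apply the clique-friendly hypothesis first to the summand containing the target clique, read off the induced bipartition $\{K_0,K_1\}$ and part-structure $\{Q_1,\dots,Q_b\}$ of the shared clique $K$, apply the hypothesis to the other summand with that induced data, then glue the $H$-partitions (identifying the $b$ parts meeting $K$), glue the layerings after shifting the second one by $\kappa$, and glue the tree-decompositions along bags containing all the $K$-parts. One small slip: you attribute the existence of a bag holding all the $K$-parts to friendliness, but it actually follows from $K$ being a clique in each $G_i$, hence the corresponding parts forming a clique in each quotient, and every clique lying in some bag of any tree-decomposition; friendliness guarantees the parts $Q_i$ appear as parts, not that they share a bag.
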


\begin{proof}
Let $G$ be obtained from summing graphs $G_1$ and $G_2$ in $\mathcal{G}$ on a clique $K$. 
Let $C$ be a clique in $G$, and let $\{C_0,C_1\}$  and $\{P_1,\dots,P_c\}$ be partitions of $C$. 
Our goal is to produce a $(C, \{C_0,C_1\}, \{P_1,\dots,P_c\})$-friendly $(k,\ell)$-partition of $G$.
Without loss of generality, $C$ is in $G_1$. 
By assumption,  there is a $(C, \{C_0,C_1\}, \{P_1,\dots,P_c\})$-friendly $H_1$-partition $(A^1_x:x\in V(H_1))$ of $G_1$ with layered width  $\ell$ with respect to some layering $(V_0,V_1,\dots)$ of $G_1$, for some graph $H_1$ of treewidth at most $k$. Thus, for some vertices $x_1,\dots,x_c$ of $H$, we have $ A_{x_i} = P_i$ for all $i\in\{1,\dots,c\}$.

Since $K$ is a clique, $K\subseteq V_\kappa\cup V_{\kappa+1}$ for some $\kappa\geq 0$. 
Let $K_j := K\cap V_{\kappa+j}$ for $j\in\{0,1\}$. 
Thus $K_0,K_1$ is a partition of $K$. 
Let $y_1,\dots,y_b$ be the vertices of $H_1$ such that $A^1_{y_i} \cap K \neq\emptyset$. 
Let $Q_i := A^1_{y_i} \cap K$. 
Thus $Q_1,\dots,Q_b$ is a partition of $K$. 
By assumption,  there is a $(K, \{K_0,K_1\}, \{Q_1,\dots,Q_b\})$-friendly $H_2$-partition $(A^2_x:x\in V(H_2))$ of $G_2$ with layered width at most $\ell$ with respect to some layering $(W_0,W_1,\dots)$ of $G_2$, for some graph $H_2$ of treewidth at most $k$. Thus, for some vertices $z_1,\dots,z_b$ of $H_2$, we have $ A^2_{z_i}= Q_i$ for all $i\in\{1,\dots,b\}$.

Let $H$ be obtained from $H_1$ and $H_2$ by identifying $y_i$ and $z_i$ into $y_i$ for $i\in \{1,\dots,b\}$. 
Since $K$ is a clique, $y_1,\dots,y_b$ is a clique in $H_1$ and 
$z_1,\dots,z_b$ is a clique in $H_2$. Given tree-decompositions of $H_1$ and $H_2$ with width at most $k$, we obtain a  
tree-decomposition of $H$ by simply adding an edge between a bag that contains $y_1,\dots,y_b$ and a bag that contains $z_1,\dots,z_b$. Thus $H$ has treewidth at most $k$. 

Let $X_a := V_a \cup W_{a-\kappa}$ for $a\geq 0$ (where $W_{a-\kappa}=\emptyset$ if $a-\kappa<0$). 
Then $(X_0,X_1,\dots)$ is a layering of $G$, since $K_0\subseteq V_\kappa \cap W_0$ and 
$K_1\subseteq V_{\kappa+1}\cap W_1$. By construction, $C_0\subseteq V_0 \subseteq X_0$ and  $C_1\subseteq V_1 \subseteq X_1$, as desired. 

For $x\in V(H_1)$, let $A_x:= A^1_x$. 
For $x\in V(H_2)\setminus\{z_1,\dots,z_b\}$, let $A_x:= A^2_x$. 
For $i\in\{1,\dots,b\}$, we have $A^2_{z_i} = Q_i \subseteq A^1_{y_i}$. 
Thus $(A_x:x\in V(H))$ is an $H$-partition of $G$ with layered width at most $\ell$ with respect to 
$(X_0,X_1,\dots)$. Moreover, since $(A^1_x:x\in V(H_1))$ is 
 $(C, \{C_0,C_1\}, \{P_1,\dots,P_c\})$-friendly with respect to $(V_0,V_1,\dots)$, and 
 $V_i \subseteq X_i$, the partition 
 $(A_x:x\in V(H))$ is 
 $(C, \{C_0,C_1\}, \{P_1,\dots,P_c\})$-friendly with respect to $(X_0,X_1,\dots)$. 
 \end{proof}

The following is the main result of this section. See~\citep{Eppstein-Algo00,DH-SODA04,DMW17} for the definition of (linear) local treewidth.

\begin{theorem}
\label{Equivalent}
The following are equivalent for a minor-closed class of graphs $\mathcal{G}$:
\begin{compactenum}[(1)]
\item there exist $k,\ell\in\mathbb{N}$ such that every graph $G \in \mathcal{G}$ has a partition $\PP$ with layered width at most $\ell$, such that $G / \PP$ has treewidth at most $k$. 
\item there exists $k\in\mathbb{N}$ such that every graph $G \in \mathcal{G}$ has a partition $\PP$ with layered width at most $1$, such that $G / \PP$ has treewidth at most $k$. 
\item there exists $k\in\mathbb{N}$ such that every graph in $\mathcal{G}$ has layered treewidth at most $k$,
\item $\mathcal{G}$ has linear local treewidth,
\item $\mathcal{G}$ has bounded local treewidth,
\item there exists an apex graph not in $\mathcal{G}$, 
\item there exists $k\in\mathbb{N}$ such that every graph in $\mathcal{G}$ is obtained from clique-sums of strongly $k$-almost-embeddable graphs.
\end{compactenum}
\end{theorem}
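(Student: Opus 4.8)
The plan is to prove the cyclic chain of implications
$(1)\Rightarrow(2)\Rightarrow(3)\Rightarrow(4)\Rightarrow(5)\Rightarrow(6)\Rightarrow(7)\Rightarrow(1)$,
using the machinery already developed in this section plus two external inputs. Several links are immediate. For $(1)\Rightarrow(2)$, apply \cref{MakeWidth1} to turn a partition of layered width at most $\ell$ whose quotient has treewidth at most $k$ into one of layered width $1$ whose quotient has treewidth at most $(k+1)\ell-1$. For $(2)\Rightarrow(3)$, this is \cref{PartitionLayeredTreewidth} with $\ell=1$, giving layered treewidth at most $k+1$. For $(4)\Rightarrow(5)$, linear local treewidth is a special case of bounded local treewidth. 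For $(5)\Rightarrow(6)$, this is one direction of Eppstein's theorem that a minor-closed class has bounded local treewidth if and only if it excludes some apex graph~\citep{Eppstein-Algo00}. For $(6)\Rightarrow(7)$, one invokes the refinement of \cref{GMST} for classes excluding an apex minor: when the excluded minor is apex, the embeddable pieces can be taken so that no apex vertex has a neighbour in $G_0-(G_1\cup\dots\cup G_s)$, i.e.\ each piece is \emph{strongly} $k$-almost-embeddable for a suitable $k$ (the point being that an apex vertex adjacent to many vertices of the large planar portion of $G_0$ would force a large apex minor). This refinement, which does not follow from \cref{GMST} alone, is the single nontrivial external ingredient; see~\citep{DMW17}.

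For $(3)\Rightarrow(4)$, let $k$ bound the layered treewidth of the graphs in $\mathcal{G}$, and fix $G\in\mathcal{G}$ together with a layering $(V_0,V_1,\dots)$ and a tree-decomposition $(B_x:x\in V(T))$ with $|B_x\cap V_i|\le k$ for all $x$ and $i$. Since consecutive vertices of a path lie in layers differing by at most one, every vertex within distance $r$ of a fixed vertex $v\in V_j$ lies in the union of the at most $2r+1$ layers $V_{j-r},\dots,V_{j+r}$. Restricting each bag to $V(G[B_r(v)])$ yields a tree-decomposition of $G[B_r(v)]$ whose bags have size at most $k(2r+1)$, so $G[B_r(v)]$ has treewidth at most $k(2r+1)-1$. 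Hence $\mathcal{G}$ has local treewidth bounded by the linear function $r\mapsto k(2r+1)-1$, establishing $(4)$. (In particular, combined with the remaining implications this recovers linear local treewidth from bounded local treewidth for minor-closed classes.)

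For $(7)\Rightarrow(1)$, by \cref{StronglyAlmostEmbeddableCliqueFriendly} the class of strongly $k$-almost-embeddable graphs admits clique-friendly $(20k+10,12k)$-partitions. By \cref{DoCliqueSums}, iterated over the (finite) clique-sum decomposition by induction on the number of summands, the class of graphs obtained from clique-sums of strongly $k$-almost-embeddable graphs also admits clique-friendly $(20k+10,12k)$-partitions. Taking the clique $C$ to be empty in the definition, a clique-friendly $(k',\ell')$-partition of a graph $G$ is in particular a partition of $G$ with layered width at most $\ell'$ whose quotient has treewidth at most $k'$; hence every graph in $\mathcal{G}$ has such a partition, which is $(1)$. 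The step needing genuine work is $(6)\Rightarrow(7)$, since it rests on the apex-minor-free refinement of the graph minor structure theorem rather than on anything proved here; a secondary point of care is precisely the induction needed to iterate \cref{DoCliqueSums}, as that lemma is stated for a single clique-sum of two members.
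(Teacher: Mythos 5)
Your proposal follows the same cyclic chain of implications and uses the same lemmas as the paper, so the overall approach is the intended one. Two small remarks: for $(3)\Rightarrow(4)$ you spell out the easy computation that the paper simply cites to \citep{DMW17}, which is fine and indeed correct; but for $(6)\Rightarrow(7)$ the refinement of the graph minor structure theorem you need is the Dvo{\v{r}}{\'a}k--Thomas theorem (\cref{DvoTho} in this paper, from \citep{DvoTho}), not \citep{DMW17} — in the apex-minor-free case one has $a(X,\Sigma_i)=1$ for every surface in which $X$ does not embed, so the condition that ``all but $a(X,\Sigma_i)-1$ apex vertices only attach to $A_i$ or the vortices'' forces every apex vertex to be of the restricted kind, yielding strongly almost-embeddable pieces. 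Your heuristic gloss (``an apex vertex adjacent to many vertices of the planar portion would force a large apex minor'') conveys the right intuition but is not itself a proof; it is precisely what Dvo{\v{r}}{\'a}k and Thomas establish rigorously, and you are right to flag it as the one genuinely external ingredient.
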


\begin{proof}
\cref{MakeWidth1} says that (1) implies (2). 
\cref{PartitionLayeredTreewidth} says that (2) implies (3).
\citet{DMW17} proved that (3) implies (4), which implies (5) by definition.
\citet{Eppstein-Algo00}  proved that (5) and (6) are equivalent; see~\citep{DH-Algo04} for an alternative proof.
\citet{DvoTho} proved  that (6) implies (7); see \cref{DvoTho} below. 
\cref{DoCliqueSums,StronglyAlmostEmbeddableCliqueFriendly} imply that every graph obtained from clique-sums of strongly $k$-almost-embeddable graphs has a partition of layered width $12k$ such that the quotient has treewidth at most $20k+10$.  This says that  (7) implies (1). 
\end{proof}

Several notes about \cref{Equivalent} are in order:
\begin{compactitem}
\item \citet{DH-SODA04} previously proved that (4) and (5) are equivalent. 
\item While the partitions $\mathcal{P}$ for strongly $k$-almost-embeddable
graphs provided by \cref{StronglyAlmostEmbeddableStructure} are connected, the partitions
$\mathcal{P}$ in \cref{Equivalent} are no longer guaranteed to be connected.
\item The assumption of a minor-closed  class in \cref{Equivalent} is essential: \citet{DEW17} proved that the $n\times n \times n$ grid $G_n$ has bounded local treewidth but has unbounded, indeed $\Omega(n)$, layered treewidth. By \cref{PartitionLayeredTreewidth}, if $G_n$ has a partition with layered width $\ell$ such that the quotient has treewidth at most $k$, then $k\ell \geq\Omega(n)$. 
\end{compactitem}

The above proof that (6) implies (7) employed a structure theorem for apex-minor-free graphs by \citet{DvoTho}. \citet{DvoTho} actually proved the following strengthening of the graph minor structure theorem. For a graph $X$ and a surface $\Sigma$, let $a(X,\Sigma)$ be the minimum size of a set $S\subseteq V(X)$, such that $X-S$ can be embedded in $\Sigma$. Let $a(X):=a(X,\mathbb{S}_0)$ where $\mathbb{S}_0$ is the sphere. Note that $a(X)=1$ for every apex graph. 

\begin{theorem}[\citep{DvoTho}]
\label{DvoTho}
For every graph $X$, there are integers  $p,k,a$, such that every $X$-minor-free graph $G$ is a clique-sum of graphs $G_1, G_2, \dots, G_n$ such that for $i\in\{1,\dots,n\}$ there exists a surface $\Sigma_i$ and a set $A_i \subseteq V(G_i)$ satisfying the following:
\begin{compactitem}
\item $|A_i| \leq a$,
\item $X$ cannot be embedded in $\Sigma_i$, 
\item $G_i - A_i$ can be almost embedded in $\Sigma_i$ with at most $p$ vortices of width at most $k$,
\item all but at most $a(X, \Sigma_i) - 1$ vertices of $A_i$ are only adjacent in $G_i$ to vertices contained either in $A_i$ or in the vortices.
\end{compactitem}
\end{theorem}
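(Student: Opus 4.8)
The plan is to obtain \cref{DvoTho} by refining the Graph Minor Structure Theorem (\cref{GMST}) through two successive modifications of the clique-sum decomposition: first forcing the embeddable part of each summand to be drawn in a surface into which $X$ does not embed, and then confining all but a bounded number of the apex vertices of each summand.

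First I would apply \cref{GMST} to the $X$-minor-free graph $G$ to write $G$ as a clique-sum of summands $G_1,\dots,G_n$, each $k_0$-almost-embeddable for a constant $k_0$ depending only on $X$; for each $i$, write $G_i-A_i=G_0^{(i)}\cup G_1^{(i)}\cup\dots$ with $G_0^{(i)}$ embedded in a surface $\Sigma_i^{0}$ of Euler genus at most $k_0$ and with the vortices attached along disjoint faces. The first modification deals with the surface. If $X$ embeds in $\Sigma_i^{0}$, then either the embedding of $G_0^{(i)}$ has representativity at least some threshold $\rho(X)$, in which case the classical fact that a graph embedded with sufficiently large representativity in a surface contains as a minor every graph embeddable in that surface shows that $X$ is a minor of $G_0^{(i)}$, hence of $G$, a contradiction; or the representativity is below $\rho(X)$, in which case there is a non-contractible simple closed curve meeting $G_0^{(i)}$ in fewer than $\rho(X)$ vertices, and cutting $\Sigma_i^{0}$ along it strictly decreases the Euler genus while creating at most two new societies of bounded width (absorbed into the vortex structure) and at most a bounded number of new apex vertices. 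Since the Euler genus starts at most $k_0$, after at most $k_0$ such cuts per summand this terminates, yielding for each $i$ a surface $\Sigma_i$ into which $X$ does \emph{not} embed, at the cost of only constant increases to the number of vortices $p$, the vortex width $k$, and the apex bound $a$.

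The second, and harder, modification confines the apex vertices, and is essentially the content of \citet{DvoTho}. Fix a summand $G_i$, and let $B\subseteq A_i$ consist of those apex vertices having a neighbour in $G_0^{(i)}$ that lies outside every vortex society; the aim is to rearrange the structure so that $|B|\le a(X,\Sigma_i)-1$. Suppose for contradiction that $|B|\ge t:=a(X,\Sigma_i)$, and recall that by definition of $a(X,\Sigma_i)$ there is a set $S\subseteq V(X)$ with $|S|=t$ such that $X-S$ embeds in $\Sigma_i$. The key dichotomy is whether $G_0^{(i)}$ contains a flat wall --- a large grid-like minor confined to a disc of $\Sigma_i$ --- of size exceeding a threshold depending on $X$. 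If it does, and each vertex of $B$ is adjacent to a sufficiently spread-out set of branch vertices of that wall, then one can route the edges of a fixed $\Sigma_i$-embedding of $X-S$ through the wall and represent the vertices of $S$ by the vertices of $B$, producing an $X$-minor in $G$ --- a contradiction; the threshold $a(X,\Sigma_i)-1$ arises precisely because $t-1$ spread-out apices cannot compensate for the $t$ vertices that must be deleted from $X$ before it embeds in $\Sigma_i$. If instead there is no such large flat wall, then $G_0^{(i)}$ has bounded treewidth and the whole surface part can be folded into a single enlarged vortex (or into the apex set), trivially giving the conclusion; and if the wall is large but some vertex of $B$ has all its $G_0^{(i)}$-neighbours confined to a bounded region, one enlarges the corresponding face and its vortex by a bounded amount so that that vertex becomes adjacent only to apex and vortex vertices, i.e.\ leaves $B$. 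Iterating reduces $|B|$ below $a(X,\Sigma_i)$. The main obstacle throughout is making the ``spread-out'' condition quantitative and controlling the growth of $p$ and $k$ under the rerouting operations; this rests on the flat-wall theorem and the apex-attachment lemmas of graph minor theory, and is where the real work of \citet{DvoTho} lies.
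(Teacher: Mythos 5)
The paper does not prove \cref{DvoTho}: it is stated as a black box, attributed to and imported from \citet{DvoTho}, and used as an ingredient in the proof of \cref{Equivalent}. There is therefore no ``paper's own proof'' to compare against.

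Your proposal is a plausible high-level sketch of what the argument in \citet{DvoTho} might look like, and the two-stage plan (first reduce the surface so that $X$ no longer embeds, then confine all but a bounded number of apices to the vortices) is a sensible decomposition of the problem. But the proposal does not actually constitute a proof, and you say so yourself: the second stage --- the apex-confinement step, which is the entire content of the refinement over \cref{GMST} --- is described only schematically (``if each vertex of $B$ is adjacent to a sufficiently spread-out set of branch vertices of that wall, then one can route \dots''), and you explicitly defer to \citet{DvoTho} for ``the real work.'' That deferral is the gap: the reason this statement appears in the paper with a citation rather than a proof is precisely that the apex-confinement argument requires the flat wall theorem, the tangle machinery, and a delicate analysis of how the apices attach, none of which is supplied here. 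In the first stage there is also an unaddressed edge case: if $X$ is planar, iterated cutting along short non-contractible curves terminates at the sphere with $X$ still embeddable, so the claim that the process ``yields for each $i$ a surface $\Sigma_i$ into which $X$ does not embed'' does not follow as written (one must separately invoke the grid-minor theorem to conclude that $X$-minor-free graphs have bounded treewidth and handle that case directly). None of this is a criticism of citing the result as the paper does; but as a stand-alone proof the proposal is an outline with the load-bearing step left open.
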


\cref{DvoTho} leads to the following result of interest. 

\begin{theorem}
\label{MinorStructure}
For every graph $X$ there is an integer $k$ such that every $X$-minor-free graph $G$ can be obtained from clique-sums of graphs $G_1,G_2,\dots,G_n$ such that for $i\in\{1,2,\dots,n\}$ there is a set $A_i\subseteq V(G_i)$ of size at most $\max\{a(X)-1,0\}$ such that $G_i-A_i$ has a connected partition $\PP_i$ with layered width at most $1$, such that $(G_i-A_i) / \PP_i$ has treewidth at most $k$. 
\end{theorem}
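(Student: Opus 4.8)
The plan is to combine the refined graph minor structure theorem (\cref{DvoTho}) with \cref{StronglyAlmostEmbeddableStructure}, plus a refinement step that upgrades a bounded-layered-width partition to layered width exactly $1$ while keeping it connected.

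First I would apply \cref{DvoTho} to $G$: there are constants $p,k,a$ depending only on $X$ such that $G$ is a clique-sum of graphs $G_1,\dots,G_n$, where for each $i$ there is a surface $\Sigma_i$ in which $X$ does not embed and a set $A_i\subseteq V(G_i)$ with $|A_i|\le a$ such that $G_i-A_i$ almost-embeds in $\Sigma_i$ with at most $p$ vortices of width at most $k$, and all but at most $a(X,\Sigma_i)-1$ vertices of $A_i$ are adjacent in $G_i$ only to vertices of $A_i$ or of the vortices. Since a fixed graph fails to embed in only finitely many surfaces, the Euler genus of each $\Sigma_i$ is at most some constant $g=g(X)$. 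Write $A_i=A_i'\cup A_i''$, where $A_i'$ is the set of vertices of $A_i$ adjacent in $G_i$ only to $A_i\cup(\text{vortices})$ and $A_i'':=A_i\setminus A_i'$, so $|A_i''|\le a(X,\Sigma_i)-1$. As any sphere-embeddable graph embeds in every surface, $a(X,\Sigma_i)\le a(X,\mathbb{S}_0)=a(X)$, hence $|A_i''|\le\max\{a(X)-1,0\}$; this $A_i''$ will be the set $A_i$ in the statement.

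Next I would check that $G_i-A_i''$ is strongly $(g,p,k,a)$-almost-embeddable with apex set $A_i'$: the apex set has at most $a$ vertices, deleting it leaves $G_i-A_i$, which embeds in a surface of Euler genus at most $g$ with at most $p$ vortices of width at most $k$, and every vertex of $A_i'$ is adjacent in $G_i-A_i''$ only to $A_i'\cup(\text{vortices})$, which is exactly the ``strongly'' requirement. By \cref{StronglyAlmostEmbeddableStructure}, $G_i-A_i''$ therefore has a connected partition $\PP_i$ of layered width at most $\ell_0:=\max\{2g+4p-4,1\}$ such that $(G_i-A_i'')/\PP_i$ has treewidth at most $k_0:=11k+a+10$.

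The remaining and main step is to convert $\PP_i$ into a \emph{connected} partition of layered width exactly $1$ with a bounded-treewidth quotient; note that \cref{MakeWidth1} achieves layered width $1$ but in general destroys connectedness, so I would instead inspect the proof of \cref{StronglyAlmostEmbeddableStructure}. That partition consists of singleton parts (for the apex and vortex vertices) together with the partition of the embedded part produced by \cref{GenusPartitionTreewidth9}, which in turn is a family of vertical paths of a BFS spanning tree plus a single extra part $V(Z)$, where, by the construction in the proof of \cref{MakePlanar}, $Z$ is the union of a subtree $T'$ of that BFS tree and at most $\ell_0$ further edges; in particular $T'$ has at most $\ell_0$ leaves. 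Since a rooted tree with $\ell$ leaves partitions into $\ell$ vertical paths, I would replace the single part $V(Z)$ by at most $\ell_0$ vertical paths covering $V(T')$. Every part of the resulting partition $\PP_i'$ is then a vertical path or a singleton, hence connected with at most one vertex per layer, so $\PP_i'$ is a connected partition of layered width $1$. Its quotient is obtained from $(G_i-A_i'')/\PP_i$ by splitting the vertex corresponding to $V(Z)$ into at most $\ell_0$ vertices, which increases treewidth by less than $\ell_0$; hence $(G_i-A_i'')/\PP_i'$ has treewidth at most $k:=k_0+\ell_0$, a constant depending only on $X$. Setting $A_i:=A_i''$ then gives the theorem. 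The main obstacle is precisely this last reconciliation of connectedness with layered width $1$: it prevents one from using \cref{StronglyAlmostEmbeddableStructure} purely as a black box and forces the observation that its only ``fat'' part is tree-like and hence refines into boundedly many vertical paths.
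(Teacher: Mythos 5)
You follow the same skeleton as the paper---apply \cref{DvoTho}, bound the Euler genus of each $\Sigma_i$ by a constant $g(X)$, designate as $A_i$ the at most $\max\{a(X,\Sigma_i)-1,0\}\le\max\{a(X)-1,0\}$ apex vertices that may have neighbours outside the vortices and $A_i$ itself, check that $G_i-A_i$ is strongly $(g,p,k,a)$-almost-embeddable, and invoke \cref{StronglyAlmostEmbeddableStructure}. Where you diverge is the final step. The paper simply cites \cref{MakeWidth1} to drop the layered width to $1$, but as you correctly observe (and as the paper itself warns in a remark after \cref{Equivalent}), \cref{MakeWidth1} does not preserve connectedness, so that one-line finish does not by itself justify the word ``connected'' in the statement. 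Your fix is sound: in the partition produced by \cref{StronglyAlmostEmbeddableStructure}, the only part of width greater than $1$ is $V(Z)$; by the proof of \cref{MakePlanar}, $Z$ minus its $g'$ chords $a_ib_i$ is a subtree of the BFS tree rooted at $r$ (a union of $r$--$a_i$ and $r$--$b_i$ paths), so it has at most $2g'\le\ell_0$ leaves and therefore partitions into at most $\ell_0$ vertical paths. Replacing the quotient vertex for $V(Z)$ by the $\le\ell_0$ vertices for these paths (and placing all of them into every bag that formerly contained the old vertex) increases the treewidth by at most $\ell_0-1$, and every part of the refined partition is a vertical path or a singleton, hence connected with layered width $1$. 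So your argument is correct and, as far as the ``connected'' claim goes, more careful than the paper's: it supplies exactly the step that the paper's citation of \cref{MakeWidth1} elides.
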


\begin{proof}
In \cref{DvoTho}, since $X$ cannot be embedded in $\Sigma_i$, there is an integer $g$ depending only on $X$ such that $\Sigma_i$ has Euler genus at most $g$. Thus each graph $G_i$ has a set $A_i$ of at most $\max\{a(X, \Sigma_i) - 1,0\} \leq \max\{a(X)-1,0\}$ vertices, such that $G_i-A_i$ is strongly $(g,p,k,a)$-almost-embeddable.  By \cref{StronglyAlmostEmbeddableStructure},  $G_i-A_i$ has a connected partition $\PP_i$ with layered width at most $\max\{2g+4p-4,1\}$, such that $(G_i-A_i) / \PP_i$ has treewidth at most $11k+a+10$. The result follows from \cref{MakeWidth1}.
\end{proof}

\section{Strong Products}
\label{Products}

This section provides an alternative and helpful perspective on layered partitions. The \emph{strong product} of graphs $A$ and $B$, denoted by $A\boxtimes B$, is the graph with vertex set $V(A)\times V(B)$, where distinct vertices $(v,x),(w,y)\in V(A)\times V(B)$ are adjacent if: 
\begin{compactitem}
\item  $v=w$ and $xy\in E(B)$, or 
\item  $x=y$ and $vw\in E(A)$, or  
\item  $vw\in E(A)$ and $xy\in E(B)$. 
\end{compactitem}
The next observation follows immediately from the definitions. 

\begin{obs}
\label{PartitionProduct}
For every graph $H$, a graph $G$ has an $H$-partition of layered width at most $\ell$ if and only if $G$ is a subgraph of 
$H \boxtimes P \boxtimes K_\ell$ for some path $P$.
\end{obs}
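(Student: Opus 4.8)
The statement is purely definitional, so the plan is simply to set up the translation between $H$-partitions of bounded layered width and subgraphs of $H\boxtimes P\boxtimes K_\ell$; I expect no genuine obstacle, and the only points needing care are the (associative) definition of a threefold strong product and the convention on empty bags.

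For the ``only if'' direction I would start from an $H$-partition $(A_x:x\in V(H))$ of $G$ of layered width at most $\ell$ with respect to a layering $(V_0,V_1,\dots)$. Take $P$ to be a path long enough to have a vertex $p_i$ for each nonempty layer $V_i$, and take $K_\ell$ on $\{1,\dots,\ell\}$. Since $|A_x\cap V_i|\le\ell$ for every $x$ and $i$, I would fix for each such pair an injection of $A_x\cap V_i$ into $\{1,\dots,\ell\}$, and send $v\in A_x\cap V_i$ to the triple $(x,p_i,\text{its label})$. The routine checks are that this map is injective (two vertices with the same image lie in the same $A_x\cap V_i$ with the same label) and edge-preserving: an edge $vw$ of $G$ is intra-bag or inter-bag, giving equality or adjacency in the first coordinate; the layering gives equality or adjacency in the second coordinate; and the third coordinates are either equal or adjacent in the clique $K_\ell$. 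By the adjacency rule for a strong product these three conditions, together with the images being distinct, give $\phi(v)\phi(w)\in E(H\boxtimes P\boxtimes K_\ell)$. Hence $G$ embeds as a subgraph.

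For the ``if'' direction I would identify $V(G)$ with a vertex subset of $H\boxtimes P\boxtimes K_\ell$, write $p_0,p_1,\dots$ for the vertices of $P$ along the path, let $A_x$ be the set of vertices of $G$ whose $H$-coordinate is $x$ (discarding the empty ones), and let $V_i$ be the set of vertices of $G$ whose $P$-coordinate is $p_i$. The key point, immediate from the definition of the strong product, is that along any edge of $G$ each coordinate is either unchanged or moves along an edge of the corresponding factor: equality or adjacency in the first coordinate makes $(A_x)$ an $H$-partition, while equality or consecutiveness in the second makes $(V_0,V_1,\dots)$ a layering. Finally, the first two coordinates of a vertex of $G$ determine it up to its $K_\ell$-coordinate, so $|A_x\cap V_i|\le|V(K_\ell)|=\ell$ for all $x$ and $i$; thus the $H$-partition has layered width at most $\ell$.

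The only slightly delicate bookkeeping point — the closest thing to an obstacle — is reconciling the formal definition of an $H$-partition (which asks for nonempty parts) with the fact that some $H$-coordinates may be unused; this is harmless, since one may instead invoke the equivalent spanning-subgraph formulation of an $H$-partition, under which the unused coordinates simply do not matter.
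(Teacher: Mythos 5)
Your proof is correct and is exactly the translation the paper has in mind; indeed the paper offers no proof at all, dismissing the observation as following ``immediately from the definitions,'' and your write-up simply makes explicit the map $v\mapsto(x,p_i,\text{label})$, the coordinate-wise ``equal or adjacent'' check for edges of the strong product, and the reverse projection onto the $H$- and $P$-coordinates. You also correctly flag the only real bookkeeping wrinkle (empty parts versus the spanning-subgraph formulation), which the paper glosses over.
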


Note that a general result about the queue-number of strong products by \citet{Wood-Queue-DMTCS05} implies that $\qn(H \boxtimes P ) \leq 3 \qn(H)+1$. 
\cref{Blowup} and the fact that $\qn(K_\ell)=\floor{\frac{\ell}{2}}$ 
implies that $\qn( Q \boxtimes K_\ell ) \leq \ell \cdot \qn(Q) + \floor{\frac{\ell}{2}}$. 
Together these results say that 
$\qn( H \boxtimes P \boxtimes K_\ell  ) \leq \ell(3 \qn(H)+1) +\floor{\frac{\ell}{2}}$, which is equivalent to \cref{PartitionQueue}. 

Several papers in the literature study minors in graph products~\citep{QYY10,Kotlov01,CKR08,Wood-ProductMinor,KOY14,Wood12}. 
The results in this section are complementary: they show that every graph in certain minor-closed classes is a subgraph of a particular graph product, such as a subgraph of $H \boxtimes P$ for some bounded treewidth graph $H$ and path $P$. First note that \cref{PartitionProduct,Width1LayeredPartition,Width3LayeredPartition} imply the following result conjectured by \citet{WoodBanff08}.\footnote{To be precise, \citet{WoodBanff08} conjectured that for every planar graph $G$ there are graphs $X$ and $Y$, such that both $X$ and $Y$ have bounded treewidth, $Y$ has bounded maximum degree, and $G$ is a minor of $X\boxtimes   Y$, such that the preimage of each vertex of $G$ has bounded radius in $X\boxtimes  Y$. \cref{PlanarProduct}(a)  is stronger than this conjecture since it has a subgraph rather than a shallow minor, and $Y$ is a path. }

\begin{theorem}
\label{PlanarProduct}
Every planar graph is a subgraph of:
\begin{compactenum}[(a)]
\item $H \boxtimes P$ for some planar graph $H$ with treewidth at most $8$ and some path $P$.
\item $H \boxtimes P \boxtimes K_3$ for some planar graph $H$ with treewidth at most $3$ and some path $P$.
\end{compactenum}
\end{theorem}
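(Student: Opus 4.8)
The plan is to deduce both parts directly from Observation~\ref{PartitionProduct} together with the layered partition theorems already established. Recall that Observation~\ref{PartitionProduct} states that a graph $G$ has an $H$-partition of layered width at most $\ell$ if and only if $G$ is a subgraph of $H \boxtimes P \boxtimes K_\ell$ for some path $P$. Since $K_1$ consists of a single vertex, $H \boxtimes P \boxtimes K_1$ is isomorphic to $H \boxtimes P$, so the $\ell=1$ case of the observation specialises to: $G$ has an $H$-partition of layered width $1$ if and only if $G$ is a subgraph of $H \boxtimes P$ for some path $P$. So in each part it suffices to exhibit a partition of $G$ with the right layered width whose quotient is planar with the right treewidth, and then take $H$ to be that quotient.

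For part~(a), I would apply \cref{Width1LayeredPartition} to obtain a connected partition $\PP$ of $G$ with layered width $1$ such that $G/\PP$ has treewidth at most $8$. Set $H := G/\PP$. Because $\PP$ is connected, the quotient $G/\PP$ is the minor of $G$ obtained by contracting each part; hence $H$ is a minor of the planar graph $G$ and is therefore planar, with treewidth at most $8$. Since $H=G/\PP$ trivially contains $G/\PP$ as a spanning subgraph, $\PP$ is an $H$-partition of $G$ of layered width $1$, and \cref{PartitionProduct} gives that $G$ is a subgraph of $H \boxtimes P$ for some path $P$, as required.

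For part~(b), I would instead apply \cref{Width3LayeredPartition}, which provides a partition $\PP$ of $G$ with layered width $3$ such that $G/\PP$ is planar with treewidth at most $3$. Set $H := G/\PP$, which is planar and has treewidth at most $3$ by the statement of \cref{Width3LayeredPartition} itself, so no appeal to connectivity of $\PP$ is needed. As before, $\PP$ is an $H$-partition of $G$ of layered width $3$, so \cref{PartitionProduct} with $\ell=3$ shows that $G$ is a subgraph of $H \boxtimes P \boxtimes K_3$ for some path $P$.

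There is essentially no obstacle here: the theorem is a repackaging of \cref{Width1LayeredPartition,Width3LayeredPartition} through the product dictionary of \cref{PartitionProduct}. The only point requiring a moment's care is verifying that the quotient graph $H$ is planar in part~(a); this follows because the partition produced by \cref{Width1LayeredPartition} is connected, so its quotient is a minor of $G$, and planarity is preserved under taking minors. In part~(b), planarity of $H$ is asserted directly in \cref{Width3LayeredPartition}.
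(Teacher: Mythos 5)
Your proof is correct and matches the paper's approach exactly: the paper derives Theorem~\ref{PlanarProduct} as an immediate consequence of Observation~\ref{PartitionProduct} together with Theorems~\ref{Width1LayeredPartition} and~\ref{Width3LayeredPartition}. You also correctly identify the only subtle point, namely that planarity of $H$ in part~(a) follows from the connectedness of the partition (making $H$ a minor of $G$), a point the paper itself flags in the remark following Theorem~\ref{GenusProduct}.
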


\cref{PlanarProduct} generalises for graphs of bounded Euler genus as follows. Let $A \join B$ be the \emph{complete join} of graphs $A$ and $B$. That is, take disjoint copies of $A$ and $B$, and add an edge between each vertex in $A$ and each vertex in $B$. 

\begin{theorem}
\label{GenusProduct}
Every graph of Euler genus $g$ is a subgraph of:
\begin{compactenum}[(a)]
\item  $H \boxtimes P \boxtimes K_{\max\{2g,1\}}$ for some apex graph $H$ of treewidth at most $9$ and for some path $P$.
\item  $H \boxtimes P \boxtimes K_{\max\{2g,3\}}$ for some apex graph $H$ of treewidth at most $4$ and for some path $P$.
\item $(K_{2g} \join H )  \boxtimes P$ for some planar graph $H$ of treewidth at most $8$  and some path $P$. 
\end{compactenum}
\end{theorem}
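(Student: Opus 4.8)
The plan is to read off parts (a) and (b) directly from the partition results already proved, and to do a small amount of extra work for part~(c).

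For part~(a): \cref{GenusPartitionTreewidth9} gives, for every graph $G$ of Euler genus $g$, a partition $\PP$ of layered width at most $\max\{2g,1\}$ whose quotient $H:=G/\PP$ is apex with treewidth at most $9$; equivalently, $G$ has an $H$-partition of layered width at most $\max\{2g,1\}$. By \cref{PartitionProduct} this is the same as saying $G$ is a subgraph of $H\boxtimes P\boxtimes K_{\max\{2g,1\}}$ for some path $P$. Part~(b) is identical, invoking \cref{GenusPartitionTreewidth4} instead to obtain $H$ apex with treewidth at most $4$ and layered width at most $\max\{2g,3\}$. So both (a) and (b) are immediate.

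For part~(c) the point is \emph{not} to absorb the ``genus'' into a single apex vertex (which is exactly what yields the apex, treewidth-$9$ quotient of part~(a)), but to spread it over $2g$ independent tracks so as to keep a genuinely \emph{planar} quotient of treewidth $8$. We may assume $G$ is connected, since a disjoint union of graphs, each a subgraph of $(K_{2g}\join H_j)\boxtimes P_j$, embeds into $(K_{2g}\join\bigcup_j H_j)\boxtimes P$ for a long enough path $P$ (placing the summands on pairwise disjoint subpaths of $P$). Fix a BFS spanning tree $T$ of $G$ rooted at $r$ with BFS layering $(V_0,V_1,\dots)$. By \cref{MakePlanar} there is a connected subgraph $Z\subseteq G$ with at most $2g$ vertices in each $V_i$ such that $G-V(Z)$ is planar, together with a connected planar graph $G^+$ containing $G-V(Z)$, a BFS spanning tree $T^+$ of $G^+$, and a BFS layering $(W_0,W_1,\dots)$ of $G^+$ with $W_i\cap(V(G)\setminus V(Z))=V_i\setminus V(Z)$, such that $A\cap(V(G)\setminus V(Z))$ is a vertical path in $T$ for every vertical path $A$ in $T^+$. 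Apply \cref{FindVerticalPaths} to $(G^+,T^+)$ to get a partition $\PP^+$ of $G^+$ into vertical paths in $T^+$ with $H:=G^+/\PP^+$ of treewidth at most $8$; since $\PP^+$ is a connected partition of the planar graph $G^+$, its quotient $H$ is a minor of $G^+$ and hence planar. (For $g=0$ one takes $Z=\emptyset$ and $G^+=G$, and the statement reduces to \cref{PlanarProduct}(a).)

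Now I would assemble a $(K_{2g}\join H)$-partition of $G$ of layered width $1$ with respect to $(V_0,V_1,\dots)$ and apply \cref{PartitionProduct} with $\ell=1$ (so the $K_\ell$ factor vanishes). The parts indexed by $V(H)$ are the nonempty sets $A\cap(V(G)\setminus V(Z))$ for $A\in\PP^+$; by \cref{MakePlanar} each of these is a vertical path in $T$, hence meets each $V_i$ in at most one vertex. The parts indexed by $V(K_{2g})$ are obtained by colouring $V(Z)$ with colours in $\{1,\dots,2g\}$ so that, in each layer $V_i$, the at most $2g$ vertices of $Z\cap V_i$ receive distinct colours; each colour class then meets each $V_i$ in at most one vertex. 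This gives layered width $1$. For the quotient: an edge of $G$ with both endpoints outside $Z$ is an edge of $G^+$, so it respects the $H$-structure inherited from $\PP^+$; an edge with an endpoint in $Z$ joins a $K_{2g}$-part to some other part, and in $K_{2g}\join H$ every vertex of $K_{2g}$ is adjacent to every other vertex. Hence the quotient is a subgraph of $K_{2g}\join H$, and by \cref{PartitionProduct} the graph $G$ is a subgraph of $(K_{2g}\join H)\boxtimes P$ for some path $P$, with $H$ planar of treewidth at most $8$.

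The only delicate point is the interplay of the two trees and two layerings produced by \cref{MakePlanar}: one needs that restricting a $T^+$-vertical path to $V(G)\setminus V(Z)$ leaves a path that is still vertical in $T$, so that it occupies at most one vertex per layer $V_i$ of $G$ — and this is precisely the conclusion of \cref{MakePlanar}. Once that is in hand, the colouring of $V(Z)$ into $2g$ tracks and the check that the quotient sits inside $K_{2g}\join H$ are routine, because the $2g$ new index-vertices are pairwise adjacent and dominant in $K_{2g}\join H$.
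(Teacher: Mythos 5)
Your proposal is correct and takes essentially the same approach as the paper's proof: parts (a) and (b) are read off directly from \cref{GenusPartitionTreewidth9,GenusPartitionTreewidth4} via \cref{PartitionProduct}, and for (c) you cut along $Z$ via \cref{MakePlanar}, obtain a connected partition of the planar graph $G^+$ into vertical paths with a treewidth-$8$ planar quotient, restrict its parts to $V(G)\setminus V(Z)$, and spread $V(Z)$ over $2g$ extra parts (one per index $1,\dots,2g$) to get an $(H\join K_{2g})$-partition of layered width $1$. The only cosmetic differences are that you invoke \cref{FindVerticalPaths} directly rather than through \cref{Width1LayeredPartition}, and you justify the layering compatibility via the vertical-path conclusion of \cref{MakePlanar} rather than via the layer equality $W_i\cap(V(G)\setminus V(Z))=V_i\setminus V(Z)$, both of which are equivalent here.
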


\begin{proof}
Parts (a) and (b) follow from \cref{PartitionProduct,GenusPartitionTreewidth4,GenusPartitionTreewidth9}. 
It remains to prove (c). 
We may assume that $G$ is edge-maximal with Euler genus $g\geq 1$, and is thus connected.
Let $(V_0,V_1,\dots)$ be a BFS layering of $G$. 
By \cref{MakePlanar},
there is a subgraph $Z\subseteq G$ with at most $2g$ vertices in each layer $V_i$, such that $G-V(Z)$ is planar, and there is a connected planar graph $G^+$ containing $G-V(Z)$ as a subgraph, such that there is a BFS layering $(W_0,W_1,\dots)$ of $G^+$ such that $W_i \cap V(G) \setminus V(Z) = V_i \setminus V(Z)$ for all $i\ge 0$.

By \cref{Width1LayeredPartition}, there is a planar graph $H$ with treewidth at most $8$, such that $G^+$
has an $H$-partition $(A_x:x\in V(H))$ of layered width $1$ with respect to $(W_0,\dots,W_n)$.
Let $A'_x := A_x \cap V(G) \setminus V(Z)$ for each $x \in V(H)$.
Thus $(A'_x  : x \in V(H) )$ is an $H$-partition of $G-V(Z)$ of layered width $1$ with respect to $(V_0\setminus V(Z),V_1\setminus V(Z),\dots)$ (since $W_i \cap V(G) \setminus V(Z) = V_i \setminus V(Z)$).

Let $z_1,\dots,z_{2g}$ be the vertices of a complete graph $K_{2g}$. 
Say $v_{i,1},\dots,v_{i,2g}$ are the vertices in $V(Z) \cap V_i$ for $i\ge 0$.
(Here some $v_{i,j}$ might be undefined.)\
Define $A'_{z_j} := \{ v_{i,j} : i\ge 0\} \}$.
Now, $(A'_x  : x\in V(H \join K_{2g} ))$ is an $(H \join K_{2g} )$-partition of $G$ of layered width $1$, which is equivalent to the claimed result by \cref{PartitionProduct}. 
\end{proof}

Note that in Theorems~\ref{PlanarProduct}(a) and \ref{GenusProduct}(a), the graph $H$ is a minor of the given graph (because the corresponding partition in connected). But we cannot make this conclusion in Theorems~\ref{PlanarProduct}(b), \ref{GenusProduct}(b) and \ref{GenusProduct}(c).

These results are generalised for $(g,p,k,a)$-almost-embeddable graphs as follows. 

\begin{theorem}
\label{AlmostEmbeddableStructureRevised}
Every $(g,p,k,a)$-almost-embeddable graph is a subgraph of:
\begin{compactenum}[(a)]
\item $ (H\boxtimes P \boxtimes K_{\max\{2g+4p,1\}} ) \join K_a$ for some graph $H$ with treewidth at most $11k+10$  and some path $P$, 
\item $( (H \join  K_{(2g+4p)(k+1)} )  \boxtimes P ) \join K_a$ for some graph $H$ with treewidth at most $9k+8$ and some path $P$. 
\end{compactenum}
\end{theorem}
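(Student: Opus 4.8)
The plan is to dispatch part (a) directly from earlier results, and to obtain part (b) by re-running the proof of \cref{AlmostEmbeddableStructure} with two modifications in the handling of the surface part and of the vortices.

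For part (a) there is essentially nothing to do beyond assembling earlier results. Write $A$ for the apex set, so $|A|\le a$ and $G-A$ is $(g,p,k,0)$-almost-embeddable. By \cref{AlmostEmbeddableStructure}, $G-A$ has a partition $\PP$ of layered width at most $\max\{2g+4p-4,1\}\le\max\{2g+4p,1\}$ with $(G-A)/\PP$ of treewidth at most $11k+10$, so by \cref{PartitionProduct} $G-A$ is a subgraph of $H\boxtimes P\boxtimes K_{\max\{2g+4p,1\}}$ with $H:=(G-A)/\PP$. Since each apex vertex may be adjacent to anything, $G\subseteq(G-A)\join K_a\subseteq(H\boxtimes P\boxtimes K_{\max\{2g+4p,1\}})\join K_a$.

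For part (b) the same reduction strips off the apex vertices (one adds $\join K_a$ at the end), so the real task is to show that a $(g,p,k,0)$-almost-embeddable graph $G=G_0\cup G_1\cup\cdots\cup G_s$ with $s\le p$ is a subgraph of $(H\join K_{(2g+4p)(k+1)})\boxtimes P$ with $H$ of treewidth at most $9k+8$. The first modification is to build the partition of the surface part from vertical paths together with ``columns'', as in the proof of \cref{GenusProduct}(c) --- i.e. apply \cref{MakePlanar} and then \cref{FindVerticalPaths} to the auxiliary graph $G_0^+$ --- rather than from \cref{GenusPartitionTreewidth9}; this makes the quotient on the vertical-path parts have treewidth at most $8$ and keeps the layered width equal to $1$, at the cost of at most $2g+4p$ extra column parts (one vertex of the planarizing subgraph $Z$ in each layer). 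The second modification is to choose the BFS root $r$ of $G_0^+$ to lie on no $F_i$ (the degenerate case where no such vertex exists is handled separately), so that after adding the edges from $r$ to $V(F_1)\cup\cdots\cup V(F_s)$ every facial cycle $F_i$ lies entirely in the first layer. The key consequence is that every part of the resulting partition meets $V(F_1)\cup\cdots\cup V(F_s)$ in at most one vertex.

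Next I would insert the vortices exactly as in \cref{AlmostEmbeddableStructure}: add each vertex of $X:=\bigcup_i V(G_i)\setminus V(G_0)$ as a singleton part placed in the first layer (layered width stays $1$), obtaining a quotient $L$, and build a tree-decomposition of $L$ from a tree-decomposition of width at most $8$ of the vertical-path quotient by placing every column in every bag and adding each $v\in X$ to the connected union of subtrees indexed by the parts containing the vertices of the facial arc that supports $v$. Since each part meets $\bigcup_i V(F_i)$ in at most one vertex and each vortex bag has at most $k$ vertices outside $G_0$, every part forces at most $k$ vertices of $X$ into any given bag. I would then call a part of $L$ \emph{type I} if it is a vertical-path part, or a vertex of $X$ all of whose supporting facial-arc vertices lie in vertical-path parts, and \emph{type II} otherwise (a column, or a vertex of $X$ some of whose supporting facial-arc vertices lie in columns). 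The restriction of the tree-decomposition of $L$ to the type-I parts should be a valid tree-decomposition of the induced subgraph $H$, each bag containing at most $9$ vertical-path parts and at most $9k$ vertices of $X$, so $H$ has treewidth at most $9k+8$. The type-II parts are the at most $2g+4p$ columns together with, for each column whose unique first-layer vertex lies on some $F_i$, at most $k$ further vortex vertices; so there are at most $(2g+4p)(k+1)$ of them, and they may serve as the clique side of a complete join that absorbs every remaining edge of $L$. Hence $L$ is a subgraph of $H\join K_{(2g+4p)(k+1)}$, so \cref{PartitionProduct} gives $G\subseteq(H\join K_{(2g+4p)(k+1)})\boxtimes P$, and reinstating the apex vertices finishes the proof.

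I expect the vortex accounting in part (b) to be the only real difficulty, and it splits into three points that must each be checked carefully: (i) that routing every $F_i$ into a single layer really forces each part to see at most one facial-cycle vertex (hence only $\le k$, rather than $\le 2k$ or more, vortex vertices), which is where the special choice of root is used; (ii) that discarding the type-II parts still leaves a valid tree-decomposition of $H$, which relies on the observation that any vortex edge with both endpoints of type I has a common supporting facial-arc vertex living in a vertical-path part; and (iii) that the type-II parts number at most $(2g+4p)(k+1)$, which uses that a column has exactly one first-layer vertex and so is incident to at most one vortex bag. The $O(1)$ slack coming from the choice of $r$ and from cutting each cyclic vortex into a linear arc also needs to be tracked, but I expect it to be absorbed comfortably into the stated bounds.
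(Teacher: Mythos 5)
Your proposal is correct and follows the paper's route: part~(a) is simply \cref{AlmostEmbeddableStructure} together with \cref{PartitionProduct}, and part~(b) re-runs the vortex-insertion step from the proof of \cref{AlmostEmbeddableStructure} on top of the $G_0^+ \subseteq (H_0 \join K_{2g+4p}) \boxtimes P$ structure coming from \cref{GenusProduct}(c). The paper's proof of~(b) consists of the two-line instruction to ``apply the proof in \cref{AlmostEmbeddableStructure}''; what you have written out --- choosing the BFS root $r$ off $\bigcup_i V(F_i)$ so that every part meets $\bigcup_i V(F_i)$ at most once, the type-I/type-II split of the parts of the quotient, and the count $(2g+4p)(k+1)$ of type-II parts --- is exactly the bookkeeping that instruction leaves implicit, and it verifies the stated constants; the root modification is a small, clean deviation from the paper (which keeps $r\in F_1$ and instead leans on the slack $2g' \le 2g+4p-4$) that makes the ``at most one facial-cycle vertex per part'' claim immediate rather than a near miss.
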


\begin{proof}
	\cref{AlmostEmbeddableStructure,PartitionProduct} imply (a). It remains to prove (b). 
	Let $G$ be a $(g,p,k,a)$-almost-embeddable graph. 
	We use the notation from the definition of $(g,p,k,a)$-almost-embeddable. 
	In the proof of \cref{AlmostEmbeddableStructure}, since 
	$G_0^+$ has Euler genus at most $g+2p$, by \cref{GenusProduct}(c) 
	there is a graph $H_0$ with treewidth at most $8$, 
	such that $G_0^+ \subseteq ( H_0 \join K_{2g+4p}  ) \boxtimes P$. 
	That is, 
	$G_0^+$ has an $(H_0 \join K_{2g+4p} )$-partition of layered width $1$. 
	Apply the proof in \cref{AlmostEmbeddableStructure} to obtain 
	a graph $H$ with treewidth at most $9k+8$, such that $G$ has an $(H \join K_{(2g+4p)(k+1)})$-partition of layered width $1$. 
	That is,  $G \subseteq (H \join K_{(2g+4p)(k+1)} ) \boxtimes P$. 
	Adding apex vertices, every $(g,p,k,a)$-almost-embeddable graph is a subgraph of 
	$( (H \join K_{(2g+4p)(k+1)})  \boxtimes P ) \join K_a$ 
	for some graph $H$ with treewidth at most $9k+8$.
\end{proof}


\begin{corollary}
\label{kAlmostEmbeddableStructure}
For $k\geq 1$ every $k$-almost-embeddable graph is a subgraph of:
\begin{compactenum}[(a)]
\item $ (H\boxtimes P \boxtimes K_{6k} ) \join K_k$ for some graph $H$ with treewidth at most $11k+10$  and some path $P$, 
\item $( (H \join  K_{(6k)(k+1)} )  \boxtimes P ) \join K_k$ for some graph $H$ with treewidth at most $9k+8$ and some path $P$. 
\end{compactenum}
\end{corollary}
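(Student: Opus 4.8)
The plan is to obtain this as an immediate specialisation of \cref{AlmostEmbeddableStructureRevised}. By definition a graph is $k$-almost-embeddable exactly when it is $(k,k,k,k)$-almost-embeddable, so I would apply \cref{AlmostEmbeddableStructureRevised} with $g=p=k$ and $a=k$, and then simplify the resulting parameters using $k\geq 1$.

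For part (a), \cref{AlmostEmbeddableStructureRevised}(a) places $G$ inside $(H\boxtimes P\boxtimes K_{\max\{2g+4p,1\}})\join K_a$ for some graph $H$ of treewidth at most $11k+10$ and some path $P$. Setting $g=p=a=k$ leaves the treewidth bound $11k+10$ and the apex clique $K_k$ untouched, and turns the third factor into $K_{\max\{6k,1\}}$. Here I would invoke the hypothesis $k\geq 1$, which forces $6k\geq 1$ and hence $\max\{6k,1\}=6k$, giving exactly $(H\boxtimes P\boxtimes K_{6k})\join K_k$. Part (b) is entirely analogous via \cref{AlmostEmbeddableStructureRevised}(b): with $g=p=k$ the middle clique has order $(2g+4p)(k+1)=6k(k+1)$, the treewidth bound $9k+8$ is copied verbatim, and the apex clique is again $K_k$.

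The only subtlety --- and there is no genuine obstacle --- is the role of the assumption $k\geq 1$: it is needed precisely to collapse the $\max\{\,\cdot\,,1\}$ appearing in part (a) of \cref{AlmostEmbeddableStructureRevised}, so that the $K_1$ factor can be absorbed. Everything else is a direct substitution of parameters, since all the structural work has already been carried out in \cref{AlmostEmbeddableStructureRevised} (and, behind it, in \cref{AlmostEmbeddableStructure} and \cref{GenusProduct}).
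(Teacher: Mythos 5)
Your proposal is correct and matches what the paper intends: the corollary is obtained by substituting $g=p=a=k$ into \cref{AlmostEmbeddableStructureRevised} and using $k\geq 1$ to simplify $\max\{6k,1\}=6k$. Nothing further is needed.
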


\cref{Equivalent,PartitionProduct} imply:

\begin{corollary}
	\label{ApexMinorFreeProduct}
For every apex graph $X$ there exists $c\in\mathbb{N}$ such that every $X$-minor-free graph is a subgraph of $H\boxtimes P$ for some graph $H$ with treewidth at most $c$ and for some path $P$. 
\end{corollary}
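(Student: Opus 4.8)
The plan is to obtain this as an immediate consequence of \cref{Equivalent,PartitionProduct}, since all the substantive work is already contained in those statements. First I would fix an apex graph $X$ and let $\mathcal{G}$ be the class of all $X$-minor-free graphs. This class is minor-closed, because deleting a vertex, deleting an edge, or contracting an edge cannot create an $X$-minor. Moreover $X$ itself is an apex graph that does \emph{not} belong to $\mathcal{G}$, since $X$ trivially has $X$ as a minor. Hence $\mathcal{G}$ satisfies condition~(6) of \cref{Equivalent} (``there exists an apex graph not in $\mathcal{G}$'').

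Next I would invoke the equivalence in \cref{Equivalent}: condition~(6) implies condition~(2), so there is a constant $k\in\mathbb{N}$, depending only on $X$, such that every graph $G\in\mathcal{G}$ has a partition $\PP$ with layered width at most $1$ for which the quotient $G/\PP$ has treewidth at most $k$. Setting $H:=G/\PP$, this says exactly that $G$ has an $H$-partition of layered width at most $1$ with $H$ of treewidth at most $k$.

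Finally I would apply \cref{PartitionProduct} with $\ell=1$: having an $H$-partition of layered width $1$ is equivalent to being a subgraph of $H\boxtimes P\boxtimes K_1$ for some path $P$. Since $K_1$ is a single vertex, $H\boxtimes P\boxtimes K_1$ is isomorphic to $H\boxtimes P$. Therefore every $X$-minor-free graph $G$ is a subgraph of $H\boxtimes P$ for some graph $H$ of treewidth at most $k$ and some path $P$, and taking $c:=k$ finishes the proof. There is essentially no obstacle beyond chaining the cited results correctly; the only points needing a moment's care are verifying that $X\notin\mathcal{G}$ so that clause~(6) genuinely applies, and noticing that the $K_1$ factor produced by \cref{PartitionProduct} can be discarded. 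All the depth of the argument is hidden inside the proof of \cref{Equivalent}, which routes through \cref{DvoTho} and the clique-sum machinery of \cref{DoCliqueSums,StronglyAlmostEmbeddableCliqueFriendly}.
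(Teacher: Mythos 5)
Your proof is correct and follows exactly the route the paper indicates (it states the corollary as an immediate consequence of \cref{Equivalent,PartitionProduct}). You have simply spelled out the chain: condition~(6) of \cref{Equivalent} holds because $X$ itself is an apex graph outside $\mathcal{G}$, condition~(2) then gives an $H$-partition of layered width~$1$ with $\operatorname{tw}(H)\le k$, and \cref{PartitionProduct} with $\ell=1$ yields $G\subseteq H\boxtimes P\boxtimes K_1\cong H\boxtimes P$.
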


\cref{GMST,AlmostEmbeddableStructureRevised} imply the following result for any proper minor-closed class.

\begin{theorem}
\label{MinorProduct}
For every proper minor-closed class $\mathcal{G}$ there are integers $k$ and $a$ such that every graph $G\in \mathcal{G}$ can be obtained by clique-sums of graphs $G_1,\dots,G_n$ such that  for $i\in\{1,\dots,n\}$, 
$$G_i \subseteq (H_i  \boxtimes P_i ) \join K_a,$$ 
for some graph $H_i$ with treewidth at most $k$ and some path $P_i$. 
\end{theorem}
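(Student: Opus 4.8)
The plan is to derive \cref{MinorProduct} by combining the graph minor structure theorem (\cref{GMST}) with the product structure result for almost-embeddable graphs (\cref{AlmostEmbeddableStructureRevised}). First I would invoke \cref{GMST}: for the proper minor-closed class $\mathcal{G}$, there is a constant $k_0$ such that every $G\in\mathcal{G}$ is obtained by clique-sums of $k_0$-almost-embeddable graphs $G_1,\dots,G_n$. Each $G_i$ is $(k_0,k_0,k_0,k_0)$-almost-embeddable, so \cref{AlmostEmbeddableStructureRevised}(a) (or equivalently \cref{kAlmostEmbeddableStructure}(a)) applies to each: there is a graph $H_i'$ with treewidth at most $11k_0+10$ and a path $P_i$ such that $G_i$ is a subgraph of $(H_i'\boxtimes P_i\boxtimes K_{6k_0})\join K_{k_0}$.

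The remaining work is purely notational: I must absorb the factors $\boxtimes K_{6k_0}$ and $\join K_{k_0}$ into the statement's template $(H_i\boxtimes P_i)\join K_a$. For the $K_{k_0}$ join, simply set $a:=k_0$. For the $\boxtimes K_{6k_0}$ factor, I would observe that $H_i'\boxtimes K_{6k_0}$ has treewidth at most $6k_0(11k_0+10+1)-1 = 6k_0(11k_0+11)-1$: indeed, taking a tree-decomposition of $H_i'$ of width at most $11k_0+10$ and replacing each vertex in each bag by its $6k_0$ copies in $K_{6k_0}$ yields a tree-decomposition of $H_i'\boxtimes K_{6k_0}$ whose bags have size at most $6k_0(11k_0+11)$. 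Set $H_i:=H_i'\boxtimes K_{6k_0}$ and $k:=6k_0(11k_0+11)-1$. Since the strong product is associative, $H_i'\boxtimes P_i\boxtimes K_{6k_0}$ is isomorphic to $H_i\boxtimes P_i$, and hence $G_i\subseteq (H_i\boxtimes P_i)\join K_a$ with $\tn$-- with $H_i$ of treewidth at most $k$, as required. The integers $k$ and $a$ depend only on $k_0$, which depends only on $\mathcal{G}$.

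There is no real obstacle here; the theorem is essentially a repackaging of \cref{GMST} and \cref{AlmostEmbeddableStructureRevised}. The only point requiring a moment's care is the bookkeeping that a strong product with a clique multiplies the treewidth (plus one) by the clique size, which follows from the standard ``blow up each bag'' construction for tree-decompositions; and the use of associativity and commutativity of $\boxtimes$ to rewrite $H_i'\boxtimes P_i\boxtimes K_{6k_0}$ as $(H_i'\boxtimes K_{6k_0})\boxtimes P_i$. One could alternatively cite \cref{AlmostEmbeddableStructureRevised}(b) and push the $K_{(6k_0)(k_0+1)}$ factor into the join instead of the strong product, at the cost of enlarging $a$; either route gives the stated form, and I would use whichever yields the cleaner constants, noting that the precise values of $k$ and $a$ are immaterial for the statement.
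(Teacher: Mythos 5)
Your proposal is correct and follows exactly the route the paper intends: the paper's ``proof'' consists solely of citing \cref{GMST} together with \cref{AlmostEmbeddableStructureRevised}, and your argument fills in the routine bookkeeping (absorbing the $\boxtimes K_{6k_0}$ factor into $H$ via the standard bag-blow-up bound on the treewidth of a strong product with a clique, and invoking associativity/commutativity of $\boxtimes$) exactly as one would have to. Aside from a minor typographical slip (``with $\tn$--''), there is nothing to criticise.
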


\cref{MinorStructure,PartitionProduct} imply the following precise bound on $a$ for $X$-minor-free graphs. 

\begin{theorem}
\label{XMinorFree}
For every graph $X$ there is an integer $k$ such that every $X$-minor-free graph $G$ 
can be obtained by clique-sums of graphs $G_1,\dots,G_n$ such that  for $i\in\{1,\dots,n\}$, 
$$G_i \subseteq (H_i  \boxtimes P_i ) \join K_{\max\{a(X)-1,0\}},$$ 
for some graph $H_i$ with treewidth at most $k$ and some path $P_i$. 
\end{theorem}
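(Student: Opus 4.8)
The statement \cref{XMinorFree} is essentially the ``product-structure'' reformulation of \cref{MinorStructure}, obtained by feeding the latter into \cref{PartitionProduct}. So the proof is short and assembles two already-established results. The plan is to invoke \cref{MinorStructure} to get the clique-sum decomposition with the apex bound $\max\{a(X)-1,0\}$, and then translate each piece from the language of layered partitions into the language of strong products using \cref{PartitionProduct}.

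\textbf{Step 1: apply \cref{MinorStructure}.} Given a graph $X$, \cref{MinorStructure} produces an integer $k$ such that every $X$-minor-free graph $G$ is a clique-sum of graphs $G_1,\dots,G_n$, where each $G_i$ has a set $A_i\subseteq V(G_i)$ with $|A_i|\le \max\{a(X)-1,0\}$ such that $G_i-A_i$ has a connected partition $\PP_i$ of layered width at most $1$ with $(G_i-A_i)/\PP_i$ of treewidth at most $k$. This is exactly the structural input we need; set $a:=\max\{a(X)-1,0\}$.

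\textbf{Step 2: convert each $G_i-A_i$ into a strong-product subgraph.} Fix $i$. Since $G_i-A_i$ has an $H_i$-partition of layered width $1$ for some graph $H_i$ of treewidth at most $k$ (namely $H_i:=(G_i-A_i)/\PP_i$), \cref{PartitionProduct} (with $\ell=1$, so the $K_1$ factor disappears) gives that $G_i-A_i$ is a subgraph of $H_i\boxtimes P_i$ for some path $P_i$.

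\textbf{Step 3: put back the apex set.} The apex vertices in $A_i$ may be adjacent to anything in $G_i$. Taking the complete join with $K_{|A_i|}$ handles all such edges: since $|A_i|\le a$ and $K_{|A_i|}\subseteq K_a$, we get $G_i\subseteq (H_i\boxtimes P_i)\join K_a = (H_i\boxtimes P_i)\join K_{\max\{a(X)-1,0\}}$, with $\tw(H_i)\le k$. Assembling over $i$ and recalling that $G$ is a clique-sum of the $G_i$, this is precisely the claimed conclusion.

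There is no real obstacle here: every ingredient has been proved earlier in the paper, and the only content is the dictionary between layered partitions and strong products supplied by \cref{PartitionProduct}, together with the bookkeeping that joining with $K_a$ absorbs all edges incident to apex vertices. The one point requiring a word of care is that \cref{PartitionProduct} is stated for a fixed host graph $H$ and needs $\ell$ to be a uniform bound; since $\ell=1$ here this is immediate, and the treewidth bound $k$ from \cref{MinorStructure} is uniform over all $X$-minor-free graphs, so the resulting bound on $\tw(H_i)$ is likewise uniform.
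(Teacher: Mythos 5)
Your proof is correct and takes essentially the same route as the paper, which states \cref{XMinorFree} as a direct consequence of \cref{MinorStructure,PartitionProduct}; you simply make the two-step translation (width-1 layered partition $\to$ strong product with a path, then absorb the apex set into a complete join) explicit.
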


Note that it is easily seen that in all of the above results, the graph $H$ and the path $P$ have at most $|V(G)|$ vertices. 

We can interpret these results as saying that strong products and complete joins form universal graphs for the above classes. For all $n$ and $k$ there is a graph $H_{n,k}$ with treewidth $k$ that contains every  graph with $n$ vertices and treewidth $k$ as a subgraph (for example, take the disjoint union of all such graphs). The proof of \cref{PlanarProduct} then shows that $H_{n,8}\boxtimes P_n$ contains every planar graph with $n$ vertices. There is a substantial literature on universal graphs for planar graphs and other classes~\citep{BKNRS-DM98,BCLR89,BPTW10,BCEGS82,AC07,AA02}. For example, \citet*{BCEGS82} constructed a graph on $O(n^{3/2})$ edges that contains every planar graph on $n$ vertices as a subgraph. While  $H_{n,8}\boxtimes P_n$ contains much more than $O(n^{3/2})$ edges, it has the advantage of being highly structured and with bounded average degree. Taking this argument one step further, there is an infinite graph $\mathcal{T}_k$ with treewidth $k$ that contains every (finite) graph with treewidth $k$ as a subgraph. Similarly, the infinite path $\mathcal{Q}$ contains every (finite) path as a subgraph. Thus our results imply that $\mathcal{T}_8\boxtimes \mathcal{Q}$ contains every planar graph. Analogous statements can be made for the other classes above.

\section{Non-Minor-Closed Classes}\label{non-minor-closed}

This section gives three examples of non-minor-closed classes of graphs that have bounded queue-number. The following lemma  will be helpful.

\begin{lemma}
\label{GeneralisedSubdivision}
Let $G_0$ be a graph with a $k$-queue layout. Fix integers $c\geq 1$ and $\Delta\geq 2$.
Let $G$ be the graph with $V(G):=V(G_0)$ where $vw\in E(G)$ whenever there is a $vw$-path $P$ in $G_0$ of length at most $c$, such that every internal vertex on $P$ has degree at most $\Delta$.
Then $$\qn(G) < 2(2k(\Delta+1))^{c+1}.$$
\end{lemma}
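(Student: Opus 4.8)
The plan is to use, for $G$, the \emph{same} vertex ordering $\preceq$ as in the given $k$-queue layout of $G_0$, and to bound queue-number via the rainbow characterisation of \citet{HR92} quoted in the proof of \cref{Blowup}: it suffices to show that every rainbow in $G$ with respect to $\preceq$ has size less than $2(2k(\Delta+1))^{c+1}$. Write $\qn_{\preceq}(\cdot)$ for the least number of queues in a layout that uses the ordering $\preceq$ (so by \citet{HR92} this equals the maximum rainbow size with respect to $\preceq$, and it is monotone under taking subgraphs and subadditive under edge-unions). For $i\in\{0,1,\dots,c\}$ let $G_i$ be the graph on $V(G_0)$ in which $vw$ is an edge whenever some $vw$-path in $G_0$ of length at most $i$ has all internal vertices of degree at most $\Delta$; thus $G_c=G$, $G_1=G_0$, and $G_0$ is edgeless. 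Set $r(i):=\qn_{\preceq}(G_i)$. The bound will follow by solving a recurrence for $r(i)$.

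The structural input is a ``peeling'' containment. Let $L$ be the set of vertices of degree at most $\Delta$ in $G_0$, and for graphs $A,B,C$ on $V(G_0)$ let $A\diamond B$ be the graph with $vw\in E(A\diamond B)$ iff some $p\in L$ has $vp\in E(A)$ and $pw\in E(B)$, and let $A\diamond B\diamond C$ be defined with $vw$ an edge iff some $p,q\in L$ satisfy $vp\in E(A)$, $pq\in E(B)$, $qw\in E(C)$. Splitting a shortest witness path of an edge of $G_i$ at its second and second-to-last vertices (both of which lie in $L$) gives $G_2\subseteq G_1\cup(G_0\diamond G_0)$ and $G_i\subseteq G_{i-1}\cup(G_0\diamond G_{i-2}\diamond G_0)$ for $i\ge 3$. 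Hence $r(2)\le r(1)+\qn_{\preceq}(G_0\diamond G_0)$ and $r(i)\le r(i-1)+\qn_{\preceq}(G_0\diamond G_{i-2}\diamond G_0)$ for $i\ge 3$.

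The heart of the proof is the estimate $\qn_{\preceq}(G_0\diamond G_0)\le k^2\Delta$ and, for every graph $H$ on $V(G_0)$, $\qn_{\preceq}(G_0\diamond H\diamond G_0)\le 4k^2\Delta\cdot\qn_{\preceq}(H)$. To prove it, fix a rainbow $R$ in $G_0\diamond H\diamond G_0$; using that $G_0$ appears at both ends we may orient each $e\in R$ so that its $\preceq$-leftmost endpoint $a_e$ plays the role of $v$, i.e.\ $a_ep_e\in E(G_0)$, $p_eq_e\in E(H)$, $q_eb_e\in E(G_0)$ with $p_e,q_e\in L$. Assign to $e$ the signature formed by the queue of $a_ep_e$ in $G_0$, the queue of $q_eb_e$ in $G_0$, the queue of $p_eq_e$ in $H$, and whether $p_e\prec q_e$ or $q_e\prec p_e$; there are at most $k\cdot k\cdot\qn_{\preceq}(H)\cdot 2$ signatures. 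Within one signature class, the forbidden-nesting condition applied to the edges $\{a_e,p_e\}$ in a common queue of $G_0$ forces the $p_e$ to be weakly monotone along the nesting order of $R$, the edges $\{q_e,b_e\}$ force the $q_e$ to be weakly monotone in the opposite direction, and then (the $p$--$q$ order being fixed) the edges $\{p_e,q_e\}$ in a common queue of $H$ show that no two edges of the class can have both $p_e$ distinct and $q_e$ distinct. So among any two edges of the class the pivots agree in $p_e$ or in $q_e$; a short argument on these two equivalence relations shows either at most two distinct $p_e$-values or at most one distinct $q_e$-value occurs, and since edges sharing a pivot $p\in L$ have distinct left endpoints in $N_{G_0}(p)$ there are at most $\Delta$ of them — so each signature class has at most $2\Delta$ edges. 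For $G_0\diamond G_0$ the two pivot-edges coincide, forcing the $p_e$ to be simultaneously increasing and decreasing and hence constant, giving at most $\Delta$ edges per signature class.

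Putting this together yields $r(1)\le k$, $r(2)\le k+k^2\Delta$, and $r(i)\le r(i-1)+4k^2\Delta\,r(i-2)$ for $i\ge 3$. An induction then gives $r(i)\le(2k(\Delta+1))^{i+1}$ for all $i\ge 1$: for $i\in\{1,2\}$ this is a routine check using $k\ge 1$ and $\Delta\ge 2$ (e.g.\ $r(2)\le k+k^2\Delta\le 2k^2(\Delta+1)\le (2k(\Delta+1))^3$), and the inductive step needs $4k^2\Delta\le (2k(\Delta+1))^2-2k(\Delta+1)$, which simplifies to $2k(\Delta^2+\Delta+1)\ge\Delta+1$ and hence holds. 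Therefore $\qn(G)=\qn(G_c)\le r(c)\le (2k(\Delta+1))^{c+1}<2(2k(\Delta+1))^{c+1}$ (the case $k=0$ being trivial since then $G$ is edgeless). I expect the signature-class estimate to be the main obstacle: making the monotonicity statements for the pivots precise, handling the degenerate cases (a pivot coinciding with an endpoint of another rainbow edge, witness paths of length less than $c$, and the collapse $p=q$ when $c=2$), and keeping track of the orientation of each edge so that the ``swap'' symmetry of $G_0\diamond H\diamond G_0$ is available.
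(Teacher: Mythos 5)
Your approach is genuinely different from the paper's, and it is a nice idea, but the key estimate has a real gap (one you anticipated).

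\textbf{Comparison with the paper.} The paper's proof is ``one-shot'': to each edge $vw$ of $G$ with witness path $(v=x_0,\dots,x_\ell=w)$ it assigns a signature consisting, for each step $x_ix_{i+1}$, of (i) its direction bit $f$, (ii) its queue index $q$ in the $k$-queue layout of $G_0$, and (iii) its colour $h$ in a Vizing $(\Delta+1)$-edge-colouring of $G_0$ chosen so that edges incident to a vertex of degree at most $\Delta$ get distinct colours. It then shows that \emph{no two} edges of $G$ with the same signature nest: the non-nesting of $x_ix_{i+1}$ and $y_iy_{i+1}$ (same queue, same direction) propagates $x_i\prec y_i$ to $x_{i+1}\preceq y_{i+1}$, and the Vizing colour forces strictness at every internal vertex. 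This gives $(2k(\Delta+1))^{\ell+1}$ queues for paths of length $\ell$, summing to $<2(2k(\Delta+1))^{c+1}$. Your proof avoids Vizing's theorem entirely; instead it sets up a recurrence $r(i)\le r(i-1)+\qn_{\preceq}(G_0\diamond G_{i-2}\diamond G_0)$ via the peeling containment $G_i\subseteq G_{i-1}\cup(G_0\diamond G_{i-2}\diamond G_0)$, which is correct, and uses the degree bound $\Delta$ directly to bound the number of rainbow edges sharing a pivot. The decomposition and the recurrence algebra are sound: with any bound of the form $\qn_{\preceq}(G_0\diamond H\diamond G_0)\le C k^2\Delta\,\qn_{\preceq}(H)$ for $C\le 8$ the induction $r(i)\le(2k(\Delta+1))^{i+1}$ closes, since the required inequality reduces to $\Delta+1\le 2k\bigl((\Delta+1)^2-C\Delta/4\bigr)$, which holds for $C\le 8$, $k\ge 1$, $\Delta\ge 2$.

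\textbf{The gap.} The signature-class estimate is where the argument breaks as written. You claim that within a class (fixed queues $\alpha,\beta,\gamma$ and the $p$-vs-$q$ sign), the non-nesting of the $a_ep_e$ edges in queue $\alpha$ ``forces the $p_e$ to be weakly monotone along the nesting order of $R$''. This is false with the signature as described, because it omits the relative position of $p_e$ and $a_e$. Concretely, take two rainbow edges $e_1\supset e_2$ with $a_{e_1}\prec a_{e_2}\prec b_{e_2}\prec b_{e_1}$, and pivots placed so that $p_{e_2}\prec a_{e_1}\prec a_{e_2}\prec p_{e_1}$. Then $a_{e_1}p_{e_1}=[a_{e_1},p_{e_1}]$ and $a_{e_2}p_{e_2}=[p_{e_2},a_{e_2}]$ merely \emph{cross} (each contains exactly one endpoint of the other), so they may lie in a common queue of $G_0$, yet $p_{e_2}\prec p_{e_1}$: the pivot strictly \emph{decreases} along the rainbow. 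So within your signature class the sequence $(p_{e_j})$ need not be monotone, and the subsequent deduction (at most two $p_e$-values or one $q_e$-value, hence at most $2\Delta$ edges) is not established. The fix is the natural one: add to the signature the two bits recording whether $p_e\prec a_e$ and whether $q_e\prec b_e$. With those signs fixed within a class, the monotonicity argument goes through cleanly (the $a_ep_e$ edges force $p_e$ weakly increasing, the $q_eb_e$ edges force $q_e$ weakly decreasing, and then the $p_eq_e$ edges in a common queue of $H$ force all $p_e$ equal or all $q_e$ equal, so at most $\Delta$ edges per class). This yields $\qn_{\preceq}(G_0\diamond H\diamond G_0)\le 8k^2\Delta\,\qn_{\preceq}(H)$ and $\qn_{\preceq}(G_0\diamond G_0)\le 4k^2\Delta$, which, as noted above, still closes the recurrence and gives the stated bound. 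You flagged exactly this issue (``making the monotonicity statements for the pivots precise''); the point I want to emphasize is that it is not merely a matter of care — the signature must genuinely be refined, and once it is, the proof works.
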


\begin{proof}
Consider a $k$-queue layout of $G_0$.
Let $\preceq$ be the corresponding vertex ordering and let $E_1,\dots,E_k$ be the partition of $E(G_0)$ into queues with respect to $\preceq$.

For each edge $xy\in E_i$, let $q(xy):=i$. For distinct vertices $a,b\in V(G_0)$, let $f(a,b):=1$ if $a\prec b$ and let $f(a,b):=-1$ if $b\prec a$.
For $\ell\in\{1,\dots,c\}$, let $X_\ell$ be the set of edges $vw\in E(G)$ such that the corresponding $vw$-path $P$ in $G_0$ has length exactly $\ell$.
We will use distinct sets of queues for the $X_\ell$ in our queue layout of $G$.

By Vizing's Theorem, there is an edge-colouring $h$ of $G$ with $\Delta+1$ colours, such that any two edges incident with a vertex of degree at most $\Delta$ receive distinct colours. (Edges incident with a vertex of degree greater than $\Delta$ can be assigned the same colour.) 

Consider an edge $vw$ in $X_\ell$ with $v \prec w$. Say $(v=x_0,x_1,\dots, x_\ell, x_{\ell+1}=w)$ is the corresponding path in $G_0$. Let
\begin{align*}
f(vw) & :=( f(x_0,x_1), \dots, f( x_\ell, x_{\ell+1}) )\\
q(vw) & := ( q(x_0,x_1),  \dots, q(x_{\ell},x_{\ell+1}) )\\
h(vw) & := ( h(x_0,x_1),  \dots, h(x_{\ell},x_{\ell+1}) ).
\end{align*}
Consider  edges $vw,pq\in X_\ell$ with $v,w,p,q$ distinct and $f(vw)=f(pq)$ and $g(vw)=g(pq)$  and $h(vw)=h(pq)$.
Assume $v\prec p$. Say $(v=x_0,x_1,\dots, x_\ell, x_{\ell+1}=w)$ and $(p=y_0,y_1,\dots, x_\ell, x_{\ell+1}=q)$ are the paths respectively corresponding to $vw$ and $pq$ in $G_0$.
Thus $f(x_i,x_{i+1})=f(y_i,y_{i+1})$ and $q(x_ix_{i+1})=q(y_iy_{i+1})$  and $h(x_ix_{i+1})=h(y_iy_{i+1})$ for $i\in\{0,1,\dots,\ell\}$.
Thus $x_ix_{i+1}$ and $y_iy_{i+1}$ are not nested.
Since $v=x_0\prec y_0 = p$, we have $x_1 \preceq y_1$.
Since $h(x_0x_1)=h(y_0y_1)$ and both $x_1$ and $y_1$ have degree at most $\Delta$ in $G_0$, we have $x_1 \prec y_1$.
It follows by induction that $x_i \prec y_i$ for $i\in\{0,1,\dots,\ell+1\}$, where in the last step we use the assumption that $w\neq q$.
In particular, $w=x_{\ell+1} \prec y_{\ell+1}=q$.  Thus $vw$ and $pq$ are not nested. There are $2^{\ell+1}$ values for $f$, and $k^{\ell+1}$ values for $q$, and $(\Delta+1)^{\ell+1}$ values for $h$.
Thus $(2k(\Delta+1))^{\ell+1}$ queues suffice for $X_\ell$. The total number of queues is
$\sum_{\ell=1}^c(2k(\Delta+1))^{\ell+1}< 2(2k(\Delta+1))^{c+1}$.
\end{proof}

\subsection{Allowing Crossings}

Our result for graphs of bounded Euler genus generalises to allow for a bounded number of crossings per edge. A graph is \emph{$(g,k)$-planar} if it has a drawing in a surface of Euler genus $g$ with at most $k$ crossings per edge and with no three edges crossing at the same point. A $(0,k)$-planar graph is called \emph{$k$-planar}; see~\citep{KLM17} for a survey about $1$-planar graphs. Even in the simplest case, there are $1$-planar graphs that contain arbitrarily large complete graph minors~\citep{DEW17}. Nevertheless, such graphs have bounded queue-number.

\begin{prop}
\label{gkPlanar}
Every $(g,k)$-planar graph $G$ has queue-number at most $2(40g+490)^{k+2}$.
\end{prop}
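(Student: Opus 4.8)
The plan is to reduce to the bounded-genus case by \emph{planarizing} the drawing and then feeding the result into \cref{GeneralisedSubdivision}. Fix a drawing of $G$ in a surface $\Sigma$ of Euler genus $g$ with at most $k$ crossings per edge and no three edges crossing at a common point; we may assume edges are drawn as simple arcs with no vertex in the interior of an edge and with finitely many crossing points (standard clean-up that does not increase the number of crossings). Form the planarization $G_0$: for each edge $e$ of $G$, subdivide $e$ at each of its (at most $k$) crossing points, introducing a new \emph{dummy} vertex there. Since no three edges pass through a common point, each crossing point becomes a dummy vertex incident to exactly four edge-segments, hence of degree $4$ in $G_0$; and $G_0$ is drawn in $\Sigma$ without crossings, so $G_0$ has Euler genus at most $g$. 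By \cref{GenusQueue} in its explicit form, $G_0$ has a $(4g+49)$-queue layout.

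Next I would record the key structural observation: each edge $vw$ of $G$ corresponds to a $vw$-path in $G_0$ of length at most $k+1$, all of whose internal vertices are dummy vertices of degree $4$. Let $G'$ be the graph supplied by \cref{GeneralisedSubdivision} applied to $G_0$ with $c := k+1$ and $\Delta := 4$; that is, $V(G') = V(G_0)$ and $xy \in E(G')$ whenever $G_0$ has an $xy$-path of length at most $k+1$ with every internal vertex of degree at most $4$. The observation gives $V(G)\subseteq V(G_0)$ and $E(G)\subseteq E(G')$, so $G$ is a subgraph of $G'$ and hence $\qn(G)\le\qn(G')$ by the (obvious) monotonicity of queue-number under subgraphs. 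Since $k\ge 0$ we have $c=k+1\ge 1$ and $\Delta=4\ge 2$, so \cref{GeneralisedSubdivision} applies with the $(4g+49)$-queue layout of $G_0$ and yields
$$\qn(G)\;\le\;\qn(G')\;<\;2\bigl(2(4g+49)(4+1)\bigr)^{(k+1)+1}\;=\;2(40g+490)^{k+2},$$
which is exactly the claimed bound.

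The only genuinely delicate point — the step I would double-check most carefully — is the planarization bookkeeping: that each dummy vertex has degree exactly $4$ (this is where "no three edges cross at a common point" is used), that every original edge picks up at most $k$ dummy vertices so that its associated path has length at most $c=k+1$ (this is precisely $(g,k)$-planarity), and that consequently $E(G)\subseteq E(G')$ so subgraph monotonicity is legitimately invoked. Once these are in place, the proof is a direct substitution into \cref{GenusQueue} and \cref{GeneralisedSubdivision}; no further case analysis (including the planar case $g=0$, where $4g+49=49$) is needed.
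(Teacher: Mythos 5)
Your proof is correct and follows exactly the paper's argument: planarize the drawing (replacing crossings by degree-$4$ dummy vertices) to get a graph $G_0$ of Euler genus at most $g$, apply \cref{GenusQueue} to get a $(4g+49)$-queue layout, and then feed this into \cref{GeneralisedSubdivision} with $c=k+1$ and $\Delta=4$. Your extra bookkeeping about the dummy vertices is the same justification the paper leaves implicit, and your arithmetic $2\bigl(2(4g+49)\cdot 5\bigr)^{k+2}=2(40g+490)^{k+2}$ is correct.
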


\begin{proof}
Let $G_0$ be the graph obtained from $G$ by replacing each crossing point by a vertex.
Thus $G_0$ has Euler genus at most $g$, and thus has queue-number at most $4g+49$ by \cref{GenusQueue}.
Note that for every edge $vw$ in $G$ there is a $vw$-path $P$ in $G_0$ of length at most $k+1$, such that every internal vertex has degree 4.
The result follows from \cref{GeneralisedSubdivision} with $c=k+1$ and $\Delta=4$.
\end{proof}

\cref{gkPlanar} can also be concluded from a result of \citet{DujWoo05} in conjunction with \cref{GenusQueue}.

\subsection{Map Graphs}

Map graphs are defined as follows. Start with a graph $G_0$ embedded in a surface of Euler genus $g$, with each face labelled a `nation' or a `lake', where each vertex of $G_0$ is incident with at most $d$ nations. Let $G$ be the graph whose vertices are the nations of $G_0$, where two vertices are adjacent in $G$ if the corresponding faces in $G_0$ share a vertex. Then $G$ is called a \emph{$(g,d)$-map graph}. A $(0,d)$-map graph is called a (plane) \emph{$d$-map graph}; such graphs have been extensively studied~\citep{FLS-SODA12,DFHT05,CGP02,Chen07,Chen01}. The $(g,3)$-map graphs are precisely the graphs of Euler genus at most $g$ (see~\citep{DEW17}). So $(g,d)$-map graphs provide a natural generalisation of graphs embedded in a surface.

\begin{prop}
\label{MapGraph}
Every $(g,d)$-map graph $G$ has queue-number at most $2\big(8g+98)(d+1)\big)^{3}$.
\end{prop}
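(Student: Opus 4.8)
The plan is to realise the map graph $G$ as a subgraph of the graph produced by \cref{GeneralisedSubdivision} applied to a suitable bounded-genus graph. First I would form the \emph{vertex--nation incidence graph} $B$: the bipartite graph with parts $N=\{z_f : f\text{ a nation of }G_0\}$ and $W=V(G_0)$, where $z_f$ is adjacent to $v\in W$ exactly when $v$ lies on the boundary of the face $f$. Drawing each vertex $z_f$ in the interior of its face $f$ and joining it to the boundary vertices of $f$ by pairwise internally disjoint arcs inside $f$ exhibits $B$ as a subgraph of a graph embedded in the same surface as $G_0$; hence $B$ has Euler genus at most $g$, and \cref{GenusQueue} supplies a $(4g+49)$-queue layout of $B$.

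The graph $B$ has two relevant features. Combinatorially, distinct nations $f$ and $f'$ are adjacent in $G$ exactly when the faces $f,f'$ share a vertex, i.e.\ exactly when $z_f$ and $z_{f'}$ have a common neighbour in $W$; and since $B$ is bipartite with parts $N,W$, this is precisely the condition that $z_f$ and $z_{f'}$ are joined by a path of length at most $2$ in $B$. In terms of degrees, every vertex of $W$ has degree at most $d$ in $B$, because it lies on the boundary of at most $d$ nations; in particular every internal vertex of such a length-$2$ path automatically has degree at most $d$. Assuming $d\ge 2$ (if $d\le 1$ then no vertex of $G_0$ is incident with two nations, so $G$ has no edges and $\qn(G)=0$), I would apply \cref{GeneralisedSubdivision} to $B$ with $c:=2$ and $\Delta:=d$. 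This yields a graph $G'$ on vertex set $N\cup W$ in which two vertices are adjacent whenever they are joined in $B$ by a path of length at most $2$ whose internal vertices have degree at most $d$; by the two features above, the subgraph of $G'$ induced by $N$ is precisely (isomorphic to) $G$.

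Since queue-number cannot increase under taking subgraphs, \cref{GeneralisedSubdivision} then gives
\[
\qn(G)\le \qn(G') < 2\bigl(2(4g+49)(d+1)\bigr)^{3} = 2\bigl((8g+98)(d+1)\bigr)^{3},
\]
which is the claimed bound. The only step needing care is the topological assertion that $B$ embeds in the surface of Euler genus $g$ --- that one may insert a vertex inside each nation face and connect it to that face's boundary vertices without introducing crossings. This is the standard vertex--face incidence construction (the same idea underlying the fact, noted above, that $(g,3)$-map graphs are exactly the graphs of Euler genus at most $g$), and the remainder is routine bookkeeping; I expect no serious obstacle beyond stating this topological setup cleanly.
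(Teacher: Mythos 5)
Your proof is correct and takes essentially the same route as the paper's: the paper invokes the fact (citing \citet{DEW17}) that a $(g,d)$-map graph is the half-square of a bipartite graph of Euler genus at most $g$ whose ``underlying-vertex'' side has degree at most $d$, and then applies \cref{GenusQueue} followed by \cref{GeneralisedSubdivision} with $c=2$ and $\Delta=d$. You simply reconstruct that bipartite half-square witness explicitly as the vertex--nation incidence graph rather than citing it, but the argument and the bookkeeping $2\bigl(2(4g+49)(d+1)\bigr)^3 = 2\bigl((8g+98)(d+1)\bigr)^3$ are identical.
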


\begin{proof}
It is known that $G$ is the half-square of a bipartite graph $G_0$ with Euler genus $g$ (see~\citep{DEW17}).
This means that $G_0$ has a bipartition $\{A,B\}$, such that every vertex in $B$ has degree at most $k$,
$V(G)=A$, and for every edge $vw\in E(G)$, there is a common neighbour of $v$ and $w$ in $B$. 
By \cref{GenusQueue}, $G_0$ has a $(4g+49)$-queue layout. 
The result follows from \cref{GeneralisedSubdivision} with $c=2$ and $\Delta=d$.
\end{proof}

\subsection{String Graphs}
\label{StringGraphs}

A \emph{string graph} is the intersection graph of a set of curves in the plane with no three curves meeting at a single point \cite{PachToth-DCG02,SS-JCSS04,SSS-JCSS03,Krat-JCTB91,FP10,FP14}. For an integer $k\geq 2$, if each curve is in at most $k$ intersections with other curves, then the corresponding string graph is called a \emph{$k$-string graph}. A \emph{$(g,k)$-string} graph is defined analogously for curves on a surface of Euler genus at most $g$. 

\begin{prop} 
\label{StringGraph}
For all integers $g\geq 0$ and $k\geq 2$, every $(g,k)$-string graph has queue-number at most $2(40g+490)^{2k+1}$.
\end{prop}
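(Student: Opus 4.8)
The plan is to follow the template of \cref{gkPlanar} and \cref{MapGraph}: exhibit the $(g,k)$-string graph $G$ as a ``short-path graph'' over a host graph $G_0$ of Euler genus at most $g$ whose relevant internal vertices have bounded degree, and then invoke \cref{GeneralisedSubdivision}. First I would fix a representation of $G$ by curves $\gamma_1,\dots,\gamma_n$ on a surface of Euler genus at most $g$, with no three curves through a common point, each $\gamma_i$ containing at most $k$ crossing points. For each $i$ choose a point $v_i\in\gamma_i$ that is not a crossing point; this is possible since $\gamma_i$ meets the other curves in only finitely many points, and moreover $v_i$ lies on no $\gamma_j$ with $j\ne i$ (otherwise $v_i$ would itself be a crossing point), so the $v_i$ are pairwise distinct. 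Let $G_0$ be the planarisation of the arrangement enriched by these points: its vertices are the $v_i$ together with one vertex $u_p$ for each crossing point $p$, and along each $\gamma_i$ we join, by an edge, each consecutive pair of vertices of $G_0$ lying on $\gamma_i$ (in the order they occur along $\gamma_i$), suppressing any parallel edges. Then $G_0$ inherits an embedding in the genus-$g$ surface, so $\qn(G_0)\le 4g+49$ by \cref{GenusQueue}; each crossing vertex $u_p$ has degree at most $4$; and $V(G)=\{v_1,\dots,v_n\}\subseteq V(G_0)$.

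Next I would verify the short-path property. Suppose $v_iv_j\in E(G)$, so $\gamma_i$ and $\gamma_j$ cross at some point $p$. Traversing $\gamma_i$ from $v_i$ to $p$ uses at most $k$ of the arcs of $G_0$ lying on $\gamma_i$ (there are at most $k$ crossing vertices on $\gamma_i$), and traversing $\gamma_j$ from $p$ to $v_j$ uses at most $k$ arcs on $\gamma_j$; concatenating gives a $v_iv_j$-walk, and hence a $v_iv_j$-path, in $G_0$ of length at most $2k$, all of whose internal vertices are crossing vertices and therefore of degree at most $4$. Thus $G$ is a subgraph of the graph produced from $G_0$ by \cref{GeneralisedSubdivision} with $c=2k$ and $\Delta=4$, and since queue-number is monotone under taking subgraphs it suffices to bound the queue-number of that graph.

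Applying \cref{GeneralisedSubdivision} with $c=2k$, $\Delta=4$, and host queue-number $4g+49$ yields
\[
\qn(G) \;<\; 2\bigl(2(4g+49)(4+1)\bigr)^{2k+1} \;=\; 2(40g+490)^{2k+1},
\]
as claimed. The argument is entirely routine once $G_0$ is set up correctly; the only points needing care are that the chosen points $v_i$ are genuinely distinct and distinct from the crossing vertices (handled above), the degree-$4$ bound at crossing vertices, and harmless degeneracies such as curves with no crossings (which contribute isolated vertices to $G_0$) or parallel arcs (which we suppress). I do not expect any genuine obstacle; if one wished, the bound could be improved by placing each $v_i$ near the middle of its sequence of crossing points, giving $c=k$ and hence exponent $k+1$, but $c=2k$ already gives the stated bound.
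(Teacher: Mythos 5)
Your proof is correct and follows essentially the same route as the paper: build a planarisation $G_0$ of the curve arrangement on the genus-$g$ surface, note that crossing vertices have degree at most~$4$ and that any two adjacent curves are joined by a path of length at most $2k$ through such vertices, and apply \cref{GeneralisedSubdivision} with $c=2k$, $\Delta=4$ to the $(4g+49)$-queue layout furnished by \cref{GenusQueue}. The only cosmetic difference is that the paper places the representative vertex of each curve at an endpoint of that curve (under the harmless assumption that no two curves meet at an endpoint), whereas you place it at an arbitrary non-crossing interior point; both choices give the same bound $c=2k$.
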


\begin{proof}
We may assume that in the representation of $G$, no curve is self-intersecting,
no three curves intersect at a common point, and
no two curves intersect at an endpoint of one of the curves.
Let $G_0$ be the graph obtained by adding a vertex at the intersection point of any two distinct curves, and at the endpoints of each curve. Each section of a curve between two such vertices becomes an edge in $G_0$.  So $G_0$ is embedded without crossings and has Euler genus at most $g$. Associate each vertex $v$ of $G$ with a vertex $v_0$ of $G_0$ at the endpoint of the curve representing $v$. For each edge $vw$ of $G$, there is $v_0w_0$-path in $G_0$ of length at most $2k$, such that every internal vertex on $P$ has degree at most 4. 
By \cref{GenusQueue}, $G_0$ has a $(4g+49)$-queue layout. 
The result then follows from  \cref{GeneralisedSubdivision} with $\Delta=4$ and $c=2k$. 
\end{proof}

\section{Applications and Connections}\label{other}

In this section, we show that layered partitions lead to a simple proof of a known result about low treewidth colourings, and we discuss implications of our results such as resolving open problems about 3-dimensional graph drawings.

\subsection{Low Treewidth Colourings}

\citet{DDOSRSV04} proved that every graph in a proper minor-closed class can be edge 2-coloured so that each monochromatic subgraph has bounded treewidth, and more generally, that for fixed $c\geq 2$, every such graph can be edge $c$-coloured such that the union of any $c-1$ colour classes has bounded treewidth. They also showed analogous vertex-colouring results. (Of course, in both cases, by a colouring we mean a non-proper colouring). Here we show that these results can be easily  proved using layered partitions. The reader should not confuse this result with a different result by \citet{DDOSRSV04} that has subsequently been generalised for any bounded expansion class by \citet{Sparsity}. 

\begin{lemma}
\label{cEdgeColourAlmost}
For every $k$-almost-embeddable graph $G$ and integer $c\geq 2$, there are induced subgraphs $G_1,\dots,G_c$ of $G$, such that $G=\bigcup_{j=1}^c G_j$, 
and for $j\in\{1,\dots,c\}$ if 
$$X_j := G_1 \cup \dots \cup G_{j-1} \cup G_{j+1} \cup\dots \cup G_c,$$
then $X_j$ is an induced subgraph of $G$ and $X_j$ has a  tree-decomposition $(B^j_x:x\in V(T_j))$ of width at most $66k(k+1)(2c-1)+k-1$.
\end{lemma}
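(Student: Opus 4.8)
The plan is to build $G_1,\dots,G_c$ from a cyclic colouring of the layers of a layered partition of $G$ minus its apex vertices, choosing a period large enough that the colour classes stay pairwise non-adjacent. First I would peel off the apex vertices: write $G$ as a $k$-almost-embeddable graph with apex set $A$, so that $|A|\le k$ and $G':=G-A$ is $(k,k,k,0)$-almost-embeddable (we may assume $k\ge 1$). By \cref{AlmostEmbeddableStructure}, $G'$ has a partition $\PP$ of layered width at most $\ell:=6k-4$ with respect to some layering $(V_0,V_1,\dots)$ such that $G'/\PP$ has treewidth at most $k':=11k+10$.

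Next I would define the subgraphs. For $j\in\{1,\dots,c\}$ set $S_j:=\bigcup\{V_i:i\equiv 2(j-1)\pmod{2c}\}$ and $G_j:=G-\bigcup_{i\ne j}S_i$, an induced subgraph of $G$; then $X_j:=\bigcup_{i\ne j}G_i$. The sets $S_1,\dots,S_c$ are pairwise disjoint and, crucially, pairwise non-adjacent in $G$: two of their layer indices differ by an even amount modulo the even number $2c$, hence never by exactly $1$, so no edge of the layering joins $S_i$ to $S_{i'}$ for $i\ne i'$ (and the $S_j$ avoid $A$). From disjointness and non-adjacency one reads off that $\bigcup_{j=1}^{c}G_j=G$: every vertex lies in some $G_j$ (a vertex of $S_{j_0}$ lies in $G_{j_0}$, and a vertex in no $S_i$ — in particular any apex vertex — lies in every $G_j$), and every edge $uv$ lies in some $G_j$ (if an endpoint of $uv$ lies in $S_{j_0}$, then by non-adjacency the other endpoint lies in $S_{j_0}$ or in no $S_i$, and in either case $uv\in G_{j_0}$). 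The same bookkeeping shows $X_j=G-S_j$, so $X_j$ is an induced subgraph of $G$.

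It then remains to bound the treewidth of $X_j=G-S_j$. The graph $G'-S_j=(G-A)-S_j$ is the disjoint union of the subgraphs of $G'$ induced by the maximal runs of consecutive layers that avoid the indices $\equiv 2(j-1)\pmod{2c}$; each such run spans at most $2c-1$ layers, and consecutive runs are separated by at least one deleted layer, hence are pairwise non-adjacent by the layering property. Restricting $\PP$ to a run $R$ yields a partition of $G'[R]$ in which each part has at most $(2c-1)\ell$ vertices and whose quotient is a subgraph of $G'/\PP$, hence has treewidth at most $k'$; by (the proof of) \cref{PartitionLayeredTreewidth} applied with a single layer, $G'[R]$ has treewidth at most $(k'+1)(2c-1)\ell-1$. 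Taking the maximum over runs, $G'-S_j$ has treewidth at most $(k'+1)(2c-1)\ell-1$. Finally $G-S_j$ is obtained from $G'-S_j$ by adding back the at most $|A|\le k$ apex vertices, which raises the treewidth by at most $k$ (add each apex vertex to every bag). Hence $X_j$ has treewidth at most $(k'+1)(2c-1)\ell-1+k=22(k+1)(3k-2)(2c-1)+k-1\le 66k(k+1)(2c-1)+k-1$, using $22(3k-2)\le 66k$.

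The main obstacle is the interaction between the apex vertices and the layered structure: the apex vertices are adjacent to essentially everything, so they cannot be absorbed into the layering or into the colour classes, and the only recourse is to throw them into every bag of the tree-decompositions of the $X_j$ — this produces the additive $+k$ and, through the $11k+10$ bound of \cref{AlmostEmbeddableStructure}, the extra $(k+1)$ factor. The other point needing care is the choice of period $2c$ rather than $2c-1$: one needs $c$ pairwise non-consecutive residues modulo the period, which forces period $2c$ and hence runs of length $2c-1$, matching the $(2c-1)$ factor in the statement.
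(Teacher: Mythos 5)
Your proof is correct and follows essentially the same approach as the paper's: apply \cref{AlmostEmbeddableStructure} to $G-A$, colour the layers by residue modulo $2c$ so that each $X_j$ amounts to deleting one residue class of layers (plus keeping $A$) and hence decomposes into non-adjacent runs of at most $2c-1$ consecutive layers, bound the treewidth of each run by inflating the bags of the tree-decomposition of the quotient $H$, and finally insert $A$ into every bag. The only differences are cosmetic: you define $G_j$ by deleting $c-1$ residue classes rather than keeping a three-residue strip (so your $G_j$ keeps $c+1$ residue classes instead of $3$, but the resulting $X_j$ is the same kind of object), and you use the sharper layered-width bound $6k-4$ in place of the paper's rounded-up $6k$.
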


\begin{proof}
Note that we allow $G_i$ and $G_j$ to have vertices and edges in common. 
Let $A$ be the set of apex vertices in $G$ (as described in the definition of $k$-almost-embeddable). Thus $|A|\leq k$. 
By \cref{AlmostEmbeddableStructure}, $G-A$ has an $H$-partition $(Z_h:h\in V(H))$ of layered width at most $6k$, for some graph $H$ with treewidth at most $11k+10$. 
Let $(V_0,V_1,\dots)$ be the corresponding layering of $G-A$. Let $V_i:=\emptyset$ if $i<0$.
For $j\in\{1,\dots,c\}$, let 
$$G_j := G\Big[ A\cup \bigcup_{i\geq 0}  V_{2ci+2j-2} \cup V_{2ci+2j-1} \cup V_{2ci+2j} \Big].$$ 
Note that $G=\bigcup_{j=1}^c G_j$, as claimed. 
For $i\in\mathbb{Z}$ and $j\in\{1,\dots,c\}$, let 
$$X_{i,j} := G[V_{2ci+2j} \cup V_{2ci+2j+1} \cup \dots \cup V_{2c(i+1)+2j-2} ].$$
Note that  $X_j$ is the induced subgraph $G[ \bigcup_{i\in\mathbb{Z}} V( X_{i,j} ) \cup A ]$. 

Let $(H_x:x \in V(T))$ be a tree-decomposition of $H$ in which every bag has size at most $11(k+1)$. 
For $i\in\mathbb{Z}$ and $j\in\{1,\dots,c\}$, let $T_{i,j}$ be a copy of $T$, and 
let $D_x:= \bigcup_{h\in H_x} Z_h \cap V(X_{i,j})$ for each node $x\in V(T_{i,j})$. 
Then $(D_x:x\in V(T_{i,j}))$ is a tree-decomposition of $X_{i,j}$ because: 
(1) each vertex $v$ of $X_{i,j}$ is in one part $Z_h$ of our $H$-partition, and thus $v$ is in precisely those bags corresponding to nodes $x$ of $T$ for which $h\in H_x$, which form a subtree of $T$; and 
(2) for each edge $vv'$ of $X_{i,j}$, $v$ is in one part $Z_h$ and $v'$ is in one part $Z_{h'}$ of our $H$-partition, and  thus $h=h'$ or $hh'\in E(H)$, implying that $h$ and $h'$ are in a common bag $H_x$, and thus $v$ and $v'$ are in a common bag $D_x$. 
Since $X_{i,j}$ consists of $2c-1$ layers, and our $H$-partition has layered width at most $6k$, 
we have $|Z_h \cap V(X_{i,j}) | \leq 6k( 2c-1 )$. 
Thus $(D_x:x\in V(T_{i,j}))$ has width at most $66k(k+1)(2c-1)-1$.  

For $j\in\{1,\dots,c\}$, let  $(B^j_x:x\in V(T_j))$ be the tree-decomposition of $X_j$ obtained as follows: 
First, let $T_j$ be the tree obtained from the disjoint union $\bigcup_{i\in\mathbb{Z}} T_{i,j}$ 
by adding an edge between $T_{i,j}$ and $T_{i+1,j}$ for all $i\in\mathbb{Z}$. 
Then for each node $x$ of $T_j$, let $B^j_x:= D_x \cup A$, where $D_x$ is the bag corresponding to $x$ in the tree-decomposition of $X_{i,j}$ where $i$ is such that $x\in V(T_{i,j})$. 
Since  $X_{i,j}$ and $X_{i',j}$ are disjoint for $i\neq i'$, and $A$ is a subset of every bag, 
$(B^j_x:x\in V(T_j))$ is a tree-decomposition of $X_j$ with width at most $66k(k+1)(2c-1)+k-1$.  
%
%
\end{proof}

\begin{lemma}
\label{cEdgeColourCliqueSum}
Fix an integer $c\geq 2$. 
For $i\in\{1,2\}$, let $G^i$ be a graph for which there are induced subgraphs $G^i_1,\dots,G^i_c$ satisfying \cref{cEdgeColourAlmost}. 
Let $G$ be a clique-sum of $G^1$ and $G^2$. 
Then $G$ has induced subgraphs $G_1,\dots,G_c$ satisfying \cref{cEdgeColourAlmost}.
\end{lemma}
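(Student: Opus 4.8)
Write $G$ as a clique-sum of $G^1$ and $G^2$ on a common clique $K$, so that $V(G^1)\cap V(G^2)=V(K)$, every edge of $G$ lies in $G^1$ or in $G^2$, and there is no edge of $G$ between $V(G^1)\setminus V(K)$ and $V(G^2)\setminus V(K)$. For $i\in\{1,2\}$ we are given induced subgraphs $G^i_1,\dots,G^i_c$ with $G^i=\bigcup_{j}G^i_j$, together with a tree-decomposition of $X^i_j:=\bigcup_{\ell\ne j}G^i_\ell$ of width at most the bound of \cref{cEdgeColourAlmost}. The natural candidate is $G_j:=G\big[V(G^1_j)\cup V(G^2_j)\big]$; then $G=\bigcup_j G_j$ immediately (each vertex of $G$ lies in some $V(G^i)$, hence in some $V(G^i_j)\subseteq V(G_j)$; each edge of $G$ lies in some $G^i$, hence in some $G^i_j$, hence in $G_j$ as $G_j$ is induced). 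The only real work is in controlling $X_j:=\bigcup_{\ell\ne j}G_\ell$, and there the behaviour on the clique $K$ is what matters.

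The key step, and the main obstacle, is to make the two given colourings \emph{agree on $K$}: for each $v\in V(K)$ the set $\{j:v\in V(G^1_j)\}$ should equal $\{j:v\in V(G^2_j)\}$. Without this, the $G_j$ above need not satisfy ``$X_j$ is induced'', because a vertex of $K$ may sit in different colour classes in $G^1$ and in $G^2$, creating an edge of $G$ inside $V(X_j)$ that lies in no $G_\ell$ with $\ell\ne j$. I would obtain agreement from a \emph{friendly} strengthening of \cref{cEdgeColourAlmost}, proved by the same argument: in that proof the colour classes are formed by grouping consecutive layers of the layered partition supplied by \cref{AlmostEmbeddableStructure}, so after first re-folding that layered partition about $K$ exactly as in \cref{MakeFriendly} (which, $K$ being a clique, costs only a factor $2$ in the layered width and an additive $\le|K|$ in the treewidth of the quotient), one may force all of $V(K)$ into the first one or two layers and thereby prescribe precisely which colour classes each vertex of $K$ falls into. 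Applying this to $G^1$ and to $G^2$ with the \emph{same} prescription on $K$ harmonises the colourings, while still giving tree-decompositions of each $X^i_j$ of bounded width; and since $V(K)\cap V(X^i_j)$ is a clique in $X^i_j$, it automatically lies inside a single bag. (This is the same pattern by which \cref{DoCliqueSums} makes the layered-partition results composable under clique-sums.)

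Granting the harmonisation, the rest is bookkeeping. We have $V(K)\cap V(X^1_j)=V(K)\cap V(X^2_j)=:K_j$, and since $V(G^1)\cap V(G^2)=V(K)$ this gives $V(X^1_j)\cap V(X^2_j)=K_j$. Because there is no edge of $G$ between $V(G^1)\setminus V(K)$ and $V(G^2)\setminus V(K)$, every edge of $G$ with both ends in $V(X_j)=V(X^1_j)\cup V(X^2_j)$ lies wholly in $G^1$ or wholly in $G^2$, hence—using that $X^1_j$ and $X^2_j$ are induced—in $X^1_j$ or in $X^2_j$, hence in $X_j$; so $X_j$ is an induced subgraph of $G$. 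Finally, pick a bag of the given tree-decomposition of $X^1_j$ containing the clique $K_j$, pick a bag of the given tree-decomposition of $X^2_j$ containing $K_j$, and add an edge joining these two bags; since $X^1_j$ and $X^2_j$ meet exactly in $K_j$ and $K_j$ lies in both joined bags, the result is a tree-decomposition of $X_j$ of width the maximum of the two input widths. Thus $G_1,\dots,G_c$ satisfy \cref{cEdgeColourAlmost} with no increase in the width bound, so the step composes along an iterated clique-sum without the bound deteriorating—which is exactly why the harmonisation step (rather than the cruder option of adding $V(K)$ to every bag) is needed.
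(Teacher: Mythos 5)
The paper's own proof is much shorter and performs \emph{no} harmonisation: it simply sets $G_j := G^1_j \cup G^2_j$, observes that $X_j = X^1_j \cup X^2_j$, and builds a tree-decomposition of $X_j$ by taking the disjoint union of the two given tree-decompositions and adding one edge joining a bag of the first containing the clique $C^1 \cap V(X^1_j)$ to a bag of the second containing $C^2 \cap V(X^2_j)$. There is no refolding of layerings about $K$, no prescription of colour classes on $K$, and no appeal to anything like \cref{MakeFriendly} or \cref{StronglyAlmostEmbeddableCliqueFriendly}.

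Your instinct that this naive union need not produce \emph{induced} subgraphs is correct, and the paper's proof does not address it. If $v\in K$ lies only in $G^1_j$ among the $G^1_\ell$ yet lies in some $G^2_\ell$ with $\ell\ne j$, and $u\in V(X^1_j)\setminus K$ is a $G^1$-neighbour of $v$, then $u,v\in V(X_j)$ but $uv$ lies in neither $X^1_j$ nor $X^2_j$, so $X_j$ is not induced in $G$ and the glued decomposition does not cover the edge $uv$. So the ``induced'' part of the conclusion of \cref{cEdgeColourAlmost} --- which is exactly what the vertex-colouring case of \cref{2EdgeColour} invokes via the claim that the subgraph induced by vertices not coloured $j$ is a subgraph of $X_j$ --- genuinely requires more care than the paper gives it; only the edge-colouring case gets by with the paper's weaker treewidth-of-$X_j$ conclusion. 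That said, carrying out your fix is a substantially bigger project than the paper's one-paragraph argument: you would need to state and prove a clique-friendly strengthening of \cref{cEdgeColourAlmost} that itself composes under clique-sums (in the spirit of \cref{DoCliqueSums}, since in an iterated clique-sum the summands are not themselves almost-embeddable), and you would also need to reconcile the apex vertices of $K$, since an apex vertex of one piece sits in \emph{every} colour class while merely pinning a non-apex vertex of the other piece onto layer $V_0$ or $V_1$ places it in at most two classes. Your diagnosis of the obstacle is sound, but the proposed repair is sketched rather than carried through, and it diverges completely from what the paper actually does.
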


\begin{proof}
Let $C^1$ and $C^2$ be the cliques  respectively in $G^1$ and $G^2$ involved in the clique-sum. 
For $i\in\{1,2\}$,  let  $X^i_j := G^i_1 \cup \dots \cup G^i_{j-1} \cup G^i_{j+1} \cup\dots \cup G^i_c$. 
By assumption,  $G^i = \bigcup_{j=1}^c G^i_j$, and  $X^i_j$ is an induced subgraph of $G^i$ that has a  tree-decomposition $(B^{i,j}_x:x\in V(T_j))$ of width at most $66k(k+1)(2c-1)+k-1$. 
For $j\in\{1,\dots,c\}$, let $G_j := G^1_j \cup G^2_j$. 
This means that for vertices $v_1\in C^1$ and $v_2\in C^2$, if $v_1$ and $v_2$ are identified into $v$ in the clique-sum, 
and $v_1\in V(G^1_j)$ or $v_2\in V(G^2_j)$, then $v$ is in $V(G_j)$. Similarly, for vertices $v_1,w_1\in C^1$ and $v_2,w_2\in C^2$, if $v_1$ and $v_2$ are identified into $v$ in the clique-sum, and  $w_1$ and $w_2$ are identified into $w$ in the clique-sum,  and $v_1w_1\in E(G^1_j)$ or $v_2w_2\in E(G^2_j)$, then $vw$ is in $E(G_j)$.  
Take the disjoint union of the tree-decompositions of $X^1_j$ and $X^2_j$ and add an edge between a bag containing $C^1\cap V(X^1_j)$ and a bag containing $C^2\cap V(X^2_j)$ to obtain a tree-decomposition of $X_j$ of width  $66k(k+1)(2c-1)+k-1$. Such bags exist since $C^i\cap V(X^i_j)$ is a clique of $X^i_j$ and is thus a subset of some bag. 
\end{proof}

We now prove the main result of this section.

\begin{theorem}[\citep{DDOSRSV04}]
\label{2EdgeColour}
For every proper minor-closed class $\mathcal{G}$ and integer $c\geq 2$, there is a constant $k$ such that every graph in $\mathcal{G}$ can be edge $c$-coloured or vertex $c$-coloured  so that the union of any $c-1$ colour classes has treewidth at most $k$. 
\end{theorem}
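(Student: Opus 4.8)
The plan is to combine the graph minor structure theorem with the two preceding lemmas. By \cref{GMST} there is a constant $k$, depending only on $\mathcal{G}$, such that every graph in $\mathcal{G}$ is obtained by clique-sums of $k$-almost-embeddable graphs. Set $k' := 66k(k+1)(2c-1)+k-1$, and call a graph $G$ \emph{$c$-good} if it has induced subgraphs $G_1,\dots,G_c$ with $G=\bigcup_{j=1}^c G_j$ such that, for each $j$, the induced subgraph $X_j := G_1\cup\dots\cup G_{j-1}\cup G_{j+1}\cup\dots\cup G_c$ has treewidth at most $k'$; this is exactly the conclusion of \cref{cEdgeColourAlmost}. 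Now \cref{cEdgeColourAlmost} says that every $k$-almost-embeddable graph is $c$-good, and \cref{cEdgeColourCliqueSum} says that the class of $c$-good graphs is closed under clique-sums. Writing the clique-sum decomposition of a given $G\in\mathcal{G}$ as a rooted tree of summands and repeatedly peeling off a leaf summand, an induction on the number of summands (base case \cref{cEdgeColourAlmost}, inductive step \cref{cEdgeColourCliqueSum}) shows that every graph in $\mathcal{G}$ is $c$-good.

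It remains to convert a $c$-good structure into the required colourings. For the \emph{edge} version: since $G=\bigcup_{j=1}^c G_j$, every edge $e$ of $G$ belongs to $E(G_j)$ for at least one index $j$, and we colour $e$ with one such $j$. Then for each $j$, every edge coloured with a colour in $\{1,\dots,c\}\setminus\{j\}$ lies in $\bigcup_{i\neq j}E(G_i)\subseteq E(X_j)$, so the union of the $c-1$ colour classes other than colour $j$ is a subgraph of $X_j$ and hence has treewidth at most $k'$. For the \emph{vertex} version: every vertex $v$ of $G$ belongs to $V(G_j)$ for at least one index $j$, and we colour $v$ with one such $j$. The vertices coloured with a colour in $\{1,\dots,c\}\setminus\{j\}$ form a subset of $V(X_j)$, and since $X_j$ is an \emph{induced} subgraph of $G$, the subgraph of $G$ induced by these vertices is a subgraph of $X_j$ and hence has treewidth at most $k'$. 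Taking the constant in the theorem statement to be $k'$ completes the argument.

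The substantive work in this argument has already been carried out: engineering the shifted layering in \cref{cEdgeColourAlmost} so that each slab $X_{i,j}$ meets only $2c-1$ consecutive layers of the layered partition, and threading the apex set and the tree-decomposition bookkeeping through the clique-sums in \cref{cEdgeColourCliqueSum}. What remains here is only organisational, and the two points to be careful about are: (i) making the induction over the clique-sum decomposition precise (a clique-sum decomposition is naturally a tree, so one peels off a leaf, applies the induction hypothesis to the rest, and then invokes \cref{cEdgeColourCliqueSum}); and (ii) interpreting ``the union of $c-1$ colour classes'' as a subgraph of $G$ in both the edge and vertex cases, so that monotonicity of treewidth under taking subgraphs yields the bound. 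I expect (i), the clean statement of the clique-sum induction, to be the only mild obstacle, and it is entirely routine.
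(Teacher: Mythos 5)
Your proposal is correct and follows essentially the same route as the paper's proof: combine \cref{GMST} with \cref{cEdgeColourAlmost,cEdgeColourCliqueSum} to get the $c$-good structure, then colour each edge (resp.\ vertex) by an index $j$ with $e\in E(G_j)$ (resp.\ $v\in V(G_j)$), so the union of the other $c-1$ colour classes sits inside $X_j$. The only difference is that you spell out the induction over the clique-sum tree and note explicitly why $X_j$ being an induced subgraph matters for the vertex case, details the paper leaves implicit.
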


\begin{proof}
\cref{GMST,cEdgeColourAlmost,cEdgeColourCliqueSum} imply that there exists an integer $k$ such that every graph  $G\in\mathcal{G}$ has  subgraphs $G_1,\dots,G_c$, such that $G=\bigcup_{j=1}^c G_j$, 
and for $j\in\{1,\dots,c\}$ the subgraph $X_j := G_1 \cup \dots \cup G_{j-1} \cup G_{j+1} \cup\dots \cup G_c$ has treewidth at most $66k(k+1)(2c-1)+k-1$. 

First we prove the edge-colouring result. Colour each edge $e$ of $G$ by an integer $j$ for which $e\in E(G_j)$. The subgraph of $G$ induced by the edges not coloured $j$ is a subgraph of $X_j$, and thus has treewidth at most $66k(k+1)(2c-1)+k-1$. 

For the vertex-colouring result, colour each vertex $v$ of $G$ by an integer $j$ for which $v\in V(G_j)$. The subgraph of $G$ induced by the vertices not coloured $j$ is a subgraph of $X_j$, and thus has treewidth at most $66k(k+1)(2c-1)+k-1$. 
\end{proof}

\subsection{Track Layouts}
\label{Track}

Track layouts are a type of graph layout closely related to queue layouts. A \emph{vertex} $k$\emph{-colouring} of a graph $G$ is a partition $\{V_1,\dots,V_k\}$ of $V(G)$ into independent sets; that is, for every edge $vw\in E(G)$, if $v\in V_i$ and $w\in V_j$ then $i\ne j$. A \emph{track} in $G$ is an independent set equipped with a linear ordering. A partition $\{\overrightarrow{V_1},\dots,\overrightarrow{V_k}\}$ of $V(G)$ into $k$ tracks is a \emph{$k$-track layout} if for distinct $i,j\in\{1,\dots,k\}$ no two edges of $G$ cross between $\overrightarrow{V_i}$ and $\overrightarrow{V_j}$. That is, for all distinct edges $vw,xy\in E(G)$ with $v,x\in V_i$ and $w,y\in V_j$, if $v\prec x$ in $\overrightarrow{V_i}$ then $w\preceq y$ in $\overrightarrow{V_j}$. The minimum $k$ such that $G$ has a $k$-track layout is called the \emph{track-number} of $G$, denoted by $\tn(G)$. \citet{DMW05} proved the following connection to queue-number.

\begin{lemma}[\citep{DMW05}]
\label{TrackQueue}
For every graph $G$, $\qn(G)\leq\tn(G)-1.$
\end{lemma}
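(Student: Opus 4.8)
The plan is to turn a track layout into a queue layout by concatenating the tracks and grouping the (necessarily inter-track) edges by the ``distance'' between the two tracks they join. First I would set $k := \tn(G)$ and fix a $k$-track layout $\{\overrightarrow{V_1},\dots,\overrightarrow{V_k}\}$ of $G$. Let $\preceq$ be the linear order on $V(G)$ obtained by listing $\overrightarrow{V_1}$ first (in its track order), then $\overrightarrow{V_2}$, and so on up to $\overrightarrow{V_k}$. Since each $V_i$ is independent, every edge $vw\in E(G)$ joins two distinct tracks, say $v\in V_i$ and $w\in V_j$ with $i<j$; I would assign such an edge to a queue $Q_{j-i}$. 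As $j-i$ ranges over $\{1,\dots,k-1\}$ this yields a partition of $E(G)$ into at most $k-1$ classes, so it remains only to verify that each $Q_d$ is a queue with respect to $\preceq$.

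Then I would show that no two edges of $Q_d$ nest. Suppose $vw,v'w'\in Q_d$ with $v\in V_i$, $w\in V_{i+d}$, $v'\in V_{i'}$, $w'\in V_{i'+d}$, and suppose for contradiction that they nest. Recall that nesting is only defined for disjoint edges, so $\{v,w\}\cap\{v',w'\}=\emptyset$, and we may assume $v\prec v'\prec w'\prec w$. If $i<i'$, then $V_{i+d}$ entirely precedes $V_{i'+d}$ in $\preceq$, forcing $w\prec w'$, a contradiction; the case $i>i'$ is symmetric. If $i=i'$, then $v\prec v'$ holds within the track $V_i$ while $w'\prec w$ holds within the track $V_{i+d}$, which contradicts the defining non-crossing condition of the track layout applied to the ordered pair of tracks $(V_i,V_{i+d})$ (here the disjointness of the two edges rules out the equality $w=w'$ that the condition would otherwise allow). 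Hence every rainbow in $Q_d$ has size at most $1$, so $\preceq$ together with $(Q_1,\dots,Q_{k-1})$ is a $(k-1)$-queue layout of $G$, and therefore $\qn(G)\le k-1=\tn(G)-1$.

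I do not expect any real obstacle: the argument is a short case analysis, and the only points needing a little care are the bookkeeping of which of two tracks comes earlier in the concatenated order, and the observation that a pair of nested edges is necessarily disjoint, which is exactly what disposes of the degenerate subcases in the $i=i'$ case.
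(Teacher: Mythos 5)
Your proof is correct and follows essentially the same approach that the paper attributes to \citet{DMW05} and summarises in one sentence (``simply puts the tracks one after the other''): concatenate the tracks to obtain the vertex ordering, assign each inter-track edge to a queue indexed by the difference of its two track indices, and use the non-crossing condition between each ordered pair of tracks to rule out a nested pair within a single index class. The case analysis and the handling of disjointness in the $i=i'$ subcase are exactly the details one needs to fill in.
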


The proof of \cref{TrackQueue} simply puts the tracks one after the other to produce a queue layout. In this sense, track layouts can be thought of as a richer structure than queue layouts. This structure was the key to an inductive proof by \citet{DMW05} that graphs of bounded treewidth have bounded track-number (which implies bounded queue-number by \cref{TrackQueue}). 
Nevertheless, \citet{DPW04} proved the following converse to \cref{TrackQueue}:

\begin{lemma}[\citep{DPW04}]
\label{QueueTrack}
There is a function $f$ such that $\tn(G) \leq f( \qn(G) )$ for every graph $G$. In particular, every graph with queue-number at most $k$ has track-number at most
$$4k \cdot 4^{ k (2k-1)(4k-1) }.$$
\end{lemma}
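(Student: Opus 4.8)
The plan is to engineer a track layout directly from a given $k$-queue layout of $G$, say with vertex ordering $\sigma=(v_1,\dots,v_n)$ and partition $E_1,\dots,E_k$ of $E(G)$ into queues with respect to $\sigma$. Recall that a track layout is a vertex colouring into independent sets (the tracks), each equipped with a linear order, such that no two edges cross between any pair of tracks; so the task is to produce such a colouring together with a linear order on each colour class, and the whole game is to bound the number of colours by a function of $k$.

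First I would control the sparsity of $G$: a graph on $n$ vertices with a $1$-queue layout has at most $2n-3$ edges, so $G$ has fewer than $2kn$ edges, and the same holds for every subgraph of $G$; hence $G$ is $(4k-1)$-degenerate. Fix a degeneracy ordering and let $\psi\colon V(G)\to\{1,\dots,4k\}$ be a proper colouring obtained greedily along it; this accounts for the factor $4k$ in the claimed bound.

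Next --- and this is the crux --- I would refine $\psi$ by a second colouring $\chi\colon V(G)\to\{1,2,3,4\}^{N}$ with $N=k(2k-1)(4k-1)$, so there are at most $4^{N}$ values of $\chi$. The coordinates of $\chi$ are indexed by local data attached to a vertex: for each of the $k$ queues, for each of the (at most $4k-1$) neighbours appearing earlier in the degeneracy order, one records one of four ``types'', encoding whether the incident edge goes left or right in $\sigma$ together with a parity bit that separates edges of the same queue that could otherwise nest. One defines $\chi$ greedily (this is where the exact factor $2k-1$ enters, bounding how many such records per vertex actually need distinct values) so that, within a single queue, any two edges whose four endpoints agree pairwise in $(\psi,\chi)$ are forced to be vertex-disjoint (or to share an endpoint) and to be ordered consistently along $\sigma$ --- ruling out both nesting and crossing for the pair. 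Verifying that $O(k^{3})$ coordinates over a $4$-letter alphabet genuinely suffice, exploiting the non-nesting property of each queue together with the degeneracy ordering, is the main obstacle and where the careful bookkeeping lives; this is the heart of the argument of \citet{DPW04}.

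Finally I would assemble the track layout: the tracks are the nonempty classes $(\psi,\chi)^{-1}(a,s)$, each independent because $\psi$ is a proper colouring, and each ordered by $\sigma$. Given two tracks and two edges between them, the four endpoints have pairwise-matching $(\psi,\chi)$-data, so --- since the edges lie in a common queue when their endpoint colours match appropriately --- the construction of $\chi$ forbids them from crossing; hence this is indeed a track layout. Its number of tracks is at most $4k\cdot 4^{k(2k-1)(4k-1)}$, which is the stated explicit bound, and since this quantity depends only on $k=\qn(G)$, the qualitative statement $\tn(G)\le f(\qn(G))$ follows immediately.
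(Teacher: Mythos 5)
The paper does not prove this lemma itself; it is quoted directly from \citet{DPW04}, so there is no in-paper proof to compare against. Your sketch gets the outer shell right --- the $(4k-1)$-degeneracy of $k$-queue graphs gives a proper $4k$-colouring and hence the leading factor, and ordering each track by the queue ordering $\sigma$ is a natural move --- but the core of the argument is explicitly deferred: you write that verifying $\chi$ does its job ``is the heart of the argument of \citet{DPW04}.'' That is a pointer, not a proof, and the deferred step is where all the content lies.

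Even at the level of a sketch, the final assembly has concrete problems. Your claim that two edges between a fixed pair of tracks ``lie in a common queue when their endpoint colours match appropriately'' does not follow from the construction: $\chi$ is a vertex-colouring, and the coordinate attached to $u$ with respect to its $j$-th earlier neighbour need not refer to the same incident edge as the coordinate attached to $u'$ with respect to its $j'$-th earlier neighbour, so four matching labels do not place the two edges in one queue. More fundamentally, queues exclude nesting, whereas a track layout must exclude crossing between every pair of tracks; these are different conditions (a single queue can contain crossing pairs of edges), and nothing in your description shows how the ``parity bit'' converts non-nesting along $\sigma$ into non-crossing between two tracks --- this is precisely the step that needs an argument. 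Finally, the arithmetic does not close: indexing coordinates by queue and earlier neighbour yields $k(4k-1)$ positions, and the factor $2k-1$ is acknowledged only as ``where the exact factor enters'' with no explanation. As written this is a plausible outline with the essential proof outsourced, not a proof.
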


\cref{TrackQueue,QueueTrack} together say that queue-number and track-number are tied.

The following lemma often gives better bounds on the track-number than \cref{QueueTrack}. A proper graph colouring is \emph{acyclic} if every cycle gets at least three colours. The \emph{acyclic chromatic number} of a graph $G$ is the minimum integer $c$ such that $G$ has an acyclic $c$-colouring.

\begin{lemma}[\citep{DMW05}]
\label{AcyclicQueueTrack}
Every graph $G$ with acyclic chromatic number at most $c$ and queue-number at most $k$ has track-number at most $c(2k)^{c-1}$.
\end{lemma}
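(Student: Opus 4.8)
The plan is to refine the given acyclic colouring into a track layout by reading off local structure from the queue layout. Fix an acyclic $c$-colouring $\{V_1,\dots,V_c\}$ of $G$ and a $k$-queue layout with vertex ordering $\preceq$ and queues $E_1,\dots,E_k$. Because the colouring is acyclic, for all distinct colours $i,j$ the subgraph $G_{ij}:=G[V_i\cup V_j]$ is a forest, and it is bipartite with parts $V_i,V_j$ since colour classes are independent. Root every component of every $G_{ij}$, so that each non-root vertex has a parent edge; this edge lies in one of the queues $E_\alpha$ and is either \emph{forward} (the vertex precedes its parent in $\preceq$) or \emph{backward}.

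For a vertex $v\in V_i$, I would assign a label $\lambda(v):=\bigl(i,(\sigma_j(v))_{j\ne i}\bigr)$, where for each $j\ne i$ the symbol $\sigma_j(v)\in\{1,\dots,2k\}$ records the queue index in $\{1,\dots,k\}$ together with the direction in $\{\text{forward},\text{backward}\}$ of $v$'s parent edge in $G_{ij}$. The at most one ``component root'' case per pair $(i,j)$ must be folded into the $2k$ available symbols; the natural way is to choose the rootings carefully (for instance so that on one prescribed side of each $G_{ij}$ every vertex has a parent, and on the other side a root is given a consistent dummy symbol whose use never creates a crossing). The tracks are the sets $\lambda^{-1}(\cdot)$, each ordered by the restriction of $\preceq$; there are at most $c\cdot(2k)^{c-1}$ of them, and each is an independent set since it lies inside a single colour class.

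It then remains to verify that this is a genuine track layout: for any two tracks $T\subseteq V_i$ and $T'\subseteq V_j$ with $i\ne j$, no two edges of $G$ cross between $T$ and $T'$. Every such edge lies in the forest $G_{ij}$ and therefore joins some vertex to its parent. Since all vertices of $T$ share the symbol $\sigma_j$ and all vertices of $T'$ share the symbol $\sigma_i$, the edges between $T$ and $T'$ split into at most two classes according to which endpoint is the child, and within each class all edges lie in a common queue with a common direction; the non-nesting property of that queue then forces the edges of that class to meet $T$ and $T'$ in a monotone (hence non-crossing) fashion.

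The step I expect to be the main obstacle is controlling the two classes of edges between $T$ and $T'$ \emph{simultaneously}: an edge between two tracks can be oriented toward either track, and monotonicity within each class does not by itself rule out a crossing between the two classes. Resolving this cleanly is exactly where the rootings and the $2k$-symbol encoding must be chosen with care, and it is what pins down the precise bound $c\,(2k)^{c-1}$; the verification reduces to a finite case analysis on the relative $\preceq$-positions of the four endpoints of a hypothetical crossing pair, exploiting the queue/direction data carried by the labels. This is the argument of \citet{DMW05}.
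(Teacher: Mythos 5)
The paper cites this lemma from~\citep{DMW05} and does not prove it, so there is no in-paper proof to compare against. Your proposal has the right combinatorial shape---label each vertex by its colour together with a $(c-1)$-tuple of ``queue index plus direction'' symbols from $\{1,\dots,2k\}$, one for each other colour, giving $c(2k)^{c-1}$ tracks---and you correctly identify the dangerous case: two edges between tracks $T\subseteq V_i$ and $T'\subseteq V_j$ that are parent edges of endpoints on \emph{opposite} sides. But that case is not ``a finite case analysis on relative $\preceq$-positions'' that goes through once one is careful; it is a real hole in the scheme as you describe it, and your proposal does not fill it.

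Here is a concrete failure of the parent-edge labelling with an unstructured rooting. Take $V_i=\{v_1,v_2\}$, $V_j=\{w_1,w_2,w_2'\}$, the tree $G_{ij}$ with edges $v_1w_1,\,v_2w_1,\,v_2w_2,\,v_2w_2'$ rooted at $w_2'$, the vertex order $v_1\prec v_2\prec w_2\prec w_1\prec w_2'$, and the valid $2$-queue assignment with $\{v_1w_1,\,v_2w_2'\}$ in queue $1$ and $\{v_2w_1,\,v_2w_2\}$ in queue $2$ (the only nesting pair of disjoint edges is $\{v_1w_1,\,v_2w_2\}$). Then $v_1$ and $v_2$ both have parent edge in queue $1$ with parent to the right, so $\sigma_j(v_1)=\sigma_j(v_2)$; and $w_1$ and $w_2$ both have parent edge in queue $2$ with parent to the left, so $\sigma_i(w_1)=\sigma_i(w_2)$. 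Hence $\{v_1,v_2\}$ and $\{w_1,w_2\}$ each land in a single track, and $v_1w_1,\,v_2w_2$ form an X-crossing between those tracks ($v_1\prec v_2$ but $w_2\prec w_1$). So ``choosing the rootings carefully'' is not a remark to be deferred: it is the entire content of the proof, and as written your sketch does not identify what canonical structure on the bichromatic forests (or what different labelling altogether) makes the verification go through. Until that is supplied the argument is incomplete, and it is not clear that the mechanism in~\citep{DMW05} is the one you describe.
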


\citet{Borodin79} proved that planar graphs have acyclic chromatic number at most 5, which with \cref{AcyclicQueueTrack,PlanarQueue} implies:

\begin{theorem}
\label{PlanarTrack}
Every planar graph has track-number at most $5 (2 \cdot 49)^4= 461,184,080$. 
\end{theorem}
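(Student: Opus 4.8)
The plan is to combine three facts that are already available in the excerpt: Borodin's bound on the acyclic chromatic number of planar graphs, the explicit bound of $49$ on the queue-number of planar graphs, and the inequality relating track-number to acyclic chromatic number and queue-number. No new combinatorial work is needed; the theorem is a substitution into \cref{AcyclicQueueTrack}.

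First I would invoke \citet{Borodin79}, who proved that every planar graph $G$ admits an acyclic proper colouring with at most $5$ colours; thus the acyclic chromatic number of $G$ is at most $c=5$. Second, I would apply \cref{PlanarQueue}, using the explicit upper bound of $49$ obtained in \cref{planar} (via \cref{Width3LayeredPartition}, \cref{PartitionQueue}, and \cref{PlanarTreewidth3Queue5}), to conclude that $\qn(G)\le k=49$.

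Then I would feed these into \cref{AcyclicQueueTrack}, which states that any graph with acyclic chromatic number at most $c$ and queue-number at most $k$ has track-number at most $c(2k)^{c-1}$. Substituting $c=5$ and $k=49$ gives
\[
\tn(G) \le 5\,(2\cdot 49)^{5-1} = 5\cdot 98^{4} = 461{,}184{,}080,
\]
which is exactly the claimed bound. (Note that routing instead through the general converse \cref{QueueTrack} would produce a far weaker bound, which is why the acyclic-colouring route of \cref{AcyclicQueueTrack} is preferable here.)

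There is essentially no obstacle to this argument: each cited statement applies uniformly to all planar graphs, so the only things to verify are that the hypotheses of \cref{AcyclicQueueTrack} are satisfied (an acyclic $5$-colouring and a $49$-queue layout both exist) and that the arithmetic $5\cdot 98^{4}=461{,}184{,}080$ is correct. If one wanted the tightest constant, the only lever would be improving the input bounds $c=5$ or $k=49$, but with the values at hand the stated bound is immediate.
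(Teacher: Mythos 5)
Your proof is correct and matches the paper's argument exactly: both invoke Borodin's acyclic $5$-colouring of planar graphs, the $49$-queue bound from \cref{PlanarQueue}, and substitute into \cref{AcyclicQueueTrack} to get $5(2\cdot 49)^{4}=461{,}184{,}080$.
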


Note that the best lower bound on the track-number of planar graphs is 8, due to \citet{Pupyrev19}.

\citet{Heawood1890} and \citet{AMS96} respectively proved that every graph with Euler genus $g$ has chromatic number $O(g^{1/2})$ and acyclic chromatic number $O(g^{4/7})$. \cref{AcyclicQueueTrack,GenusQueue} then imply:

\begin{theorem}
\label{GenusTrack}
Every graph with Euler genus $g$ has track-number at most $g^{O( g^{4/7})}$.
\end{theorem}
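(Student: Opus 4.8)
The plan is to combine three ingredients already available: the bound on the acyclic chromatic number of bounded-genus graphs due to \citet{AMS96}, the bound on the queue-number of bounded-genus graphs from \cref{GenusQueue}, and the inequality relating acyclic chromatic number, queue-number, and track-number in \cref{AcyclicQueueTrack}.

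Concretely, I would first invoke \citet{AMS96}: every graph of Euler genus $g$ has an acyclic colouring with $c$ colours for some $c=O(g^{4/7})$. Second, by \cref{GenusQueue}, every such graph has queue-number at most $k$ for some $k=O(g)$ (one may take $k\le 4g+49$). Third, \cref{AcyclicQueueTrack} applied with these parameters yields a track layout with at most $c\,(2k)^{c-1}$ tracks. It then remains only to simplify this expression: since $c=O(g^{4/7})$ and $2k=O(g)$, we have $c\,(2k)^{c-1}\le (2k)^{c}\le \bigl(O(g)\bigr)^{O(g^{4/7})}=g^{O(g^{4/7})}$, where the constant hidden in the base $O(g)$ is absorbed into the $O(\cdot)$ in the exponent via $C^{g^{4/7}}=g^{(\log C/\log g)\,g^{4/7}}$ for $g$ large. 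For the finitely many small values of $g$ not covered by this asymptotic manipulation (in particular $g=0$), the bound follows from \cref{PlanarTrack} or trivially, so we may assume $g$ is large.

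I do not expect a genuine obstacle here: \cref{GenusTrack} is essentially an immediate corollary of \cref{AcyclicQueueTrack,GenusQueue} together with the Alon--Mohar--Sanders bound, and the only care required is the routine exponent bookkeeping needed to present the final bound cleanly in the form $g^{O(g^{4/7})}$.
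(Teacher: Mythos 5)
Your proposal is correct and follows exactly the paper's route: combine the Alon--Mohar--Sanders bound of $O(g^{4/7})$ on the acyclic chromatic number, the $O(g)$ queue-number bound from \cref{GenusQueue}, and \cref{AcyclicQueueTrack}, then simplify $c(2k)^{c-1}$ with $c=O(g^{4/7})$ and $k=O(g)$ to $g^{O(g^{4/7})}$. The exponent bookkeeping and the aside about small $g$ are exactly the kind of routine care the paper leaves implicit.
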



For proper minor-closed classes, \cref{QueueTrack,MinorQueue} imply:

\begin{theorem}
\label{MinorTrack}
Every proper minor-closed class has bounded track-number.
\end{theorem}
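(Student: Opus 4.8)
The plan is simply to chain together two results already established in the paper. \cref{MinorQueue} gives a uniform bound on the queue-number of every graph in a proper minor-closed class, and \cref{QueueTrack} converts any queue-number bound into a track-number bound. Composing these immediately yields a uniform bound on track-number, with no further ideas required.

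Concretely, let $\mathcal{G}$ be a proper minor-closed class. First apply \cref{MinorQueue} to obtain an integer $k$, depending only on $\mathcal{G}$, such that $\qn(G) \leq k$ for every $G \in \mathcal{G}$. Then for each $G \in \mathcal{G}$, since $\qn(G) \leq k$, \cref{QueueTrack} gives $\tn(G) \leq 4k \cdot 4^{k(2k-1)(4k-1)}$. The right-hand side depends only on $k$, hence only on $\mathcal{G}$ and not on the particular graph $G$, so $\mathcal{G}$ has bounded track-number.

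There is essentially no obstacle at this stage: all of the difficulty lives upstream, in the proof of \cref{MinorQueue} (which goes through the graph minor structure theorem \cref{GMST}, the almost-embeddable bound \cref{AlmostEmbeddableQueue}, and the rich-tree-decomposition machinery \cref{RichTrack}), and in the proof of \cref{QueueTrack} itself (due to \citet{DPW04}). One minor point worth noting — but not a genuine difficulty — is that \cref{QueueTrack} is already phrased for graphs of queue-number \emph{at most} $k$, so we do not even need a monotonicity argument for the bounding function $f$; we may quote the explicit bound directly. Thus the entire proof consists of invoking these two lemmas in sequence.
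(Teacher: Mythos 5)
Your proof is correct and coincides exactly with the paper's: the paper likewise derives \cref{MinorTrack} by composing \cref{MinorQueue} with \cref{QueueTrack}. Nothing further is needed.
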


We now briefly show that $(g,k)$-planar graphs have bounded track-number. 
First note that every graph with layered treewidth $k$ has acyclic chromatic number at most $5k$ [\emph{Proof.} Van den Heuvel and Wood~\citep{vdHW17} proved that every graph with layered treewidth $k$ has strong $r$-colouring number at most $k(2r + 1)$, and \citet{KY03} proved that every graph has acyclic chromatic number at most its strong $2$-colouring number.]\ \citet{DEW17} proved that every $(g,k)$-planar graph $G$ has layered treewidth at most $(4g + 6)(k + 1)$. Thus $G$ has  acyclic chromatic number at most $5(4g + 6)(k + 1)$, and has bounded track-number by \cref{AcyclicQueueTrack,gkPlanar}. \citet{DEW17} also proved that every $(g,d)$-map graph and every $(g,k)$-planar graph has bounded layered treewidth. By the same argument, such graphs have bounded track-number. 


\subsection{Three-Dimensional Graph Drawing}

Further motivation for studying queue and track layouts is their connection with 3-dimensional graph drawing. A \emph{3-dimensional grid drawing} of a graph $G$ represents the vertices of $G$ by distinct grid points in $\mathbb{Z}^3$ and represents each edge of $G$ by the open segment between its endpoints so that no two edges intersect. The \emph{volume} of a 3-dimensional grid drawing is the number of grid points in the smallest axis-aligned grid-box that encloses the drawing. For example,  \citet{CELR-Algo96} proved that the complete graph $K_n$ has a 3-dimensional grid drawing with volume $O(n^3)$ and this bound is optimal. \citet{PTT99} proved that every graph with bounded chromatic number has a 3-dimensional grid drawing with volume $O(n^2)$, and this bound is optimal for $K_{n/2,n/2}$.

Track layouts and 3-dimensional graph drawings are connected by the following lemma.

\begin{lemma}[\citep{DMW05,DujWoo-SubQuad-AMS}]
\label{Track2Drawing}
If a $c$-colourable $n$-vertex graph $G$ has a $t$-track layout, then $G$ has 3-dimensional grid drawings with $O(t^2n)$ volume and  with $O(c^7tn)$ volume. Conversely, if a graph $G$ has a 3-dimensional grid drawing with
$A\times B\times C$ bounding box, then $G$ has track-number at most $2AB$.
\end{lemma}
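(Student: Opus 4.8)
The plan is to establish the three assertions with separate arguments: two constructions that turn a track layout into a 3D grid drawing (attaining the $O(t^2n)$ and the $O(c^7tn)$ volume bounds), and one construction that extracts a track layout from a 3D grid drawing.

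\textbf{From a track layout to a drawing of volume $O(t^2n)$.} Let $\overrightarrow{V_1},\dots,\overrightarrow{V_t}$ be the tracks of a $t$-track layout of $G$. First I would number the tracks and concatenate them in their internal orders to give every vertex $v$ a height $y_v\in\{1,\dots,n\}$ that respects each track's order. Then, fixing a prime $p$ with $t\le p<2t$ (Bertrand's postulate), I would place a vertex $v$ in track $i$ at the grid point $(i,\,y_v,\,iy_v\bmod p)$, which lies in a $t\times n\times p$ box of volume $O(t^2n)$. The crux is to show no two edges of $G$ cross in this drawing. Since each track is independent, every edge joins two distinct tracks; for two edges joining the same pair of tracks a crossing is excluded directly by the no-X-crossing property of the track layout, and for two edges joining different pairs of tracks a crossing would force a coplanarity among the four endpoints, which is ruled out (or, when the four endpoints happen to be coplanar, a crossing is still ruled out) by the ``prime/moment-curve'' behaviour of the third coordinate. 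I expect this crossing-freeness verification — and, in particular, arranging for the third coordinate to discriminate within an interval of length $O(t)$ — to be the main obstacle.

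\textbf{From a track layout to a drawing of volume $O(c^7tn)$.} Here I would keep the first coordinate (the track index, at most $t$ values) and the height as second coordinate (range $n$), but choose the third coordinate as a function of a proper $c$-colouring of $G$ taking values in a set of size $O(c^7)$, following the construction of~\citet{DujWoo-SubQuad-AMS}; the bounding box then has volume $O(t\cdot n\cdot c^7)=O(c^7tn)$. Again the work is the no-crossing check, now exploiting that the two endpoints of every edge receive distinct colours.

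\textbf{From a drawing to a track layout.} Given a 3D grid drawing of $G$ in an $A\times B\times C$ box, I would take the tracks to be indexed by pairs (column, parity): a \emph{column} is a maximal set of vertices sharing their $x$- and $y$-coordinates, listed in increasing order of $z$, and its vertices are split into two tracks by the parity of their rank in this list; this yields at most $2AB$ tracks. Each such set is independent, since if two vertices of one column were adjacent, the segment drawing that edge would pass through every grid point strictly between them on their common vertical line — hence through any vertex located there — so adjacent members of a column are consecutive in it and thus of opposite parity. For the no-X-crossing property, fix two of these tracks: if they lie in the same column their edges are collinear and the condition is immediate because edges of a grid drawing do not overlap, while if they lie in different columns all edges between them lie in the single plane spanned by the two parallel vertical lines carrying those columns, where a crossing would be a crossing of the original 3D drawing, so the drawing being crossing-free yields exactly the planarity of this two-track sub-drawing. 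This converse is routine; the two forward constructions, and especially their crossing-freeness proofs, carry the real content.
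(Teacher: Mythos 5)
The paper does not prove \cref{Track2Drawing}; it is cited from \citet{DMW05} and \citet{DujWoo-SubQuad-AMS}, so there is no internal proof to compare against. Your converse argument --- splitting each column into two tracks by parity, observing that adjacent vertices in a column must be consecutive in it (else the edge would pass through an intermediate vertex), and ruling out X-crossings by working within a column line or within the plane spanned by two column lines --- is complete and correct, and is essentially the argument from \citet{DMW05}.

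The two forward bounds, however, are not established. You explicitly identify the crossing-freeness check for edges joining different pairs of tracks as ``the main obstacle'' and leave it unresolved, and you defer the $O(c^7tn)$ bound entirely to \citet{DujWoo-SubQuad-AMS} without giving an argument; those verifications are exactly what the lemma asserts, not routine bookkeeping. I would also flag that the placement $(i, y_v, iy_v\bmod p)$ does not put the vertices of a track on a line, since the modular third coordinate is non-monotone in the concatenated height $y_v$; so even the same-pair-of-tracks case is not ``directly'' excluded by the non-X-crossing property --- one needs a projection-to-the-$xy$-plane argument, which does in fact work, but must be stated. The published constructions instead place each track on a line parallel to the $z$-axis (the $z$-coordinate being the rank within the track), choosing the $xy$-footprints of the $t$ lines so that no three are collinear; this makes the same-pair case genuinely immediate (both edges lie in the plane of the two lines) and reduces the different-pair case to a coplanarity analysis of segments between vertical lines, which is where the prime modulus earns its keep. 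As written, the forward half of your proposal is a plan rather than a proof.
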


\cref{Track2Drawing} is the foundation for all of the following results. \citet{DujWoo-SubQuad-AMS} proved that every graph with bounded maximum degree has a 3-dimensional grid drawing with volume $O(n^{3/2})$, and the same bound holds for graphs from a proper minor-closed class. In fact, for fixed $d$, every $d$-degenerate graph\footnote{A graph is \emph{$d$-degenerate} if every subgraph has minimum degree at most $d$.} has a 3-dimensional grid drawing with $O(n^{3/2})$ volume~\citep{DujWoo-Order06}. \citet{DMW05} proved that every graph with bounded treewidth has a 3-dimensional grid drawing with volume $O(n)$.

Prior to this work, whether planar graphs have 3-dimensional grid drawings with $O(n)$ volume was a major open problem, due to Felsner, Liotta, and Wismath~\citep{FLW-GD01-ref}. The previous best known bound on the volume of 3-dimensional grid drawings of planar graphs was $O(n\log n)$ by \citet{Duj15}. \cref{Track2Drawing,PlanarTrack} together resolve the open problem of Felsner et al.~\citep{FLW-GD01-ref}.


\begin{theorem}
\label{PlanarDrawing}
Every planar graph with $n$ vertices has a 3-dimensional grid drawing with $O(n)$ volume.
\end{theorem}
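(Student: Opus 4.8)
The plan is to obtain \cref{PlanarDrawing} as an immediate corollary of \cref{PlanarTrack} together with the track-layout-to-grid-drawing conversion in \cref{Track2Drawing}. First I would invoke \cref{PlanarTrack}, which provides an absolute constant $t$ (namely $t \le 5(2\cdot 49)^4$) such that every planar graph $G$ has a $t$-track layout. Crucially, $t$ does not depend on $G$, so $t^2$ is an absolute constant as well. Then the first bound in \cref{Track2Drawing} says that, since the $n$-vertex graph $G$ has a $t$-track layout, $G$ has a 3-dimensional grid drawing of volume $O(t^2 n)$, which is $O(n)$ because $t$ is a constant. That completes the proof.

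An alternative route, yielding a somewhat different constant, uses the Four Colour Theorem: every planar graph is $4$-colourable, so the second bound in \cref{Track2Drawing} gives a 3-dimensional grid drawing of volume $O(c^7 t n) = O(4^7 \cdot t \cdot n) = O(n)$. Either version of the bound in \cref{Track2Drawing} suffices; the only ingredients used are \cref{PlanarTrack}, \cref{Track2Drawing}, and (for the second route) planar $4$-colourability.

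I do not expect any genuine obstacle in this final step — it is a one-line deduction. All of the difficulty has already been discharged earlier: establishing that planar graphs have bounded queue-number (\cref{PlanarQueue}) via bounded layered partitions, converting bounded queue-number to bounded track-number (\cref{QueueTrack}, via \cref{AcyclicQueueTrack} and Borodin's acyclic $5$-colouring of planar graphs), and thereby bounding the track-number of planar graphs by a fixed constant (\cref{PlanarTrack}). The remaining task is simply to substitute this constant track-number into \cref{Track2Drawing}, which turns a constant number of tracks into linear volume. This resolves the open problem of Felsner, Liotta, and Wismath on linear-volume 3-dimensional grid drawings of planar graphs.
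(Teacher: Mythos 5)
Your proposal matches the paper's argument exactly: the paper derives \cref{PlanarDrawing} as an immediate consequence of \cref{PlanarTrack} (planar graphs have constant track-number) combined with \cref{Track2Drawing} (a $t$-track layout yields a drawing of volume $O(t^2 n)$). The alternative route you sketch via the Four Colour Theorem and the $O(c^7 t n)$ bound is equally valid and not materially different.
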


\cref{Track2Drawing,GenusTrack,MinorTrack} imply the following strengthenings of \cref{PlanarDrawing}.

\begin{theorem}
\label{GenusDrawing}
Every graph with Euler genus $g$ and $n$ vertices has a 3-dimensional grid drawing with $g^{O(g^{4/7})}n$ volume. 
\end{theorem}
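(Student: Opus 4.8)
The plan is to chain together the track-number bound for bounded-genus graphs with the track-layout-to-grid-drawing conversion, exactly the two ingredients cited in the paragraph preceding the statement. First I would invoke \cref{GenusTrack}, which gives that every graph $G$ with Euler genus $g$ admits a $t$-track layout with $t = g^{O(g^{4/7})}$. Then I would feed this track layout into the first half of \cref{Track2Drawing}: a $c$-colourable $n$-vertex graph with a $t$-track layout has a 3-dimensional grid drawing of volume $O(t^2 n)$. A $t$-track layout is in particular a proper vertex $t$-colouring, so $G$ is $t$-colourable and this bound applies directly; there is no need to separately estimate the chromatic number.

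The only computation to carry out is to check that the form $g^{O(g^{4/7})}$ is closed under squaring: if $t \le g^{c g^{4/7}}$ for a constant $c$, then $t^2 \le g^{2c g^{4/7}}$, which is again of the form $g^{O(g^{4/7})}$. Hence the volume is $O(t^2 n) = g^{O(g^{4/7})} n$, as claimed. This is the whole argument.

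I do not expect a genuine obstacle here: the theorem is an immediate corollary of \cref{GenusTrack} and \cref{Track2Drawing}, and the exponent bookkeeping is routine. If one preferred, one could instead use the $O(c^7 t n)$ volume bound in \cref{Track2Drawing} together with Heawood's bound that graphs of Euler genus $g$ have chromatic number $O(g^{1/2})$, yielding volume $O(g^{7/2}\cdot g^{O(g^{4/7})}\cdot n) = g^{O(g^{4/7})} n$ as well; but since the $g^{O(g^{4/7})}$ factor already dominates the polynomial factor $t^2$, the first route via $O(t^2 n)$ is the cleanest and fully suffices.
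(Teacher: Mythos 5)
Your proposal is correct and matches the paper's argument exactly: the paper derives \cref{GenusDrawing} by combining \cref{GenusTrack} with the $O(t^2n)$ bound from \cref{Track2Drawing}, and your observation that a $t$-track layout supplies the required $t$-colouring (so no separate chromatic-number estimate is needed) is the right way to discharge the hypothesis. The exponent bookkeeping is as routine as you say.
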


\begin{theorem}
\label{MinorDrawing}
For every proper minor-closed class $\mathcal{G}$, every graph in $\mathcal{G}$ with $n$ vertices has a 3-dimensional grid drawing with $O(n)$ volume.
\end{theorem}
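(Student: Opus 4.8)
The plan is to combine the bounded track-number of proper minor-closed classes (\cref{MinorTrack}) with the conversion from track layouts to 3-dimensional grid drawings (\cref{Track2Drawing}), in exactly the way that \cref{PlanarTrack} and \cref{Track2Drawing} together yield \cref{PlanarDrawing}, and \cref{GenusTrack} and \cref{Track2Drawing} yield \cref{GenusDrawing}.

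First I would invoke \cref{MinorTrack} to obtain a constant $t = t(\mathcal{G})$, depending only on the class $\mathcal{G}$, such that every graph $G \in \mathcal{G}$ has a $t$-track layout. (Recall that \cref{MinorTrack} is itself obtained from \cref{MinorQueue} together with \cref{QueueTrack}, so this step requires nothing beyond what is already established.) Then, given an $n$-vertex graph $G \in \mathcal{G}$, I would apply the first bound of \cref{Track2Drawing}: a graph on $n$ vertices with a $t$-track layout has a 3-dimensional grid drawing of volume $O(t^2 n)$. Since $t$ is a constant independent of $G$, this volume is $O(n)$, with the implied constant depending only on $\mathcal{G}$, which is exactly the claim of \cref{MinorDrawing}.

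I do not expect any genuine obstacle here: the statement is a direct corollary of results already in hand, and the only point worth care is to observe that the bound $O(t^2 n)$ in \cref{Track2Drawing} suffices (one does not need the colour-sensitive bound $O(c^7 t n)$, although that bound could also be used, since every proper minor-closed class has bounded degeneracy and hence bounded chromatic number). The real content of \cref{MinorDrawing}, namely \cref{MinorTrack}, has already been proved via the main theorem \cref{MinorQueue}.
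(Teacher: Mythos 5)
Your proposal is correct and matches the paper's own argument exactly: the paper derives \cref{MinorDrawing} directly from \cref{MinorTrack} and the $O(t^2 n)$ bound in \cref{Track2Drawing}, just as you describe.
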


As shown in \cref{Track}, $(g,k)$-planar graphs, $(g,d)$-map graphs and $(g,k)$-string graphs have bounded track-number (for fixed $g,k,d$). By \cref{Track2Drawing}, such graphs have 3-dimensional grid drawings with $O(n)$ volume. 

\section{Open Problems}\label{conclusion}
\label{Questions}

\begin{enumerate}

\item What is the maximum queue-number of planar graphs? We can tweak our proof of \cref{PlanarQueue} to show that every planar graph has queue-number at most 48, but it seems new ideas are required to obtain a significant improvement. The best lower bound on the maximum queue-number of planar graphs is $4$, due to \citet{ABGKP18}.

 More generally, does every graph with Euler genus $g$ have $o(g)$ queue-number? Complete graphs provide a $\Theta(\sqrt{g})$ lower bound. Note that every graph with Euler genus $g$ has $O(\sqrt{g})$ stack-number~\citep{Malitz94b}.

\item As discussed in \cref{Introduction} it is open whether there is a function $f$ such that $\sn(G) \leq f( \qn(G) )$ for every graph $G$. \citet{HLR92} proved that every 1-queue graph has stack-number at most $2$. \citet{DujWoo05} showed that there is such a function $f$ if and only if every 2-queue graph has bounded stack-number. 

Similarly, it is open whether there is a function $f$ such that $\qn(G) \leq f( \sn(G) )$ for every graph $G$. \citet{HLR92} proved that every 1-stack graph has queue-number at most 2. Since 2-stack graphs are planar, this paper solves the first open case of this question. \citet{DujWoo05} showed that there is such a function $f$ if and only if every 3-stack graph has bounded queue-number. 

\item \citet{OOW19} introduced the following definition:
 A graph $G$ is said to be \emph{$k$-close to Euler genus $g$} if every subgraph $H$ of $G$ has a drawing in a surface of Euler genus $g$ with at most $k\,|E(H)|$ crossings (that is, with $O(k)$ crossings per edge on average). Does every such graph have queue-number at most $f(g,k)$ for some function $f$?

\item Is there a proof of \cref{MinorQueue} that does not use the graph minor structure theorem and with more reasonable bounds?

\smallskip
\item Queue layouts naturally extend to posets. The \emph{cover graph} $G_P$ of a poset $P$ is the undirected graph with vertex set $P$, where $vw\in E(G)$ if $v<_Pw$ and $v<_Px<_Pw$ for no $x\in P$ (or $w<_Pv$ and $w<_Px<_Pv$ for no $x\in P$). Thus the cover graph encodes relations in $P$ that are not implied by transitivity. A \emph{$k$-queue layout} of a poset $P$ consists of a linear extension  $\preceq$ of $P$ and a partition $E_1,E_2,\dots,E_k$ of $E(G_P)$ into queues with respect to $\preceq$. The \textit{queue-number} of a poset $P$ is the minimum integer $k$ such that $P$ has a $k$-queue layout. \citet{HP97} conjectured that the queue-number of a planar poset is at most its height (the maximum number of pairwise comparable elements).  
This was disproved by \citet{KMU18} who presented a poset of height $2$ and queue-number $4$.
\cref{PlanarQueue} and results of Knauer, Micek and Ueckerdt imply  that planar posets of height $h$ have queue-number $O(h)$; see Theorem~6 in~\citep{KMU18}. \citet{HP97} also conjecture that every poset of width $w$ (the maximum number of pairwise incomparable elements) has queue-number at most $w$. The best known upper bounds are $O(w^2)$ for general posets and $3w-2$ for planar posets~\citep{KMU18}.


\item It is natural to ask for the largest class of graphs with bounded queue-number. First note that \cref{MinorQueue} cannot be extended to the setting of an excluded topological minor, since graphs with bounded degree have arbitrarily high queue-number~\citep{Wood-QueueDegree,HLR92}. However, it is possible that every class of graphs with strongly sub-linear separators has bounded queue-number. Here a class $\mathcal{G}$ of graphs has \emph{strongly sub-linear separators} if $\mathcal{G}$ is closed under taking subgraphs, and there exists constants $c,\beta>0$, such that every $n$-vertex graph in $\mathcal{G}$ has a balanced separator of order $cn^{1-\beta}$. Already the $\beta=\frac12$ case looks challenging, since this would imply \cref{MinorQueue}.

\item Is there a polynomial function $f$ such that every graph with treewidth $k$ has queue-number at most $f(k)$? The best lower and upper bounds on $f(k)$ are $k+1$ and $2^k-1$, both due to \citet{Wiechert17}.

\item Do the results in the present paper have algorithmic applications? Consider the method of \citet{Baker94} for designing polynomial-time approximation schemes for problems on planar graphs. This method partitions the graph into BFS layers, such that the problem can be solved optimally on each layer (since the induced subgraph has bounded treewidth), and then combines the solutions from each layer. Our results (\cref{Width1LayeredPartition}) give a more precise description of the layered structure of planar graphs (and other more general classes). It is conceivable that this extra structural information is useful when designing algorithms.  


Note that all our proofs lead to polynomial-time algorithms for computing the desired decomposition and queue layout. 
\citet{PS18} claim $O(n^2)$ time complexity for their decomposition. The same is true for \cref{NearTriang}: Given the colours of the vertices on $F$, we can walk down the BFS tree $T$ in linear time and colour every vertex. Another linear-time enumeration of  the faces contained in $F$ finds the trichromatic triangle.  It is easily seen that \cref{MakePlanar} has polynomial time complexity (given the embedding). Polynomial-time algorithms for our other results follow based on the linear-time algorithm of  \citet{Mohar99} to test if a given graph has Euler genus at most any fixed number $g$, and the polynomial-time algorithm of \citet{DHK-FOCS05} for computing the decomposition in the graph minor structure theorem (\cref{GMST}). 

\end{enumerate}

\subsection*{Acknowledgements}

This research was completed at the 7th Annual Workshop on Geometry and Graphs held at Bellairs Research Institute in March 2019. Thanks to the other workshop participants for creating a productive working atmosphere.

\subsection*{Note Added in Proof}

This paper has motivated several follow-up works. Analogues of \cref{PlanarProduct,GenusProduct} have been proved for bounded degree graphs in any minor-closed class~\citep{DEMWW} and for $k$-planar graphs and several other non-minor-closed classes of interest~\citep{DMW19a}. See~\citep{DHGLW} for a survey of such `product structure theorems'. \citet{Morin20} presents $O(n\log n)$ time algorithms for finding the partitions in \cref{Width1LayeredPartition,Width3LayeredPartition}. \citet{Pupyrev19} improves the bound on the track-number of planar graphs in \cref{PlanarTrack} (by constructing a track layout directly from \cref{Width1LayeredPartition} instead of using an intermediate queue layout).

\label{lastpage}

  \let\oldthebibliography=\thebibliography
  \let\endoldthebibliography=\endthebibliography  
  \renewenvironment{thebibliography}[1]{%
    \begin{oldthebibliography}{#1}%
      \setlength{\parskip}{0ex}%
      \setlength{\itemsep}{0ex}%
  }{\end{oldthebibliography}}


\def\soft#1{\leavevmode\setbox0=\hbox{h}\dimen7=\ht0\advance \dimen7
	by-1ex\relax\if t#1\relax\rlap{\raise.6\dimen7
		\hbox{\kern.3ex\char'47}}#1\relax\else\if T#1\relax
	\rlap{\raise.5\dimen7\hbox{\kern1.3ex\char'47}}#1\relax \else\if
	d#1\relax\rlap{\raise.5\dimen7\hbox{\kern.9ex \char'47}}#1\relax\else\if
	D#1\relax\rlap{\raise.5\dimen7 \hbox{\kern1.4ex\char'47}}#1\relax\else\if
	l#1\relax \rlap{\raise.5\dimen7\hbox{\kern.4ex\char'47}}#1\relax \else\if
	L#1\relax\rlap{\raise.5\dimen7\hbox{\kern.7ex
			\char'47}}#1\relax\else\message{accent \string\soft \space #1 not
		defined!}#1\relax\fi\fi\fi\fi\fi\fi}

\end{document}